 \renewcommand{\DOCH}{%
    \vskip -8em %% THISLINE IS IMPORTAN AND TO CONTROL THIS SPACE
    \settoheight{\myhi}{\CTV\FmTi{Test}}
    \setlength{\py}{\baselineskip}
    \addtolength{\py}{\RW}
    \addtolength{\py}{\myhi}
    \setlength{\pyy}{\py}
    \addtolength{\pyy}{-1\RW}
    \raggedright
    \CNV\FmN{\@chapapp}\space\CNoV\thechapter
    \hskip 3pt\mghrulefill{\RW}\rule[-1\pyy]{2\RW}{\py}\par\nobreak
    }
 \renewcommand{\DOTIS}[1]{%
    \vskip -3em %% THISLINE IS IMPORTAN AND TO CONTROL THIS SPACE
    \setlength{\py}{25pt}
    \setlength{\pyy}{\py}
    \setlength{\backskip}{\py}
    \addtolength{\backskip}{2pt}
    \addtolength{\pyy}{\RW}
    \setlength{\myhi}{\baselineskip}
    \addtolength{\myhi}{\pyy}
    \mghrulefill{\RW}\rule[-1\py]{2\RW}{\pyy}\par\nobreak
    \vskip -1\backskip
    \rule{2\RW}{\myhi}\mghrulefill{\RW}\hskip 3pt %
    \raggedleft\CTV\FmTi{#1}\par\nobreak
    \vskip 32\p@
    }  
\def\@part[#1]#2{
    \ifnum \c@secnumdepth >\m@ne
      \refstepcounter{part}
      \addcontentsline{toc}{part}{Part \thepart\hspace{0.25em} -- \hspace{0.25em}#1}
    \else
      \addcontentsline{toc}{part}{Part #1}
    \fi
    {\parindent \z@
     \interlinepenalty \@M
     \normalfont
     \ifnum \c@secnumdepth >\m@ne
       \huge\hfil\textsc{\partname~\textbf{\thepart}}\hfil   
       \par\nobreak\vspace{0.5\baselineskip}
     \fi
     \Huge \centering \textsc{\textbf{#2}}
     \markboth{}{}\par}
    \nobreak
    \vskip 3ex
    \@afterheading}
\def\cleardoublepage{\clearpage\if@twoside \ifodd\c@page\else
\hbox{}
%\vspace*{\fill}
%\begin{center}
%AAAAAAAAAAAAAAAAAA
%\end{center}
%\vspace{\fill}
\thispagestyle{empty}
\newpage
\if@twocolumn\hbox{}\newpage\fi\fi\fi}
\newcommand\figcaption{\def\@captype{figure}\caption}
\newcommand\tabcaption{\def\@captype{table}\caption}
\newcommand\xqed[1]{%
  \leavevmode\unskip\penalty9999 \hbox{}\nobreak\hfill
  \quad\hbox{#1}}
\newcommand\eex{\xqed{$\blacktriangle$}}
\newcommand{\Tens}{\mathfrak{Tens}}
\newcommand{\CE}{\mathfrak{CE}}  
\newcommand{\E}{\mathfrak{E}}
\newcommand{\sm}{\mathfrak{s}}
\newcommand{\sr}{\mathfrak{s}_{\textrm{r}}}
\newcommand{\smp}{\mathfrak{s}_\p}
\newcommand{\V}{\mathfrak{V}}
\newcommand{\Ba}{\mathfrak{Ba}}
\newcommand{\fA}{\mathfrak{A}}
\newcommand{\fa}{\mathfrak{a}}
\newcommand{\D}{\mathfrak{D}}
\newcommand{\Cc}{\mathfrak{C}}
\newcommand{\F}{\mathfrak{F}}
\newcommand{\fG}{\mathfrak{G}}
\newcommand{\W}{\mathfrak{W}}
\newcommand{\KT}{\mathfrak{KT}}
\newcommand{\BV}{\mathfrak{BV}}
\newcommand{\Nm}{\mathfrak{Nm}}
\newcommand{\frakgc}{\mathfrak{g}_c}
\newcommand{\frakg}{\mathfrak{g}}
\newcommand{\X}{\mathfrak{X}}
\newcommand{\euA}{\mathscr{A}}  
\newcommand{\euF}{\mathscr{F}}  
\newcommand{\A}{\mathcal{A}}  % graded algebra
\newcommand{\Gcal}{\mathcal{G}}  % gauge group
\newcommand{\Ccal}{\mathcal{C}}
\newcommand{\Dcal}{\mathcal{D}}
\newcommand{\Ecal}{\mathcal{E}} 
\newcommand{\Fcal}{\mathcal{F}}
\newcommand{\Mcal}{\mathcal{M}}
\newcommand{\Ocal}{\mathcal{O}}
\newcommand{\Scal}{\mathcal{S}}
\newcommand{\Rcal}{\mathcal{R}}
\newcommand{\Tcal}{\mathcal{T}}
\newcommand{\Vcal}{\mathcal{V}}
\newcommand{\Wcal}{\mathcal{W}}
\newcommand{\Xcal}{\mathcal{X}}
\newcommand{\Ci}{\mathcal{C}^\infty} % smooth functions
\newcommand{\Ca}{\mathrm{\mathbf{C}}}
\newcommand{\Da}{\mathrm{\mathbf{D}}}
\newcommand{\obj}{\mathrm{Obj}}
\newcommand{\Hom}{\mathrm{Hom}}
\newcommand{\Nat}{\mathrm{Nat}}
\newcommand{\Loc}{\mathrm{\mathbf{Loc}}}       % spacetimes
\newcommand{\SLoc}{\mathrm{\mathbf{SLoc}}}   % spin spacetimes
\newcommand{\Obs}{\mathrm{\mathbf{Obs}}}       % observables
\newcommand{\Sts}{\mathrm{\mathbf{Sts}}}          % sets 
\newcommand{\Top}{\mathrm{\mathbf{Top}}}          % topological spaces
\newcommand{\Vect}{\mathrm{\mathbf{Vec}}}       % the category of locally convex topological vector
\newcommand{\dgA}{\mathrm{\mathbf{dgA}}}      % the category of differential graded algebras
\newcommand{\TA}{\mathrm{\mathbf{TAlg}}}      % the category of differential graded algebras
\newcommand{\PgAlg}{\mathrm{\mathbf{PgAlg}}}% the category of top. graded Poisson algebras 
\newcommand{\FM}{\mathrm{\mathbf{Fm}}}        %  the category of fibered manifolds
\newcommand{\PB}{\mathrm{\mathbf{Pb}}}
\newcommand{\WF}{\mathrm{WF}}         % wave front set
\newcommand{\id}{\mathrm{id}}               % identity
\newcommand{\supp}{\mathrm{supp}}      % support
\newcommand{\dvol}{\!\!\mathrm{dvol}_{\sst{M}}} % volume form
\newcommand{\Diff}{\mathrm{Diff}}        % diffeomorphisms
\newcommand{\im}{\mathrm{Im}}             % image
\newcommand{\ke}{\mathrm{Ker}}            % kernel
\newcommand{\Der}{\mathrm{Der}}          % derivations
\newcommand{\tr}{\mathrm{tr}}                 % trace
\newcommand{\loc}{\mathrm{loc}}
\newcommand{\inv}{\mathrm{inv}}
\newcommand{\reg}{\mathrm{reg}}
\newcommand{\ren}{\mathrm{r}}
\newcommand{\alte}{\mathrm{alt}}
\newcommand{\pg}{\mathrm{pg}}
\newcommand{\p}{\mathrm{ph}}
\newcommand{\af}{\mathrm{af}}
\newcommand{\ta}{\mathrm{ta}}
\newcommand{\gh}{\mathrm{gh}}
\newcommand{\mc}{\mathrm{mc}}
\newcommand{\ml}{\mathrm{ml}}
\newcommand{\ex}{\mathrm{ext}}
\newcommand{\inte}{\mathrm{int}}
\newcommand{\NN}{\mathbb{N}}          % natural naumbers
\newcommand{\RR}{\mathbb{R}}           % real  numbers
\newcommand{\CC}{\mathbb{C}}           % complex numbers
\newcommand{\M}{\mathbb{M}} 	     % Minkowski spacetime
\newcommand{\al}{\alpha}
\newcommand{\bet}{\beta}
\newcommand{\Ga}{\Gamma}
\newcommand{\de}{\delta}
\newcommand{\De}{\Delta}
\newcommand{\la}{\lambda}
\newcommand{\La}{\Lambda}
\newcommand{\si}{\sigma}
\newcommand{\ph}{\varphi}
\newcommand{\T}{\cdot_{{}^\Tcal}}
\newcommand{\TL}{\cdot_{{}^{\Tcal_\Lambda}}}
\newcommand{\TR}{\cdot_{{}^{\TTR}}}
\newcommand{\delT}{\delta^{\sst{\TT}}_{S}}
\newcommand{\delTR}{\delta^{\sst{\TTR}}_S}
\newcommand{\deL}{\delta^{{\Lambda}}_S}
\newcommand{\TT}{\Tcal}
\newcommand{\TTR}{\Tcal_\ren}
\newcommand{\TRH}{\cdot_{{}^{\TTH}}}
\newcommand{\TTH}{\Tcal_{\!\ren\!,H}}
\newcommand{\TTHb}{\overline{\Tcal}_{\!\ren\!,H}}
\newcommand{\TTL}{\Tcal_{\!\sst{\Lambda}}}
\newcommand{\DDp}{\Gamma'_{\Delta_{D}}}
\newcommand{\DD}{\Gamma_{\Delta_{D}}}
\newcommand{\DC}{\Gamma_{\Delta}}
\newcommand{\HL}{\Gamma_{\Lambda}}
\newcommand{\paqft}{{p\textsc{aqft}}}
\newcommand{\tvs}{{\textsc{tvs}}}
\newcommand{\lcvs}{{\textsc{lcvs}}}
\newcommand{\eom}{{\textsc{eom}}}
\newcommand{\qme}{{\textsc{qme}}}
\newcommand{\qft}{{\textsc{qft}}}
\newcommand{\cme}{{\textsc{cme}}}
\newcommand{\mwi}{{\textsc{mwi}}}
\newcommand{\sst}[1]{\scriptscriptstyle{#1}}  % small font for the subscripts
\newcommand{\hinv}{*^{\!\sst{-\!1}}}             % inverse of the Hodge star
\newcommand{\minus}{\sst{-1}}   % power ^{-1}
\newcommand{\1}{\mathds{1}}                         % identity
\newcommand{\pa}{\partial}                              % partial derivative
\newcommand{\be}{\begin{equation}}
\newcommand{\ee}{\end{equation}}
\newcommand{\adP}{\mathrm{ad}P}
\newcommand{\ad}{\mathrm{ad}}
\newcommand{\Lap}{\bigtriangleup}
\newcommand{\form}{\stackrel{\mathrm{formal}}{=}}
\newcommand{\os}{\stackrel{\mathrm{o.s.}}{=}}
\newcommand{\gt}[2]{\tilde{\textfrak{g}}^{#1#2}}
\newcommand{\otoprule}{\midrule[\heavyrulewidth]} 
\newcommand{\Ftens}{\underset{{\sst\F(M)}}{\otimes}}
\newcommand{\Ft}{\underset{{\sst\F}}{\otimes}}
\definecolor{see}{RGB}{67,75,179}
\definecolor{darksee}{RGB}{42,44,148}
\definecolor{honey}{RGB}{232,180,129}
\definecolor{lighthoney}{RGB}{255,254,220}
\definecolor{citecol}{rgb}{0.5,0,0} 
 \theoremstyle{plain}
  \newtheorem{defn}{Definition}[section]
  \newtheorem{thm}[defn]{Theorem}
  \newtheorem{prop}[defn]{Proposition}
  \newtheorem{cor}[defn]{Corollary}
  \theoremstyle{plain}
  \theoremstyle{definition}
  \newtheorem{rem}[defn]{Remark}
  \newtheorem{exa}[defn]{Example}
 \theoremstyle{definition}
  \newtheorem{ass}{\underline{\textit{Assumption}}}
\author{Katarzyna Rejzner}
\title{Batalin-Vilkovisky formalism\\ in locally covariant field theory}
\begin{document}
%\dominitoc
%---------------------------------------------------------------------------------------------------------
%---------------------------------------------------------------------------------------------------------
%
%                                           ==================
%                                          ||           Titlepage              ||
%                                           ==================
\pagestyle{empty}
\begin{center}
\textsc{
\Huge Batalin-Vilkovisky formalism\\
in locally covariant\\
field theory\\
\linespread{2}}
\vspace{27ex}
\Large Dissertation\\
zur Erlangung des Doktorgrades\\
des Department Physik\\
der Universit\"at Hamburg\\
\linespread{1.2}
\vspace{26ex}
\Large vorgelegt von\\ 
\Large \textbf{Katarzyna Anna Rejzner}\\
aus Krak\'ow\\
\linespread{2}
\vspace{8ex}
\linespread{2}
\Large \textsc{Hamburg 2011}
\end{center}
\newpage

\clearpage
\pagestyle{empty}
$\qquad$\\
\vspace{14cm}
$\qquad$\\
{\addtolength{\tabcolsep}{15pt}
\begin{tabular}{ll}
Gutachter der Dissertation:& Prof. Dr. K. Fredenhagen\\
				      &  Prof. Dr. J. Teschner\\\addlinespace[1ex]
Gutachter der Disputation:&Prof. Dr. K. Fredenhagen\\
				      &Prof. Dr. Jochen Bartels\\\addlinespace[1ex]
Datum der Disputation:     &Mittwoch, 26. Oktober 2011\\\addlinespace[1ex]
Vorsitzender der Pr\"ufungsausschusses:&Prof. Dr. G\"unter Sigl\\\addlinespace[1ex]
Vorsitzender der Promotionsausschusses:&Prof. Dr. G\"unter Sigl\\\addlinespace[1ex]
Dekan der Fakult\"at f\"ur Mathematik,&Prof. Dr Heinrich Graener\\
Informatik und Naturwissenschaften:&\\\addlinespace[1ex]
\end{tabular}}
\newpage

\clearpage
%---------------------------------------------------------------------------------------------------------
%---------------------------------------------------------------------------------------------------------
%
%                                           ==================
%                                          ||          Front matter          ||
%                                           ==================
\thispagestyle{empty}
\begin{center}
\Large{\textsc{Zusammenfassung}}
\end{center}
In der vorliegenden Arbeit wird eine vollst\"andige Formulierung des Batalin-Vilkovisky (BV) Formalismus
im Rahmen der lokal kovarianten Feldtheorie vorgeschlagen. Im ersten Teil der Arbeit wird die klassische Theorie untersucht, wobei  der Schwerpunkt auf die zugrundeliegenden unendlich dimensionalen Strukturen gelegt wird. Es wird gezeigt, dass die Anwendung der unendlich dimensionalen Geometrie eine konzeptionell elegante Formulierung der Theorie erm\"oglicht. Die Konstruktion des BV-Komplexes ist völlig kovariant, und eine abstrakte Verallgemeinerung auf der Ebene Funktoren und nat\"urlichen Transformationen wird vorgegeben. Dies erm\"oglicht die Anwendung des BV-Komplexes in der klassischen Gravitationstheorie.  Anschließend wird eine homologische Interpretation der diffeomorphismusinvarianten physikalischen Gr{\"o}{\ss}en vorgeschlagen.

Im zweiten Teil der Arbeit wird die Quantentheorie untersucht. Ein Rahmen f\"ur die BV-Quantisierung, der vom Pfadintegralformalismus v\"ollig unabh\"angig ist und nur  auf der perturbativen algebraischen Quantenfeldtheorie basiert, wird formuliert. Um solche Formulierung zu erm\"oglichen, wird zuerst bewiesen, dass das renormierte zeitgeordnete Produkt als eine bin{\"a}re Operation auf einem geeigneten Definitionsbereich aufgefasst werden kann. Mittels dieses Resultats wird die Assoziativit\"at dieses Produkts gezeigt und dadurch lassen sich die renormierte BV Strukturen konsistent definieren. Insbesondere werden die Quantenmastergleichung und der Quanten-BV-Operator definiert. Dabei wird die Master-Ward-Identit\"at, eine wichtige Struktur der kausalen St\"orungstheorie, benutzt. 
\begin{center}
\Large{\textsc{Abstract}}
\end{center}
The present work contains a complete formulation of the Batalin-Vilkovisky (BV) formalism in the framework of locally covariant field theory. In the first part of the thesis the classical theory is investigated with a particular focus on the infinite dimensional character of the underlying structures. It is shown that the use of infinite dimensional differential geometry allows for a conceptually clear and elegant formulation. The construction of the BV complex is performed in a fully covariant way and we also generalize the BV framework to a more abstract level, using functors and natural transformations. In this setting we construct the BV complex for classical gravity. This allows us to give a homological interpretation to the notion of diffeomorphism invariant physical quantities in general relativity.

The second part of the thesis concerns the quantum theory. We provide a framework for the BV quantization that doesn't rely on the path integral formalism, but is completely formulated within perturbative algebraic quantum field theory. To make such a formulation possible we first prove that the renormalized time-ordered product can be understood as a binary operation on a suitable domain. Using this result we prove the associativity of this product and provide a consistent framework for the renormalized BV structures. In particular the renormalized quantum master equation and the renormalized quantum BV operator are defined. To give a precise meaning to theses objects we make a use of the master Ward identity, which is an important structure in causal perturbation theory.
\cleardoublepage
\pagestyle{empty}
$\qquad$\\
\vspace{2cm}
$\qquad$\\
\begin{flushright}
\begin{minipage}{8cm}
{\large
\textit{We shall not cease from exploration\\
And the end of all our exploring\\
Will be to arrive where we started\\
And know the place for the first time.\\}
$\ $\\
T.S. Eliot, \textit{Four Quartets}}
 \end{minipage}
 \end{flushright}
\cleardoublepage
\pagestyle{fancy}
\setcounter{page}{1}
\tableofcontents
\listoffigures
\chapter{Introduction}
The quest to find simple and beautiful principles underlying the laws of nature is driving the progress of human thought since thousands of years. By applying the scientific method we search for fundamental mathematical theories that describe the world around us. We believe that every step we make in exploring the Universe takes us closer to the underlying truth. We are travelers in this exciting journey in search of the truth and at each turning there are new surprises awaiting for us. While looking back to the road we have come so far, one sees that there are some principles that were chosen as guidelines and they mark our way ever since. 
One of them is the \textit{locality principle}. The importance of this principle as a guideline for the rigorous study of QFT was first put forward by Rudolf Haag over 50 years ago in his seminal paper \cite{Haag0}. It basically says that the laws of physics are local. In classical theory it is reflected by the fact that everything is described by a system of partial differential equations. In the quantum case it means that the whole physics of the system is encoded in a net of local observables, constructed by associating the corresponding observable algebras to regions of spacetime. This point of view is a cornerstone of the local quantum physics approach. To stress how important this principle is, let me cite at this point the book of Rudolf Haag \textit{Local quantum physics} \cite{Haag}.
\begin{quotation}
\textit{The German term ``Nahwirkungsprinzip'' is more impressive than the somewhat colourless word ``locality''. Certainly the idea behind these words proposed by Faraday around 1830, initiated the most significant conceptual advance in physics after Newton's Principia. It guided Maxwell in his formulation of the laws of electrodynamics, was sharpened by Einstein in the theory of special relativity and again it was the strict adherence to this idea which led Einstein ultimately to his theory of gravitation, the general theory of relativity}
\begin{flushright}
R. Haag
\end{flushright}
\end{quotation}
This is a very deep observation and it shows how universal the principle of locality is. It underlines also our everyday intuition. The relation with the special relativity theory is also clear, since the locality principle entails that causally separated processes can be measured simultaneously without any restrictions. This presence of a causal structure is crucial for the axiomatic formulation of quantum field theory. 

There are also other principles that guide our search for mathematical models correctly describing the reality. Another principle, commonly applied in physical reasoning is the \textit{covariance principle}. It was proposed by Einstein as the underlying principle of his theory of relativity. Its first incarnation, the principle of Poincar\'e-covariance, states that there are no preferred Lorentzian coordinates for the description of physical processes. In other words, there is no absolute time and space, but we can still speak of events as something localized in given spacetime points. The debate among physicists about the notion of absolute space and time is actually very old. Traditionally we mark its beginning with the famous papers of the Leibniz-Clarke correspondence 1715-1716. Although this was an exchange of letters between Gottfried Wilhelm Leibniz and Samuel Clarke, it is clear that the latter was actually presenting the position of sir Isaac Newton. In this debate Leibniz is putting forth the relational point of view on the notion of space:
\begin{quote}
\textit{As for my own opinion, I have said more than once, that I hold space to be something merely relative, as time is; that I hold it to be an order of coexistences, as time is an order of successions. }
\end{quote}
On the other hand Clark was defending Newton's absolute notion of space and time:
\begin{quote}
\textit{The reality of space is not a supposition, but is proved by the foregoing arguments, to which no answer has been given. Nor is any answer given to that other argument, that space and time are quantities, which situation and order are not.}
\end{quote} 
The history showed that the last sentence in this argument was actually said by Nature itself, since Newton's idea of space and time was proved incorrect upon the experimental verification of special relativity. Yet this was not the end of the story. Our idea of space and time had to be revised even more deeply in the light of general relativity.
In this theory even the concept of spacetime events looses meaning. 
The principle of \textit{general covariance} says that the laws of physics should be formulated in a coordinate independent way. In general relativity the geometry itself is a dynamical object. Its behavior is governed by Einstein's Equations. The theory of general relativity is not only very successful in describing the physical phenomena, but also has a beautiful mathematical structure. This interplay of physical intuition and mathematical reasoning became a hallmark in the development of modern physical theories. This is however not the end of the journey. As mentioned above, the diffeomorphism invariance of the theory forces us to abandon the concept of ``points'' as physical entities. It is rather the relations between distinguished events that should be given a physical meaning. The fact that the spacetime itself is dynamical seems at a first glance to be in conflict with the framework of QFT, where the theory is formulated with respect to a fixed causal structure. To understand how locality and covariance can fit into a consistent framework, a new paradigm is needed.  The notion of a \textit{locally covariant  quantum field} was first proposed by K. Fredenhagen at \cite{F00} and was developed in a collaboration with R. Brunetti, S. Hollands,  R. Verch  and R. Wald \cite{HW,Ver,BFV}. 
This was also an important step for the conceptual understanding of quantum field theory on curved backgrounds. The principle of \textit{general local covariance} is a fundament of  a new axiomatic framework for {\qft} on curved spacetimes proposed by R. Brunetti, K. Fredenhagen and R. Verch \cite{BFV}. The idea is to define a quantum field theory at once in all the spacetimes in a coherent way. The physical information is then encoded in a way in which algebras of observables are associated to Lorentzian manifolds (spacetime). From the mathematical point of view this amounts to the construction of a certain functor. It can also be applied to classical theories, where we associate certain Poisson algebras to spacetimes. The principle of local covariance was up to now applied in many interesting examples, including scalar \cite{BFV}, Dirac \cite{Ko-Dirac} and electromagnetic fields \cite{DS,DL}. Some aspects of the category theory side of the framework were further investigated in works of Fewster and Verch \cite{Few,Few2,Spass}.

Let us now make a short stop in our journey and look around what we have already found. We recognized two leading principles of theoretical physics, namely \textit{locality} and  the \textit{covariance}. Next we discussed how these two principles can be combined into a consistent framework by the \textit{principle of general local covariance}. There is however one more important aspect common to many modern physical theories that we have to fit into the picture. It is the principle of \textit{gauge invariance}\index{gauge!invariance}. In many aspects it is similar to diffeomorphism invariance and creates also some problems in quantum field theory. 
This principle turned out to be very powerful and universal and led to the formulation of the Standard Model of particle physics. It is also very attractive from a mathematical point of view, since classical gauge theory can be described with the use of simple geometrical structures. This is in agreement with the program of ``geometrization of physics''. Despite the great success of gauge theories, many open questions still remain. For example, the problem of rigorous nonperturbative quantization of Yang-Mills theory, confinement, asymptotic freedom, etc. There are also some technical issues related to the fact that in certain constructions of classical and quantum field theories an appropriate gauge fixing has to be done in intermediate steps. From the point of view of general local covariance the gauge fixing procedure also has to be performed in a covariant way.  An obvious candidate for a consistent framework that makes it possible is the BRST\index{BRST}  (Becchi, Rouet, Sora, Tyutin) method. It  was originally introduced in \cite{BRST1,BRST2} and later on it was put in a more general setting, called BV\index{BV!formalism} (Batalin, Vilkovisky) formalism \cite{Batalin:1977pb,Batalin:1981jr,Batalin:1983wj,Batalin:1983jr}. The present work aims at a systematic treatment of this formalism in the framework of locally covariant field theory. We treat both the classical and the quantum case. The most important new insight of the present thesis is the treatment of the renormalized quantum BV operator and the quantum master equation within the framework of perturbative algebraic quantum field theory. Moreover we propose an extension of the BV formalism to a more abstract level of natural transformations, in agreement with the principle of general local covariance. This formulation makes it also possible to apply the BV construction in the context of general relativity. 

The thesis is divided into two parts. In the first one, we present the basic structures of the BV formalism in classical theory, since it allows us to avoid technical complications related to the renormalization procedure. In the second part, we treat quantum field theory. The first chapter introduces mathematical tools that will be needed for our formulation. Beside some basic notions of category theory (section \ref{kat}) and distribution theory (section \ref{distr}) we will also need differential calculus on infinite dimensional manifolds (section \ref{idc}). This is a quite natural framework, since field theory is intrinsically defined as a theory with infinitely many degrees of freedom. Moreover the symmetry groups, important for the formulation of gauge invariance are infinite dimensional. As an example we can consider the diffeomorphism group of a finite dimensional manifold or the  group of local gauge transformations (a gauge parameter is associated to each spacetime point).

In the second chapter we give an overview of the functional approach to classical field theory and put it into the framework of general local covariance. In this  approach \cite{DF02} one constructs a Poisson algebra, by defining the Peierls bracket\index{Peierls bracket} \cite{Pei,Mar} on the space of smooth functionals on the
 configuration space. Since we work in an off-shell setting, this space contains all the possible field configurations of a certain type. For example in case of the scalar field the configuration space is just $\Ci(M)$, the space of smooth functions on the manifold $M$. The dynamics is introduced later on. This configuration space of a classical field theory is an infinite dimensional locally convex vector space.  Fields that satisfy equations of motion constitute a subspace of the configuration space. The formulation for the scalar fields was done already in \cite{DF,DF04} whereas the generalization for the anticommuting fields was systematically described in \cite{Rej}. We give a short overview of these results in sections \ref{scal} and \ref{fer} accordingly. 
 
The construction of the Peierls bracket relies on the fact that the equations of motion of a given theory form a normally hyperbolic system. This is unfortunately not the case when one has a system subject to a local symmetry. Standard examples are Yang-Mills theories and gravity. In this case, to obtain a normally hyperbolic system, one has to ``fix the gauge''. As we already mentioned, this can be achieved in the framework of the BV formalism. We start chapter \ref{BVform} with some historical remarks on the development of the BRST and the BV method (section \ref{history}). Next we propose a new formulation of this framework that is based on infinite dimensional differential geometry and fits well with the principle of general local covariance. In sections  \ref{YM} and \ref{grav} we discuss the concrete examples, namely Yang-Mills theory and classical gravity. In the last example a further extension of the formalism is needed to encompass the notion of diffeomorphism invariant physical quantities. This issue is discussed in detail in section \ref{difftrans} and in section \ref{BVnat} the corresponding BV complex is constructed.

The second part of the thesis concerns quantum field theory. We start chapter \ref{pAQFT} with some general ideas on how to incorporate the Batalin-Vilkovisky formalism into the framework of perturbative algebraic quantum field theory ({\paqft}). We describe this framework in detail in section \ref{renorm}. Using these tools we finally provide the definition of the renormalized quantum master equation and the quantum BV operator in section \ref{renormBV}. Our construction differs from other approaches to mathematically rigorous formulations of the BV formalism, since we don't employ any explicit regularization scheme. Instead we work all the time with objects that are well defined and no divergences appear in the intermediate steps. Moreover our approach uses the notion of causality, therefore we stay all the time in spacetime with the physical signature. We also don't require the compactness of the underlying spacetime, since we work on manifolds that are globally hyperbolic. All these requirements are physically motivated, but they couldn't be consistently employed in versions of the BV quantization present in the mathematical literature \cite{Costello, CostBV,Froe,Stasheff2}. This was one of the motivations to take a fresh look of the BV formalism and try to understand its structure from the point of view of {\paqft}. The reason to do this is not only the conceptual understanding of already existing methods. The main motivation to put the BV quantization into the framework of locally covariant field theory is the perspective to apply it in quantum gravity. Indeed, it was proposed in \cite{F} (see also \cite{BF1}) that employing the causal perturbation theory one can define the perturbative quantum gravity as an effective theory and the background independence is achieved by using the principle of local covariance. This is a very promising perspective and the present thesis makes a first step towards fulfilling this program. Some of the results of the first part of this thesis were already published in \cite{FR,FR2,Rej} and these of the second part will be included in the upcoming paper \cite{FR3}. 
%---------------------------------------------------------------------------------------------------------
%
%                                           ==================
%                                          ||          Main matter          ||
%                                           ==================
% ====================
\part{Classical field theory}  %
% ====================
It is remarkable that nearly 200 years after Faraday and Maxwell the structure of classical field theory can still 
provide us with surprises. Although it doesn't carry the intriguing and slightly magical flavor of quantum field theory, it has the beauty and rich structure of its own. The problem of finding a coherent mathematical
structure for classical field theories has been addressed in various ways. In particular we want to mention two approaches: the multisymplectic geometry \cite{Got,Kij,CCI} and a formalism based on jet bundles and variational bicomplex \cite{Vin1,Vin2,And,Sard}. Both approaches aim basically at mathematically precise formulation of the variational calculus. There is however yet another possibility. One can take the infinite-dimensionality of field theory ``seriously'' and formulate it in the language of calculus on infinite dimensional spaces (see section \ref{clcvs}). This approach is motivated by the recent developments in perturbative algebraic quantum field theory \cite{BDF,DF,DF02,DF04}  and have opened a new perspective also for the classical field theory. Above all it provides us with a deeper conceptual understanding of the problem and allows to formulate the theory in a concise mathematical language. We follow this approach in the present work (section \ref{classfunc}), using some results of \cite{FR,BFR}.

Even more subtle than the classical theory of bosonic fields is the conceptual basis for the classical theory where fermions are present. Various attempts were made to tackle this problem. On the mathematical side there is again the variational bicomplex approach \cite{Sard,Sard2,SardGia} and the supermanifold \cite{FranPol} or graded manifold \cite{CarFig97,MonVal02}  formalism. The supermanifold approach was used in \cite{Bruz-Cian-84,Bruz-Cian-86,Bruz85,Jad}. The geometrical foundations of supermechanics on graded manifolds were formulated in \cite{IboMar93,Mon92,MonVal02}, including the notion of graded Lagrangian, tangent supermanifold, space of velocities and Hamiltonian mechanics of a graded system. Instead of following these approaches we want again to take a different perspective and look at the problem form the point of view presented in \cite{BDF,DF,DF02,DF04}. Since fermions arise primary in the quantum field theory and the classical equivalent has to be seen as a kind of a limit, it seems natural to use a framework which in a simple way can be related to the quantum case. This is easily realized in the functional approach to classical field theory and moreover the anticommuting variables are in this formalism on the equal footing with the commuting ones. This will be especially important in the context of BV formalism, where both kinds of fields appear. The complete treatment of classical field theory of fermions was presented in \cite{Rej}. Here we recall only the most important results. This will be done in section \ref{fer}.

We start our discussion of classical field theory with a chapter introducing mathematical structures that we shall need later for the formulation of the classical field in the locally covariant framework. In particular, in the discussion of gauge theories and the BV complex\index{BV!complex} we shall apply methods of infinite dimensional differential geometry. It is remarkable in physics, that to solve a problem one often has to use techniques from many different fields of mathematics. This is perhaps what makes the research so challenging and exciting, but on the other hand a single scientist is not able to explore in detail all the subtleties of the methods he is using. This is how the research in theoretical physics differs from pure mathematics. Bearing this in mind we don't attempt here to give a complete introduction into the mathematical methods we are using, since it would go far beyond the scope of this thesis. Instead we want to provide the reader with a basic vocabulary and give an overview of the fields of mathematics, that turned out to be useful in our framework. Before taking a jump into a vast ocean of definitions and theorems we want first to justify why those structures would be needed. 

The underlying idea of the BV formalism\index{BV!formalism} is actually very simple. To present it, we start with a toy model.
 Imagine that your physical system is characterized by a configuration space $E$, which has a structure of a finite dimensional manifold. The dynamics is implemented by a set of equations on the configuration space and  solutions of these equations form a subspace $E_S$ of $E$. It can now happen that your system has some symmetries, i.e. there exist one-parameter groups of transformations of $E_S$. Such transformations map solutions into new solutions. They correspond to vector fields on the submanifold $E_S$ (see figure \ref{geom}). 
  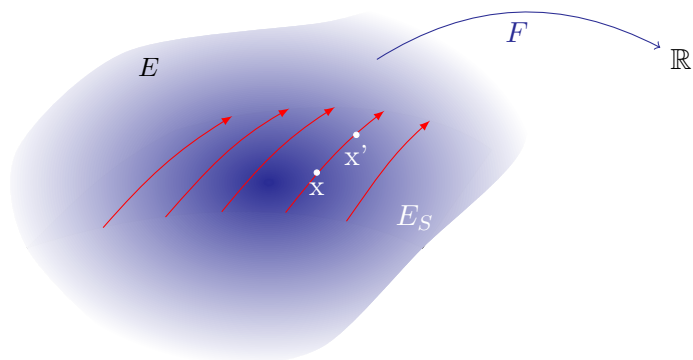
\begin{figure}[!htb]
 \begin{center}
 \begin{tikzpicture}
%\draw (0,0.5) node[diamond,draw,minimum size=2cm,name=s] {};
\filldraw[fill=see, scale=1.3]  (-2,0) .. controls (-0.5,0.5) and (1,0.5) .. (2,0) -- (2.7,1) .. controls (2.1,1.45) and (0.9,1.6) .. (-0.5,1.2) .. controls (-1.18,0.85) and (-1.5,0.5) .. (-2,0);
\shade[shading=radial,  inner color=darksee,outer color=white,opacity=0.75,scale=1.3] plot[smooth cycle] coordinates{(-1,-1) (-2,0)(-2.1,1)(-1,2)(1,2.3)(2,2.5)(3,2)(3,1)(2,0)(1,-1)(0,-1.2)};
\path[-latex] (-1.6,0.27) edge  [color=red,out=49,in=212]  (0.1,1.75);
\path[-latex] (-0.78,0.41) edge  [color=red,out=49,in=210]  (0.85,1.85);
\path[-latex] (-0.04,0.48) edge  [color=red,out=51,in=213]  (1.45,1.87);
\path[-latex] (0.8,0.47) edge  [color=red,out=52,in=218]  (2.1,1.82);
\path[-latex] (1.6,0.355) edge  [color=red,out=55,in=222]  (2.7,1.69);
%\draw (1.8,1.2) node[diamond,draw,minimum size=2cm] {$\Ocal'$};
\draw(-1,2.4) node {\small{$E$}};
%\draw(0,-2) node [name=F]{$\Phi_{(\Ocal,g)}(f)(h)$};
%\shade[shading=radial, inner color=darksee,outer color=see,scale=0.18] plot[smooth cycle] coordinates{(0,2)(-2,3)(-3.2,4)(-2.4,5.5)(1,6)(3,5)(3.9,3.2)(3,2.2)};
\draw(6,2.5) node [name=R] {$\RR$};
\draw(2.5,0.4) node [color=white] {$E_S$};
\filldraw[color=white] (1.21,1) circle (1pt) node [anchor=north,color=white]{x};
\filldraw[color=white] (1.73,1.5) circle (1pt) node [anchor=north,color=white]{x'};
%node {};
\path[->] (2,2.5) edge  [color=darksee, bend left]  node[right,below] {$F$} (R);
\end{tikzpicture}
\end{center}
\caption[Geometrical setting]{The geometrical setting of a classical theory with symmetries. Symmetry orbits are depicted as red arrows. Two solutions $x$ and $x'$ are laying on the same orbit. By performing an experiment we measure the observable $F$ and obtain a result which is a real number. If $F$ is invariant, then in particular $F(x)=F(x')$.\label{geom}}
\end{figure} 
In this toy model you can think of observables as functionals $F\in\Ci(E_S)$ on the solution space. In this interpretation performing an experiment means assigning to a given configuration $x$ of the system a certain number $F(x)$. But if your system has symmetries you should not be able to distinguish between solutions that lie on the same orbit. Therefore your physical observables are those functionals that are constant along those orbits: $\Ci_\inv(E_S)$. The aim of BV formalism is to use some simple geometrical consideration to describe the space $\Ci_\inv(E_S)$ as a certain homology.

But wait a minute! Isn't the configuration space of the field theory really infinite dimensional? Well, yes it is.
Actually there are two ways out of this problem. First is to use the jet space formalism, which is slightly heavy on the notation, but avoids delving into the infinite dimensional aspects of the problem. Another way out is to take the geometrical picture presented above quite literarily and translate it simply to the language of infinite dimensional manifolds. We chose the second option because in the end it provides a relatively simple language to describe those problems in a very general setting. However, to use the machinery of infinite dimensional geometry, one has to learn some vocabulary. 
%==============================================================
\chapter{Mathematical preliminaries}\label{math}
%==============================================================
\vspace{-5ex}
\begin{flushright}
 \begin{minipage}{10cm}
\textit{Ich glaube, da{\ss} es, im strengsten Verstand, f\"ur den Menschen nur eine einzige Wissenschaft gibt, und diese ist reine Mathematik. Hierzu bed\"urfen wir nichts weiter als unseren Geist.}
\vspace{-1ex}
\begin{center}
$***$
\end{center}
\textit{Die Mathematiker sind eine Art Franzosen: Redet man zu ihnen, so \"ubersetzen sie es in ihre Sprache, und dann ist es also bald ganz etwas anderes.}
\begin{flushright}
G. Ch. Lichtenberg
\end{flushright}
 \end{minipage}
\end{flushright}
\vspace{1ex}
\noindent\rule[2pt]{\textwidth}{1pt}
%\minitoc
%\clearpage
\vspace{3ex}\\
This is the program of our short tour into the land of mathematics. It all starts in the realm of functional analysis with some general facts concerning locally convex vector spaces. The reader who is well acquainted with those issues may just skip this part. After crossing the jungle of definitions and properties, where strange structures live, we will enter more safe territory of nuclear vector spaces. There the tensor products are well behaving and every practitioner of calculus can feel comfortable there. From there we go by a very frequently used path straight into the distributions' spaces. Since that land is quite well known to all mathematical physicists, we shall not spend too much time there, only shortly stopping to discuss the generalization to vector-valued distributions. Finally, we leave the realm of functional analysis and pay a visit to the category theory, careful to avoid too much ``abstract nonsense''. After this short trip into the land of mathematics we can come back to physics in chapter 2, bringing along new techniques and theorems.
\clearpage
 %---------------------------------------------------------------------------------------------------------
\section{Functional analysis}\label{topo}
%---------------------------------------------------------------------------------------------------------
\subsection{Locally convex vector spaces}\label{locconvvs}
%---------------------------------------------------------------------------------------------------------
Let the journey begin! We start this mathematical introduction with some basic definitions concerning the topological vector spaces. This section is based on \cite{Rud,Jar,Sch0,Koe}. 
\begin{defn}
A \textbf{topological space}\index{topological vector space} is a set $X$ in which a collection $\tau$\index{topology} of 
subsets (called open sets) has been specified, with the following properties:
\begin{itemize}
\item $X\in\tau$
\item $\emptyset\in \tau$
\item the intersection of any two open sets is open: $U\cap V\in\tau$ for $U,V\in\tau$
\item the union of every collection of open sets\index{set!open} is open: \\
$\bigcup\limits_{\alpha\in A}U_{\alpha}\in \tau$ for $U_{\alpha}\in \tau\ \forall \alpha\in A$. 
\end{itemize}
\end{defn}
Well, that was quite general. A topology doesn't give us too much structure, but let's see what we can do with it. A first thing to do is to look at mappings between the spaces. A topology already tells us something about the regularity of those mappings, since it contains already a notion of convergence.
\begin{defn}
A function $f : X \rightarrow Y$, where $X$ and $Y$ are topological spaces, is \textbf{continuous} if and only if for every open set $V \subseteq Y$, the inverse image:
\begin{equation}
    f^{-1}(V) = \{x \in X \; | \; f(x) \in V \}
\end{equation}
is open.
\end{defn}
In our applications the topology will not be enough to capture all the structure we need. In the physics context it is common that we want to add certain quantities and scale them. This leads in a natural way to a vector space structure. Now we want this structure to be compatible also with the topology.
\begin{defn}
A \textbf{Topological vector space} (\textsc{\textsc{tvs}}) is a pair $(X,\tau)$, where $\tau$ is a topology on a vector space $X$ such that:
\begin{itemize}
\item every point of $X$ is a closed set
\item the vector space operations are continuous with respect to $\tau$. 
\end{itemize}
\end{defn}
The subsets of topological vector spaces can have certain special properties. Here we list the most important ones:
\begin{defn}
Let $E$ be a vector space over a field $\mathbb{K}=\mathbb{C}$ or $\mathbb{R}$ and $A, B\subseteq E$:
\begin{enumerate}
\item $A$ is called \textbf{circled}\index{set!circled} if  $\, \forall\lambda \in \mathbb{K}, |\lambda| \leq 1 : \lambda A \subseteq A$.
\item $A$ is called \textbf{balanced}\index{set!balanced} if $\forall\lambda \in \mathbb{K}, |\lambda| = 1 : \lambda A \subseteq A$.
\item $A$ is said to \textbf{absorb} $B$ if there exists a $\lambda > 0$ with $[0, \lambda]\cdot B \subseteq A$. 
\item $A$ is called \textbf{absorbent}\index{set!absorbent} if $\forall x \in E$ : $A$ absorbs $\{x\}$. 
\item $A$ is called \textbf{convex}\index{set!convex} if $\mathbb{R}\ni\lambda_1 , \lambda_2 
\geq 0$, $\lambda_1 + \lambda_2 = 1$, $x_1 , x_2 \in A$ implies: $\lambda_1 x_1 + \lambda_2 x_2 \in A$. 
\item $A$ is called \textbf{absolutely convex}\index{set!absolutely convex} if $\lambda_1 , \ldots, \lambda_n \in\mathbb{K}$. $\sum\limits^n_ {i=1}|\lambda_i|\leq 1$, $x_1 , \ldots, x_n\in A$ implies $\sum\limits^n_ {i=1}\lambda_i x_i\in A$.
\item $A$ is called \textbf{bounded}\index{set!bounded} if for every neighborhood $U$ of zero, there exists a scalar $\lambda$ so that
$A\subseteq \lambda U$. In other words a set is called bounded if it is absorbed by every zero neighborhood.
\end{enumerate} 
\end{defn}
An important tool to characterize a topological vector space is a \textit{base}. It is defined as follows:
\begin{defn}
A \textbf{local base} of a topological vector space $X$ is thus a collection $\mathscr B$, of neighborhoods of 0 such that every neighborhood of 0 contains a member of $\mathscr B$. The open sets 
of $X$ are then precisely those that are unions of translates of members of $\mathscr B$.
\end{defn}
There are some important types of topological vector spaces, that have many nice properties and are therefore commonly used in mathematics and physics. In the following definitions, $X$ always 
denotes a topological vector space, with topology $\tau$.
\begin{enumerate}
\item $X$ is a \textbf{\textit{locally convex vector space}}\index{topological vector space!locally convex} (\textsc{\textsc{lcvs}}) if there is a local base  $\mathscr B$ whose members are convex. 
\item $X$ is locally bounded if 0 has a bounded neighborhood. 
\item $X$ is locally compact if 0 has a neighborhood whose closure is compact. 
\item $X$ is metrizable\index{topological vector space!metrizable} if $\tau$ is compatible with some metric d. 
\item $X$ is a Fr\'echet space\index{topological vector space!Fr\'echet} if $X$ is a complete locally convex space with a metrizable topology
\item $X$ is normable if a norm exists on $X$ such that the metric induced by the norm is 
compatible with $\tau$. 
\end{enumerate}
In our framework we shall use always topological vector spaces that are locally convex. On the practical grounds they can be also characterized in terms of \textit{seminorms}\index{seminorm} . This is how they are usually defined in the context of physics.
\begin{defn}
A \textbf{seminorm} on a vector space $X$ is a real-valued function $p$ on $X$ such that: 
\begin{enumerate}
\item $p(x + y) < p(x) + p(y)$ for all $x,y \in X$.
\item $p(\lambda x)=|\lambda|p(x)$ for all $x \in X$ and all scalars $\lambda\in\mathbb{K}$. 
\end{enumerate}
\end{defn}
We see that a seminorm already provides us with a lot of information, but there is one property that is still missing. We would like to have some notion of distance in our space, to compare different elements and a single seminorm is not enough to do it. Indeed, it can happen that a seminorm evaluated on an element is $0$ even though the element itself is non zero. If we want to avoid it, we arrive as
the notion of a \textit{norm}.\index{norm} 
\begin{defn}
A seminorm $p$ is a \textbf{norm} if it satisfies: $p(x)\neq0$ if $x\neq 0$.
\end{defn}
But what if our {\tvs} doesn't admit a norm? If we still want to get some notion of a distance, we can compare two elements not with one seminorm, but use a whole family.
\begin{defn}
A family $\mathscr{P}$ of seminorms on $X$ is said to be \textbf{separating}\index{separating!family of seminorms} if to each $x \neq0$ corresponds at least one $p\in\mathscr{P}$ with $p(x)\neq 0$. 
\end{defn}
We can see that a separating family of seminorms already allows us to distinguish two elements. From the following theorem it becomes clear why locally convex vector spaces are so commonly used in physics.
\begin{thm}
With each separating family of seminorms on $X$ we can associate a locally convex topology $\tau$ on $X$ and vice versa: every locally convex topology is generated by some family of separating seminorms. 
\end{thm}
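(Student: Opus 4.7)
The plan is to prove both implications by the standard dictionary between seminorms and absolutely convex absorbent neighborhoods of the origin.

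For the forward direction, given a separating family $\mathscr{P}$ of seminorms, I would declare a base of neighborhoods of $0$ to consist of all finite intersections of the form
\[
V(p_1,\dots,p_n;\varepsilon)=\{x\in X:p_i(x)<\varepsilon,\ i=1,\dots,n\},
\]
with $p_i\in\mathscr{P}$ and $\varepsilon>0$, then declare a set to be open iff it is a union of translates of such base sets. One first checks routinely that this collection $\tau$ is closed under finite intersections and arbitrary unions, so it is a topology. Each base set $V(p_1,\dots,p_n;\varepsilon)$ is convex, balanced and absorbent (a direct consequence of subadditivity and homogeneity of each $p_i$), which yields local convexity. Continuity of addition follows by writing $V(p_1,\dots,p_n;\varepsilon/2)+V(p_1,\dots,p_n;\varepsilon/2)\subset V(p_1,\dots,p_n;\varepsilon)$ via subadditivity, and continuity of scalar multiplication follows by splitting $\lambda x-\lambda_0 x_0=(\lambda-\lambda_0)x_0+\lambda(x-x_0)$ and using homogeneity together with absorbency. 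Finally I would use the separating hypothesis to show that for $x\neq 0$ there is $p\in\mathscr{P}$ with $p(x)\neq 0$, and then $V(p;p(x)/2)$ separates $0$ from $x$; this gives Hausdorffness and hence that $\{0\}$, and by translation every singleton, is closed.

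For the converse, given a locally convex topology $\tau$, I would start from a local base $\mathscr{B}$ of convex neighborhoods of $0$; by replacing each $U\in\mathscr{B}$ by the balanced convex hull of a smaller neighborhood one can assume each $U$ is absolutely convex and absorbent (this step uses the continuity of scalar multiplication, which makes small enough neighborhoods balanced). To each such $U$ I attach its Minkowski functional
\[
\mu_U(x)=\inf\{t>0:x\in tU\},
\]
and verify by elementary calculations with convexity and balancedness that $\mu_U$ is a seminorm satisfying
\[
\{x:\mu_U(x)<1\}\subset U\subset\{x:\mu_U(x)\le 1\}.
\]
These inclusions show that the topology induced by the family $\mathscr{P}=\{\mu_U:U\in\mathscr{B}\}$ coincides with $\tau$, because the $\mu_U$-balls form a neighborhood base of $0$ equivalent to $\mathscr{B}$. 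Separation of the family $\mathscr{P}$ follows from the axiom that points are closed: for $x\neq 0$ there exists $U\in\mathscr{B}$ with $x\notin U$, whence $\mu_U(x)\ge 1>0$.

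The main obstacle, in my view, is not any single step but keeping the two constructions genuinely inverse, i.e.\ making sure the verifications are done with the right local base. In particular one has to argue carefully that in the second direction one may always reduce a convex local base to an absolutely convex one (using that, in a \textsc{tvs}, every neighborhood of $0$ contains a balanced one), and that the seminorms obtained reproduce the original topology rather than just a coarser one; both points hinge on the two-sided inclusion for $\mu_U$ above. Everything else is bookkeeping with the defining inequalities of a seminorm.
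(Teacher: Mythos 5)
Your argument is correct and is precisely the standard one (finite intersections of seminorm balls in one direction, Minkowski functionals of absolutely convex absorbent neighborhoods in the other) found in Rudin's \textit{Functional Analysis}, which is exactly the reference the paper defers to for this theorem. No gaps: the two-sided inclusion for $\mu_U$ and the use of closedness of points for separation are the right pivots, so nothing further is needed.
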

\begin{proof}
See \cite{Rud}.
\end{proof}
Now we know, why are seminorms so nice. That's usually the best you can get, if you cannot equip your space with a norm. Besides they also provide a notion of distance. This is quite important, since if you want to think of the configuration space of your physical model you would like to be able to tell if a certain solution is ``close'' to a given one. It turns out that if our family of seminorms is countable this notion of distance is actually very precisely defined.
\begin{thm}\label{metrizable}
A locally convex vector space $(X,\tau)$ is metrizable\index{topological vector space!metrizable} iff $\tau$ can be defined by $\mathscr{P} = \{p_n : n \in \mathbb{N}\}$ a countable separating family of seminorms on $X$. One can equip $X$ with a metric which is compatible with $\tau$ and which provides a family of convex balls.
\end{thm}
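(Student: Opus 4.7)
The plan is to prove both implications of the equivalence and then separately address the convex balls statement.

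For the forward direction, assume $(X,\tau)$ is metrizable by a translation-invariant metric $d$. Then $B_k = \{x : d(x,0) < 1/k\}$, $k \in \mathbb{N}$, is a countable neighborhood base at $0$. Using the hypothesis of local convexity, inside each $B_k$ I can find a convex, balanced, absorbent open neighborhood $U_k$ of $0$ (by intersecting $B_k$ with an element of the convex local base). The Minkowski functional $p_k(x) := \inf\{t>0 : x \in t U_k\}$ is then a continuous seminorm, since $U_k$ is open, convex, balanced, and absorbent. The family $\mathscr{P} = \{p_k\}_{k \in \mathbb{N}}$ is separating because $\tau$ is Hausdorff (points are closed, so $\bigcap_k U_k = \{0\}$), and it induces $\tau$ because the $U_k$ already form a base.

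For the reverse direction, given a countable separating family of seminorms $\{p_n\}_{n \in \mathbb{N}}$, I would define the translation-invariant function
\[
d(x,y) \;=\; \sum_{n=1}^\infty \frac{1}{2^n}\,\frac{p_n(x-y)}{1+p_n(x-y)}.
\]
Then I would check step by step: convergence of the series (each term is bounded by $2^{-n}$); symmetry and translation invariance (immediate); the separation axiom $d(x,y)=0 \Rightarrow x=y$ (using the separating property of $\mathscr{P}$); and the triangle inequality, which reduces to the subadditivity of $\varphi(t) = t/(1+t)$ on $[0,\infty)$, itself a short consequence of the monotonicity of $\varphi$ and the inequality $\varphi(a+b) \le \varphi(a)+\varphi(b)$.

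The crucial step is then showing that the metric topology induced by $d$ coincides with $\tau$. In one direction, each $p_n$ is continuous with respect to $d$ because $d(x_\alpha,0)\to 0$ forces $p_n(x_\alpha)/(1+p_n(x_\alpha)) \to 0$, hence $p_n(x_\alpha)\to 0$; this shows $\tau$ is coarser than the $d$-topology. Conversely, for $\varepsilon>0$, I would fix $N$ so large that $\sum_{n>N} 2^{-n} < \varepsilon/2$, then pick $\delta>0$ with $\varphi(\delta)<\varepsilon/2$; any $x$ with $p_n(x)<\delta$ for $n \le N$ satisfies $d(x,0)<\varepsilon$, so the $\tau$-neighborhoods of $0$ generate the $d$-topology. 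This mutual entanglement of the tail of the series with the first $N$ seminorms is the main technical obstacle; everything else is routine.

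For the final claim about convex balls, the sets
\[
W_{N,\varepsilon} \;=\; \bigcap_{n=1}^{N}\bigl\{ x \in X : p_n(x) < \varepsilon\bigr\}, \qquad N \in \mathbb{N},\ \varepsilon>0,
\]
form a countable base of $0$-neighborhoods that are each convex, balanced, and absorbent, since $W_{N,\varepsilon}$ is precisely the open unit ball of the continuous seminorm $q_{N,\varepsilon}(x) := \varepsilon^{-1}\max_{n \le N} p_n(x)$. These provide the promised family of convex balls compatible with $\tau$, even though the $d$-balls themselves need not be convex because $\varphi$ fails to be homogeneous. One can alternatively observe that the metric $d$ is equivalent to the (still translation-invariant) quasi-norm obtained by taking Minkowski functionals of the $W_{N,\varepsilon}$, which gives a bona fide metric with convex balls.
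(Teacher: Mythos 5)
Your proof of the main equivalence is correct and follows the standard route (the paper itself only cites K\"othe and Rudin for this, and the metric you write down is exactly the one the paper displays in (\ref{metric})): Minkowski functionals of a countable convex balanced base give the seminorms in one direction, and the weighted sum $\sum 2^{-n}p_n/(1+p_n)$ gives the metric in the other, with your $\varepsilon$--$N$--$\delta$ argument for the coincidence of topologies being exactly what is needed.

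The gap is in the final clause about convex balls. Producing a countable base of convex neighborhoods $W_{N,\varepsilon}$ is not the same as producing a \emph{metric whose open balls are convex}, and your proposed repair --- that ``$d$ is equivalent to the quasi-norm obtained by taking Minkowski functionals of the $W_{N,\varepsilon}$'' --- is not a construction: the $W_{N,\varepsilon}$ form a two-parameter family, each Minkowski functional is a single seminorm (not a metric), and you do not say how to assemble them into one distance function whose balls are convex. Note that the paper itself warns, immediately after the theorem, that the metric (\ref{metric}) does \emph{not} in general have convex balls, precisely because $t\mapsto t/(1+t)$ destroys homogeneity; so the convex-ball claim genuinely requires a different metric. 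The standard fix is the dyadic construction of Rudin's Theorem 1.24, but in the locally convex setting there is also a short explicit one you could substitute: set
\[
d(x,y)\;=\;\sup_{n\in\mathbb{N}}\ \frac{1}{n}\,\min\bigl(1,\,p_n(x-y)\bigr)\,.
\]
This is a translation-invariant metric (subadditivity of $\min(1,\cdot)$ gives the triangle inequality, the separating property gives definiteness), it is compatible with $\tau$, and its open ball of radius $r$ equals $\bigcap_{n\le 1/r}\{x: p_n(x)<nr\}$, a \emph{finite} intersection of convex sets and hence convex. With this (or an appeal to Rudin 1.24) in place of your last paragraph, the proof is complete.
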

\begin{proof}
See \cite{Koe,Rud}.
\end{proof}
A \textsc{lcvs} from theorem \ref{metrizable} can be equipped with the metric:
\begin{equation}
d(x, y) := \sum\limits_{\in \mathbb{N}} 2^{-n} \frac{p_n (x-y)}{1 + p_n(x -y)}\label{metric}
\end{equation}
This metric is compatible with $\tau$ but in general this metric doesn't provide convex balls (see the discussion in \cite{Rud} after theorem 1.24 and exercise 18). Nevertheless it is good to know that you have a metric that can actually be written down in a closed form. You can also invent as many modifications of this definition as you want. If $X$ is complete with respect to the metric from theorem \ref{metrizable} it is a Fr\'echet space. Usually a Fr\'echet space topology is defined explicitly by providing a countable separating family of seminorms. Those spaces are already well behaving but still not optimal for the calculus. Everything becomes much easier if we go one step further and equip our topological space with a norm. Among normed topological spaces there is a class that is especially favored by all functional analysts.
\begin{defn}
A \textbf{Banach space}\index{topological vector space!Banach} is a normed \textsc{tvs} which is complete with respect to the norm. 
\end{defn}
\begin{thm}
A topological vector space $X$ is normable if and only if its origin has 
a convex bounded neighborhood. 
\end{thm}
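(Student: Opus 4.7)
The plan is to prove the Kolmogorov normability criterion by extracting a norm from the given convex bounded neighborhood via its Minkowski functional.

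The forward direction is immediate: if $X$ carries a norm compatible with $\tau$, then the open unit ball $\{x : \|x\| < 1\}$ is simultaneously convex (by the triangle inequality) and bounded (it is absorbed by $\lambda U$ for any neighborhood $U$ of $0$, since small enough multiples of $U$ sit inside any norm ball), and it is a neighborhood of the origin. So all the work lies in the converse.

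For the nontrivial direction, suppose $V$ is a convex bounded neighborhood of $0$. The first step is to replace $V$ by an \emph{absolutely convex} (i.e.\ balanced and convex) bounded neighborhood $U \subseteq V$. This uses the standard fact in any \textsc{tvs} that the scalar multiplication $(\lambda,x)\mapsto \lambda x$ is continuous, so one can find a balanced neighborhood $W$ of $0$ contained in $V$; taking the convex hull of $W$ and intersecting with $V$ produces an absolutely convex neighborhood, and it remains bounded since it lies in the bounded set $V$. Moreover $U$ is absorbing because every neighborhood of $0$ in a \textsc{tvs} is absorbing.

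Next I would define the Minkowski functional
\begin{equation}
\|x\| := \mu_U(x) = \inf\{t > 0 : x \in tU\}.
\end{equation}
Standard arguments show that absolute convexity plus absorbency make $\mu_U$ a seminorm. To upgrade it to a \emph{norm}, I use both the Hausdorff property (which is built into the paper's definition of \textsc{tvs} via the axiom that points are closed) and the boundedness of $U$: if $x \neq 0$, Hausdorffness gives a neighborhood $W$ of $0$ with $x \notin W$, while boundedness gives some $\lambda > 0$ with $U \subseteq \lambda W$, so $x \notin \lambda^{-1}U$, hence $\mu_U(x) \geq \lambda^{-1} > 0$.

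The main obstacle, and the step I would handle most carefully, is showing that the norm topology $\tau_{\|\cdot\|}$ coincides with the original topology $\tau$. Here the boundedness of $U$ is essential: for any $\tau$-neighborhood $W$ of $0$, boundedness yields $\lambda > 0$ with $U \subseteq \lambda W$, i.e.\ $\lambda^{-1}U \subseteq W$, which means the norm ball $\{x:\|x\|<\lambda^{-1}\}$ is contained in $W$; conversely, since $U$ is itself a $\tau$-neighborhood and $\{x:\|x\|<1\}\subseteq U$ by definition of $\mu_U$, each norm ball is a $\tau$-neighborhood of $0$. Translating gives a common neighborhood base at every point, so the two topologies agree and $X$ is normable.
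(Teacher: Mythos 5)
The paper gives no proof of its own here --- it simply cites Rudin --- and your argument is precisely the standard Kolmogorov normability criterion as proved there (Rudin, Thm.~1.39): pass to an absolutely convex bounded neighborhood, take its Minkowski functional, use Hausdorffness plus boundedness to get definiteness, and use boundedness again for one half of the topology comparison. So the route is the right one and matches the cited source. One step is stated backwards, though: in your final sentence you argue that each norm ball is a $\tau$-neighborhood of $0$ because $\{x:\|x\|<1\}\subseteq U$ and $U$ is a $\tau$-neighborhood; a subset of a neighborhood need not be a neighborhood, so that inclusion points the wrong way. What you actually need is the containment $\epsilon U\subseteq\{x:\|x\|<1\}$ for any $0<\epsilon<1$ (immediate from the definition of $\mu_U$, since $x\in\epsilon U$ forces $\mu_U(x)\le\epsilon<1$), together with the observation that $\epsilon U$ is a $\tau$-neighborhood of $0$ because multiplication by the nonzero scalar $\epsilon$ is a homeomorphism. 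With that one-line repair the proof is complete.
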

\begin{proof}
See \cite{Rud}.
\end{proof}
Many of the fundamental theorems of functional analysis don't work outside Banach spaces, or they need some more assumptions. Nevertheless notions of calculus exist on general locally convex vector spaces and a lot can be proven also in this setting. We give an overview of the crucial results in section \ref{idc}. It turns out that at some point we would have to leave the save realm of Banach and move the calculus to a more general setting. But before we do it, we shall stop for a while and admire the dualities.

A very important notion in the theory of \textsc{lcvs} is the duality. A dual space of $E$, denoted by $E'$ is the space of continuous linear mappings $L(E,\RR)$.  In general it can be equipped with different inequivalent topologies:
\begin{enumerate}
\item \textbf{Weak topology}:\index{topology!weak} the so called pointwise convergence topology. A functional $x'\in E'$ converges to $0$ in this topology if all its values $\left<x',x\right>$ converge to $0$, for all $x\in E$.
\item \textbf{Strong topology}:\index{topology!strong} topology of uniform convergence on bounded sets of  $E$.
\item \textbf{Compact convergence topology}\index{topology!compact convergence}: topology of uniform convergence on compact sets of  $E$.
\end{enumerate}
One notices that the weak topology contains the least information about how good the convergence is. One can only see what is happening at a given point. It is of course better to control convergence in the whole bounded set. Therefore, strong and compact convergence topology are more reliable, but of course it is usually more difficult to prove  them. At this point one more thing requires a comment. Until now we didn't really give a definition of a \textit{uniform convergence}. Actually, it's a very general notion and can be applied even in a structure more general than a {\lcvs}, namely in a \textit{uniform space}. Since we don't need such a level of generality we only give a definition of a uniform convergence here for maps between topological vector spaces. Before we do it there is one more important notion we have to introduce, namely a \textit{net}\index{net}. It is a generalization of a sequence and will be useful in the context of algebraic formulation of field thory.
\begin{defn}
Let $I$ be a directed set (i.e. a nonempty set with a reflexive and transitive binary relation $\leq$ with the additional property that every pair of elements has an upper bound) and $V\subset X$ a general set in a {\tvs} $X$. If for each $\alpha\in I$ an element $x_\al\in V$ is given, then $(x_\al)_{\al\in Z}$ is a net in $V$. 
\end{defn}
The notion of convergence of nets is defined in the following way:
\begin{defn}
A net $(x_\al)_{\al\in Z}$  is said to be convergent\index{convergence!of a net} to $x\in X$ if for each neighborhood $U$ of $x$, there exists $\beta(U)\in I$, such that $x_\gamma\in U$ for all $\gamma\geq \beta(U)$.
\end{defn}
Now we can come back to the discussion of uniform convergence. Consider $X$ to be a subspace of the space of all mappings between {\lcvs} $E$ and $F$. We say that a net $(f_\al)_{\al\in I}$ in $X$ converges uniformly\index{convergence!uniform} to an element $f\in X$ on a set $B\subset E$ if  for every $U$ open in $F$ there exists $\beta(U)$ such that $f_\al(x)-f(x)\in U$ \textit{for all} $x\in B$ and $\alpha\geq \beta(U)$. Now we can give a precise definition of the topology of uniform convergence on compact (bounded) sets. We say that a net $(f_\al)_{\al\in I}$ in $X$ converges to $f\in X$ in this topology if it converges uniformly on all the compact (bounded) sets. It is clear that this notion contains more information on the ``quality'' of the convergence than the pointwise topology.
%---------------------------------------------------------------------------------------------------------
\subsection{Nuclear locally convex vector spaces}\label{nuclear}
%---------------------------------------------------------------------------------------------------------
Working on the level of infinite dimensional vector spaces often causes problems, since there are many counterintuitive properties and one has to be careful with using notions known from the finite dimensional case. A particular difficulty arises with the definition of tensor products. On the level of infinite dimensional \textsc{lcvs} the projective and injective tensor products do not coincide. There is however a class of \textsc{lcvs} for which this is the case. Those are called \textit{nuclear spaces}. We give now an abstract definition of nuclear \textsc{lcvs} and list some properties that are usefull in proving that a given space is nuclear. Many of the spaces used in this work have the nuclearity property. 

First we introduce a notion of a \textit{\textbf{nuclear operator}}. In principle an operator is nuclear if it can be approximated by operators of finite rank (finite dimensional image). A nuclear operator has certain nice properties analogous to finite-dimensional operators. For example a nuclear operator mapping a space with a basis into itself has a finite trace, which is given by the sum of the series formed from the diagonal elements of the matrix of this operator relative to an arbitrary basis. Below we give a precise definition of a nuclear operator on a locally convex vector space. Let $E$ and $F$ be {\lcvs} over the field $\RR$ or $\CC$, let  $E'$ and  $F'$ be their strong duals. 
\begin{defn}
A linear operator $A:E\rightarrow F$ is called \textit{\textbf{nuclear}}\index{nuclear!operator} if it can be represented in the form
\[
x\mapsto Ax=\sum\limits_{i=1}^\infty\lambda_i\left<x,x_i'\right>y_i\,,
\]
where $\{\lambda_i\}$ is a summable numerical sequence, $\{x_i'\}$ is an equicontinuous\footnote{A set $A$ of continuous functions between two topological spaces $E$ and $F$ is equicontinuous at the points $x_0 \in E$ and $y_0\in F$  if for any open set $\Ocal$ around  $y$, there are neighborhoods $U$ of $x_0$ and $V$ of $y_0$ such that for every $f \in A$, if the intersection of $f(U)$ and $V$ is nonempty, then $f(U) \subseteq \Ocal$. One says that $A$ is equicontinuous if it is equicontinuous for all points $x_0\in E$, $y_0\in F$. The notion of equicontinuity becomes more intuitive, if we choose $E$ and $F$ to be metric spaces. The family $A$ is equicontinuous at a point $x$ if for every $\epsilon>0$, there exists a  $\delta > 0$ such that $d(f(x_0), f(x)) < \epsilon$ for all $f \in A$ and all $x$ such that $d(x_0, x) <\delta$. In other words we require all member of the familiy $A$ to be continuous and to have equal variation over a given neighbourhood.}\index{equicontinuity} sequence in $E'$, $\{y_i\}$ is a sequence of elements from a certain complete bounded convex circled set in $F$ and $\left<x,x_i'\right>$ denotes the value of the linear functional $x_i'$ at a vector $x$.
\end{defn}
Now we can define what is a  \textit{\textbf{nuclear space}}.
\begin{defn}
A nuclear space\index{nuclear!space} is a locally convex vector space for which all continuous linear mappings into an arbitrary Banach space are nuclear operators.
\end{defn}
Nuclear spaces are commonly used in analysis, since the projective and injective tensor products (see section \ref{tensor}) are equivalent for such spaces and also an analogue of Schwartz' kernel theorem is valid \cite{Gro}. They also have quite good permanence properties that allow us to prove that a given space is nuclear without direct calculations. Here we give a list of those properties based on \cite{Pie,Jar}.
\begin{thm}\label{nuc} The following spaces are nuclear:
\begin{enumerate}[i)]
\item a linear subspace of a nuclear \textsc{lcvs},\label{nsubs}
\item a quotient of a nuclear \textsc{lcvs} by a closed linear subspace,
\item a cartesian product and a projective limit of an arbitrary family of nuclear \textsc{lcvs},\label{nprod}
\item a countable direct sum and a countable inductive limit of  nuclear \textsc{lcvs},
\item a projective tensor product of nuclear spaces.
\end{enumerate}
\end{thm}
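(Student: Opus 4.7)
The plan is to reduce the problem to an equivalent internal characterization of nuclearity that behaves well under categorical constructions. Recall that $E$ is nuclear if and only if for every continuous seminorm $p$ on $E$ there exists a continuous seminorm $q\geq p$ such that the canonical linking map $\hat E_q\to\hat E_p$ between the completed local Banach spaces is nuclear. Each of the five permanence properties will then be proved by exhibiting, for a given continuous seminorm on the constructed space, a nuclear linking map built from nuclear linking maps of the constituents.

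For (i) and (ii) one transports the seminorm picture from $E$ to the subspace or quotient: continuous seminorms on a subspace $F\subseteq E$ are restrictions of continuous seminorms on $E$, while continuous seminorms on $E/N$ correspond to continuous seminorms on $E$ vanishing on $N$. The class of nuclear operators is closed under restriction to a closed subspace of the source and under passage to a quotient of the target Banach space, which yields the required nuclear linking maps. For (iii), a continuous seminorm on a product $\prod_{i\in I}E_i$ factors through finitely many coordinates, so the associated local Banach space is a finite product of local Banach spaces of the factors, and a finite product of nuclear maps is nuclear; projective limits then follow by combining (iii) with (i), as they embed as closed subspaces of products. For (iv), continuous seminorms on a countable direct sum (or countable inductive limit) are controlled factor-by-factor, and a countable direct sum of nuclear operators is still nuclear --- here countability is essential, to ensure that the summability coefficients across summands can be arranged into a single summable sequence.

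For (v), the projective tensor product, one exploits that continuous $\pi$-seminorms on $E\otimes_\pi F$ are of the form $p\otimes_\pi p'$ for continuous seminorms $p$ on $E$ and $p'$ on $F$. Choosing nuclear representations $Ax=\sum_i\lambda_i\left<x,x_i'\right>y_i$ and $A'u=\sum_j\mu_j\left<u,u_j'\right>v_j$ of the linking maps on each factor, one forms a nuclear representation for the tensored linking map with coefficients $\{\lambda_i\mu_j\}$, functionals $\{x_i'\otimes u_j'\}$ in the dual of $E\otimes_\pi F$, and image vectors $\{y_i\otimes v_j\}$ in the completion. Summability of $\{\lambda_i\mu_j\}$ follows from the Cauchy product of summable sequences.

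The main obstacle will be (v): verifying that $\{x_i'\otimes u_j'\}$ is genuinely equicontinuous on $E\otimes_\pi F$ and that $\{y_i\otimes v_j\}$ lies in a complete absolutely convex bounded set is the delicate point, and rests on the compatibility of the projective tensor product topology with the equicontinuity and boundedness structure of the factors. A secondary technical point, appearing already in (i), is that the completion of a subspace in its induced topology should agree with its closure in the ambient completion, which is needed to transport the linking maps cleanly between local Banach spaces.
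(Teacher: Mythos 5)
The paper itself does not prove this theorem: it is stated as a list of permanence properties taken from Pietsch and Jarchow, with those references standing in for the proof. Your outline is exactly the route those references take --- reduce nuclearity to the local-Banach-space criterion (for every continuous seminorm $p$ there is $q\geq p$ with nuclear linking map $\hat E_q\to\hat E_p$) and then verify that each construction produces nuclear linking maps from those of the constituents. Parts (iii), (iv) and (v) are correct in outline: products and projective limits via finite coordinate dependence and closed embedding into products, inductive limits via realizing them as quotients of countable direct sums, and the tensor case via the Cauchy-product representation with coefficients $\{\lambda_i\mu_j\}$.

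There is, however, a genuine gap in your treatment of (i) and (ii). You invoke that ``the class of nuclear operators is closed under restriction to a closed subspace of the source and under passage to a quotient of the target Banach space,'' but these are the easy closure properties and not the ones the argument actually needs. For a subspace $M\subseteq E$, the linking map $\hat M_q\to\hat M_p$ is the \emph{corestriction} of $\hat E_q\to\hat E_p$ to the closed subspace $\hat M_p$ of the target, and nuclear operators between Banach spaces are \emph{not} stable under corestriction: a representation $\sum_i\lambda_i\left<\,\cdot\,,x_i'\right>y_i$ has its vectors $y_i$ in $\hat E_p$, not in $\hat M_p$, and there exist operators nuclear into the ambient space but not into the closed subspace containing their range (the nuclear operators do not form an injective operator ideal). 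Dually, for (ii) one needs the induced map on the quotient local Banach space to be nuclear, which again does not follow from a representation of the map upstairs. The standard repair, which is what Pietsch actually does, is to replace the nuclear linking maps by quasi-nuclear (or absolutely $2$-summing) ones: these characterize nuclearity equally well because a composite of two of them is nuclear, and they \emph{are} stable under corestriction and under passage to quotients. Without this intermediate class, or an equivalent device such as factorization through Hilbert spaces, the proofs of (i) and (ii) as you describe them do not close.
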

From properties \ref{nsubs}) and \ref{nprod}) one can obtain a following useful corollary (see \cite{Pie}, proposition 5.2.3):
\begin{cor}[ \cite{Pie}, 5.2.3]\label{ninit}
Let $E$ be a linear space which is mapped by certain linear mappings $T_i$ into locally convex vector spaces $E_i$, $i\in I$ in such a way that for each element $x\neq 0$ there exists an index $i_0\in I$ with $T_{i_0}x\neq 0$. Space $E$ can identified with a subspace of $\prod\limits_I E_i$ and equipped with the initial topology with respect to this family of mappings. If all spaces $E_i$ are nuclear, then this topology on $E$ is also nuclear.
\end{cor}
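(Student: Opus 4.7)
The plan is to deduce this corollary directly from parts \ref{nsubs}) and \ref{nprod}) of Theorem \ref{nuc}, essentially by realizing the initial topology as a subspace topology of a product.

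First I would introduce the diagonal-type mapping $T : E \to \prod_{i \in I} E_i$ defined by $T(x) := (T_i x)_{i \in I}$. The hypothesis that for every $x \neq 0$ there exists $i_0 \in I$ with $T_{i_0} x \neq 0$ says precisely that $\ker T = \{0\}$, so $T$ is an injective linear map. This allows us to identify $E$ with the linear subspace $T(E) \subseteq \prod_{i \in I} E_i$.

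Next I would observe that, by the universal property of initial topologies, the initial topology on $E$ induced by the family $\{T_i : E \to E_i\}_{i \in I}$ coincides with the topology pulled back from $\prod_{i \in I} E_i$ (equipped with the product locally convex topology) via $T$. Equivalently, under the identification $E \cong T(E)$, the initial topology is precisely the subspace topology inherited from the product.

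Finally I would invoke the permanence properties from Theorem \ref{nuc}: by part \ref{nprod}), the cartesian product $\prod_{i \in I} E_i$ of the nuclear spaces $E_i$ is nuclear; by part \ref{nsubs}), every linear subspace of a nuclear space is nuclear, so $T(E)$, and hence $E$, is nuclear in the initial topology. No real obstacle is expected here — the statement is a formal consequence of the two permanence properties once one recognizes the initial topology as a subspace-of-product topology; the only point requiring a little care is checking the equivalence of the initial and subspace topologies, which is a standard universal-property argument.
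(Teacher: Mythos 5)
Your proposal is correct and follows exactly the route the paper intends: the text introduces this corollary as a consequence of parts \ref{nsubs}) and \ref{nprod}) of Theorem \ref{nuc}, i.e.\ by identifying $E$ (with the initial topology) with a subspace of the product $\prod_I E_i$ via the injective diagonal map and invoking the permanence of nuclearity under products and subspaces. The only step the paper leaves implicit, the equivalence of the initial topology with the subspace topology of the product, is the standard universal-property argument you describe.
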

Many of the spaces commonly used in analysis are nuclear. In particular the spaces of smooth function $\Ecal$, compactly supported functions  $\Dcal$ and Schwartz functions $\Scal$ as well as their strong and weak duals are nuclear.
%---------------------------------------------------------------------------------------------------------
\subsection{Tensor products}\label{tensor}
%---------------------------------------------------------------------------------------------------------
% @@@ projective, injective, completed, space of linear mappings as a completed tens product, approximation prop.
Nuclear spaces introduced in the previous section are particularly useful in the context of tensor products on \textsc{lcvs}. We will explain in section \ref{axioms} how the tensor structure is related to the causality. More on that issue can be found in \cite{FR2}. It is well known that the definition of topological tensor product is not unique for general locally convex vector spaces. Most of the results on this subject can be found in the thesis of A. Grothendieck \cite{Gro}. In principle there are two natural notions that can be applied in this case: projective and injective tensor product. We recall here both definitions. More details can be found in \cite{Jar, Bou}.
\begin{defn}
Let $E$ and $F$ be locally convex vector spaces and let $\otimes: E\times F\rightarrow E\otimes F$ be the canonical map into the corresponding tensor product. The finest topology on $E\otimes F$ which makes $\otimes$ continuous is called \textit{\textbf{the projective tensor topology}}\index{tensor product topology!projective} or the $\pi$-topology. The space $E\otimes F$ equipped with this topology is denoted by $E\otimes_\pi F$
\end{defn}
It can be shown that the topology $\pi$ is locally convex. Another possible topology on $E\otimes F$ is the so called  \textit{\textbf{injective tensor topology}}\index{tensor product topology!injective}. Its definition is a little bit more involved. In some sense it is the weakest well behaving topology one can put on $E\otimes F$. The idea is to define it via the topology on the space of continuous linear mappings $L(E'_\gamma, F)$. We equipped $E'$ with the finest locally convex topology $\gamma$ that coincides with the weak one on equicontinuous sets. One can identify $E\otimes F$ with a subspace of $L(E'_\gamma, F)$. Next we equip $L(E'_\gamma, F)$ with a topology of uniform convergence on equicontinuous compact sets in $E'$.  We denote the resulting topological space by $E\varepsilon F$. It is called \textit{the $\varepsilon$-product} of $E$ and $F$. The corresponding topology induced on $E\otimes F$ is called the $\varepsilon$-topology and $E\otimes F$ equipped with it is the injective tensor product $E\otimes_\varepsilon F$. This topology is better behaving if we want to consider for example vector valued distributions (see section \ref{vvalued}) and was used (in a slightly modified version) by L. Schwartz in \cite{Sch1,Sch2}. Inequivalent notions of tensor products on \textsc{lcvs} can possibly create a problem, but there is a large class of spaces, where they coincide. The crucial result, proved by A. Grothendieck \cite{Gro} says that:
\begin{thm}
$E$ is a nuclear locally convex vector space if and only if for each arbitrary \textsc{lcvs} $F$ the projective and injective tensor products coincide, i.e.
\[
E\otimes_\varepsilon F=E\otimes_\pi F\,.
\]
\end{thm}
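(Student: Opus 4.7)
The plan is to prove the two implications separately, the forward direction being relatively quick once one invokes the standard ``local'' characterization of nuclearity, and the converse requiring a clever choice of the auxiliary space $F$.

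For the forward direction ($E$ nuclear $\Rightarrow E\otimes_\pi F = E\otimes_\varepsilon F$), I would use the well-known internal characterization: $E$ is nuclear iff for every continuous seminorm $p$ on $E$ there exists a continuous seminorm $q\geq p$ such that the canonical map between Banach completions $\widehat{E}_q\to\widehat{E}_p$ is a nuclear operator. Combined with Theorem \ref{nuc}, part about subspaces and projective limits, one can write $E$ as a topological projective limit of Banach spaces with nuclear transition maps. The key lemma at the Banach-space level is that if $u\colon X\to Y$ is a nuclear operator between Banach spaces, then for any Banach space $Z$ the induced projective and injective tensor norms agree on the image $(u\otimes\mathrm{id}_Z)(X\otimes Z)$ inside $Y\otimes Z$; this follows directly from the nuclear representation $u(x)=\sum \lambda_i \langle x,x_i'\rangle y_i$ with $(\lambda_i)\in \ell^1$, which gives an absolutely convergent expansion that can be compared with the $\varepsilon$-norm term by term. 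Passing to the projective limit over seminorms on $E$ and applying the analogous construction on the $F$-side via an arbitrary continuous seminorm on $F$ transfers the equality $\pi=\varepsilon$ from the Banach level to the level of $E\otimes F$.

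For the converse, the hard part, I would argue by contrapositive: assume $E$ is not nuclear and produce a specific $F$ for which $\pi\neq\varepsilon$ on $E\otimes F$. The natural candidate is an $\ell^1$-type space (or the dual of a suitable Banach space). Concretely, failure of nuclearity means there exists a continuous seminorm $p$ on $E$ such that for no larger continuous seminorm $q$ is the map $\widehat{E}_q\to\widehat{E}_p$ nuclear. One chooses $F$ so that an operator $E\to Z$ with $Z$ Banach factors through the tensor product, and then the two norms on $E\otimes F$ distinguish nuclear from merely continuous operators. Specifically, one uses the classical isomorphisms $E\otimes_\pi \ell^1(I)\cong \ell^1(I,E)$ and $E\otimes_\varepsilon c_0(I)\cong c_0(I,E)$ together with duality, so that equality of the two tensor topologies forces every continuous linear map from $E$ into an arbitrary Banach space to be nuclear, i.e.\ exactly the defining property of a nuclear space.

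The main obstacle is the converse implication, specifically turning the abstract equality of tensor topologies into the concrete statement that every continuous linear map from $E$ into a Banach space admits a nuclear representation $\sum \lambda_i \langle \,\cdot\,,x_i'\rangle y_i$. This requires handling duality carefully for non-metrizable $E$: the strong dual of $E$ need not be nice, so one must work instead with equicontinuous subsets of $E'$ (as built into the definition of $\varepsilon$-topology) and extract summable coefficients $(\lambda_i)$ from bounded sequences in $E\otimes_\pi F$ that fail to be bounded in $E\otimes_\varepsilon F$ otherwise. Once this reconstruction of nuclear representations is in place, the permanence properties in Theorem \ref{nuc} package everything neatly, and we appeal to Grothendieck's original proof in \cite{Gro} for the detailed estimates rather than redoing them here.
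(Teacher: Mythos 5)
The paper does not prove this theorem at all: it is stated as Grothendieck's result with a bare reference to \cite{Gro}, so there is no in-text argument to measure yours against. Your sketch follows the standard route (the one carried out in \cite{Gro} and \cite{Pie}): reduce the forward direction to the local Banach spaces $\widehat{E}_p$ and exploit a nuclear linking map $\widehat{E}_q\to\widehat{E}_p$, and prove the converse by specializing the auxiliary space $F$ to an $\ell^1$-type space. That is the right architecture, and like the paper you ultimately lean on Grothendieck for the hard estimates.

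Two steps need repair before the outline could stand as a proof. First, in the forward direction the nuclear representation $u(x)=\sum_i\lambda_i\langle x,x_i'\rangle y_i$ does not make the projective and injective norms \emph{agree} on the image of $u\otimes\mathrm{id}_Z$; what the term-by-term comparison actually yields is the one-sided estimate $\pi\big((u\otimes\mathrm{id}_Z)t\big)\le\big(\sum_i|\lambda_i|\big)\,\varepsilon(t)$, i.e.\ the downstream $\pi$-seminorm is dominated by the upstream $\varepsilon$-seminorm. Combined with the universally valid inequality $\varepsilon\le\pi$, this domination is exactly what forces the two topologies to coincide on $E\otimes F$ after running over the defining seminorms of $E$; stated as an equality of norms on the image, the claim is false and would derail the argument. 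Second, in the converse the hypothesis compares $\pi$ and $\varepsilon$ on $E\otimes F$ for one and the same $F$, so both identifications must be taken against $F=\ell^1$: the completion $E\widehat{\otimes}_\pi\ell^1$ is the space of absolutely summable sequences in $E$, while $E\widehat{\otimes}_\varepsilon\ell^1$ is the space of (unconditionally) summable sequences, and the equality of these two is precisely the Grothendieck--Pietsch criterion for nuclearity. Bringing in $E\otimes_\varepsilon c_0(I)\cong c_0(I,E)$ compares tensor products against \emph{different} spaces and cannot be fed the stated hypothesis; drop $c_0$ and work with $\ell^1$ throughout, after which the extraction of a nuclear representation for an arbitrary continuous map $E\to Z$ is the content of the Grothendieck--Pietsch theorem you are implicitly invoking.
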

%---------------------------------------------------------------------------------------------------------
\section{Distributions}\label{distr}
%---------------------------------------------------------------------------------------------------------
In the previous section we introduced many abstract notions, now it's time for something more practical. Most commonly used locally convex vector spaces in physics are spaces of functions and distributions. Therefore we want to concentrate on them in this section, where we give a short summary of the definitions and theorems from the theory of distributions that we are going to use later on. The purpose of this short revision is twofold. Firstly we want to show how the abstract notions from topology work in specific 
examples. Secondly we want to fix the notation and present the concepts we need in a language consistent with the rest of the thesis. Since the theory of distributions is quite a standard tool in mathematical physics we can now feel more relaxed in our journey through the land of mathematics and enjoy nice views, seen now from a slightly different perspective. First we discuss the distributions on $\RR^n$, next we remark on generalization to distributions on manifolds and vector valued distributions. 
%---------------------------------------------------------------------------------------------------------
\subsection{Distributions on $\Omega\subset\RR^n$}\label{distr1}
%---------------------------------------------------------------------------------------------------------
We start with defining locally convex topologies on spaces of smooth functions on $\RR^n$. 
Definitions and theorems in this section are taken from \cite{Rud,Hoer,Trev,Sch0}. The part on wave front sets is based on the chapter 4 of \cite{BaeF}.

Let $\Omega\subset\RR^n$ be an open subset and $\Ecal(\Omega)\doteq\Ci(\Omega)$ the space of smooth functions on it. We equip this space with a Fr\'echet topology generated by the family of seminorms:
\be\label{topE}
p_{K,m}(\ph)=\sup_{x\in K\atop |\alpha|\leq m}|\partial^\alpha\ph(x)|\,,
\ee
where $\alpha\in\NN^N$ is a multiindex and $K\subset \Omega$ is a compact set. This is just the topology of uniform convergence on compact sets mentioned in section \ref{locconvvs} of all the derivatives. 
%A set $B\subset\Gamma(M,V)$ is bounded if $\sup_{\ph}\{p_{K,m,a}(\ph)\}<\infty$ for all seminorms $p_{K,m,a}$.
%Let $\Bcal$ be the family of bounded sets in $\Gamma(M,V)$. The \textit{strong topology} on the dual space $\Gamma'(M,V)$ is defined by a family of seminorms: $p_B(T)\doteq\sup_{\ph\in B}\left<T,\ph\right>$, where $B\in \Bcal$, $T\in \Gamma'(M,V)$, $\ph\in \Gamma(M,V)$.

The space of smooth compactly supported functions $\Dcal(\Omega)\doteq\Ci_c(\Omega)$ can be equipped with a locally convex topology in a similar way. The fundamental system of seminorms is given by \cite{Sch0}:
\be\label{topD}
p_{\{m\},\{\epsilon\},a}(\ph)=\sup_\nu\big(\sup_{|x|\geq\nu,\atop |p|\leq m_\nu} \big|D^p\ph^a(x)\big|/\epsilon_\nu\big)\,,
\ee
where $\{m\}$ is an increasing sequence of positive numbers going to $+\infty$ and $\{\epsilon\}$ is a decreasing one tending to $0$. This topology is no longer Fr\'echet.

The space of \textbf{\textit{distributions}}\index{distribution} is defined to be the dual  $\Dcal'(\Omega)$ of $\Dcal(\Omega)$ with respect to the topology given by (\ref{topD}). Equivalently, given a linear map $L$ on $\Dcal(\Omega)$ we can decide if it is a distribution by checking one of the equivalent conditions given in the theorem below \cite{Trev,Rud,Hoer}.
\begin{thm}
A linear map $u$ on $\Ecal(\Omega)$ is a distribution if it satisfies the following equivalent conditions:
\begin{enumerate}
\item To every compact subset $K$ of $\Omega$ there exists an integer $m$ and a constant $C>0$ such that for all $\ph\in\Dcal$ with support contained in $K$ it holds:
\[
|u(\ph)|\leq C\max_{p\leq k}\sup_{x\in\Omega}|\pa^p\ph(x)|\,.
\]
We call $||u ||_{\Ccal^k(\Omega)}\doteq\max_{p\leq k}\sup_{x\in\Omega}|\pa^p\ph(x)|$ the $\Ccal^k$-norm and if  the same integer $k$ can be used in all $K$ for a given distribution $u$, then we say that $u$ is of order $k$\index{order of a distribution}.
\item If a sequence of test functions $\{\ph_k\}$, as well as all their derivatives converge uniformly to 0 and if all the test functions $\ph_k$ have their supports contained in a compact subset $K\subset\Omega$ independent of the index $k$, then $u(\ph_k)\rightarrow 0$.
\end{enumerate}
\end{thm}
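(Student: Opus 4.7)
The plan is to ground everything in the definition of the topology on $\Dcal(\Omega)$ given by the seminorm family (\ref{topD}), which realises $\Dcal(\Omega)$ as the strict inductive limit of the Fr\'echet subspaces $\Dcal_K(\Omega)$ consisting of smooth functions supported in a fixed compact $K\subset\Omega$. On each $\Dcal_K(\Omega)$ the topology is generated by the countable family of $\Ccal^k$-seminorms $\|\cdot\|_{\Ccal^k(\Omega)}$, $k\in\NN$. The statement is then really a claim that continuity of the linear functional $u$ with respect to this inductive-limit topology is equivalent to each of the two concrete conditions.

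The first step would be to invoke the universal property of the strict inductive limit: a linear map on $\Dcal(\Omega)$ is continuous if and only if its restriction to each Fr\'echet piece $\Dcal_K(\Omega)$ is continuous. For such a Fr\'echet space, continuity of a linear functional is equivalent to the existence of an integer $m$ and a constant $C>0$ with $|u(\ph)|\leq C\|\ph\|_{\Ccal^m(\Omega)}$ for every $\ph\in\Dcal_K(\Omega)$. This is exactly condition (1), so (1) is equivalent to the original definition of a distribution.

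The direction (1) $\Rightarrow$ (2) is then routine: if all $\ph_k$ have support in a single compact $K$ and both the $\ph_k$ and all their derivatives converge uniformly to $0$, then in particular $\|\ph_k\|_{\Ccal^m(\Omega)}\to 0$ for the $m$ associated to $K$ by (1), hence $|u(\ph_k)|\to 0$.

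The interesting direction is (2) $\Rightarrow$ (1), which I would prove by contrapositive. Assume (1) fails: there exists a compact $K\subset\Omega$ such that for every $m\in\NN$ and every $C>0$ one can find $\ph\in\Dcal_K(\Omega)$ with $|u(\ph)|>C\,\|\ph\|_{\Ccal^m(\Omega)}$. Choosing $C=m$, pick such a witness $\ph_m$ and set
\[
\psi_m \doteq \frac{\ph_m}{m\,\|\ph_m\|_{\Ccal^m(\Omega)}}.
\]
Then $\supp\psi_m\subset K$ for all $m$ and $|u(\psi_m)|>1$. Since the $\Ccal^k$-seminorms are increasing in $k$ on $\Dcal_K(\Omega)$, for any fixed $j$ we have $\|\psi_m\|_{\Ccal^j(\Omega)}\leq\|\psi_m\|_{\Ccal^m(\Omega)}=1/m$ as soon as $m\geq j$, so $\psi_m$ together with all its derivatives converge uniformly to $0$ on $K$. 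This contradicts (2), establishing the equivalence. The main obstacle, and the step that needs care, is precisely this diagonal construction: one has to exploit the monotonicity of the $\Ccal^k$-seminorms so that normalising at level $m$ simultaneously controls all lower levels, which is what allows a single sequence to refute (2) once (1) is violated.
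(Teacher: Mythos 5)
The paper does not actually prove this theorem: it is stated as a background fact and deferred to the cited references (Treves, Rudin, H\"ormander), so there is no in-paper argument to compare against. Your proof is correct and is essentially the standard textbook argument from those references: identify $\Dcal(\Omega)$ with the strict inductive limit of the Fr\'echet spaces $\Dcal_K(\Omega)$, use the universal property to reduce continuity to continuity on each $\Dcal_K(\Omega)$, characterise that by a single-seminorm bound (condition (1)), and obtain (2)$\Rightarrow$(1) by the diagonal normalisation $\psi_m=\ph_m/(m\|\ph_m\|_{\Ccal^m(\Omega)})$, where the monotonicity of the $\Ccal^k$-seminorms does the work. The only step you take on faith is that the explicit Schwartz seminorm family (\ref{topD}) induces on each $\Dcal_K(\Omega)$ exactly the Fr\'echet topology of the $\Ccal^k$-norms; this is standard but is the one place a fully self-contained writeup would need a line of justification. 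Note also that the witnesses $\ph_m$ in your contrapositive are automatically nonzero (a zero function cannot violate the estimate), so the division is legitimate.
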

An important property of a distribution is its support\index{distribution!support}.  If $U' \subset U$ is an open subset then $\Dcal(U')$ is a closed subspace of $\Dcal(U)$ and there is a natural restriction map $\Dcal'(U) \rightarrow \Dcal'(U')$. We denote the restriction of a distribution $u$ to an open subset $U'$ by $u|_{U'}$.
\begin{defn}
The support $\supp u$ of a distribution $u \in \Dcal'(\Omega)$ is the smallest closed set $\Ocal$ such that $u|_{\Omega\setminus \Ocal} = 0$. In other words:
\[
\supp u\doteq \{x\in\Omega|\, \forall U\,\textrm{open neigh. of }x,\, U\subset\Omega\ \exists \ph\in\Dcal(\Omega), \supp\ph\subset U,\,\mathrm{s.t. }<\!u,\ph\!>\neq 0\}\,.
\]
\end{defn}
Distributions with compact support\index{distribution!with compact support} can be characterized by means of a following theorem:
\begin{thm}
The set of distributions in $\Omega$ with compact support is identical with the dual $\Ecal'(\Omega)$ of $\Ecal(\Omega)$ with respect to the topology given by (\ref{topE}).
\end{thm}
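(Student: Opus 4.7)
The plan is to prove both inclusions between the set of compactly supported distributions in $\Dcal'(\Omega)$ and the continuous dual $\Ecal'(\Omega)$.

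\textbf{First direction: $\Ecal'(\Omega) \hookrightarrow \{u\in\Dcal'(\Omega):\supp u\text{ compact}\}$.} I would start by noting that the inclusion $\Dcal(\Omega)\hookrightarrow \Ecal(\Omega)$ is continuous, because a sequence converging in the sense required by the $\Dcal$-topology (supports in a fixed compact, uniform convergence of all derivatives) in particular satisfies the seminorm bounds $p_{K,m}$ from (\ref{topE}). Hence every $u\in\Ecal'(\Omega)$ restricts to an element of $\Dcal'(\Omega)$. To see the restriction has compact support, I would use the continuity of $u$ on $\Ecal(\Omega)$: there exist a compact $K\subset\Omega$, an integer $m$ and $C>0$ with $|u(\ph)|\leq C\,p_{K,m}(\ph)$ for all $\ph\in\Ecal(\Omega)$. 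For any $\ph\in\Dcal(\Omega)$ with $\supp\ph\cap K=\emptyset$ we have $p_{K,m}(\ph)=0$, hence $u(\ph)=0$. By the definition of support recalled above, this gives $\supp u\subseteq K$.

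\textbf{Second direction: every compactly supported distribution extends (uniquely) to $\Ecal'(\Omega)$.} Let $u\in\Dcal'(\Omega)$ with $\supp u=K$ compact. Choose, by a standard partition of unity argument, a cutoff $\chi\in\Dcal(\Omega)$ equal to $1$ on a neighborhood of $K$, and define
\[
\tilde u(\ph)\doteq u(\chi\ph),\qquad \ph\in\Ecal(\Omega).
\]
This is well defined since $\chi\ph\in\Dcal(\Omega)$. Independence of the choice of $\chi$ follows because the difference of two such cutoffs vanishes on a neighborhood of $K$, so multiplying it with any $\ph$ produces an element of $\Dcal(\Omega)$ whose support is disjoint from $\supp u$, on which $u$ vanishes. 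The same argument with $\chi-1$ in place of the cutoff difference shows $\tilde u|_{\Dcal(\Omega)}=u$.

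\textbf{Continuity of the extension.} This is the step I expect to be the only genuinely nontrivial one, as it relies on the local finite-order property from the previous theorem. Fixing a compact neighborhood $L\supset\supp\chi$ of $K$, the characterization of distributions gives an integer $m$ and a constant $C>0$ such that $|u(\psi)|\leq C\,p_{L,m}(\psi)$ for all $\psi\in\Dcal(\Omega)$ supported in $L$. Applying this to $\psi=\chi\ph$ and estimating the derivatives of the product by the Leibniz rule,
\[
|\tilde u(\ph)|=|u(\chi\ph)|\leq C\sup_{x\in L,\,|\alpha|\leq m}|\pa^\alpha(\chi\ph)(x)|\leq C'\,p_{L,m}(\ph),
\]
with $C'$ depending only on $\chi$ and $m$. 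Hence $\tilde u$ is continuous with respect to the topology (\ref{topE}) on $\Ecal(\Omega)$, so $\tilde u\in\Ecal'(\Omega)$. Combining both directions and noting that the extension reduces to $u$ on $\Dcal(\Omega)$ (so the correspondence is bijective), the two sets are identified, which is the claim.
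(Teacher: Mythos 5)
Your proof is correct and is exactly the standard argument — continuity of the inclusion $\Dcal(\Omega)\hookrightarrow\Ecal(\Omega)$ plus the seminorm bound for one direction, and a cutoff together with a Leibniz estimate for the other — which is also the argument found in the references (Rudin, H\"ormander, Treves, Schwartz) to which the thesis defers, since it states this theorem without supplying a proof of its own. The one small point worth making explicit is that the identification of the two sets requires the restriction map $\Ecal'(\Omega)\to\Dcal'(\Omega)$ to be injective, not merely to have the compactly supported distributions as its image; this follows from the density of $\Dcal(\Omega)$ in $\Ecal(\Omega)$, which your own cutoff construction delivers by letting the cutoffs run over an exhaustion of $\Omega$ by compact sets.
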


Now we want to consider topologies on $\Ecal'(\Omega)$ and $\Dcal'(\Omega)$. Most natural choices are the strong or the weak topology. Spaces $\Ecal(\Omega)$, $\Dcal(\Omega)$, as well as their strong duals are reflexive (coinciding with the dual of the dual space) nuclear spaces. Moreover it is shown in \cite{Sch0} that $\Ecal'(\Omega)$ is embedded in $\Dcal'(\Omega)$ also as a topological vector space, if we equip both with their strong topologies.

There are many examples of distributions. Clearly any locally integrable function $f \in L_\loc^1(\Omega)$ defines a distribution in $\Dcal'(\Omega)$ by:
\[
\ph\mapsto \int_\Omega f(x)\ph(x) dx
\]
There is also a characterization of a wider class of distributions in terms of measures. This will be important later on in the context of local functionals.
%@@@ order of distribution
\begin{thm}
Let $u\in\Dcal'(\Omega)$. If $\supp u$ is a compact subset of $\Omega$, then $u$ has a finite order $N<\infty$. In case when $\supp u=\{x\}$ consists of a single point, then there are constants $c_\al$ such that:
\[
u=\sum\limits_{|\alpha|\leq N}c_\al D^\al\delta_x\,,
\]
where $\delta_x$ is the evaluation functional, i.e. $\delta_x(\ph)=\ph(x)$.
\end{thm}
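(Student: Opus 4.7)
The plan is to split the proof into two halves corresponding to the two claims in the theorem, and exploit the relationship between $\Dcal'(\Omega)$ and $\Ecal'(\Omega)$ established in the theorem just preceding this one.

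For the first claim, I would begin by extending a compactly-supported distribution $u \in \Dcal'(\Omega)$ to a continuous linear functional on $\Ecal(\Omega)$. Pick a cutoff $\chi \in \Dcal(\Omega)$ equal to $1$ on an open neighborhood of $K \doteq \supp u$, and define $\tilde u(\ph) \doteq u(\chi \ph)$ for $\ph \in \Ecal(\Omega)$. A brief check using the support property of $u$ shows that $\tilde u$ is independent of the choice of $\chi$ and extends $u$; by the previous theorem this realizes $u$ as an element of $\Ecal'(\Omega)$. Since the topology on $\Ecal(\Omega)$ is generated by the seminorms $p_{K',m}$ in (\ref{topE}), continuity of $\tilde u$ forces the existence of some compact $K' \subset \Omega$, some $m\in\NN$, and some $C>0$ with $|\tilde u(\ph)| \leq C\, p_{K',m}(\ph)$ for all $\ph$. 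Restricting to $\ph \in \Dcal(\Omega)$ gives an estimate involving only derivatives up to order $m$, uniformly in the choice of compact support, so $u$ is of order $N \leq m < \infty$.

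For the second claim, assume $\supp u = \{x\}$ and let $N$ be the order produced above. Given $\ph \in \Dcal(\Omega)$, apply Taylor's formula around $x$ to write
\[
\ph(y) = \sum_{|\alpha|\leq N} \frac{(y-x)^\alpha}{\alpha!}\, \pa^\alpha\ph(x) + R_N(y),
\]
where $R_N$ vanishes at $x$ together with all its derivatives of order $\leq N$. Linearity of $u$ then reduces everything to showing $u(R_N) = 0$, because one can set $c_\al \doteq \frac{(-1)^{|\al|}}{\al!}\,u\bigl((y-x)^\al\bigr)$ and observe that $D^\al\delta_x(\ph) = (-1)^{|\al|}\pa^\al\ph(x)$, so the polynomial part of the Taylor expansion produces exactly $\sum_{|\al|\leq N} c_\al D^\al\delta_x(\ph)$.

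The main obstacle is establishing $u(R_N) = 0$, since at first sight $R_N \not\equiv 0$ and $u$ is not literally the zero distribution. The idea is a shrinking-cutoff argument: choose a smooth $\chi$ with $\chi = 1$ near $0$ and $\supp\chi$ in the unit ball, and set $\chi_\eps(y) \doteq \chi\bigl((y-x)/\eps\bigr)$. Since $\supp u = \{x\}$, we have $u(R_N) = u(\chi_\eps R_N)$ for every sufficiently small $\eps > 0$. Now apply the order-$N$ estimate from the first part of the proof to $\chi_\eps R_N$, using the Leibniz rule and the fact that $R_N$ and its derivatives up to order $N$ vanish at $x$ to an appropriate order: the bounds $|\pa^\beta R_N(y)| = O(|y-x|^{N+1-|\beta|})$ near $x$, combined with $|\pa^\gamma \chi_\eps| = O(\eps^{-|\gamma|})$, give after summation a bound of the form $|u(\chi_\eps R_N)| \leq C\, \eps$, where the small factor $\eps$ comes from one extra power of $|y-x| \leq 2\eps$ left over after matching the derivative count with the singularity of $\chi_\eps$. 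Letting $\eps\to 0$ yields $u(R_N)=0$ and finishes the proof.
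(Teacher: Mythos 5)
Your proof is correct, and since the paper itself offers no argument beyond the citation ``See \cite{Rud}, theorems 6.24, 6.25,'' your write-up is essentially a reproduction of that standard reference proof: extension to $\Ecal'(\Omega)$ and a single Fr\'echet seminorm bound for finiteness of the order, then Taylor expansion plus the shrinking-cutoff estimate $|u(\chi_\eps R_N)|\leq C\eps$ for the point-supported case. The one delicate point --- that the remainder vanishes to order $N+1$, leaving exactly one surplus power of $\eps$ after cancelling the $\eps^{-|\gamma|}$ from the cutoff derivatives --- is handled correctly.
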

\begin{proof}See \cite{Rud}, theorems 6.24, 6.25.
\end{proof}
This can be further generalized to a theorem that allows us to write the compactly supported distributions as certain measures:
\begin{thm}
Let $u\in\Dcal'(\Omega)$ be a distribution with compact support $K$ and of rank $N$. Let $K\subset V\subset \Omega$, where $V$ is an open set. Then there exists finitely many functions $f_\beta$ in $\Omega$ (one for each multiindex $\beta_i\leq N+2$ for $i=1,...,n$) with supports in $V$ such that:
\[
u(\ph)=\sum_\beta (-1)^{|\beta|}\int_\Omega f_\beta(x)(D^\beta\ph)(x)dx\qquad \ph\in\Dcal(\Omega)\,.
\]
\end{thm}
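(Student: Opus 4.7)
The plan is to reduce the statement to a Hahn--Banach representation argument by exploiting a one-sided integration trick to trade the $\Ccal^N$-estimate (given by the order of $u$) for an $L^1$-estimate on a single high-order derivative.

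First I would localize. Choose a cutoff $\chi\in\Dcal(V)$ with $\chi\equiv 1$ on an open neighborhood $U$ of $K$ with $\overline U\subset V$. Since $\supp u=K$, one has $u(\ph)=u(\chi\ph)$ for every $\ph\in\Dcal(\Omega)$, so it suffices to represent $u$ as a functional on the space of test functions supported in a fixed compact box $Q\supset \supp\chi$. On this restricted space, the previous theorem on distributions of finite order yields
\[
|u(\psi)|\leq C\max_{|\alpha|\leq N}\sup_{x\in Q}|D^\alpha\psi(x)|,\qquad \psi\in\Dcal(Q).
\]

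Next I would estimate the right-hand side by an $L^1$-norm of a very high-order derivative. Put $\gamma=(N+2,\dots,N+2)$. For $\psi\in\Dcal(Q)$ and any multiindex $\alpha$ with $\alpha_i\leq N$, repeated integration from $-\infty$ in each coordinate gives
\[
D^\alpha\psi(x)=\int_{-\infty}^{x_1}\!\!\cdots\!\!\int_{-\infty}^{x_n}D^{\alpha+(2,\dots,2)}\psi(y)\,dy_1\cdots dy_n,
\]
and in fact one can push two more derivatives per coordinate onto $\psi$ (using a one-variable primitive that vanishes at $-\infty$ and integrating twice) to land on $D^\gamma\psi$. This produces a bound
\[
\max_{|\alpha|\leq N}\sup_{x\in Q}|D^\alpha\psi(x)|\leq C'\,\|D^\gamma\psi\|_{L^1(Q)}.
\]
Composing with the previous inequality, the linear map
\[
T\colon \Dcal(Q)\longrightarrow L^1(Q),\qquad \psi\mapsto D^\gamma\psi
\]
is injective on test functions supported in $Q$ (they can be recovered from $D^\gamma\psi$ by integration), and the functional $\tilde u(D^\gamma\psi):=u(\psi)$ is well-defined and continuous on the range of $T$ with respect to the $L^1$-norm.

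Now I apply Hahn--Banach to extend $\tilde u$ to a continuous linear functional on all of $L^1(Q)$; by the duality $L^1(Q)'\simeq L^\infty(Q)$ there exists $f\in L^\infty(Q)\subset L^1_{\loc}(\Omega)$ with
\[
u(\psi)=\int_\Omega f(x)\,D^\gamma\psi(x)\,dx,\qquad \psi\in\Dcal(Q).
\]
Applying this to $\psi=\chi\ph$ for arbitrary $\ph\in\Dcal(\Omega)$ and expanding $D^\gamma(\chi\ph)$ by the Leibniz rule yields
\[
u(\ph)=\sum_{\beta\leq\gamma}\binom{\gamma}{\beta}\int_\Omega f(x)\,(D^{\gamma-\beta}\chi)(x)\,(D^\beta\ph)(x)\,dx.
\]
Setting $f_\beta(x):=(-1)^{|\beta|}\binom{\gamma}{\beta}f(x)(D^{\gamma-\beta}\chi)(x)$ gives functions with support contained in $\supp\chi\subset V$, indexed by multiindices with $\beta_i\leq N+2$, in the required form.

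The main obstacle is the integration step producing the $L^1$-estimate: one has to be careful that the primitive constructions respect compact support in $Q$ (so that the value at $x$ really equals the iterated integral over a box) and yield the exponent $N+2$ in every coordinate, rather than an inhomogeneous or globally constrained degree $|\gamma|\leq N+n$. This is exactly what forces the ``per-component'' bound $\beta_i\leq N+2$ in the statement and why one integrates twice in each direction separately instead of using a single total-order Sobolev embedding.
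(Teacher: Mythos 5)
The paper's ``proof'' of this statement is just a citation to Rudin's \textit{Functional Analysis}, Theorem 6.27, and your argument is essentially the one given there: localize with a cutoff, convert the $\Ccal^N$-estimate into an $L^1$-estimate on a single high-order derivative by iterated integration, extend by Hahn--Banach, use $L^1$--$L^\infty$ duality, and redistribute derivatives with the Leibniz rule. Two points need attention, one cosmetic and one substantive. The cosmetic one: your displayed identity
\[
D^\alpha\psi(x)=\int_{-\infty}^{x_1}\!\!\cdots\!\!\int_{-\infty}^{x_n}D^{\alpha+(2,\dots,2)}\psi(y)\,dy_1\cdots dy_n
\]
is false as written --- a single integral per coordinate absorbs only \emph{one} derivative per coordinate. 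The correct route is $\sup_Q|D^\alpha\psi|\leq\|D^{\alpha+(1,\dots,1)}\psi\|_{L^1(Q)}$ followed by the one-component-at-a-time estimate $\|D^\beta\psi\|_{L^1(Q)}\leq (b_j-a_j)\,\|D^{\beta+e_j}\psi\|_{L^1(Q)}$, iterated until every component reaches the target; this gives the bound you want and in fact already with $\gamma=(N+1,\dots,N+1)$.

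The substantive point is the regularity of the $f_\beta$. Your construction yields $f\in L^\infty(Q)$, so the $f_\beta=(-1)^{|\beta|}\binom{\gamma}{\beta}f\,D^{\gamma-\beta}\chi$ are merely bounded measurable. That satisfies the statement as literally printed here (which only says ``functions''), but the theorem being cited asserts that the $f_\beta$ are \emph{continuous}, and the exponent $N+2$ is the telltale sign: if $L^\infty$ densities were acceptable, $N+1$ would suffice, exactly as your argument shows. The missing step is to spend the extra derivative on regularity rather than on the estimate: run the Hahn--Banach argument with $\gamma=(N+1,\dots,N+1)$ to get $g\in L^\infty(Q)$ with $u(\psi)=\int g\,D^\gamma\psi$, then set $F(x)=\int_{a_1}^{x_1}\!\cdots\!\int_{a_n}^{x_n}g(y)\,dy$, which is continuous and satisfies $D^{(1,\dots,1)}F=g$ distributionally, so that $u=\pm D^{(N+2,\dots,N+2)}F$ on $\Dcal(Q)$; only then multiply by the cutoff and apply Leibniz. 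With that modification your proof coincides with the one in the cited reference.
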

\begin{proof}
See \cite{Rud}, theorem 6.27.
\end{proof}
The theorem above justifies somewhat the notation used commonly in physics, where the evaluation of a distribution  on a test function is written as an integral. However this characterization must be taken with certain care, since it is in general not possible to write a distribution $u$ as a sum of measures with supports contained in the support of $u$. 

Now we discuss the singularity structure of distributions. This is mainly based on \cite{Hoer} and chapter 4 of \cite{BaeF}.
\begin{defn}
The singular support\index{singular support} $\mathrm{sing\, supp}\, u$ of $u \in \Dcal'(\Omega)$ is the smallest closed subset $\Ocal$ such that $u|_{\Omega\setminus \Ocal} \in \Ecal(\Omega\setminus \Ocal)$.
\end{defn}
We recall an important theorem giving the criterium for a compactly distribution to have an empty singular support:
\begin{thm}
A distribution $u \in \Ecal'(\Omega)$ is smooth if and only if for every $N$ there is a constant $C_N$ such that:
\[
|\hat{u}(\xi )| \leq C_N (1 + |\xi |)^{-N}\,,
\]
where $\hat{u}$ denotes the Fourier transform of $u$.
\end{thm}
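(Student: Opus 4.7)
The plan is to prove the two directions separately, using the standard duality between smoothness and decay under the Fourier transform, adapted to the compactly supported distribution setting.

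For the forward implication, assume $u$ is smooth. Since $u \in \Ecal'(\Omega)$ is smooth and compactly supported, $u \in \Dcal(\Omega) \subset \Scal(\RR^n)$ after extension by zero. The key identity is obtained by integration by parts: for every multi-index $\alpha$,
\[
\xi^\alpha \hat{u}(\xi) \;=\; (-i)^{|\alpha|}\, \widehat{\partial^\alpha u}(\xi).
\]
Since $\partial^\alpha u \in \Dcal(\RR^n) \subset L^1(\RR^n)$, its Fourier transform is bounded by the Riemann--Lebesgue type estimate $|\widehat{\partial^\alpha u}(\xi)| \leq \|\partial^\alpha u\|_{L^1}$. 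Hence $|\xi^\alpha \hat{u}(\xi)| \leq C_\alpha$ for every $\alpha$, which, after taking $|\alpha|=N$ and combining with the trivial bound $|\hat{u}(\xi)|\le \|u\|_{L^1}$, yields the desired rapid decay $|\hat{u}(\xi)| \leq C_N (1+|\xi|)^{-N}$ for all $N\in\NN$.

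For the converse, assume the decay estimate. Since $u \in \Ecal'(\Omega)$ is compactly supported, $u$ is a tempered distribution, and Fourier inversion on $\Scal'(\RR^n)$ gives $u = (2\pi)^{-n}\mathcal{F}^{-1}\mathcal{F}u$. The decay hypothesis ensures $\hat{u}$ lies in $L^1(\RR^n)$ (take $N>n$), so the integral representation
\[
v(x) \;\doteq\; (2\pi)^{-n}\int_{\RR^n} \hat{u}(\xi)\, e^{\,i\,x\cdot\xi}\, d\xi
\]
converges absolutely and defines a continuous function coinciding with $u$ as a distribution. Moreover, for any multi-index $\alpha$, the function $(i\xi)^\alpha \hat{u}(\xi)$ is still rapidly decreasing: by hypothesis with $N$ replaced by $N+|\alpha|$, it is bounded by $C_{N+|\alpha|}(1+|\xi|)^{-N}$, hence in $L^1$ for $N$ large enough. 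This justifies differentiation under the integral sign to arbitrary order, giving
\[
\partial^\alpha v(x) \;=\; (2\pi)^{-n}\int_{\RR^n} (i\xi)^\alpha \hat{u}(\xi)\, e^{\,i\,x\cdot\xi}\, d\xi,
\]
which is continuous for every $\alpha$. Therefore $v\in \Ci(\RR^n)$, and since $v=u$ as distributions, we conclude $u$ is smooth on $\Omega$.

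The only step requiring real care is the identification $u = v$ in the backward direction: one must invoke Fourier inversion for tempered distributions (which applies since compactly supported distributions embed into $\Scal'$) rather than simply defining $v$ and hoping it equals $u$. The rest is routine dominated convergence and integration by parts. I would expect the forward direction to be essentially a one-line consequence of integration by parts, while the backward direction is the more substantive half, as it uses Fourier inversion together with differentiation under the integral sign.
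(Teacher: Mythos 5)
Your proof is correct, and it is the standard argument. Note that the paper itself does not prove this statement at all: it is recalled as a known result in the review section on distributions (based on H\"ormander and B\"ar--Fredenhagen), so there is no in-paper proof to compare against. Your two directions are exactly the textbook route --- integration by parts plus the bound $|\widehat{\partial^\alpha u}(\xi)|\le \|\partial^\alpha u\|_{L^1}$ for the forward implication (using that a smooth distribution with compact support in the open set $\Omega$ extends by zero to an element of $\Dcal(\RR^n)$), and Fourier inversion in $\Scal'$ together with differentiation under the integral for the converse --- and you correctly flag the one genuinely non-routine step, namely identifying $u$ with the absolutely convergent inversion integral via the embedding $\Ecal'\subset\Scal'$ rather than by fiat.
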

If a distribution has a nonempty singular support we can give a further characterization of its singularity structure by specifying the direction in which it is singular. This is exactly the purpose of the definition of a wave front set.
\begin{defn}
For a distribution $u \in \Dcal'(\Omega)$ the wavefront set\index{wavefront set} $\WF(u)$ is the complement in $\Omega \times \RR^n\setminus\{0\}$ of the set of points $(x,\xi) \in \Omega \times \RR^n\setminus\{0\}$ such that there exist
\begin{itemize}
\item a function $f \in\Dcal(\Omega)$ with $f(x)=1$,
\item an open conic neighborhood $C$ of $\xi$, with
\[
\sup_{\xi\in C}(1+|\xi|)^N|\widehat{f \cdot }(\xi)|<\infty\qquad\forall N \in \NN_0\,.
\]
\end{itemize}
\end{defn}
We shall come back to the discussion of wave front sets in section \ref{scal} when we recall the construction of a Poisson structure of the classical theory of the scalar field.
%---------------------------------------------------------------------------------------------------------
\subsection{Vector-valued distributions on manifolds}\label{distr2}
%---------------------------------------------------------------------------------------------------------
In this section we mainly follow the introduction given in \cite{Baer}. Let $M$ be a manifold equipped with a smooth volume density $\ \dvol$. In particular we can use the volume form induced by a Lorentzian metric. We consider a real or complex vector bundle $B\xrightarrow{\pi}M$ with fiber $V$. Let $\Gamma(B)$ be the space of its smooth sections. We also use the notation $\Gamma(B)\equiv\Gamma(M,V)$ to stress the fact that locally this is isomorphic to the space of smooth functions with values of $V$. All spaces of field configurations have more or less this structure.
%\footnote{For the treatment of spinors one has to equip the category of spacetimes with more structure, i.e. one considers spin manifolds. This will be discussed in section \ref{fer} in the context of fermionic fields. Nevertheless the general pattern that fields are sections of some vector bundle over a spacetime is quite universal.}. 
The space of compactly supported sections will be denoted by $\Gamma_c(B)\equiv\Gamma_c(M,V)$.

We equip $B$ and $T^*M$ with connections, both denoted by $\nabla$. 
They induce connections on the tensor bundles $\underbrace{T^*M\otimes\cdots\otimes  T^*M}_{n-1}\otimes B$, again denoted by $\nabla$. By a tensor product of bundles we understand a vector bundle over $M$ whose fiber is the $n$-fold tensor product of corresponding fibers.
 For a continuously differentiable section $\ph\in \mathcal{C}^1(B)$ the covariant derivative is a continuous section in $T^*M\otimes B$, $\nabla\ph \in \mathcal{C}^0(T^*M\otimes B)$. More generally, for $\ph\in \mathcal{C}^k(B)$ we get $\nabla\ph \in \mathcal{C}^0(\underbrace{T^*M\otimes\cdots\otimes  T^*M}_{k}\otimes B)$. We choose a metric on $T^*M$ and on $B$. This induces metrics on all bundles $T^*M\otimes\cdots\otimes  T^*M\otimes B$ and for a subset $U \subset M$ and $\ph \in \mathcal{C}^k(B)$ we define the $\Ccal^k$-norm by
 \[
||\ph||_{\Ccal^k(U)} \doteq \max_{j=0,...,k}\sup_{x\in U} |\nabla^j\ph(x)|\,.
\]
For compact $U$ all choices of metrics and connections yield equivalent norms. Using the $\Ccal^k$-norm defined above one can introduce locally convex topologies on spaces $\Gamma_c(B)$ and $\Gamma(B)$ and define spaces of distributions as their duals $\Gamma_c'(B)$ and $\Gamma'(B)$\index{distributions!vector-valued}. The theory of distributions on $\RR^n$ can be easily generalized to the distributions on a manifold. 

Since on Lorentzian manifolds we have a distinguished volume element we identify smooth functions with distributions using the following prescription:
\begin{exa}
Every locally integrable section $f \in L^1_\loc(M,V)$ can be interpreted as a
 distribution by setting for any $\ph\in \Gamma_c(M,V^*)$:
\[
f(\ph)\doteq\int\limits_M\dvol \ph(f)\,.
\]
\end{exa}
All the definitions and properties mentioned in the previous section extend easily to manifolds, since locally they are isomorphic to $\RR^n$.
%---------------------------------------------------------------------------------------------------------   
\subsection{Distributions with values in a graded algebra}\label{vvalued}%---------------------------------------------------------------------------------------------------------
In this section we describe in details the theory of vector valued distributions. We focus on the case when the vector space in question is a graded infinite dimensional algebra $\A$. We denote the graded product of $\A$ by $\wedge$. L. Schwartz  \cite{Sch1} defines vector valued distributions in a following way:
\begin{defn}\label{vector}
Let $X$ be a \textsc{lcvs} with a topology defined by a separable family of seminorms $\{p_\alpha\}_{\alpha\in I}$. We say that $T$ is a distribution on $\RR^n$ with values in $X$ if it is a continuous linear mapping from $\Dcal$ to $X$, where $\Dcal$ denotes the space of compactly supported functions on $\RR^n$.
\end{defn}
Under some technical assumptions we can identify the space of distributions with values in $\A$ with the appropriately completed tensor product $\Dcal'\widehat{\otimes}\A$. In the context of this thesis the choice of a suitable topology for this completion will always be quite natural. We will discuss it later on with specific examples. The notion of vector-valued distribution enables us to formulate the classical field theory involving anticommuting fields in a mathematically elegant way. This will be discussed in section \ref{fer}. One can generalize all well known operations like convolution, Fourier transform and pullback to such objects \cite{Hoer,Sch1,Sch2}.
\begin{defn}
Let  $T=t\otimes f$ and $\phi=\varphi\otimes g$, where $f,g\in \A$, $t\in\Dcal'$ and $\varphi\in\Dcal$. We have an antisymmetric bilinear product on $\A$ defined as: $m_a(T,S)\doteq T\wedge S$. We define the convolution of $T$ and $\phi$ by setting:
\begin{equation}
(T*\phi)(x)\doteq t(\phi(x-.))\otimes m_a(f,g)\,.
\end{equation}
The extension by the sequential continuity to $\Dcal'\hat{\otimes}\A$ defines a convolution of a vector-valued distribution with a vector-valued function.
\end{defn}
\begin{defn}
Let  $T=t\otimes f$ and $S=s\otimes g$, where $f,g\in \A$, $t\in\Ecal'(\RR^2)$ and $s\in\Dcal'$. We define the convolution of $T$ and $S$ by setting:
\begin{equation}
T*S\doteq \int t(.,y)s(y)dy\otimes m_a(f,g)\,,
\end{equation}
This expression is well defined by \cite[4.2.2]{Hoer} and can be extended by continuity to arbitrary $S\in\Dcal'\hat{\otimes}\A$,  $T\in\Ecal'\hat{\otimes}\A$.
\end{defn}
\begin{defn}
In a similar spirit we define the evaluation of $T=t\otimes f$ on $\phi=\varphi\otimes g$,  by:
\begin{equation}
\left<T,\phi\right>\doteq \left<t,\varphi\right>\otimes m_a(f,g)\,,
\end{equation}
where $f,g\in \A$, $t\in\Dcal'$ and $\varphi\in\Dcal$. Also this can be extended by continuity to  $\Dcal'\hat{\otimes}\A$.
\end{defn}
Let  $\mathscr{S}$ denote the space of rapidly decreasing functions, i.e. such that: $\sup_x|x^\beta\partial^\alpha\phi(x)|<~\infty$ for all multi-indices $\alpha,\beta$.
\begin{defn}
Let $T\in\mathscr{S}'\hat{\otimes}\A$. We define $\hat{T}\in\mathscr{S}'\hat{\otimes}\A$, the Fourier transform of $T$ as:
\begin{equation}
\hat{T}(\phi)=T(\hat\phi)\qquad\phi\in\mathscr{S}\,.
\end{equation}
\end{defn}
Also the notion of the wave front set\index{wavefront set} \cite{Hoer} can be extended to distributions with values in a \textsc{lcvs}. The case of Banach spaces was already treated in detail in \cite{Ko}.
\begin{defn}
Let $\{p_\alpha\}_{\alpha\in A}$ be the family of seminorms generating the locally convex topology on $\A$. Let  $T\in\mathscr{S}'\hat{\otimes}\A$. A point $(x,\xi_0)\in T^*\RR^{n}\setminus 0$ is not
in $\textrm{WF}(T)$, if and only if $p_\alpha(\widehat{\phi u}(\xi))$
is fast decreasing as $|\xi|\rightarrow\infty$ for all $\xi$ in an open
conical neighbourhood of $\xi_0$, for some $\phi\in \Dcal$
with $\phi(x)\neq 0$, $\forall \alpha\in A$.
\end{defn}
With the notion of the wave front set we can define a ``pointwise product'' of two distributions $T,S\in\Dcal'\hat{\otimes}\A$ by a straightforward extension of \cite[8.2.10]{Hoer}:
\begin{prop}
Let $T,S\in\Dcal'\hat{\otimes}\A$, $U\in M$ (open). The product $T\cdot S$ can be defined as the pullback of $m_a\circ(T\otimes S)$ by the diagonal map $\delta:U\rightarrow U\times U$ unless $(x,\xi)\in\textrm{WF}(T)$ and $(x,-\xi)\in\textrm{WF}(S)$ for some $(x,\xi)$.
\end{prop}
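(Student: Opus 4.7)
The plan is to reduce the claim to Hörmander's scalar product theorem (\cite{Hoer}, 8.2.10) by first treating simple tensors and then extending by sequential continuity, carefully tracking how the wave front condition in the algebra-valued setting translates back to the scalar condition on the distributional components.

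First I would verify the statement for simple tensors $T=t\otimes f$ and $S=s\otimes g$. In this case $T\otimes S$ is naturally identified with $(t\otimes s)\otimes (f\otimes g)\in \Dcal'(U\times U)\hat{\otimes}(\A\otimes\A)$, and applying $m_a$ on the algebra factor gives $m_a\circ(T\otimes S)=(t\otimes s)\otimes m_a(f,g)$, a genuine element of $\Dcal'(U\times U)\hat{\otimes}\A$. By the standard estimate on wave front sets of tensor products one has
\[
\WF(t\otimes s)\subset\bigl(\WF(t)\times\WF(s)\bigr)\cup\bigl((\supp t\times\{0\})\times\WF(s)\bigr)\cup\bigl(\WF(t)\times(\supp s\times\{0\})\bigr).
\]
The conormal of the diagonal embedding $\delta\colon U\to U\times U$ is $\{((x,x),(\xi,-\xi))\,:\,\xi\neq 0\}$. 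The two components involving a zero covector cannot meet the conormal since a wave front set excludes the zero covector, so the only obstruction is a point $((x,x),(\xi,-\xi))$ with $(x,\xi)\in\WF(t)$ and $(x,-\xi)\in\WF(s)$. This is precisely the condition excluded in the statement (read through the definition of $\WF$ for algebra-valued distributions, which on simple tensors reduces to the wave front set of the scalar factor), so Hörmander's pullback theorem applies, yielding $\delta^{*}(t\otimes s)\in\Dcal'(U)$ and hence $T\cdot S\doteq\delta^{*}(t\otimes s)\otimes m_a(f,g)\in\Dcal'(U)\hat{\otimes}\A$.

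Next I would extend the construction to arbitrary $T,S\in\Dcal'\hat{\otimes}\A$ satisfying the wave front condition. The key input is that Hörmander's pullback theorem is continuous in the appropriate Hörmander topology: on the subset $\Dcal'_\Gamma$ of distributions whose wave front set lies in a fixed closed cone $\Gamma$ disjoint from the conormal of $\delta$, the pullback $\delta^{*}$ is sequentially continuous with respect to the strong topology plus the usual seminorms controlling rapid decay of the localized Fourier transform in directions outside $\Gamma$. Choosing closed cones $\Gamma_T,\Gamma_S$ containing the respective wave front sets of $T$ and $S$ such that $\Gamma_T\cap(-\Gamma_S)=\emptyset$, we can approximate $T$ and $S$ by sequences of simple tensors whose distributional factors lie in $\Dcal'_{\Gamma_T}$ and $\Dcal'_{\Gamma_S}$ respectively. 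Continuity of the graded product $m_a$ on $\A$ together with the sequential continuity of $\otimes$ and $\delta^{*}$ yields a well-defined limit, independent of the chosen approximation.

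The main obstacle is the last step: ensuring that the approximations of $T$ and $S$ can be chosen with controlled wave front sets, and that the completed tensor product $\Dcal'\hat{\otimes}\A$ is compatible with this refined Hörmander-type topology. Concretely, one needs to check that simple tensors with distributional parts in $\Dcal'_{\Gamma_T}$ are sequentially dense in the subspace of $\Dcal'\hat{\otimes}\A$ of algebra-valued distributions with wave front set contained in $\Gamma_T$ (as defined via the seminorms of $\A$), and that all operations involved, $m_a$, the tensor product, and $\delta^{*}$, are jointly continuous in these topologies. Granting these continuity and density statements, which are essentially the algebra-valued analogues of the reasoning leading to Hörmander's 8.2.10, the product $T\cdot S$ is then unambiguously defined.
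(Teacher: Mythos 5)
Your proposal is correct and follows exactly the route the paper intends: the paper gives no proof of this proposition at all, merely asserting that the product is ``a straightforward extension of \cite[8.2.10]{Hoer}'', and your argument --- the tensor-product wave front set estimate, the conormal of the diagonal, H\"ormander's pullback theorem applied to the distributional factors of simple tensors, and then extension by sequential continuity --- is precisely the fleshed-out content of that assertion. The density and joint-continuity caveats you flag in your last paragraph are genuine technical points, but they are exactly what the paper elects to leave implicit, so there is no divergence in approach.
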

Obviously we have:  $T\cdot S=(-1)^{|T||S|}S\cdot T$, whenever these expressions are well defined. In the paper we also use a more suggestive notation: $T\cdot S\doteq\left<T,S\right>$.
%---------------------------------------------------------------------------------------------------------
\section{Infinite dimensional calculus}\label{idc}
%---------------------------------------------------------------------------------------------------------
After a short trip into the realm of distributions now it's time for a really exciting adventure! We set for a journey into the area of the Land of Mathematics, that was discovered not so long ago and still hides some dangerous surprises. \textsc{hic svnt leones}\ldots Nevertheless it's worth to take a risk, because it is undoubtedly beautiful and fascinating. We enter now the realm of the infinite dimensional calculus\index{infinite dimensional!calculus}. As a guide in this strange land we shall use the lecture notes of K.-H. Neeb \cite{Neeb} and the book of P. Michor \cite{Michor}. We start with some historical remarks. Infinite dimensional differential calculus came into general attention quite recently, mainly due to works of Hamilton \cite{Ham} and Milnor \cite{Mil}. Nevertheless the idea itself seems to be much older. Perhaps the need for such a generalization became apparent already to Bernoulli and Euler at the beginnings of variational calculus. During the 20-th century the motivation for a calculus in spaces more general than Banach spaces became stronger partially due to possible applications in physics. There were many different approaches and definitions but they all met similar difficulties. One of them is the fact that the composition of continuous linear mappings ceases to be a jointly continuous operation for any suitable 
topology on spaces of linear mappings.
\begin{exa}[after \cite{Michor}]\label{evex}
Consider the evaluation $ev : E \times E^* \rightarrow \mathbb{R}$, where $E$ is a locally convex space and 
$E^*$ is its dual of continuous linear functionals equipped with any locally convex 
topology. Let us assume that the evaluation is jointly continuous. Then there are 
neighborhoods $U \subseteq E$ and $V \subseteq E^*$ of zero such that $ev(U \times V ) \subseteq [-1, 1]$. But then 
$U$ is contained in the polar of $V$ , so it is bounded in $E$, and so $E$ admits a bounded 
neighborhood and is thus normable.
\end{exa}
We recal that the polar of a set is defined as follows:
\begin{defn}
Given a dual pair $(X,Y)$ the polar set or polar of a subset $A$ of $X$ is a set $A^\circ$ in $Y$ defined as:
\begin{equation}
    A^\circ := \{y \in Y : \sup\{|\langle x,y \rangle |: x \in A \} \le 1\}
\end{equation}
 \end{defn}
 This simple counterexample shows that as soon as we leave Banach spaces, we get into trouble even with very harmless operations like an evaluation. Having this in mind we will now cautiously move forward into the dangerous realm of infinite dimensional calculus.
 
The problem of defining a derivative on a locally convex space has roots in the variational calculus. The calculus of variations started evolving into a rather formal procedure, used extensively in physics. At the same time Weierstrass in his lectures gave more reliable foundations to the theory, which 
was made public by Kneser (A. Kneser, \textit{Lehrbuch der Variationsrechnung}, Vieweg, Braunschweig, 1900). Further development went into the direction of the theory of partial differential equations. 

The most commonly used definitions of a derivative are the 
\textit{Fr\'echet derivative} and the \textit{G\^ateaux derivative}, but there are many more. In \cite{Michor} the authors recall (after Averbukh, Smolyanov \cite{Av}) that in the literature one finds 25 inequivalent definitions of the first derivative (in \textsc{tvs}) in 
a single point. This shows that finite order differentiability beyond Banach spaces is really a nontrivial issue. For continuously differentiable mappings the 
many possible notions reduce to 9 inequivalent ones (fewer for Fr\'echet spaces). 
And if we look for infinitely often differentiable mappings, then we end up with 
6 inequivalent notions (only 3 for Fr\'echet spaces).

The next step into the direction of infinite dimensional geometry is the definition of a manifold.
First idea to generalize this notion to infinite dimensions was a manifold modeled on a Banach space (Banach manifold). Later it turned out that Banach manifolds are not suitable for some of the important 
questions of global analysis. A counterexample, interesting also from a physical point of view is due to 
\cite{Omo}, see also \cite{Omo2}: If a Banach Lie group acts 
effectively on a finite dimensional compact smooth manifold it must be finite dimensional itself. Since in the context of gauge theories we want to consider infinite dimensional Lie groups,
we would like to have a notion of a manifold modeled on a more general space: Fr\'echet or just locally convex one.

Another issue related to the smooth calculus is the so called \textit{\textbf{cartesian closedness}}. One would like to have a property:
 \begin{equation}
 \Ci(E \times F, G) \cong \Ci(E, \Ci(F, G))\qquad\textrm{(NOT TRUE!)},\label{cartcl}
\end{equation}
which is called the \textit{\textbf{cartesian closedness}}. This is a property fulfilled by many well behaving categories, but the category of smooth manifolds doesn't have it. This was a motivation for developing generalizations of smooth manifolds, so called \textit{smooth spaces}. Best known approaches are:
\begin{itemize}
\item Chen spaces (Chen, 1977)
\item diffeological spaces (Souriau 1980)
\end{itemize}
%For a review see: \cite{BH}. 
The categories of smooth spaces defined in this way have all the nice properties, but their objects are quite difficult to handle. This was a motivation for development of different approaches, that aim at cartesian closedness but on the same time provide a calculus which is relatively ``easy to use''. In \cite{FK} (see also \cite{Michor,Froe}) a smooth calculus was proposed which has a property (\ref{cartcl}) holding without any restrictions for the so called \textbf{\textit{convenient vector spaces}}. The key idea is to define a different, finer topology on the product $E \times F$. The approach of  \cite{FK} is based on bornological instead of topological concepts. 

However if one wants to define a smooth manifold basing on a concept of charts, then the cartesian closedness is very limited even in the convenient setting (see the discussion in \cite{Michor}, chapter IV). A way out is to base a definition of a manifold on the concept of the family of smooth mappings (see: \cite{Seip,Kri}). In other words, one specifies explicitly which mappings are ``smooth''.

In the present work we take a rather ``pragmatic'' point of view and choose a setting that resembles closely the finite dimensional case and is easily applicable to problems at hand. Besides, the physical examples we consider involve relatively well behaving spaces, namely Fr\'echet or nuclear \textsc{lcvs}. In the next section we give a short introduction to calculus on locally convex vector spaces  based on the lecture notes \cite{Neeb}.
%---------------------------------------------------------------------------------------------------------
\subsection{Calculus on locally convex vector spaces}\label{clcvs}
%---------------------------------------------------------------------------------------------------------
We start with the notion of derivative. Actually it resembles just the directional derivative we know from the finite dimensional calculus.
Let $X$ and $Y$ be topological vector spaces, $U \subseteq X$ an open set and $f:U \rightarrow Y$ a map. The derivative of $f$ at $x$ in the direction\index{derivative!on a locally convex vector space} of $h$ is defined as
\be\label{de}
df(x)(h) \doteq \lim_{t\rightarrow 0}\frac{1}{t}\left(f(x + th) - f(x)\right)
\ee
whenever the limit exists. The function $f$ is called differentiable\index{infinite dimensional!calculus} at $x$ if $df(x)(h)$ exists for all $h \in X$. It is called continuously differentiable if it is differentiable at all points of $U$ and
$df:U\times X\rightarrow Y, (x,h)\mapsto df(x)(h)$
is a continuous map. It is called a $\Ccal^1$-map if it is continuous and continuously differentiable. Higher derivatives are defined for $\Ccal^n$-maps by 
\be
d^n f (x)(h_1 , \ldots , h_n ) \doteq \lim_{t\rightarrow 0}\frac{1}{t}\big(d^{n-1} f (x + th_n )(h_1 , \ldots, h_{n-1} ) -
 d^{n-1}f (x)(h_1 , \ldots, h_{n-1}) \big)
 \ee
The derivative defined by (\ref{de}) has many nice properties. It is shown for example in \cite{Neeb,Ham}, that the following proposition holds:
\begin{prop}
Let $X$ and $Y$ be locally convex spaces, $U \subseteq X$ an open subset, and $f:U \rightarrow Y$ a continuously differentiable function. Then:
\begin{enumerate}
\item	For any $x \in U$ , the map $df(x): X \rightarrow Y$ is real linear and continuous.
\item (Fundamental Theorem of Calculus). If $x + [0, 1]h \subseteq U$ , then 
\[
f (x + h) = f (x) +\int\limits_0^1 df (x + th)(h) dt\,.
\]
\item $f$ is continuous.
\item If $f$ is $\Ccal^n$, $n \geq 2$, then the functions $(h_1,...,h_n) \mapsto d^nf(x)(h_1,...,h_n)$, $x \in U$, are
symmetric $n$-linear maps.
\item If $x + [0, 1]h \subseteq U$,then we have the Taylor Formula:
\begin{multline*}
f (x + h) = f (x) + df (x)(h) + \ldots+\frac{1}{(n-1)!}d^{n-1} f (x)(h,\ldots, h)+\\
+\frac{1}{(n-1)!}\int\limits_0^1(1-t)^{n-1}d^nf(x+th)(h,...,h)dt\,.
\end{multline*}
\end{enumerate}
\end{prop}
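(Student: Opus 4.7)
The plan is to build the five parts in an order that lets each one feed the next, with the Fundamental Theorem of Calculus (part 2) as the workhorse. The main conceptual point is that in a general locally convex $Y$ one needs to make sense of $\int_0^1 df(x+th)(h)\,dt$; I would define it weakly (Pettis-style), via $\lambda\bigl(\int_0^1 \gamma(t)\,dt\bigr) = \int_0^1 \lambda(\gamma(t))\,dt$ for every $\lambda\in Y'$, for any continuous curve $\gamma:[0,1]\to Y$. Existence of such an integral in $Y$ (as opposed to in some completion) requires a mild completeness hypothesis on $Y$, which is the first delicate point I would address; for Fr\'echet $Y$ the Riemann integral of a continuous curve works directly, and the lecture notes cited (Neeb) give the general statement under Mackey/sequential completeness.

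With the integral in hand, part 2 is proved by considering $g:[0,1]\to Y$, $g(t)=f(x+th)$, observing that $g$ is $\Ccal^1$ with $g'(t)=df(x+th)(h)$ by the chain rule applied to the affine curve, and then reducing to the scalar Fundamental Theorem of Calculus by composing with arbitrary $\lambda\in Y'$: for each such $\lambda$, $\lambda\circ g$ is an ordinary real $\Ccal^1$ function and $\lambda(f(x+h)-f(x))=\int_0^1 \lambda(df(x+th)(h))\,dt$; Hahn--Banach then forces the vector identity. Part 1 follows from this: homogeneity $df(x)(\alpha h)=\alpha\,df(x)(h)$ is immediate from the definition by rescaling $t\mapsto \alpha t$, while additivity is obtained by writing
\[
f(x+s(h_1+h_2))-f(x) = \bigl(f(x+sh_1+sh_2)-f(x+sh_1)\bigr)+\bigl(f(x+sh_1)-f(x)\bigr),
\]
applying part 2 to both pieces, dividing by $s$, and letting $s\to 0$ using continuity of $df$ in both arguments. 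Continuity of $df(x)$ in $h$ comes from the joint continuity of $df$ restricted to $\{x\}\times X$.

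Part 3 is then nearly a corollary: for $x_0\in U$, pick a balanced zero-neighbourhood $V\subset X$ with $x_0+V\subset U$, use part 2 to write $f(x_0+h)-f(x_0)=\int_0^1 df(x_0+th)(h)\,dt$ for $h\in V$, and estimate in any continuous seminorm $q$ of $Y$ by a seminorm on $X$ using continuity of $df$ on the compact segment $[0,1]\cdot\{(x_0,h)\}$. Part 5, Taylor's formula, is induction on $n$ using part 2 and integration by parts in the scalar-reduced form: assuming the order-$(n{-}1)$ identity, apply the $\Ccal^1$ FTC to the curve $t\mapsto d^{n-1}f(x+th)(h,\ldots,h)$ to obtain the integral remainder, and identify $(1-t)^{n-1}/(n-1)!$ by the usual one-variable integration-by-parts trick.

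The hard part is part 4, the symmetry of $d^n f(x)$. I would proceed in two steps. First, establish multilinearity: each partial map $h_i\mapsto d^n f(x)(h_1,\ldots,h_n)$ is linear and continuous, which follows by applying part 1 inductively together with continuity of $d^{n-1}f$ to commute limits with the definition of the $n$-th directional derivative. Second, reduce symmetry to the classical two-dimensional Schwarz theorem: for any $\lambda\in Y'$ and any pair $h_i,h_j$, the function $(s,t)\mapsto \lambda(d^{n-2}f(x+sh_i+th_j)(h_1,\ldots,\widehat{h_i},\ldots,\widehat{h_j},\ldots,h_n))$ is a $\Ccal^2$ real function of $(s,t)$, and by Schwarz its mixed partials at $(0,0)$ agree; Hahn--Banach then gives symmetry of $d^n f(x)$ under the transposition $(i,j)$, and transpositions generate the symmetric group. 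The obstacle here is verifying $\Ccal^2$ regularity of the scalarized two-variable function from the $\Ccal^n$ hypothesis on $f$, which requires a careful bookkeeping of the difference quotients defining $d^n f$ but poses no deeper difficulty once parts 1--3 are available.
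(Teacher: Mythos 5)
The paper does not actually prove this proposition itself --- it simply defers to the cited lecture notes of Neeb and the survey of Hamilton --- and your outline reproduces precisely the standard argument found in those sources: weak (Hahn--Banach/Pettis) integration to reduce the Fundamental Theorem of Calculus to the scalar case, linearity and continuity of $df(x)$ and of $f$ derived from it, the scalar Schwarz theorem plus Hahn--Banach for symmetry, and induction with integration by parts for the Taylor formula. Your proposal is correct; the one step to tighten is that your Schwarz argument, as written, yields symmetry of $d^nf(x)$ only in its \emph{last} two slots, so concluding the transposition $(i,j)$ requires the usual induction (symmetry of $d^{n-1}f(y)$ in its $n-1$ arguments, which gives symmetry of $d^nf(x)$ in the first $n-1$ slots, combined with the last-two swap generates the full symmetric group) --- a routine bookkeeping point you already gesture at.
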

Now, following \cite{Neeb} we shall introduce a notion of an infinite dimensional manifold\index{infinite dimensional!manifold}.
Let $M$ be a Hausdorff topological space and $E$ a \textsc{lcvs}. An $E$-chart of an open subset $U \subseteq M$ is a homeomorphism $\varphi:U \rightarrow \varphi(U) \subseteq E$ onto an open subset $\varphi(U)$ of $E$. We denote such a chart as a pair $(\varphi,U)$. Two charts $(\varphi,U)$ and $(\psi,V)$ are said to be smoothly compatible if the map
$\psi \circ \varphi^{-1}\Big|_{\varphi(U\cap V)}: \varphi(U\cap V ) \rightarrow \psi(U \cap V )$
is smooth.	\\
An $E$-atlas	of	$M$	is	a family	$(\varphi_i, U_i)_{i\in I}$	of	pairwise compatible $E$-charts of $M$ for which $\bigcup_i U_i=M$. Many of the objects used in differential geometry can be defined also in the infinite dimensional case. We start with the notion of a tangent space. 
\begin{defn}
Let $x$ be an element of a locally convex vector space $E$. A kinematic tangent vector with foot point $x$ is a pair $(x, Q)$ with $Q\in E$. $T_x E \cong E$ is the space of all kinematic tangent vectors with foot point $x$. It consists of all derivatives $c'(0)$ at $0$ of smooth curves $c : \RR \rightarrow E$ with $c(0) = x$. The kinematic tangent space of a locally convex vector space $E$ will be denoted by $TE$ and the space of vector fields by $\Gamma(TE)$.
\end{defn}
We use the term \textit{kinematic} since in the most general case this definition doesn't coincide with the definition of vector fields as derivations. Fortunately for the spaces considered in the context of BV formalism this doesn't pose a problem.
\begin {defn}
Let $M$ be a smooth manifold with the atlas $(\varphi_i, U_i)_{i\in I}$, where $\varphi_i:U_i\rightarrow E_i$. We consider the following equivalence relation on the disjoint union
\[
\bigcup\limits_{i\in I}U_i\times E_i \times \{i\}\,,
\]
\[
(x,v,i) \sim (y,w,j) \Leftrightarrow x = y\ \textrm{and}\ d(\ph_{ij})(\ph_j(x))w = v\,,
\]
where $\ph_{ij}$ are the transition functions. One denotes the quotient set by $TM$, the \textbf{kinematic tangent bundle} of $M$.  A kinematic vector field\index{infinite dimensional!vector fields} $X$ on M is just a smooth section of the kinematic tangent bundle. 
\end {defn}
On finite dimensional manifolds we can define vector fields equivalently as bounded derivations of the sheaf of smooth functions. In infinite dimensional geometry those notions don't coincide in general. The vector fields defined as derivations are called \textit{\textbf{operational vector fields}}.
\begin{defn}[\cite{Michor}, 32.1]
By an operational vector field $X$ on $M$ we mean a bounded derivation of the sheaf $\Ci(. ,\RR)$, i.e. for the open $U \subset M$ we are given bounded derivations $X_U : \Ci(U, \RR) \rightarrow \Ci(U, \RR)$ commuting with the restriction mappings.
\end{defn}
Kinematic vector fields are contained in the space of operational vector fields but the opposite is not always true. In the present work by vector fields we always mean the kinematic vector fields. We already see that although many of the finite dimensional concepts can be generalized to the infinite dimensional case, one has to be extremely cautious. For example the existence of local flows is not guaranteed if we go beyond Banach spaces, since the implicit function theorem doesn't hold in the simple form.  A detailed discussion of these issues can be found in section 32 of \cite{Michor}.

Now we want to define differential forms on an infinite dimensional manifold. This turns out to be a problem, since there is no natural notion of a cotangent space. To see the source of the problem note that while we can define for each manifold $M$ modeled on a \textsc{lcvs} $E$ a cotangent bundle  by $T^*M = \bigcup\limits_{m\in M} (T_mM)'$ and endow it with a vector bundle structure over $M$, we cannot endow it with a smooth manifold structure. To be able to do it we would need a locally convex topology on the dual space $E'$ such that for each local diffeomorphism $f:U \rightarrow E$, $U$ open in $E$, the map $U \times E' \rightarrow E'$, $(x,\lambda) \mapsto \lambda \circ df(x)$ is smooth. This is fulfilled for the norm topology on a Banach space but becomes a problem if we consider manifolds modeled on more general \textsc{lcvs} (compare with the example \ref{evex}).

Nevertheless one can still introduce differential forms by a direct definition without reference to the cotangent bundle. The existing alternative definitions of differential forms are unfortunatelly not equivalent. A detailed discussion of this problem is given in \cite{Michor} (VII.33). It turns out that only one of these notions is stable under Lie derivatives, exterior derivative, and the pullback. 
\begin {defn}[\cite{Michor},VII.33.22]\label{33.22}
Let $M$ be a smooth infinite dimensional manifold. We will define
the space of differential forms\index{infinite dimensional!differential forms} on $M$ as:
\be\label{forms}
\Omega^k(M) \doteq \Ci(M \leftarrow L^k_\alte(TM,M \times \RR))\,.
\ee
Similarly, we denote by $\Omega^k(M;V) \doteq \Ci(M \leftarrow L^k_\alte(TM,M \times V))$
the space of differential forms with values in a locally convex vector space $V$.
\end {defn}
%@@@ alt forms notation
% @@@ maybe add something on forms an algebraic definition of the Lie derivative
%---------------------------------------------------------------------------------------------------------
\subsection{Infinite dimensional Lie groups}
%---------------------------------------------------------------------------------------------------------
In physics one would like to treat certain spaces of functions as infinite dimensional Lie groups\index{infinite dimensional!Lie group}. To put it in an appropriate mathematical setting one needs first a notion of an infinite dimensional manifold. A definition proposed in \cite{Neeb} makes it possible to provide certain infinite dimensional spaces with the structure of a manifold modeled on a locally convex vector space. One can apply all tools of locally convex analysis  in this case. Unfortunately this definition doesn't cover all the interesting cases. In particular it fails for the spaces of mappings between noncompact manifolds.
\begin{exa}[after \cite{Neeb}] If $M$ is a non-compact finite-dimensional manifold, then one cannot expect 
the topological groups $\Ci(M, K)$ to be Lie groups. A typical example arises for $M = \mathbb{N}$ (a 
0 - dimensional manifold) and $K = \mathbb{T} := \mathbb{R}/\mathbb{Z}$. Then $\Ci(M, K) \cong \mathbb{T}^{\mathbb{N}}$ is a topological group for which no  1-neighborhood is contractible.
\end{exa}
This means that one cannot consider $\Diff(M)$ to be an infinite dimensional Lie group for a noncompact manifold $M$. One can overcome this problem by restricting to compactly supported diffeomorphisms \cite{Neeb04,Gloe,Gloe06}. We discuss those issues on concrete examples in chapter \ref{BVform}.
%---------------------------------------------------------------------------------------------------------
\section{Categories and functors}\label{kat}
%---------------------------------------------------------------------------------------------------------
In the locally covariant framework \cite{BFV} one defines a classical or a quantum field theory as a functor between certain categories. This is motivated by the fact, that we need to define the theory on all the spacetimes of the given class in a coherent way. Before we recapitulate the framework of locally covariant field theory in section \ref{classfunc} we recall some notions from the category theory that will be used later on.
\begin{defn}
A category\index{category} $\Ca$ consists of
\begin{itemize}
\item  a class $\obj(\Ca)$ of objects,
\item a class $\Hom(\Ca)$ of morphisms between the objects. Each morphism $f$ has a unique  {source object} $a$ and  {target object} $b$ where $a,b \in \obj(\Ca)$.\\ Notation: if $f: a \rightarrow b$  then we write $f\in \Hom(a, b)$ 
\item for $a,b,c\in \obj(\Ca)$, a binary operation $\Hom(a, b) \times \Hom(b, c) \rightarrow \Hom(a, c)$ called  {composition of morphisms}. It is denoted by $f\circ g$.
\end{itemize}
such that the following axioms hold:
\begin{itemize}
\item  \textbf{associativity}: if $f : a \rightarrow b$, $g : b \rightarrow c$ and $h : c \rightarrow d$ then $h \circ (g \circ f) = (h \circ g) \circ f$
\item  \textbf{identity}: for every object $c$, there exists a morphism $\mathrm{id}_c : c \rightarrow c$, such that for every $\Hom(a,b)\ni f$we have: $\mathrm{id}_b \circ f = f \circ \mathrm{id}_a$=f.
\end{itemize}
\end{defn}
An important category in our formalism is the category of spacetimes\index{category!of spacetimes}, with isometric causal embeddings as morphisms. More precisely we have:
\begin{itemize}
	\item[$\Loc$]
		\begin{itemize}
 			\item[$\ $] $\obj(\Loc)$: all four-dimensional, globally
			hyperbolic oriented and time-oriented spacetimes $(M,g)$.
			\item[$\ $] {\bf Morphisms}: Isometric embeddings that fulfill:
		\begin{itemize}
			\item[$\bullet$] Given $(M_1,g_1),(M_2,g_2)\in\obj(\Loc)$, for any 
			causal curve $\gamma : [a,b]\to M_2$, if $\gamma(a),\gamma(b)\in\chi(M_1)$ then for all $t	\in ]a,b[$ we have: $\gamma(t)\in\chi(M_1)$ (see figure \ref{ccurve}). This property is called: causality preserving.
			\item[$\bullet$] They preserve the orientation and
			time-orientation of the embedded spacetime.
		\end{itemize} 
		\end{itemize}
		\end{itemize}
\begin{figure}[tb]
\begin{center}
 \begin{tikzpicture}[scale=0.85]
\shadedraw[shading=axis,shading angle=90, left color=see!80!white,right color=white,opacity=0.8] plot[smooth cycle] coordinates{(-1,-1) (-2,0)(-2.1,1)(-1,2)(1,2.3)(2,2.5)(3,2)(3,1)(2,0)(1,-1)(0,-1.2)};
%\draw(0,0.3) node[diamond,draw] {$\Ocal$};
%\draw(0.2,0.5) node[name=s,minimum size=2cm,draw,diamond] {}
%node[anchor=south west] {$\Ocal'$};
\draw(-1.7,0.5) node {$M_2$};
%\draw (0,0) .. controls (1,1) and (2,1) .. (2,0);
\draw(0.2,1.3) .. controls (0,1) and (1,0.8) .. (0.2,-0.5);
\filldraw[red] (0.2,1.3) circle (1pt)
		     (0.2,-0.5) circle (1pt);
\draw node[left,red]  at (0.2,1.3) {\footnotesize$\gamma(b)$};
\draw node[left,red] at (0.2,-0.5){\footnotesize$\gamma(a)$};
\draw (0.2,0.5) node[name=s,minimum size=2cm,draw,diamond] {}
node[anchor=south west] {};
\draw (0,-3) node[shading=axis,shading angle=90, left color=see!80!white,right color=white,name=t,minimum size=2cm,draw,diamond,opacity=0.8] {}
node {$M_1$};
\path[->] (t.north east) edge  [color=red, bend right]  node[right] {$\psi$} (s.south east);
\end{tikzpicture}
\end{center}
\vspace{-3ex}
\caption{Causality preserving embedding $\psi$.\label{ccurve}}
\end{figure}
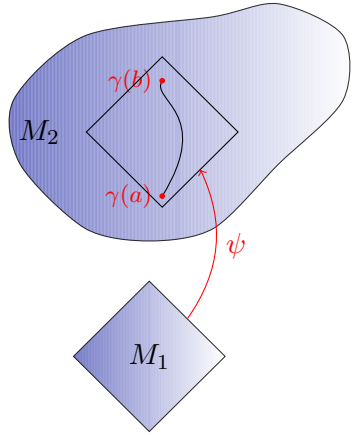
The other important categories are: the category of locally convex topological vector spaces and the category of Poisson algebras.
\begin{itemize}
\item[$\Vect$]
\begin{itemize}
 \item[$\ $] $\obj(\Vect)$: (small) topological vector spaces\index{category!of topological vector spaces}
 \item[$\ $] {\bf Morphisms}: homomorphisms of  topological vector spaces
  \end{itemize}
\item[$\Obs$]
\begin{itemize}
 \item[$\ $] $\obj(\Obs)$: Poisson (graded) algebras\index{category!of Poisson algebras} 
 \item[$\ $] {\bf Morphisms}: Poisson (graded) morphisms
  \end{itemize}
\item[$\Top$]
\begin{itemize}
 \item[$\ $] $\obj(\Top)$: topological spaces \index{category!of topological spaces}
 \item[$\ $] {\bf Morphisms}: continuous maps
  \end{itemize}
\end{itemize} 
Now we define the next important notion from the category theory, namely a \textit{functor}\index{functor!covariant}.
\begin{defn}
Let $\Ca$ and $\Da$ be categories. A \textit{covariant functor} ${\F}$ from $\Ca$ to $\Da$ is a mapping that:
\begin{itemize}
\item associates to each object $c \in Obj(\Ca)$ an object ${\F}(c) \in Obj(\Da)$,
 \item associates to each morphism $\Hom(\Ca)\ni f:a\rightarrow b \in$, a morphism $\Hom(\Da)\ni {\F}(f):{\F}(a) \rightarrow {\F}(b) $
\end{itemize}
 \be\label{covfunctor}
\begin{CD}
a @>f>> b\\
@V{{\F}}VV     @VV{{\F}}V\\
{\F}(a)@>{\F}(f)>> {\F}(b)
\end{CD}
\ee
such that the following two conditions hold:
\begin{itemize}
    \item ${\F}(\mathrm{id}_{c}) = \mathrm{id}_{{\F}(c)}$ for every object $c \in \Ca$.
    \item ${\F}(g \circ f) = {\F}(g) \circ {\F}(f)$ for all morphisms $f:a \rightarrow b$ and $g:b\rightarrow c.\,\!$
\end{itemize}
\end{defn}
If a similar definition holds with arrows reversed, we say that a functor is contravariant\index{functor!contravariant}.
Just as the functors provide a notion of mappings between categories, one can go a level of abstraction higher and think how the functors should transform into each other. This leads to the notion of a \textit{natural transformation}\index{natural transformation}.
\begin{defn}
Let ${\F}$ and $\fG$ be functors between categories $\Ca$ and $\Da$, then a natural transformation $\eta$ from ${\F}$ to $\fG$ associates to every object $a \in \Ca$ a morphism $\Hom(\Da)\ni\eta_a : {\F}(a)\rightarrow \fG(a)$, such that for every morphism $\Ca\ni f : a \rightarrow b$ we have:
\[
    \eta_b \circ {\F}(f) = \fG(f) \circ \eta_a\,.
\]
This equation can be expressed by:
 \[
\begin{CD}
{\F}(a) @>{\F}(f)>> {\F}(b)\\
@V{\eta_a}VV     @VV{\eta_b}V\\
\fG(a)@>\fG(f)>> \fG(b)
\end{CD}
\]
\end{defn}
%==============================================================
\chapter{Classical field theory in the locally covariant framework}\label{classfunc}
%==============================================================
\vspace{-5ex}
\begin{flushright}
 \begin{minipage}{10cm}
\textit{Ordnung und Sichtung sind der Anfang der Beherrschung, und der eigentlich furchtbare Feind ist der unbekannte.}
\begin{flushright} 
T. Mann \textit{Der Zauberberg}
\end{flushright}
 \end{minipage}
\end{flushright}
\vspace{3ex}
\noindent\rule[2pt]{\textwidth}{1pt}
%\minitoc
%\clearpage
\vspace{1ex}\\
From the Land of Mathematics our journey leads us back to physics. It's nice to be home again and it's about time to use  in practice the tools and techniques we learnt. We start with the classical theory since it is technically less complicated than {\qft} and, remarkably, some structures used in the quantum world are present already in the classical setting. Moreover they can be better understood by investigating their classical counterparts. The main goal of this chapter is to provide a brief introduction into the locally covariant formulation of classical field theory. Starting classically we have time to get a little  bit acquainted with the concepts, that will be used later on in the quantum case. 
%---------------------------------------------------------------------------------------------------------
\section{Kinematical structure}
%---------------------------------------------------------------------------------------------------------
In the locally covariant framework \cite{BFV} with a physical system we associate first the configuration space\index{configuration space} of all the fields of the theory. The main principle of the covariant approach is to make this assignment for all the spacetimes in a coherent way. This can be formulated with the use of the category theory language. We define a contravariant functor $\E$ from the category of spacetimes $\Loc$ to the category of locally convex vector spaces $\Vect$. It assigns to each spacetime $M$ the configuration space $\E(M)$ of fields defined on it and the isometric embeddings $\chi:M\rightarrow N$ are mapped into pullbacks $\chi^*:\E(N)\rightarrow \E(M)$. For example in case of the scalar field we have $\E(M)=\Ci(M)$. We illustrate this on figure \ref{conf}.
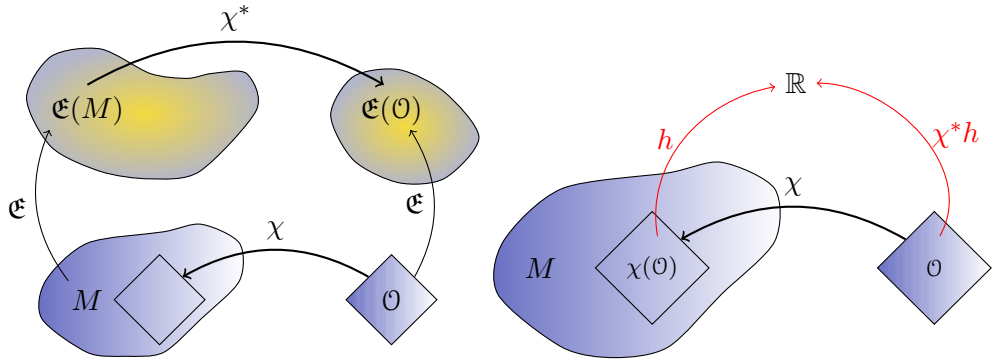
\begin{figure}[!b]
\begin{center}
\begin{tikzpicture}[scale=0.5]
\shadedraw[shading=radial, inner color=honey!50!yellow,outer color=see!40!lighthoney,scale=0.65] plot[smooth cycle] coordinates{(-1,6)(-3,7)(-4,9)(-3,10.7)(-1,11)(1,10)(2.5,9.8)(3.5,10)(4.9,10)(5.3,8.5)(3,6)(1,5.7)};
\draw(-1,5.5) node {$\E(M)$};
%\draw<4->(-1,5.5) node {$\E_c(M)$};
\shadedraw[shading=axis,shading angle=90, left color=see!80!white,right color=white] plot[smooth cycle] coordinates{(-1,-.9) (-2.2,0)(-2.1,1)(-1,2)(1,2.3)(2,2.5)(3,2)(3,1)(2,0)(1,-.9)(0,-1)};
\draw(-1,0.5) node {$M$};
\path[->] (-1.5,1) edge  [bend left]  node[left] {$\E$} (-2,5);
%\path<4->[->] (-1.5,1) edge  [bend left]  node[left] {$\E_c$} (-2,5);
\draw(0.9,0.5) node[name=s,minimum size=1.2cm,draw,diamond] {};
\draw(7,0.5) node[shading=axis,shading angle=90, left color=see!80!white,right color=white,name=t,minimum size=1.2cm,draw,diamond] {}
node {$\Ocal$};
\path[->] (t.north west) edge  [bend right,thick]  node[above] {$\chi$} (s.north east);
\shadedraw[shading=radial, inner color=honey!50!yellow,outer color=see!40!lighthoney,scale=0.6] plot[smooth cycle] coordinates{(12,6)(10,7)(9,9)(10,10.7)(12,11)(14,10)(15.5,8)(14,6.2)};
\path[->] (t.north east) edge  [bend right]  node[left] {$\E$} (7.5,5);
%\path[->] (t.north east) edge  [bend right]  node[left] {$\E_c$} (7.5,5);
\draw(7,5.5) node {$\E(\Ocal)$};
%\draw(7,5.5) node {$\E_c(\calO)$};
\path[->] (-1,6.2) edge  [bend left,thick]  node[above] {$\chi^*$} (6.8,6.2);
%\path[<-] (-1,6.2) edge  [bend left,thick]  node[above] {$\chi_*$} (6.8,6.2);
\end{tikzpicture}
\begin{tikzpicture}[scale=0.7]
\shadedraw[shading=axis,shading angle=90, left color=see!80!white,right color=white] plot[smooth cycle] coordinates{(-1,-1) (-2.2,0)(-2.1,1)(-1,2)(1,2.3)(2,2.5)(3,2)(3,1)(2,0)(1,-1)(0,-1.2)};
\draw(-1.4,0.5) node{$M$};
\draw(3.4,4) node [name=real] {$\RR$};
\draw(.7,0.5) node[name=s,minimum size=1.5cm,draw,diamond, inner sep=0pt] {\footnotesize$\chi(\Ocal)$};
\draw (6,0.5) node[shading=axis,shading angle=90, left color=see!80!white,right color=white,name=t,minimum size=1.5cm,draw,diamond] {}
node { \footnotesize$\Ocal$};
%\shade[shading=radial, inner color=darksee,outer color=see,scale=0.2] plot[smooth cycle] coordinates{(30.5,4)(28.5,5)(27.5,6)(28.5,7.7)(30.5,8)(32.5,7)(33.4,5.2)(32.5,4.2)};
%\shade[shading=radial, inner color=darksee,outer color=see,scale=0.2] plot[smooth cycle] coordinates{(4,4)(2,5)(1,6)(2,7.7)(4,8)(6,7)(6.9,5.2)(6,4.2)};
\path[->] (t.north west) edge  [bend right,thick]  node[above] {$\chi$} (s.north east);
%\path[<-] (t.south west) edge  [bend left]  node[above] {$\chi^{-1}$} (s.south east);
\path[->] (6.1,1.1) edge  [out=60, in=0,color=red]  node[right] {$\chi^*h$} (real);
%\path[->] (6.1,1.1) edge  [out=60, in=0,color=red]  node[right] {$h$} (real);
\path[->] (0.8,1.1) edge  [out=100, in=190,color=red]  node[left] {$h$} (real);
%\path[->] (0.8,1.1) edge  [out=100, in=190,color=red]  node[left] {$\chi_*h$} (real);
\end{tikzpicture}
\end{center}
\caption{Construction of the functor $\E$.\label{conf}}
\end{figure}
In all the physical cases one can identify $\E(M)$ with the space of sections of some vector bundle $B$ over $M$ with a fiber $V$, i.e. $\E(M)=\Gamma(B)\equiv\Gamma(M,V)$. As discussed in section \ref{topo}, the relevant topology is the topology of uniform convergence, together with all the derivatives, on compact subsets of $M$. 

Another functor between categories $\Loc$ and $\Vect$ is the functor which associates to a manifold the space $\E_c(M)$ of compactly supported configurations. This functor is covariant, because now the isometric embeddings can be mapped into the push forwards $\E_c\chi=\chi_*$, namely for $\ph\in\E_c(M)$ we define
\[
\chi_*\ph(x)=\left\{
\begin{array}{ccc}
\ph(\chi^{-1}(x)) & , & x\in\chi(M),\\
0                    & , & \text{else}.
\end{array} \right.
\]
In the same way we can construct a functor that associates to a manifold $M$ the space of test functions $\D(M)\doteq\Ci_c(M)$. The morphisms of $\Loc$ are again mapped to pushforwards (see picture \ref{testf}). Now we introduce the structure that is crucial in the functional approach to classical field theory. We want to identify the space which would contain the observables of our theory. It seems natural to define them simply as functionals $F:\E(M)\to\RR$. They are required to be smooth in the sense of the calculus on locally convex vector spaces (section \ref{idc}). Moreover one can see that if $F\in\Ci(\E(M))$ is a smooth functional, then
for each $n\in\NN$ and each $\ph\in\E(M)$, the functional derivative $F^{(n)}(\ph)$ is a symmetric distribution with compact support on $M^n$. The spacetime support\index{spacetime support!of a functional} of a functional $F$ is defined as the set of points $x\in M$ such that $F$ depends on the field configuration in any neighbourhood of $x$.
\begin{align}\label{support}
\supp\, F\doteq\{ & x\in M|\forall \text{ neighbourhoods }U\text{ of }x\ \exists \ph,\psi\in\E(M), \supp\,\psi\subset U 
\\ & \text{ such that }F(\ph+\psi)\not= F(\ph)\}\ .\nonumber
\end{align}
Equivalently we can understand the support of a functional $F$ as the closure of the sum of all the supports of its derivatives: 
\be\label{support2}
\supp\, F=\overline{\bigcup\limits_{{\ph\in\E(M)}\atop n\in\NN}}\supp(F^{(n)}(\ph))\,.
\ee
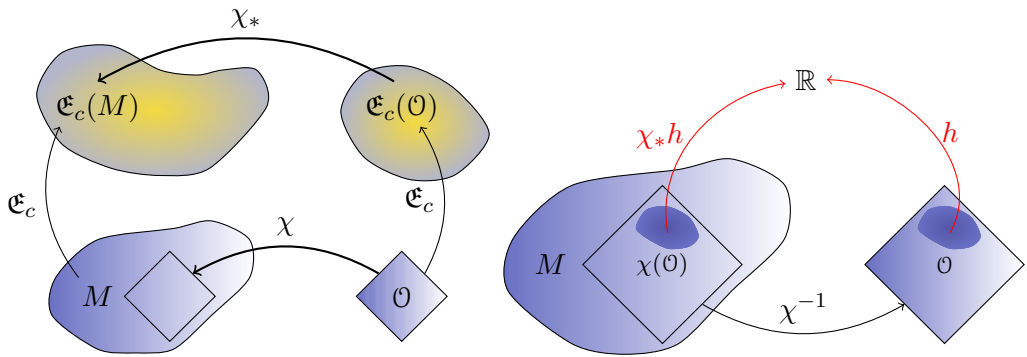
\begin{figure}[!tb]
\begin{center}
\begin{tikzpicture}[scale=0.5]
\shadedraw[shading=radial, inner color=honey!50!yellow,outer color=see!40!lighthoney,scale=0.65] plot[smooth cycle] coordinates{(-1,6)(-3,7)(-4,9)(-3,10.7)(-1,11)(1,10)(2.5,9.8)(3.5,10)(4.9,10)(5.3,8.5)(3,6)(1,5.7)};
\draw(-1,5.5) node {$\E_c(M)$};
\shadedraw[shading=axis,shading angle=90, left color=see!80!white,right color=white] plot[smooth cycle] coordinates{(-1,-.9) (-2.2,0)(-2.1,1)(-1,2)(1,2.3)(2,2.5)(3,2)(3,1)(2,0)(1,-.9)(0,-1)};
\draw(-1,0.5) node {$M$};
\path[->] (-1.5,1) edge  [bend left]  node[left] {$\E_c$} (-2,5);
\draw(0.9,0.5) node[name=s,minimum size=1.2cm,draw,diamond] {};
\draw(7,0.5) node[shading=axis,shading angle=90, left color=see!80!white,right color=white,name=t,minimum size=1.2cm,draw,diamond] {}
node {$\Ocal$};
\path[->] (t.north west) edge  [bend right,thick]  node[above] {$\chi$} (s.north east);
\shadedraw[shading=radial, inner color=honey!50!yellow,outer color=see!40!lighthoney,scale=0.6] plot[smooth cycle] coordinates{(12,6)(10,7)(9,9)(10,10.7)(12,11)(14,10)(15.5,8)(14,6.2)};
\path[->] (t.north east) edge  [bend right]  node[left] {$\E_c$} (7.5,5);
\draw(7,5.5) node {$\E_c(\Ocal)$};
\path[<-] (-1,6.2) edge  [bend left,thick]  node[above] {$\chi_*$} (6.8,6.2);
\end{tikzpicture}
\begin{tikzpicture}[scale=0.7]
\shadedraw[shading=axis,shading angle=90, left color=see!80!white,right color=white] plot[smooth cycle] coordinates{(-1,-1) (-2.2,0)(-2.1,1)(-1,2)(1,2.3)(2,2.5)(3,2)(3,1)(2,0)(1,-1)(0,-1.2)};
\draw(-1.4,0.5) node{$M$};
\draw(3.4,4) node [name=real] {$\RR$};
\draw(.7,0.5) node[name=s,minimum size=2.1cm,draw,diamond, inner sep=0pt] {\footnotesize$\chi(\Ocal)$};
\draw (6,0.5) node[shading=axis,shading angle=90, left color=see!80!white,right color=white,name=t,minimum size=2.1cm,draw,diamond] {}
node { \footnotesize$\Ocal$};
\shade[shading=radial, inner color=darksee!80!white,outer color=see!80!white,scale=0.2] plot[smooth cycle] coordinates{(30.5,4)(28.5,5)(27.5,6)(28.5,7.7)(30.5,8)(32.5,7)(33.4,5.2)(32.5,4.2)};
\shade[shading=radial, inner color=darksee!80!white,outer color=see!80!white,scale=0.2] plot[smooth cycle] coordinates{(4,4)(2,5)(1,6)(2,7.7)(4,8)(6,7)(6.9,5.2)(6,4.2)};
\path[<-] (t.south west) edge  [bend left]  node[above] {$\chi^{-1}$} (s.south east);
\path[->] (6.1,1.1) edge  [out=60, in=0,color=red]  node[right] {$h$} (real);
\path[->] (0.8,1.1) edge  [out=100, in=190,color=red]  node[left] {$\chi_*h$} (real);
\end{tikzpicture}
\end{center}
\caption{Construction of the functor $\E_c$.\label{testf}}
\end{figure}
We identify now a class of functionals, that are of particular interest in classical field theory. Those are so called \textit{local functionals}\index{local!functional}. According to the standard definition, a functional $F$ is called local if it is of the form:
\[
F(\ph)=\int\limits_M\dvol(x) f(j_x(\ph))\,,
\]
where $f$ is a function on the jet space over M and $j_x(\ph)=(x,\ph(x),\pa\ph(x),\dots)$ is the jet of $\ph$ at the point $x$. It was already recognized in \cite{DF04,BDF,BFR} in the context of perturbative algebraic quantum field theory that the property of locality can be reformulated using the notion of \textit{additivity}\index{additivity of functionals} of a functional. The concept itself dates back to the works of Chac\'on and Friedman \cite{ChF}. See also the survey of Rao \cite{Rao}. We say that $F$ is additive if for all fields $\varphi_1,\varphi_2,\varphi_3\in\E(M)$ such that $\textrm{supp}(\varphi_1)\cap\textrm{supp}(\varphi_3)=\varnothing$ we have:
\be\label{add}
F(\varphi_1+\varphi_2+\varphi_3)=F(\varphi_1+\varphi_2)-F(\varphi_2)+F(\varphi_2+\varphi_3)\,.
\ee
One shows that a smooth compactly supported functional is local if it is additive and
the wave front sets of its derivatives are orthogonal to the tangent bundles of the thin diagonals $\Delta^k(M)\doteq\left\{(x,\ldots,x)\in M^k:x\in M\right\}$, considered as subsets of the tangent bundles of $M^k$, i.e.: $\textrm{WF}(F^{(k)}(\ph))\perp T\Delta^k(M)$. In particular $F^{(1)}(\ph)$ is a smooth section for each fixed  $\ph$. 

The space of compactly supported smooth local functions $F:\E(M)\to\RR$ is denoted by $\F_\loc(M)$. The algebraic completion of $\F_\loc(M)$ with respect to the pointwise product\index{product!pointwise}
\be\label{prod}
F\cdot G(\ph)=F(\ph)G(\ph) \,,
\ee
is a commutative algebra $\F(M)$ consisting of sums of finite products of local functionals. We call it \textit{the algebra of multilocal functionals}\index{multilocal!functional}.
 $\F$ becomes a (covariant) functor by setting $\F\chi(F)=F\circ \E\chi$, i.e. $\F\chi(F)(\ph)=F(\ph\circ\chi)$ (see figure \ref{functionals}).
 \begin{figure}[!tb]
 \begin{center}
 	\begin{tikzpicture}[scale=0.6]
\shadedraw[shading=radial, inner color=honey!50!yellow,outer color=see!40!lighthoney,scale=0.65] plot[smooth cycle] coordinates{(-1,6)(-3,7)(-4,9)(-3,10.7)(-1,11)(1,10)(2.5,9.8)(3.5,10)(4.9,10)(5.3,8.5)(3,6)(1,5.7)};
\draw(0.2,4.2) node {\small$\E(M)$};
\shadedraw[shading=axis,shading angle=90, left color=see!80!white,right color=white] plot[smooth cycle] coordinates{(-1,-1) (-2.2,0)(-2.1,1)(-1,2)(1,2.3)(2,2.5)(3,2)(3,1)(2,0)(1,-1)(0,-1.2)};
\draw(-1,0.5) node {$M$};
\path[->] (-1.5,1) edge  [bend left]  node[left] {$\E$} (-2,4.6);
\draw (1,0.5) node[name=s,minimum size=1.5cm,draw,diamond] {};
\draw (7,0.5) node[shading=axis,shading angle=90, left color=see!80!white,right color=white,name=t,minimum size=1.5cm,draw,diamond] {}
node {$\Ocal$};
\path[->] (t.north west) edge  [bend right,thick]  node[above] {$\chi$} (s.north east);
\shadedraw[shading=radial, inner color=honey!50!yellow,outer color=see!40!lighthoney,scale=0.6] plot[smooth cycle] coordinates{(12,6)(10,7)(9,9)(10,10.7)(12,11)(14,10)(15.5,8)(14,6.2)};
\path[->] (t.north east) edge  [bend right]  node[left] {$\E$} (8.4,3.7);
\draw(7,4.2) node {\small$\E(\Ocal)$};
\path[->] (.2,6.4) edge  [bend left,thick]  node[above] {$\chi^*$} (6.8,6.4);
\draw(3.4,10) node [name=real] {$\RR$};
\path[->] (8,5.8) edge  [out=90, in=0]  node[right] {$F$} (real);
\path[->] (-1.7,5.8) edge  [out=90, in=180]  node[left] {$\F\chi F$} (real);
\filldraw  (8,5.8) circle (1pt)
		        (-1.7,5.8) circle (1pt);
\draw node[below]  at (8,6) {\small$\chi^*h$};
\draw node[below]  at (-1.7,6) {\small$h$};
\end{tikzpicture}
\end{center}
\caption{Definition of $\F\chi F$.\label{functionals}}
\end{figure} 
  In the context of quantum field theory one needs to enlarge this algebra with more singular objects, since it doesn't contain for example the Wick polynomials. Besides, even in classical theory, $\F(M)$ is not enough to build a Poisson algebra, because it turns out not to be closed under the Poisson bracket. We shall come back to this issue in section \ref{scal}.
 %---------------------------------------------------------------------------------------------------------
 \section{Dynamics}\label{dyn}
%---------------------------------------------------------------------------------------------------------
Now we have the playground ready and we can start with something more physical. All the nice kinematical structures from the previous section don't describe any physics yet. A specific model is specified by introducing the \textit{dynamics}.
This can by done by means of the \textit{generalized Lagrangian}\index{generalized Lagrangian}. As the name suggests the idea is motivated by the Lagrange mechanics. Indeed, we can think of this formalism as a way to make precise the variational calculus
in field theory. Note that since we deal with globally hyperbolic spacetimes, they are not compact. Moreover we cannot restrict ourselves to compactly supported field configurations, since the nontrivial solutions of globally hyperbolic equations don't belong to this class. Therefore we cannot identify the action with a function on $\E(M)$ obtained 
by integrating the Lagrangian density over the whole manifold. Instead we follow \cite{BDF} and define a Lagrangian $L$ as a natural transformation between the functor of test function spaces $\D$ and the functor $\F_\loc$ such that it satisfies $\supp(L_M(f))\subseteq \supp(f)$ and the additivity rule 
\footnote{We do not require linearity since in quantum field theory the renormalization flow does not preserve the linear structure; it respects, however, the additivity rule (see \cite{BDF}).}
\[
L_M(f+g+h)=L_M(f+g)-L_M(g)+L_M(g+h)\,,
\]
for $f,g,h\in\D(M)$ and $\supp\,f\cap\supp\,h=\varnothing$.  
The action $S(L)$ is now defined as an equivalence class of Lagrangians  \cite{BDF}, where two Lagrangians $L_1,L_2$ are called equivalent $L_1\sim L_2$  if
\be\label{equ}
\supp (L_{1,M}-L_{2,M})(f)\subset\supp\, df\,, 
\ee
for all spacetimes $M$ and all $f\in\D(M)$. 
This equivalence relation allows us to identify Lagrangians differing by a total divergence.

The equations of motion are to be understood in the sense of \cite{BDF}. Concretely, the Euler-Lagrange derivative\index{derivative!Euler-Lagrange} of $S$ is a natural transformation $S':\E\to\D'$ defined as
\be\label{ELd}
\left<S'_M(\ph),h\right>=\left<L_M(f)^{(1)}(\ph),h\right>\,,
 \ee
 with $f\equiv 1$ on $\supp h$. The field equation is now a condition on $\ph$:
\be
 S_M'(\ph)=0\,.\label{eom}
\ee
The space of solutions of (\ref{eom}) is a subspace of $\E(M)$ , denoted by $\E_S(M)$. In the on-shell setting of classical field theory one is interested in the space $\F_S(M)$ of multilocal functionals on $\E_S(M)$. This space can be understood as the quotient $\F_S(M)=\F(M)/\F_0(M)$, where $\F_0(M)$ is the space of multilocal functionals that vanish on $\E_S(M)$ (on-shell). This observation will be crucial for the construction of the Koszul complex\index{Koszul!complex}, performed in \ref{K}.
%---------------------------------------------------------------------------------------------------------
\section{Axioms of locally covariant field theory}\label{axioms}
%---------------------------------------------------------------------------------------------------------
It is time to make a short stop in our journey and summarize what we already obtained and what we are heading for. Since at the end we want to obtain a satisfactory conceptual framework, it is important to formulate some guidelines and postulates which we want to fulfill. If you start a long journey, looking for something you haven't seen before, you always take with you some travel guides or maps that will help you recognize the object of your search. A set of axioms in mathematical physics plays a similar role. You want to identify the object of your search and make it clear, that as soon as all the axioms are fulfilled, you would know, that you reached the end of your journey. In previous sections we made a lot of effort to prepare a convenient language to formulate the classical field theory in a locally covariant way. Now it's time to state what this theory really \textit{is}. To this end we formulate now a set of axioms.

A local, generally covariant classical field theory\index{locally covariant!field theory!classical} is defined as a covariant functor $\fA$ between the category $\Loc$ of spacetimes and $\Obs$ of observables. 
The construction of this functor can be seen as a generalization of the local net of observables \cite{Haag}. The covariance property reads:
$$ \fA\psi' \circ \fA\psi = \fA(\psi' \circ \psi)\,,\quad \fA({\id}_M) = {\id}_{\fA(M,g)}\,,$$
for all morphisms $\psi'$ from $\Hom_{\Loc}((M_2,g_2),(M_3,g_3))$, $\psi \in \Hom_{\Loc}((M_1,g_1),(M_2,g_2))$ and all objects of $\Loc$. Defining locally covariant field theory as a functor
corresponds to the axioms of \textbf{locality}\index{axiom!of locality}, \textbf{subsystems}\index{axiom!of subsystems} and  \textbf{covariance}\index{axiom!of covariance}. The two
remaining axioms (\textbf{time-slice axiom} and \textbf{causality}) can be formulated as certain properties of the functor $\fA$.

The \textbf{causality}\index{axiom!of causality} axiom is related to the tensorial structure of the underlying categories \cite{FR2}. The tensorial structure of $\Loc$ is defined in terms of disjoint unions, namely the objects in $\Loc^{\otimes}$ are elements $M$ that can be written as $M_1\otimes\ldots\otimes M_N:=M_1\coprod \ldots\coprod M_n$. The unit is provided by the empty set $\varnothing$. By admissible embeddings we mean maps $\chi: M_1\coprod \ldots\coprod M_n\rightarrow  M$ such that each component is a morphism of $\Loc$ and all images are spacelike to each other, i.e., $\chi(M_1) \perp\ldots\perp\chi(M_n)$. 
Now we turn to the tensorial structure of the category $\Obs$. 
Since there is no unique tensor structure on general locally convex vector spaces, one has to either restrict to nuclear spaces\footnote{Details on the nuclear spaces and tensorial structure are given in the subsection \ref{nuclear}.} or make a  choice of the tensor structure based on some physical requirements. In the present context we shall use the first of those
possibilities, since the spaces we are working with are nuclear.
 The functor $\fA$ can be then extended to a functor $\fA^\otimes$ between the categories $\Loc^\otimes$ and $\Obs^\otimes$. The requirement for $\fA^\otimes$ to be tensorial is a condition that
\begin{align}
\fA^\otimes\left(M_1\coprod M_2\right)&=\fA(M_1)\otimes\fA(M_2)\,,\\
\fA^\otimes(\chi\otimes\chi')&=\fA^\otimes(\chi)\otimes\fA^\otimes(\chi')\,,\\
\fA^\otimes(\varnothing)&=\CC\,.
\end{align}
Now we show (see \cite{FR2}) that if $\fA$ is a tensor functor, then the theory is causal. Consider the natural embeddings $\iota_{i}:M_i\rightarrow M_1\coprod M_2$, $i=1,2$ for which $\fA\iota_1(A_1)=A_1\otimes\1$, $\fA\iota_2(A_2)=\1\otimes A_2$, $A_i\in\fA(M_i)$. Now let $\chi_i:M_i\rightarrow M$ be admissible embeddings such that the images of  $\chi_1$ and  $\chi_2$ are causally disjoint in $M$. We define now an admissible embedding $\chi:M_1\coprod M_2\rightarrow M$ as:
\be
\chi(x)=\left\{
\begin{array}{lcl}
\chi_1(x)&,&x\in M_1\\
\chi_2(x)&,&x\in M_2
\end{array}
\right.
\ee
Since $\fA^\otimes$ is a covariant tensor functor, it follows:
\be
\{\fA\chi_1(A_1),\fA\chi_2(A_2)\}=\fA\chi\{\fA\iota_1(A_1),\fA\iota_2(A_2)\}=\fA\chi\{A_1\otimes\1,\1\otimes A_2\}=0
\ee
This provides the notion of causality. It can be shown that also the opposite implication holds, i.e. the causality axiom implies that the functor $\fA$ is tensorial. The proof will be provided in \cite{BFR}.

The \textbf{time-slice}\index{axiom!time-slice} axiom basically means that one can reconstruct the full algebra of observables associated to a given region $\Ocal$ knowing only the algebra of a causally convex neighbourhood of a Cauchy surface contained in $\Ocal$. Precisely, if the morphism $\psi \in \Hom_{\Loc}((M,g),(M',g'))$ is such that $\psi(M)$ contains a Cauchy-surface in $(M',g')$, then $ \fA\psi$ is an isomorphism.
For the classical field theory it amounts to the statement that the Cauchy problem is well defined for the field equations (in the sense of (\ref{eom})). Since we are always dealing with the hyperbolic (system of) equations, this is obviously satisfied. In case of the anti-commuting fields the problem is less trivial and we shall come back to it in section \ref{fer}. 

For the sake of completeness we recall here the precise definition of the algebra that we can associate to the Cauchy surface \cite{BFV,BF1,FR,Chi}.
Let $\Sigma\subset M$ be a Cauchy surface. We can associate to it a family of algebras $\{\fA(N)\}_{{N}\in I}$ indexed by the (admissibly embedded) subspacetimes $N$ of 
${M}$, such that $\Sigma\subset N$. The inclusion of spacetimes provides an order relation on the index set. For ${N}_i,N_j\in I$, such that ${N}_i\subset{N}_j\in I$ we introduce a notation ${N}_i\geq{N}_j$. The Cauchy surface $\Sigma$ is the upper limit with respect to $\geq$, so we obtain a directed system of algebras  $(\{\fA({N})\}_{{N}\in I},\geq)$.

Let $\chi_{ij}:{N}_i\hookrightarrow{N}_j$ be the canonical isometric embedding of ${N}_i\geq{N}_j$. Using the covariance property we obtain a morphism of algebras $\fA{\chi_{ji}}:\fA({N}_i)\hookrightarrow\fA({N}_j)$. Consider a family of all such maps between the elements of the directed system  $(\{\fA({N})\}_{{N}\in I},\geq)$. It can be easily checked that the family of mappings $\fA{\chi_{ij}}$ provides the transition morphisms for this directed system and we can define the corresponding projective (inverse) limit:
\be
\fA(\Sigma):=\!\varprojlim\limits_{{N}\supset\Sigma}\fA({N})= \Big\{\mbox{ germ of }(a)_I\!\! \in\!\! \prod_{{{N}\in I}}\fA({N}) \;\Big|\; a_{{N}_i}= \fA{\chi_{ij}}( a_{{N}_j})\ \forall\ {N}_i \leq {N}_j \Big\}. 
\ee
Having defined the algebra associated to a Cauchy surface we can now provide a notion of dynamics on the  functorial level \cite{BF1,FR2}. This will be especially important in section \ref{grav} in the context of classical gravity, where we go beyond the fixed manifold and work on the level of categories.

Consider natural embeddings of Cauchy surfaces $\Sigma$ into  ${M}$. The time-slice axiom implies that  $\alpha_{{M}\Sigma}$ are isomorphisms. It follows that the propagation from $\Sigma_1$ to  $\Sigma_2$ is described by:
\be\label{alphaM}
\alpha^{{M}}_{\Sigma_1\Sigma_2}:=\alpha_{{M}\Sigma_1}^{-1}\alpha_{{M}\Sigma_2} \ .
\ee
%This isomorphism is called the \textit{relative Cauchy evolution}.
%Givan a cobordism, i.e. a Lorentzian manifold ${M}$ with future/past boundary $\Sigma_\pm$ we obtain an assignment: $\Sigma_\pm\mapsto\fA(\Sigma_\pm)$, ${M}\mapsto\alpha^{{M}}_{\Sigma_-\Sigma_+}$.
%\section{Scalar fields}\label{scal}
%---------------------------------------------------------------------------------------------------------
\section{Poisson structure for the free scalar field}\label{scal}
%---------------------------------------------------------------------------------------------------------
After formulating the general formalism we give now a concrete example of a construction of the locally covariant classical field theory. We follow the approach of Peierls \cite{Pei} (see also the work of Marolf \cite{Mar}), where the Poisson structure is defined in the Lagrangian formalism, without the need to introduce a Hamiltonian. This definition of a Poisson algebra was used also by deWitt in \cite{Witt} and now is a basic tool in the functional approach to classical field theory \cite{DF02,DF,Duetsch:2000nh,BV09,Rej}. The main assumption for the existence of the Peierls bracket\index{Peierls bracket} is the hyperbolicity of  the equations of motion. To avoid the notational complications we take as an example  the free minimally coupled scalar field. The generalized Lagrangian\index{generalized Lagrangian! of the scalar field} is given by:
\be\label{Lscalar}
L_M(f)(\ph)=\frac{1}{2}\int\limits_M \dvol(\nabla_\mu\ph\nabla^\mu\ph-m^2\ph^2)f
\ee
The second variational derivative of this Lagrangian induces an operator $\E(M)\rightarrow\E(M)$ given by $P=\Box+m^2$ (the Klein-Gordon operator). Clearly it is normally hyperbolic. It was proved in \cite{Baer} that for this class of operators on a generic globally hyperbolic Lorentzian manifold there exist retarded and advanced Green's functions $\Delta^{R/A}$, i.e. the inverses of $P$ satisfying in addition:
\begin{align}
\supp(\Delta^R)&\subset\{(x,y)\in M^2| y\in J^-(x)\}\label{retarded}\,,\\
\supp(\Delta^A)&\subset\{(x,y)\in M^2| y\in J^+(x)\}\label{advanced}\,.
\end{align}
The causal propagator is defined as $\Delta\doteq\Delta^R-\Delta^A$.
The Poisson structure on the space of functionals $\F(M)$ (off-shell!) can be defined as:
\begin{equation}
\{F,G\}_S=-\left<F^{(1)},\Delta*G^{(1)}\right>\,.\label{peierls1}
\end{equation}
From the wave front set properties of the elements of $\F(M)$ follows that the above expression is well defined. It turns out however that the space of multilocal functionals is not closed under $\{.,.\}_S$. Therefore, in order to obtain a Poisson algebra, one has to extend $\F(M)$ with more singular objects. A natural choice is provided by the space of \textit{microcausal} functionals\index{microcausal!functionals} \cite{BFK95,BDF,BFR}. Let $\Xi_n$ be an open come defined as $\Xi_n\doteq\{(x_1,...,x_n,k_1,...k_n)| (k_1,...k_n)\notin (\overline{V}_+^n \cup \overline{V}^n_-)\}$. We denote by $\F_\mc(M)$ the space of smooth compactly supported functionals, such that their derivatives at each point are distributions with wave front set contained in the open cone $\Xi_n$:
 \be\label{mlsc}
\WF(F^{(n)}(\ph))\subset \Xi_n,\quad\forall n\in\NN,\ \forall\ph\in\E(M)\,.
\ee
The space of such distributions, denoted by $\Ecal_{\Xi_n}(M)$ can be equipped with the H\"ormander topology \cite{Hoer}. We recall now its definition.
Let $C_n\subset \Xi_n$ be a closed cone contained in $\Xi_n$. We introduce (after \cite{Hoer,BaeF,BDF}) the following family of seminorms on $\Gamma'_{C_n}(M^n)$: 
\[
p_{n,\ph,\tilde{C},k} (u) = \sup_{\xi\in V}\{(1 + |\xi|)^k |\widehat{\ph u}(\xi)|\}\,,
\]
where the index set consists of $(n,\ph,\tilde{C},k)$ such that $k\in \NN_0$, $\ph\in \D(U)$ and $\tilde{C}$ is a closed cone in $\RR^n$ with $(\supp ( \ph ) \times \tilde{C}) \cap C_n = \varnothing$. These seminorms, together with the seminorms of the weak topology provide a defining system for a locally convex topology denoted by $\tau_{C_n}$. To control the wave front set properties inside open cones, we take an inductive limit. It can be shown that, to form this inductive limit one can choose a family of closed cones contained inside $\Xi_n$ to be countable. The resulting topology, denoted by $\tau_{\Xi_n}$, doesn't depend on the choice of this family. It was shown in \cite{BDF} that the H\"ormander topology $\tau_{C_n}$ can be equivalently defined as an initial topology with respect to the family of maps:
\begin{align}
u&\mapsto \left<u,f\right>,\quad f\in\Gamma(M)\,,\\
u&\mapsto Au\in\Gamma(M,V)\,,
\end{align}
where $A$ is a pseudodifferential operator with characteristics containing $C_n$. This is a family of linear maps into nuclear spaces that satisfies assumptions of corollary \ref{ninit}, so we can conclude that the topology $\tau_{C_n}$ is nuclear. According to \ref{nuc} a countable inductive limit of  nuclear vector spaces is nuclear so it holds also for $\tau_{\Xi_n}$. We can now equip $\F_\mc(M)$ with the initial topology with respect to the mappings:
\be\label{tauF}
\F(M)\ni F\mapsto F^{(n)}(\ph)\in\Ecal'_{\Xi_n}(M^{n})\quad n\geq0\,,
\ee
where $\ph$ runs through all elements of $\E(M)$. We denote this initial topology on $\F_\mc(M)$ by $\tau_\Xi$. Using  again the result \ref{ninit} we conclude that this topology is nuclear.

%@@@ Comment on the well-definiteness of the Poisson bracket?
It can be shown that $\F_\mc(M)$ is closed under the Poisson bracket (\ref{peierls1}) and the assignment
$M\mapsto (\F_\mc(M),\{.,.\})$ is a covariant functor from $\Loc$ to $\Obs$. Moreover it turns out that the space of microcausal functionals will be useful for the construction of the quantum theory in section \ref{quantalg}.
%---------------------------------------------------------------------------------------------------------
 \section{Interlude: vector valued functions on $\E(M)$}\label{vvf}%---------------------------------------------------------------------------------------------------------
 After introducing various definitions concerning the functionals this seems to be a good moment to make some generalizations that will be useful later on. In the same way as one defines local and microcausal functionals one can also introduce those notions for functions on $\E(M)$ with values in more general locally convex vector spaces. 
 
We start with the notion of locality. Consider functions on $\E(M)$ with values in a locally convex vector space of distributional sections $\W(M):=\prod\limits_{n=0}^\infty\Gamma'(M^n,W_n)$ (equipped with the strong topology) for arbitrary finite dimensional vector spaces $W_n$. 
%In particular $\CE(M)$ and  $\fG(M)$ are embeded in  spaces of this form for $W_n=\Lambda^n g$ and $W_n=\bigoplus\limits_{k+l+m=n}\Lambda^k g\otimes \Lambda^l V\otimes S^mg$.
The $k$-th functional derivative of an element $F\in\Ci(\E(M),\W(M))$, at each point $\ph\in\E(M)$  can be identified with an element of:
%@@@ think about this sequential completion
\[
F^{(k)}(\ph)\in\prod\limits_{n=0}^\infty\Gamma'(M^{k+n},V^{\otimes k}\otimes W_n)\,.
\]
We define a support of $F$ as a closure of the sum of the supports of its derivatives (c.f. (\ref{support2})), i.e. $\supp\, F=\overline{\bigcup\limits_{{\ph\in\E(M)}}\supp(F^{(1)}(\ph))}$.

We say that $F$ is local\index{local!vector-valued functions} if it is compactly supported and all the distributions $F^{(k)}(\ph)$ have their supports on the thin diagonal and their wavefront sets are orthogonal to the tangent bundles of the thin diagonals $\Delta^{n+k}(M)$. The subspace of $\Ci(\E(M),\W(M))$ consisting of all the local functions is denoted by $\Ci_\loc(\E(M),\W(M))$. The space of multilocal functions  $\Ci_\ml(\E(M),\W(M))$ is again the algebraic completion of $\Ci_\loc(\E(M),\W(M))$ as an $\Ci_\loc(\E(M),\RR)$-module. We define microcausal vector valued functionals\index{microcausal!vector-valued functions} $\Ci_\mc(\E(M),\W(M))$, as the smooth compactly supported ones, for which the functional derivatives at each point satisfy the condition (\ref{mlsc}), i.e.
\[
F^{(k)}(\ph)\in\prod\limits_{n=0}^\infty\Gamma'_{\Xi_{n+k}}(M^{n+k},V^{\otimes k}\otimes W_n)\,.
\]
Topologies $\tau$ and $\tau_\Xi$ generalize to vector-valued functions on $\E(M)$ in a natural way and
multilocal functions  $\Ci_\ml(\E(M),\W(M))$ are dense in  $\Ci_\mc(\E(M),\W(M))$ with respect to the topology $\tau_{\Xi}$.

%---------------------------------------------------------------------------------------------------------
\section{Free fermionic fields}\label{fer}
%---------------------------------------------------------------------------------------------------------
Up to now we discussed only fields that are commuting. We know however, that in the quantum field theory also anicommuting ones appear. On may ask, if it is necessary to consider the classical counterpart of fermionic fields, since physically they exist only in the quantum context. There is however a good motivation to do it, since the quantization scheme we use involves the deformation quantization. This means, that we start first with some classical structure and then deform it to obtain the quantum algebra of observables. Therefore it is necessary to give a meaning to the classical theory also in case of fermions. Besides, we have to keep in mind that we aim for the locally covariant formulation of the BV formalism and in this formalism anticommuting fields are intrinsically present. Of course they are not to be understood as physical quantities, but rather as auxiliary objects. Nevertheless a consistent formalism that treats both the fermionic and bosonic fields on an equal footing is in this context necessary.

In this section we discuss the construction of the classical field theory of fermionic fields in the functional approach. We work with the example of Dirac fields, but the framework can be applied to arbitrary anticommuting variables, like for example the ghost fields. To formulate the theory of Dirac fields in the locally covariant framework we first need some geometrical preliminaries. For details we refer to \cite{Ko-Dirac}. By spin structure on $M$ we mean a pair $(SM,\pi)$
, where $SM$ is a
principal $Spin^0_{1,3}$-bundle over $M$
and $\pi:SM\rightarrow FM$ is
the \emph{spin frame projection}: a base-point preserving bundle
homomorphism of $SM$ and the frame bundle, compatible with the universal covering map. By the frame bundle $FM$ we mean the bundle whose fibre at each point $x\in M$ consists of orthonormal bases of $T_xM$ in the metric $g$.

A spacetime endowed with the spin structure $(SM,\pi)$ is called  a \emph{spin spacetime}\index{spin!spacetime}. Every globally hyperbolic spacetime admits a spin structure\index{spin!structure} but it doesn't have to be unique \cite{Ko-Dirac}. We define the (standard) locally covariant spinor bundle of the spin spacetime $M$ to be the associated vector bundle 
$DM=SM\times_{Spin^0_{1,3}}\CC^4$ of $SM$. The cospinor bundle is defined to be $D^*M$, the dual of $DM$.
%which maps each morphism $\map{\chi}{SM_1}{SM_2}$ to the
%morphism $\map{\xi}{D_0M_1}{D_0M_2}$ given by
%$\xi(\left[E,z\right]):=\left[\chi(E),z\right]$.
%The \emph{standard locally covariant Dirac cospinor bundle} $\func{D}_0^*$
%is the dual functor of $\func{D}_0$.
We take the Whitney sum $DM\oplus D^*M$ and define the configuration space to be $\Ecal(DM\oplus D^*M)$, the set of smooth sections. The kinematical space of the theory consists of antisymmetric multilinear functionals on the exterior algebra of $\Ecal(DM\oplus D^*M)$ satisfying certain regularity conditions. Let $(DM\oplus D^*M)^{\boxtimes n}$ denote the exterior tensor product
of vector bundles. We define $\F_\loc(M)\doteq\prod\limits_{n=0}^\infty \F_\loc^{\,n}(M)$, where $\F_\loc^{\,n}(M)$ consists of antisymmetric distributions $\Ecal'((DM\oplus D^*M)^{\boxtimes n})$ 
% with the wave front set contained in the open cone  $\Xi_n\doteq\{(x_1,...,x_n,k_1,...k_n)| (k_1,...k_n)\notin (\overline{V}_+^n \cup \overline{V}^n_-)\}$.
with wave front sets orthogonal to the tangent bundles of the thin diagonals $\Delta^n(M)\doteq\left\{(x,\ldots,x)\in M^k:x\in M\right\}$, considered as subsets of the tangent bundles of $M^n$. Similarly to the bosonic case we can also define the multilocal functionals $\F(M)$ as products of such distributional sections. The condition of ``asymmetry'' means that:
$u_{a_1,...,a_k,a_{k+1},...,a_p}(x_1,..., x_k,x_{k+1},..., x_p)=-u_{a_1,...,a_{k+1},a_k,...,a_p}(x_1,..., x_{k+1},x_k,..., x_p)$.
We define a support of a functional $F\in\F(M)$ to be a sum of supports of all its components: $\supp(F):=\overline{\bigcup\limits_{p}\supp(F^p)}$.
 
 The construction  described above is in fact functorial. Let $\SLoc$ be defined as the subcategory of the fiber bundle category which has spin spacetimes as objects and the morphisms are 
 bundle morphisms covering those of $\Loc$, which are compatible with the right action and projections.
 Since Dirac fields carry an additional geometrical information concerning the spin structure, it is justified to choose as an underlying category $\SLoc$ and distinguish between spacetimes carrying different spin structures. We fix the Dirac structure to be the standard one, defined above.
 The details concerning the non-uniqueness of this structure are discussed in \cite{Ko-Dirac}. The map assigning the configuration space $\Ecal(DM\oplus D^*M)$ to a spin spacetime $M$ can be made into a contravariant functor $\E$ from  $\SLoc$ to $\Vect$.  For future convenience we introduce one more functor which assigns to $M\in\obj(\SLoc)$ the space $\E^p(M)=\Ecal((DM\oplus D^*M)^{\boxtimes p})$ of antisymmetric configurations of degree $p$. The morphisms are mapped to pullbacks. The space of all antisymmetric configurations is assigned by the functor $\Cc(M)=\bigoplus\limits_{p=0}^\infty\E^p(M)$.
 Clearly $\F(M)\subset \Cc'(M)$.
 
We define also a covariant functor $\E_c:\SLoc\rightarrow\Vect$ that assigns to $M\in\obj(\SLoc)$ the space of test sections $\Dcal(DM\oplus D^*M)$ and to morphisms, their pushforwards, similarly we define the functor of test sections $\D:\SLoc\rightarrow\Vect$. The assignment of the space of antisymmetric functionals to a manifold is in a natural way a covariant functor $\F:\SLoc\rightarrow \Vect$ into the category of nuclear topological locally convex vector spaces. Furthermore we introduce functors $\F^p:\SLoc\rightarrow\Vect$ that assign to $M\in\obj(\SLoc)$ the spaces of functionals of a fixed grade $p$.

The dynamics is introduced similarly to the scalar case by means of a generalized Lagrangian. We already discussed in \cite{Rej} that only in case of quadratic interaction we can understand the equations of motion (\textsc{eom}'s)  as a system of (differential) equations on the configuration space $\E(M)$. In general we cannot define the ``on-shell'' functionals as those on the solution space, but we can still keep the algebraic definition.  The graded (left) derivative\index{derivative!graded} of an antysymetric functional $F\in\F(M)$ is defined as:
\be\label{d1}
F^{(1)}(u)[h]:=F(h\wedge u)\quad F\in \F^{p+1}(M),\ u\in\E^p(M),\ h\in\E(M),\ p>0\,.
\ee
It is clear that $F^{(1)}$ is a vector-valued distribution (see section \ref{vvalued}), more precisely, an element of $\E'(M)\widehat{\otimes}\F(M)$, where $\widehat{\otimes}$ is the completed tensor product\footnote{The completion is meant with respect to the weak topology on the space $\prod_{n=0}^\infty\Ecal'((DM\oplus D^*M)^{\boxtimes n})\supset \E'(M)\otimes\F(M)$.}. The Euler-Lagrange derivative of the action $S$ provides on each $M$ a distribution $S'_M\in\E_c'(M)\widehat{\otimes}\F(M)$ defined by $\left<S'_M(u),h\right>=\left<L_M(f)^{(1)}(u),h\right>$, with $f\equiv 1$ on $\supp h$. To implement the dynamics we define an ideal $\F_0(M)\subset\F(M)$ as the one generated (in the algebraic and topological sense) by the set $\left\{\left<S'_M(.),h\right>\right\}_{h\in \E_c(M)}\subset\F(M)$. Then the on-shell algebra of functionals is the quotient $\F_S(M):=\F(M)/\F_0(M)$. The assignment $M\rightarrow \F_S(M)$ is a covariant functor from $\SLoc$ to $\Vect$.

After this short introduction we discuss as an example  the construction of locally covariant classical theory of the free Dirac field. Like in the case of the scalar field we have to extend the space of functionals by more singular objects, in order to obtain a structure closed under the Peierls bracket. Therefore we weaken the condition on the wave front sets of the distributional sections that constitute antisymmetric functionals. We define the space of microcausal functionals as $\Fcal_\mc\doteq\prod\limits_{n=0}^\infty \Fcal^n_\mc$, where $\Fcal^n_\mc$ consists of antisymmetric distributional sections $\Ecal'_{\Xi_n}((DM\oplus D^*M)^{\boxtimes n})$ with the wavefront set contained in the open cone  $\Xi_n$.
%@@@ how general is it?
Elements of $\F^{\,p}_\mc(M)$ can be formally\footnote{See discussion in section \ref{distr}.} represented as distributional kernels:
\begin{align*}
F(u)&\form\int\limits_M \dvol \, f(x_1,...,x_p,y_1,...,y_q) \Psi_{x_1}\wedge...\wedge\Psi_{x_p}\wedge \overline{\Psi}_{y_1}\wedge...\wedge\overline{\Psi}_{y_p}(u)=\\
&\stackrel{\phantom{\mathrm{formal}}}{=}\int\limits_M \dvol\, f(x_1,...,y_q)u(x_1,...,y_q),\quad u\in\E^p(M)
\,,\label{Fp}
\end{align*}
where $\Psi_{x_i}$, $\overline{\Psi}_{y_j}$ are evaluation functionals, $\ \dvol$ is an invariant measure on $M$ and the spinor indices were suppressed. Consider the generalized Lagrangian of the free Dirac field:
\be
L_M(f)(u)=\int \dvol(x)\, f(x)(\overline{\Psi}_x\wedge(i\partial\!\!\!/-m)\Psi_x)(u)\quad f\in\D(M)\,,\label{dirac}
\ee
The second derivative of the action possesses retarded and advanced Green's functions $\Delta_M^{R/A}$. 
It is convenient to write $S(1)^{(2)}(x,y)$ as a block matrix in the basis $(u_A,\overline{u}^{\dot A})$:
\begin{equation}
S_M(1)^{(2)}(x,y)=\delta(x-y)\left(\begin{array}{cc}
0&D^{*^T}(x)\\
-D(x)&0
\end{array}\right)\,,
\end{equation}
where ${}^T$ denotes the transpose of a matrix. We can construct the retarded and advanced Green's functions $\Delta_0^{R/A}$ using the fact that $DD^*=D^*D=\Box+m^2$. Let $G^{R/A}$ be retarded/advanced Green's function for $(\Box+m^2)$. It can be shown that\footnote{We drop the subscript $M$ whenever it is clear from the context, in which spacetime we are working}:
\begin{equation}
\Delta^{R/A}(x,y)=\left(\begin{array}{cc}
0&-D^{*}(x)G^{R/A}(x,y)\\
D^T(x)G^{R/A}(x,y)&0
\end{array}\right)\doteq\left(\begin{array}{cc}
0&{K}_*^{R/A}\\
-K^{R/A}&0
\end{array}
\right)\,,\label{free}
\end{equation}
We define the causal propagator as $\Delta=\Delta^R-\Delta^A$. The free field theory is defined by introducing on $\F_\mc(M)$ the following graded Poisson structure:
\begin{equation}
\{F,G\}_S=(-1)^{|F|+1}\left<F^{(1)},\Delta*G^{(1)}\right>\,.\label{peierls2}
\end{equation}
Let $\PgAlg$ denote the category of graded Poisson algebras. The assignment $M\mapsto (\F_{\mc,S}(M),\{.,.\}_S)$ defines a covariant functor $\fA:\SLoc\rightarrow~\PgAlg$. %@@@ think about the axioms 4 and 5.
%---------------------------------------------------------------------------------------------------------
\section{Interaction}\label{inter}
%---------------------------------------------------------------------------------------------------------
Since the interaction is essentially introduced in the same way for the bosonic and fermionic fields, we treat both cases together and from now on by $\F(M)$ we mean a space of graded functionals that can depend both on commuting and anticommuting variables. We use the perturbative construction (see \cite{DF,DF02}) similar to the methods of the scattering theory in quantum mechanics. Let us fix the interaction Lagrangian, i.e. the natural transformation $F$. We assume, that for all $M\in\Loc$ (or $\SLoc$) and $f\in\D(M)$, $F_M(f)$ is even and of degree higher than 2. Without the loss of generality, in the fermionic case we can take $F$ to be homogenous, i.e. $F_M(f)\in\F^p(M)$, $p>2$. Let us now fix the spacetime $M$. Following the spirit of algebraic field theory we want to construct a local net of algebras of interacting fields associated with the spacetime $M$. To do it we first switch on the interaction only in a neighbourhood of a causally closed region $\Ocal$ and construct the local interacting algebra associated to $\Ocal$. Let $g$ be a test function supported in some causally closed neighborhood of  $\Ocal$. It will play a role of a coupling constant and a cut-off function for the interaction. Consider a functional of the form: $L_{\inte, M}(f)=L_M(f)+F_M(fg)$, where $L$ is the generalized Lagrangian (\ref{dirac}). The interacting equation of motion on $M$ is given by:  $\left<L_M(f)^{(1)}+F_M(fg)^{(1)},h\right>$, $f\equiv1$ on $\supp h$, $h\in\E_c(M)$. At this point there is again a difference between the bosonic and fermionic fields. In the first case one can understand the equations of motion as equations for the C-valued functions. Then one can directly define 
the on-shell functionals as functions on the solution space $\E_{S+F(g)}(M)$. A recent result of R.~Brunetti, K.~Fredenhagen and P.~Lauridsen Ribeiro announced in \cite{Pedro} (to be published in \cite{BFR})  shows that a map from the free solutions $\E_{S}(M)$ to the interacting ones $\E_{S+F(g)}(M)$ can be constructed non-perturbatively by means of the Nash-Moser-H\"ormander theorem. Then there arises an interesting question: for which theories the ideal of on-shell functionals is generated only by the elements of the form  $\left<L_M(f)^{(1)}+F_M(fg)^{(1)},h\right>$? Until now the problem is not solved in a full generality, but one can check this property in specific examples. 

In case of the fermionic fields the notion of on-shell functionals is \textit{defined} as an algebraic quotient, so the question doesn't arise. Since the anticommuting case will not be discussed in \cite{BFR}, we shall provide it here. Let now $\F(M)$ denote the space of antisymmetric microcausal functionals. Note that  the variation of $L_M(f)+F_M(fg)$ defines a distribution in $\E_c'(M)\widehat{\otimes}\F(M)$ which we denote by $\left<L_M(1)^{(1)}+F_M(g)^{(1)},.\right>$. The ideal $\F_{g,0}(M)$ is generated by elements of the form: $\left<L_M(1)^{(1)}+F_M(g)^{(1)},h\right>$, $h\in \E_c(M)$. 
The interacting algebra corresponding to the interaction $F_M(g)$ is defined as the quotient: $\F_{g}(M):=\F(M)/\F_{g,0}(M)$.
Following \cite{DF02,Rej} we can now define the M{\o}ller map\index{Moller map@M{\o}ller map!classical} $ r_{F(g)}\equiv r_{S+F(g),S}:\F_{g}(M)\rightarrow\F_S(M)$ intertwining the interacting algebra with the free one.
\be\label{retarded:class}
r_{F(g)}(G)=\sum\limits_{k=0}^\infty\frac{\lambda^k}{k!}R_{S,k}(F(g)^{\otimes k},G)\,,
\ee
where $R_{S,0}=\id$, the first order term is just the retarded product\index{retarded product!classical} with the free field Green's function $\Delta^R$,
\[
 R_{S,1}(F(g),G)=-\left<F(g)^{(1)},\Delta^R*G^{(1)}\right>\,,
 \]
and for higher order terms there is a recursive formula:
\begin{equation}
  R_{S,n+1}(F_1^{\otimes n}\otimes F_2,G)=-\sum_{l=0}^n \binom{n}{l}
R_{S,l}\Bigl( F_1^{\otimes l},(-1)^{|F_2|+1}\left<F_{2}^{(1)}, 
        \Delta_{S+F_1}^{A\,(n-l)}*
       G^{(1)}\right>\Bigr),\label{nreta}
\end{equation}
%%%%%%%%%%%%%%%%%%%%%%%%%
with $F_1,F_2\in\F^+(M)$ and $  \Delta_{S+F_1}^{A\,(k)}$ defined as:
%%%%%%%%%%%%%%%%%%%%%%%%%
\begin{gather}
  \Delta_{S+F_1}^{A\,(k)}\doteq
\frac{d^k}{d\lambda^k}\Big|_{\lambda =0}
\Delta_{S+\lambda F_1}^{A}=(-1)^k k!\, \Delta^A*
 F_{1}^{(2)}*\Delta^{A}*\ldots*F_{1}^{(2)}*\Delta^{A}\>\,.\label{nretb}\end{gather}
%%%%%%%%%%%%%%%%%%%%%%%
The map $r_{F(g)}$ is invertible in the sense of a formal power series, so we can construct the interacting fields from the free ones. An important property of the retarded M{\o}ller maps is the causality: $r_{F(g)}G=G$,  if $\supp(G)\cap(\supp(g)+\overline{V}_+)=\varnothing$.  The Poisson structure on $\F_g(M)$ can be introduced in the similar way as for the free case, since the inverses of $L_M(1)^{(2)}+F_M(g)^{(2)}$ can be constructed from $\Delta^{R/A}$ by 
means of a formal power series. We denote this Poisson structure by $\{.,.\}_{F(g)}$. It is related to $\{.,.\}_S$ by:
\be\label{intertwine:class}
\{r_{F(g)}(H),r_{F(g)}(G)\}_{S}=r_{F(g)}(\{H,G\}_{F(g)})\,.
\ee
The same definitions can be applied in the case of bosonic fields, but then the formal power series is not enough and one has to prove the convergence. The proof was announced in \cite{Pedro} and will be published in \cite{BFR}.

To summarize, we have now defined the algebra $\F_g(M)$ of interacting on-shell functionals. Let $\Ocal$ be  a causally convex region. In both bosonic and fermionc case we associate to $\Ocal$ the subspace of $\F_g(\Ocal)$, consisting of the interacting fields supported inside $\Ocal$. We equip it with the Poisson bracket $\{.,.\}_{F(g)}$ and denote the resulting local algebra by $\fA_g(\Ocal)$. The next proposition shows that this structure doesn't depend on the behavior of $g$ outside of $\Ocal$. 
\begin{prop}\label{local}
Let $\Ocal$ be a relatively compact causally convex region of spacetime $M$  and $f\in\D(M)$ a test function with $\supp(f)\cap\Ocal=\varnothing$. Then the ideal $\F_{g,0}(\Ocal)$ is equal to   $\F_{g+f,0}(\Ocal)$ and the Poisson structure induced by $F(g+f)$ on  $\F_{g}(\Ocal)$ concides with the one induced by  $F(g)$ 
\end{prop}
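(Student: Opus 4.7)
The plan exploits two features already present in the setup: the additivity of the local natural transformation $F$, and the causal convexity of $\Ocal$. The first controls the equality of the two ideals; the second is what makes the retarded and advanced Green's functions of $L(1)^{(2)}+F(g)^{(2)}$ insensitive to modifications of $g$ outside $\Ocal$.

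For the equality of ideals I would proceed as follows. Since $\Ocal$ is relatively compact and $\supp f\cap\Ocal=\varnothing$, I may assume (after an arbitrarily small enlargement of $\Ocal$ if necessary) that $\supp f\cap\bar\Ocal=\varnothing$ and choose a smooth cutoff $\chi$ with $\chi\equiv 1$ on a neighborhood of $\bar\Ocal$ and $\chi\equiv 0$ on a neighborhood of $\supp f$. Decompose $g=g_1+g_2$ with $g_1\doteq\chi g,\ g_2\doteq(1-\chi)g$; then $\supp g_1\cap\supp f=\varnothing$, so additivity applied to the triple $(g_1,g_2,f)$ yields
\[
F_M(g+f)-F_M(g)=F_M(g_2+f)-F_M(g_2),
\]
whose right-hand side is a local functional with spacetime support in $M\setminus\bar\Ocal$. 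Its first functional derivative is therefore a distribution on $M$ supported away from $\bar\Ocal$, and for every $h\in\E_c(\Ocal)$
\[
\bigl\langle F_M(g+f)^{(1)}(\ph)-F_M(g)^{(1)}(\ph),\,h\bigr\rangle=0.
\]
Consequently the generating sets $\{\langle L_M(1)^{(1)}+F_M(g)^{(1)},h\rangle:h\in\E_c(\Ocal)\}$ and $\{\langle L_M(1)^{(1)}+F_M(g+f)^{(1)},h\rangle:h\in\E_c(\Ocal)\}$ of $\F_{g,0}(\Ocal)$ and $\F_{g+f,0}(\Ocal)$ coincide, so the ideals agree.

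For the Poisson structures I would use that for $H,G$ representing elements of $\F_g(\Ocal)=\F_{g+f}(\Ocal)$ the bracket reads
\[
\{H,G\}_{F(g)}=(-1)^{|H|+1}\bigl\langle H^{(1)},\,\Delta_{S+F(g)}\ast G^{(1)}\bigr\rangle,\qquad \Delta_{S+F(g)}=\Delta^R_{S+F(g)}-\Delta^A_{S+F(g)},
\]
and that, as asserted in section \ref{inter}, the interacting Green's functions are constructed as formal Neumann series
\[
\Delta^{R/A}_{S+F(g)}=\sum_{n\geq 0}(-1)^n\,\Delta^{R/A}\ast\bigl(F(g)^{(2)}\ast\Delta^{R/A}\bigr)^{\ast n}.
\]
Since $H^{(1)}(\ph),G^{(1)}(\ph)$ are supported in $\Ocal$, it suffices to check $\Delta^{R/A}_{S+F(g)}(x,y)=\Delta^{R/A}_{S+F(g+f)}(x,y)$ for $x,y\in\Ocal$. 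In the $n$th term of the retarded series the support properties (\ref{retarded})--(\ref{advanced}) of the free $\Delta^{R/A}$ force the intermediate convolution points to satisfy $y\leq z_n\leq\cdots\leq z_1\leq x$, hence to lie in $J^-(x)\cap J^+(y)$; causal convexity of $\Ocal$ places this intersection inside $\Ocal$. On $\Ocal$, $g$ and $g+f$ agree on a neighborhood of each $z_j$, and since $F(\,\cdot\,)^{(2)}(\ph)$ is a local distribution on $M^2$ depending on its test-function argument only through a neighborhood of the evaluation point, the $n$th term is unchanged when $g$ is replaced by $g+f$. The same argument applies to $\Delta^A$, so $\Delta_{S+F(g)}=\Delta_{S+F(g+f)}$ on $\Ocal\times\Ocal$, and the two Poisson brackets coincide.

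The main obstacle is in the first part: one must fix a precise convention for $\F_{g,0}(\Ocal)$ --- most naturally the ideal generated by the elements $\langle L_M(1)^{(1)}+F_M(g)^{(1)},h\rangle$ with $h\in\E_c(\Ocal)$ --- and verify that the various sensible definitions (in particular, the intersection of $\F_{g,0}(M)$ with the functionals supported in $\Ocal$) agree; this reduces to a standard partition-of-unity argument on $\Ocal$ once the additivity computation above is in hand. The second part is conceptually cleaner: everything reduces to identifying which convolution points the formal series can ``see,'' with bookkeeping of signs in the graded case the only technical subtlety.
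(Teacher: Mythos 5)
Your proposal is correct and follows essentially the same route as the paper's proof: the equality of ideals is obtained from the decomposition $g=g_1+g_2$ with $\supp g_1\cap\supp f=\varnothing$ together with the additivity of $F$, and the agreement of the Poisson structures from the formal Neumann series for the inverse of $L_M(1)^{(2)}+F_M(g)^{(2)}$. The only difference is one of detail: the paper dispatches the second part with the remark that ``a similar reasoning can be applied by using the formula for the formal inverse,'' whereas you spell out the causal-support bookkeeping ($z_j\in J^-(x)\cap J^+(y)\subset\Ocal$ by causal convexity) that makes that remark work.
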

\begin{proof}
The ideal   $\F_{g+f,0}(\Ocal)$ is  generated by elements of the form 
\[\left<L_M(1)^{(1)}+F_M(g+f)^{(1)},h\right>,\quad h\in \E_c(\Ocal)\,.
\]
We can decompose $g$ into $g_1$ and $g_2$ such that $\supp g_1\cap\supp(f)=\varnothing$ and $g_1+g_2=g$. From the additivity of $F$ follows now that  $F_M(g+f)=F_M(g_2+f)+F_M(g)-F_M(g_2)$. Since  $\left<F_M(g_2+f)-F_M(g_2),.\right>\big|_{\Ocal}\equiv 0$, we have:  $\left<L_M(1)^{(1)}+F_M(g+f)^{(1)},h\right>=\left<L_M(1)^{(1)}+F_M(g)^{(1)},h\right>$, $h\in \E_c(\Ocal)$. A similar reasoning can be now applied to $L_M(1)^{(2)}+F_M(g+f)^{(2)}$, by using the formula for the formal inverse \cite{DF,Rej}.
\end{proof}
Now we can construct the local net of algebras following the prescription of \cite{BF0} (see also \cite{Duetsch:2000nh}). %@@@ rethink this construction
Let $\Theta ({\Ocal})$
be the set of all functions $g\in {\D}(M)$ which
are identically equal to $1$ in a causally convex open neighbourhood of
${\Ocal}$ 
and consider the bundle
\be\label{bundlealg}
\bigcup_{g\in \Theta ({\Ocal})}\{g\}\times {\fA}_
{g}({\Ocal}).
\ee
%Let ${\Kcal_\Ocal}(g,g')$ be the set of all canonical transformations $\al$ of $\F(\Ocal)$ that map $\F_{g,0}(\Ocal)$ into $\F_{g',0}(\Ocal)$ and $\{.,.\}_{F(g)}\big|_{\F(\Ocal)}$ to $\{.,.\}_{F(g')}\big|_{\F(\Ocal)}$, where $g,g'\in\Theta ({\Ocal})$, such that $r_{g}=r_{g'}\circ\al$.
Now let $A\in\F(\Ocal)$. For a given test function $g\in\Theta ({\Ocal})$ and an element $A\in\F(\Ocal)$ we can consider the equivalence class $[A]_g\in\F_g(\Ocal)=\F(\Ocal)/\F_{g,0}(\Ocal)$ and the interacting algebra ${\fA}_{S+F}({\Ocal})$ is defined as the algebra of sections of (\ref{bundlealg}) that are of the form:
\be
{\fA}_{S+F}({\Ocal})\ni ([A]_g)_{g\in \Theta ({\Ocal})}
\quad\quad [A]_g\in {\F}_{g}({\Ocal})\label{3.6}
\ee
and the Poisson structure is the one discussed in proposition \ref{local}.
The sections are covariantly constant in the sense, that the identity map on $\F(\Ocal)$ induces a canonical map $\alpha_{\id}:\F_g(\Ocal)\rightarrow \F_{g'}(\Ocal)$ and it holds $\alpha_{\id}([A]_g)=[A]_{g'}$, where $g,g'\in\Theta ({\Ocal})$.
The embeddings $\iota_{21}:
{\fA}_{S+F}({\Ocal}_1)\hookrightarrow {\fA}_
{S+F}({\Ocal}_2)$ for ${\Ocal}_1\subset {\Ocal}_2$ are  defined
by restricting the sections, since $\Theta ({\Ocal}_2)\subset\Theta ({\Ocal}_1)$. They satisfy the compatibility
relation $i_{12}\circ i_{23}=i_{13}$ for ${\Ocal}_3\subset {\Ocal}_2
\subset {\Ocal}_1$ and define therefore an inductive system. The 
global interacting algebra is defined as the inductive limit of local algebras
\begin{equation}
{\fA}_{S+F}(M):= \bigcup\limits_{\Ocal\subset M}{\fA}_{S+F}({\Ocal}).\label{global}
\end{equation}
This assignment defines a covariant functor ${\fA}_{S+F}$ between categories $\Loc$ (or $\SLoc$) and $\Obs$.
%==============================================================
\chapter{Batalin-Vilkovisky formalism}\label{BVform}
%==============================================================
\vspace{-5ex}
\begin{flushright}
 \begin{minipage}{8cm}
 \begin{flushright}
\textit{
Und wenn Natur dich unterweist,\\
Dann geht die Seelenkraft dir auf,\\
Wie spricht ein Geist zum andern Geist.\\
(...)\\
Ihr schwebt, ihr Geister, neben mir;\\
Antwortet mir, wenn ihr mich h\"ort! \\}
$\ $\\
J.W. Goethe, \textit{Faust}
\end{flushright}
 \end{minipage}
\end{flushright}
\vspace{3ex}
\noindent\rule[2pt]{\textwidth}{1pt}
%\minitoc
%\clearpage
\vspace{1ex}\\
%---------------------------------------------------------------------------------------------------------
\section{Historical remarks}\label{history}
%---------------------------------------------------------------------------------------------------------
The story of the BV formalism\index{BV!formalism} began already in 1962 with R.P. Feynman at the conference held in Warszawa-Jab{\l}onna 25-31 July. This event, orgnised by Leopold Infeld, gathered together many of the most admired scientists of that time, including Paul Dirac, John Wheeler, Herman Bondi, Vitalij Ginzburg, Vladimir Fock and Subramanyan Chandrasekhar. The conference was concerned with gravity and Feynman gave a talk there entitled \textit{Quantum theory of gravitation}. Notes from this lecture, taken by Marek Demia\'nski (at that time a student) and John Stachel, were published later in \textsl{Acta Physica Polonica} \cite{Fey}.
This seems to be the first mention of the problem present in Yang-Mills theories and gravity, known today as gauge invariance\index{gauge!invariance}. Feynman goes around this difficulty using some heuristic arguments and tricks. He proposes a method to modify the rules for calculating the one loop diagrams, which eventually gives the desired results. Only four years later those modified rules were understood in a more fundamental way, using the path integral approach. This was done by L. D. Faddeev and V. N. Popov \cite{FadPop} and auxiliary fields they were using in their construction are now called Fadeev-Popov ghosts\index{ghosts}. The gauge invariance 
started to be an important topic of investigation. Conditions for the gauge invariance of observables, given as certain relations between Green's functions were formulated between 1971 and 1972 in independent works of A.A. Slavnov \cite{Slav} and  J.C. Taylor \cite{Tay} and are known today as Slavnov-Taylor identities. Based on these identities, Zinn-Justin together with Lee proved renormalizability of Yang-Mills theories in 1972 \cite{LZJ}. A proof based on Feynman rules was presented in paralel by 't Hooft and Veltman.

This was the general state of knowledge, when the seminal paper of Becchi, Rouet and Stora \cite{BRST0} was published in 1974\index{BRST}. It was recognized that a gauge symmetry can be replaced by a more general one, involving also the ghosts and auxiliary fields and this symmetry is not broken by the quantization. This paper and a following one \cite{BRST1}, published in 1975 were concerned with the quantization of the Higgs-Kibble model. The result on Yang-Mills theories appeared one year later in \cite{BRST2}. At approximately the same time similar ideas appeared independently in the work of Tyutin \cite{Tyu}. The new symmetry, called today \textit{the BRST symmetry} was implemented on the infinitesimal level by the \textit{BRST operator}. The  invariance of the quantum action under the BRST was used to derive the Slavnov-Taylor identities which govern unitarity and the gauge-independence of the S-matrix. The discovery of BRST symmetry also led  to a general proof of renormalizability of non-abelian gauge theories by Zinn-Justin \cite{Zinn}. This was also the first time when the so called \textit{antifields}\index{antifields} appeared in the literature. In the paper of Zinn-Justin they were just sources conjugate to the BRST variations.

Parallel to the developments in Lagrangian formalism, also the Hamiltonian approach gained a lot from the discovery of the BRST symmetry. A group at Lebedev, including I.A. Batalin, E.S. Fradkin and G.A. Vilkovisky
was working at that time on the phase space integral quantization of gauge theories. In 1977 the problem was solved \cite{Batalin:1977pb} for the so called \textit{closed algebras}. More general systems with symmetries associated to \textit{open algebras} were treated later in a work of Fradkin and Fradkina \cite{Fradkin:1978xi}. We discuss our interpretation of the notions of open and closed algebras in section \ref{open}. The complete construction of the BRST operator for general irreducible I class constraints in the Hamiltonian formalism was done in 1985 by Henneaux in \cite{Henn85}. Reducible constraints were treated by \cite{HFTS} four years later.

Coming back to the Lagrangian formulation we finally reach the moment when the Batalin-Vilkovisky formalism was born. In the seminal series of papers \cite{Batalin:1981jr,Batalin:1983wj,Batalin:1983jr} between 1981 and 1983 Batalin and Vilkovinski extended methods of BRST to Lagrangian formalism. This is where the \textit{antibracket}\index{antibracket}, \textit{antifields}\index{antifields} and the \textit{master equation}\index{master equation} were for the first time presented in a complete formulation.

Parallel to the development in physics also mathematicians started to be interested in the BRST symmetry. The geometrical interpretation of the ghosts was already suggested in the original paper of Becchi, Rouet and Stora and was further developped in 1983 by Bonora and Cotta-Ramusino \cite{CR}. In this paper ghost fields are understood as Maurer-Cartan forms on the gauge group. The first recognition of the cohomological aspects of the BRST formalism was put forward a year later in a paper of McMullan \cite{McM}, which appeared only as an Imperial College preprint. In 1985, those aspects were again stressed in the already mentioned report of Henneaux \cite{Henn85}. It was shown that results of the Fradkin group can be formulated in the language of homology. Henneaux was making use of the acyclicity of a certain complex. McMullan and Browning \cite{BrMM} showed later (1987) that it was the \textit{Koszul complex}\index{Koszul!complex}, a construction well known in homological algebra since the seminal work of Koszul \cite{Kosz}\index{Koszul!J.-L.} from 1950. Its generalization, the \textit{Koszul-Tate complex}\index{Koszul-Tate!complex} was proposed by Tate \cite{Tate} in 1957.

 Also in 1987 McMullan published a paper on Yang-Mills theories \cite{McM2}. The cohomological aspects of Batalin-Vilkovisky formalism started to become apparent at that time and mathematicians were trying to understand the underlying structure using simple models. By replacing the infinite dimensional phase space with a finite dimensional symplectic manifold one can understand the Batalin-Fradkin-Vilkovisky formalism using purely geometrical considerations. First attempts in this direction were done by Stasheff \cite{Stasheff1,Stasheff2} and Kostant, Starnberg \cite{KostSte}. The first comprehensive treatment of the BRST cohomology was published in 1987 by Dubois-Violette \cite{DV}. Similar ideas were presented independently by Figueroa-O'Frarril and Kimura \cite{FFK} in 1988. All those results are rigorous but concern only finite dimensional systems, which doesn't really reflect the physical situation in field theories, where we have infinitely many degrees of freedom.

Coming back to the physical side, the development of homological techniques opened new perspectives for the applications of the BV and the BRST formalism. The application of homological perturbation theory was first put forward by Henneaux and Fisch in a paper \cite{FH} form 1990. Also the algebraic structure of the antifield formalism was thoroughly discussed. It was recognized that the Koszul (and Koszul-Tate) complex plays an important role in understanding the structure of both classical and quantum field theories. A cohomological point of view on the antifield formalism was presented by Henneaux in the lecture notes from 1989 \cite{Henneaux:1989jq}. A very comprehensive monography on BRST formalism in the Hamiltonian and antifield approach was published by Henneaux and Teitelboim in 1992 \cite{Henneaux:1992ig}, with the title \textit{Quantization of gauge systems}. A year later another monograph appeared, authored this time by J. Gomis, J. Par\'is and S. Samuel \cite{GP}.
%The BRST formalism plays a role also for the classical theory. Add reference to Hrabak

The BRST method was also used in a more axiomatic framework. The first complete treatment of the operator covariant quantization of Yang-Mills theories with the use of BRST was due to Kugo and Ojima \cite{KuOji0,KuOji}, although there were also some earlier attempts by G. Curci and R. Ferrari \cite{CurFer}. 
In \cite{KuOji0,KuOji} authors  investigated the BRST quantization of Yang-Mills theory from the operator algebraic view point. They defined the physical Hilbert space of the theory as the cohomology of the BRST charge. A similar approach was also applied later to gravity and appeared in the book of N. Nakanishi and I. Ojima \textit{Covariant operator formalism of gauge theories and quantum gravity} \cite{NaOji}.

Nowadays the cohomological methods of the BRST and BV formalism are commonly used in classical and quantum field theories. Nevertheless we think that there are still some issues that remain  open. For example the infinite dimensional character of the configuration space was up to now neglected in the mathematical literature concerning BV formalism. It is particularly important in the context of quantum field theory. As we know, quantum observables cannot be described with local functionals only (for example the Wick powers \cite{BFK95}). In this case, the description in the jet space language \cite{Sard2,Sard,SardGia} is less natural and quite complicated. On the other hand, the infinite dimensional approach allows to treat a very general class of functionals. Moreover the topological and functional analytic aspects arise in a natural way in this formulation. Another important point is the issue of compact and noncompact support. Since globally hyperbolic spacetimes are non-compact, one has to be careful with distinguishing between compactly and noncompactly supported configurations. For example, a normally hyperbolic system of equations doesn't have any compactly supported solutions. There are also some subtle points when one distinguishes between \textit{large} and \textit{small} gauge transformations. We discuss it in section \ref{ChEil}. We show that the conscious application of infinite dimensional calculus makes may of the constructions in BV formalism clear and better motivated. We believe that the mathematical structures in physics should be made as conceptually simple as possible. Clearly infinite dimensional calculus is a relatively new branch of mathematics and it takes time to get acquainted with its techniques. Nevertheless, the gain we are obtaining is worth the effort. 

We argue in this thesis that the infinite dimensional viewpoint on the BV formalism solves many conceptual problems and can be seen as a natural generalization of the finite dimensional structure.  Very clear and mathematically appealing presentation of the finite dimensional case can be found for example in \cite{Froe,Witten,DV}. While reading these papers, one has an overwhelming feeling that the structure is really simple and beautiful. It is somehow disappointing to find out that this treatment cannot be generalized to the infinite dimensional case, especially in the context of quantization (due mainly to the fact that the path integral is ill defined). This was one of the motivations to look more closely at the structures appearing in the BV formalism and try to find a conceptually satisfactory setting. We argue that the infinite dimensional differential geometry provides such a language. Moreover in this formalism one sees a clear way to perform the quantization in a mathematically precise way. %@@@ comment on it
While looking back on the rich history of the BV formalism it is a pleasure to be able to add one more page to this fascinating book.
%---------------------------------------------------------------------------------------------------------
\section{Koszul complex}\label{K}
%---------------------------------------------------------------------------------------------------------
We start our discussion of the BV formalism with describing the \textit{Koszul complex}\index{Koszul!complex}. In homological algebra this was introduced in \cite{Kosz} in the context of finite dimensional vector spaces to define a cohomology theory for Lie algebras. Since then it has become a standard tool in homological algebra (see \cite{Weibel,Hilton,Eisen} for a systematic introduction into these concepts). %@@@  maybe more on this with an example
 From the physical point of view this structure turns out to be useful in the off-shell formulation of field theory since it allows one to have a control of the ``off-shell'' quantities. Here we give a geometrical interpretation of this structure along the lines of \cite{FR}. 

The goal of the Koszul construction is to find a homological interpretation for the space of on-shell functionals $\F_S(M)$, for a given action $S$. We already mentioned in section \ref{dyn} that this space can be written as a quotient $\F_S(M)=\F(M)/\F_0(M)$, where $\F_0(M)$ is the space of functionals that vanish on-shell. 
%---------------------------------------------------------------------------------------------------------
\subsection{Vector fields on a configuration space}\label{Vfonconfspace}%---------------------------------------------------------------------------------------------------------
In the first step of the construction we want to find a characterization of $\F_0$. In order to do this we introduce first one more geometrical structure on $\E(M)$. In section \ref{idc} we gave a definition of vector fields on a locally convex manifold. Since  $\E(M)$ is in particular a trivial manifold, smooth vector fields\index{infinite dimensional!vector fields} on this space can be identified with smooth maps from $\E(M)$ to itself. As in the finite dimensional case, vector fields act on the space of smooth functionals $\Ci(\E(M))$ as derivations,
\be\label{derivation}
\pa_XF(\ph):=\langle F^{(1)}(\ph),X(\ph)\rangle\,.
\ee
The spacetime support of a vector field $X$\index{spacetime support!of a vector field} is defined as follows:
\be\label{suppder}
\begin{split}
\supp\, X=\{x\in M|\forall \text{ neigh. }U\text{ of }x\ & \exists F\in\Ci_c(\E(M)), \supp\,F\subset U\ \text{ such that }\partial_XF\neq 0 \\
\text{or } \exists\ \ph,\psi\in\E(M),\supp\,\psi\subset U & \text{ such that }X(\ph+\psi)\neq X(\ph)\}\ .
\end{split}
\ee
Equivalently we can think of a vector field $X$ as a $\E(M)$-valued function on $\E(M)$ and define its support using (\ref{support2}) (see the discussion in \ref{vvf}). Note that the  definition (\ref{suppder}) has two parts. First part characterizes the support of $X$ understood as a derivation, i.e. a map from $\F(M)$ to $\F(M)$. The second part concerns the support of $X$ as a map from $\E(M)$ to $\E(M)$. Those two notions are not equivalent and the requirement of compact support concerns both of them. 
The definition we adopted takes into account both roles played by vector fields. It agrees with the idea to understand the spacetime support as the subset of spacetime where the presence of a given vector field is ``not felt'' this definition agrees with a more general one given in \ref{vvf}, which can be applied to a wider class of vector-valued functions on $\E(M)$. 

After this short explanation we 
can come back to the main thread of our discussion. Going a little bit ahead we can already reveal that vector fields will serve as infinitesimal transformations of the configuration space. Since we are interested in the local structure, we consider only compactly supported vector fields.
We can require them to be local, i.e. $X(\ph)(x)$\index{local!vector field} would depend only on the jet of $\ph$ at the point $x$. 
This is equivalent to the notion of locality for vector-valued functions, that we introduced in section \ref{vvf} if we consider $X$ as an element of $\Ci(\E(M),\E(M))$\footnote{Precisely speaking, one has $X^{(k)}(\ph)\in\Gamma'(M^k,V^{\otimes k})\widehat{\otimes}\E(M)\subset\Gamma_c'(M^{k+1},V^{\otimes k+1})$}. Clearly the space of local vector fields is not an $\F(M)$-module, so we have to take its algebraic completion. We call the elements of this completion \textit{multilocal vector fields}\index{multilocal!vector field}.
To make the assignment of the space of the vector fields functorial one needs to make a further restriction on the space of vector fields. The functoriality is of course desirable if we use the local covariance as our guiding principle. 
We define $\V(M)$ to be a subspace of $\Gamma_c(T\E(M))$ consisting of vector fields that can be considered as maps $\Ci_\ml(\E(M),\E_c(M))$. In other words, the restriction concerns the image and allows now to do push-forwards. We define a  \textit{covariant} functor $\V:\Loc\rightarrow\Vect$ which assigns to a manifold $M$ the space $\Ci_\ml(\E(M),\E_c(M))$ and maps morphisms $\chi:M\rightarrow N$ to
\be\label{funct}
\V\chi(X)= \E_c\chi\circ X\circ \E\chi \ .
\ee
The action of vector fields on functions and the Lie bracket of vector fields can be extended to a graded bracket (called the  Schouten bracket) on the space of alternating multi-vector fields. In our case these are smooth, compactly supported maps from $\E(M)$ into $\Lambda(\E_c(M))$, with
\[
\La(\E_c(M))=\bigoplus \La^n(\E_c(M))\,,  
\]
where $\La^n(\E_c(M))$ is the space of compactly supported sections on $M^n$ which are totally antisymmetric under permutations of arguments (with $\Lambda^0(\E_c(M))=\RR$).
The alternating multi-vector fields with the regularity properties discussed above form a graded commutative algebra $\La\V(M)$ with respect to the product
\be\label{wedge}
(X\wedge Y)(\ph)\doteq X(\ph)\wedge Y(\ph)\,,
\ee
The Schouten bracket is an odd graded Poisson bracket on this algebra with the following properties:
\begin{enumerate}
\item it is a map $\{\cdot,\cdot\}:\La^n\V(M)\times\La^m\V(M)\to\La^{n+m-1}\V(M)$,
\item is graded antisymmetric:
\[
\{Y,X\}=-(-1)^{(n-1)(m-1)}\{X,Y\}\,,
\]
\item it satisfies the graded Leibniz rule
\be\label{leibniz}
\{X,Y\wedge Z\}=\{X,Y\}\wedge Z+(-1)^{nm}\{X,Z\}\wedge Y\,,
\ee
where $n$ is the degree of $Y$ and $m$ the degree of $Z$.
\item for $X\in\La^1\V(M)\equiv\V(M)$ and $F\in\La^0\V(M)\equiv \F(M)$ it coincides with the action of $X$ as a derivation
\[
\{X,F\}=\pa_XF\,,
\]
\item for $X,Y\in\La^1\V(M)$ it coincides with the Lie bracket
\[
\pa_{\{X,Y\}}=\pa_X\pa_Y-\pa_Y\pa_X \ .
\]
\item it satisfies the graded Jacobi rule
\be\label{Jacid}
\{X,\{Y,Z\}\}-(-1)^{(n-1)(m-1)}\{Y,\{X,Z\}\}=\{\{X,Y\},Z\} \ , \ n=\mathrm{deg}(X),m=\mathrm{deg}(Y)\,.
\ee
\end{enumerate}
%---------------------------------------------------------------------------------------------------------
\subsection{Ideal ``generated by the equations of motion''}%---------------------------------------------------------------------------------------------------------
Using the vector fields on $\E(M)$ one can now characterize elements of $\F_0(M)$. Indeed, it is easy to see that this space includes elements of the form 
\[
\ph\mapsto\left<S_M'(\ph),X(\ph)\right>=:\delta_S(X)(\ph)\,,
\]
where $S_M'$ is the Euler-Lagrange derivative of the action, defined in (\ref{ELd}). The notation introduced above stresses the point that one can associate with the action $S$ a map $\delta_S:\V(M)\rightarrow \F(M)$. Clearly this map is a differential and its image is contained in $\F_0(M)$. It was already mentioned in section \ref{inter} that the opposite inclusion doesn't hold in general. Indeed, we can give a simple example of a finite dimensional model where it is not the case.
\begin{exa}%@@@ think once more about possible functional dependency
Let the configuration space be simply $\RR$. The action is in this case just a function $S\in\Ci(\RR)$ and vector fields are also identified with elements of $\Ci(\RR)$. In particular we can take $S=x^4$. The equation of motion reads $x^3=0$, so the solution space is one point $x=0$. Now we see that the function $f=x^2$ vanishes on-shell and since the vector fields can be written as $X(x)\pa_x$, one obtains: $X(x)\pa_xS=x^2$ and therefore $X(x)=\frac{1}{4x}$. Clearly this function doesn't belong to $\Ci(\RR)$, so there is no vector field that can allow us to write $f$ as $\delta_S(X)$. 
\eex\end{exa}
This shows that one has to be extremely careful with this characterization of the ideal $\F_0(M)$. If it indeed holds that $\F_0(M)=\delta_S(\V(M))$ we say that $\F_0(M)$ is \textit{generated by equations of motion} (\textsc{eom}'s). To prove this property one needs some regularity conditions imposed on the action. In the absence of symmetries these conditions can be in principle reduced to the requirement that \textbf{$S''_M$ is a normally hyperbolic operator}. For the free scalar field it is easy to see that this condition is sufficient to show that $\F_0(M)$ is generated by {\eom}'s. We will show it in example \ref{ftc1} and at the end of this section we outline the idea how this method can be generalized to the case when symmetries are present. Again it turns out that  $\F_0(M)$ is generated by \textsc{eom}'s if \textbf{$S''_M$ is a normally hyperbolic operator \textit{after the gauge fixing}}.

From a more practical point of view one can show (see for example \cite{Henneaux:1992ig,HennBar}) that $\F_0(M)$ is generated by the \textsc{eom}'s in a wide range of physical models, including Yang-Mills and gravity by applying the jet space methods. This reduces the problem to the finite dimensional one. The key idea behind this approach is the use of locality. Since $\F(M)$ is generated by finite products of local functionals, it suffices to characterize the \textit{local} functionals that vanish on-shell. They, in turn, can be written as an integration $\int\limits_M \dvol (j^\infty_x)^*(\ph)f(x)$ of a function on the jet space, that depends only on a finite number of derivatives $j^k_x(\ph)$ of configuration fields $\ph\in\E(M)$ at a given point $x\in M$. Therefore locally the problem reduces to a finite dimensional one. More singular functionals can be characterized using the continuity arguments. Nevertheless, from the point of view of the general structure, it is worth to look at the problem also from the point of view of infinite dimensional analysis. We undertake this task in the present section. Before we state a general result we want to give one more example which can be thought of as a guideline in the construction that follows. 
\begin{exa}[Free scalar field]\label{ftc1}%@@@ check the example, think about the generalization
For the free scalar field the equations of motion read: $P\ph=0$, where $P=\Box+m^2$ is the Klein-Gordon operator. For an arbitrary configuration $\ph\in\E(M)$ we can can use the (for example retarded) inverse of $P$ to solve the equation $P\ph=\psi$, i.e.: $\ph=\Delta^R\psi$. Now let $F\in\F_{0}$. 
We can write it as: $F(\ph(\psi))$. Therefore it induces a functional $\tilde{F}(\psi)$ and the condition $F\in\F_{0}$ can be reformulated as:
\[
\tilde{F}(0)=0\,.
\]
Using the fundamental theorem of analysis for calculus in locally convex vector spaces, we can write $\tilde{F}$ as:
\[
\tilde{F}(\psi)=\int\limits_0^1 d\tilde{F}(t\psi)[\psi]dt
=\int\limits_0^1 \frac{\delta\tilde{F}}{\delta \psi}(t\psi)[\psi]dt\,.
\]
Setting $\tilde{f}(\psi)=\int\limits_0^1\frac{\delta\tilde{F}}{\delta \psi}(t\psi)[{}_\bullet]dt$ (where the integral is understood in a weak sense) we obtain: $\tilde{F}(\psi)=\left<\tilde{f}(\psi),\psi\right>$. Rewriting it again in terms of variables $\ph$ gives a vector field $X(\ph)\doteq \tilde{f}(P\ph)$ and since $\delta_S(X)(\ph)=\left<\tilde{f}(P\ph),P\ph\right>=F(\ph)$, it follows that $\F_0(M)$ is indeed generated by equations of motion.
\eex\end{exa}
On this simple example we see that one can characterize $\F_0(M)$ as generated by \textsc{eom}'s if the action is quadratic and the Euler-Lagrange derivative can be considered as a map $S'_M:\E(M)\rightarrow\E(M)$ (in general it maps into $\E_c'(M)$). This can be generalized also to the nonlinear case, with the use of the M{\o}ller operators discussed in \ref{inter}.
\begin{prop}
Let $S=S_0+\lambda F(g)$ be an action functional such that $S'_M,\,{S_0}'_M :\E(M)\rightarrow\E(M)$, $S_0$ is quadratic and ${S}''_M$ is a normally hyperbolic operator.   If  the intertwining M{\o}ller map $r_{S_0+\lambda F(g),S_0}:\E(M)\rightarrow\E(M)$ exists\footnote{There are indications that the existence of $r_{S_0+\lambda F(g),S_0}$ can be actually proved in this case under some technical assumptions on $F$. The proof was announced in \cite{Pedro} and will be published in \cite{BFR}. The argument uses the \textsl{a priori} estimates on the inverse $\Delta^R$ of  $S''_M$ to apply the Nash-Moser theorem \cite{Ham}. We don't discuss the details here since this problem doesn't lie 
in the scope of the present work.} in every open neighborhood of the solution space $\E_S(M)$, then the ideal $\F_0(M)$ for the action $S$ is generated by the \textsc{eom}'s.
\end{prop}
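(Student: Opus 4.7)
The plan is to reduce the problem to the free quadratic case treated in Example \ref{ftc1}, using the M{\o}ller map as the bridge between the interacting and the free configuration spaces. The inclusion $\delta_S(\V(M)) \subseteq \F_0(M)$ is trivial since $\delta_S(X)(\ph) = \langle S'_M(\ph), X(\ph)\rangle$ vanishes as soon as $S'_M(\ph)=0$, so what is at stake is the reverse inclusion. I would establish it in three stages: transport the vanishing functional to the free theory, apply the construction of Example \ref{ftc1} there, and transport the resulting vector field back through $r^{-1}$.

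For the first stage, let $F \in \F_0(M)$. By hypothesis the map $r := r_{S_0+\lambda F(g),\,S_0} : \E(M) \to \E(M)$ is defined on a neighbourhood $U$ of $\E_S(M)$; by the intertwining property of the classical M{\o}ller map it sends free solutions to interacting solutions, $r(\E_{S_0}(M)) \subseteq \E_S(M)$, and admits a formal-power-series inverse $r^{-1}$ by the construction in Section \ref{inter}. Set $\tilde F := F \circ r$. Because $F$ vanishes on $\E_S(M)$, the composite $\tilde F$ vanishes on $\E_{S_0}(M)$ and hence lies in the free-theory ideal.

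For the second stage, one uses that $S_0$ is quadratic, so $S_0'(\ph) = P\ph$ with $P = S_0''$ normally hyperbolic and hence possessing a retarded Green's function $\Delta^R$. Mimicking Example \ref{ftc1}, I would reparametrise configurations by a reference solution $\ph_0 \in \E_{S_0}(M)$ together with $\psi := P\ph$, apply the fundamental theorem of calculus on the locally convex space $\E(M)$ to the smooth path $t\mapsto \ph_0 + t\Delta^R\psi$, and use that the boundary term at $t=0$ vanishes because $\tilde F$ is in the free ideal. The result is
\[
\tilde F(\ph) \;=\; \bigl\langle \tilde X(\ph),\, P\ph\bigr\rangle,\qquad \tilde X(\ph) \;:=\; (\Delta^R)^{T}\!\!\int_{0}^{1}\!\tilde F^{(1)}\bigl(\ph_0+t\Delta^R P\ph\bigr)\,dt,
\]
so that $\tilde F = \delta_{S_0}(\tilde X)$. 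For the final stage, substitution gives $F(\ph) = \langle \tilde X(r^{-1}(\ph)),\, P\,r^{-1}(\ph)\rangle$, which by the intertwining of the M{\o}ller map with the equations of motion and the chain rule rewrites as $\langle X(\ph), S'_M(\ph)\rangle$ with $X(\ph)$ an explicit combination of $\tilde X(r^{-1}(\ph))$ and the adjoint of $dr(r^{-1}(\ph))$, the inversion being carried out order by order in $\lambda$.

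The main obstacle is verifying that $X$ actually belongs to $\V(M)$, i.e.\ that it is smooth, multilocal and valued in $\E_c(M)$. Smoothness and multilocality follow from the Bastiani calculus of Section \ref{clcvs} together with the perturbative structure of $r$, but the compact-support property is delicate: applying $\Delta^R$ and the non-local operator $r^{-1}$ can enlarge spacetime support. One has to exploit the compactness of $\supp F$ together with the causal support property \eqref{retarded} of $\Delta^R$ and the locality of the interaction vertex $F(g)$ (which controls the M{\o}ller series) to keep $X$ inside $\V(M)$. A subsidiary point is that the equation $F = \delta_S(X)$ only constrains $X$ against the direction of $S'_M(\ph)$, so one is free to extend $X$ from the neighbourhood $U$ of $\E_S(M)$ where $r$ is defined to all of $\E(M)$ without spoiling the identity.
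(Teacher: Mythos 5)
Your proposal is correct and follows essentially the same route as the paper: both arguments rest on the intertwining identity $S'_M\circ r_{S_0+\lambda F(g),S_0}={S_0}'_M$, which the paper uses to observe that $r_{S_0+\lambda F(g),S_0}\circ\Delta^R$ inverts $S'_M$ and then to reduce directly, via the change of variables $\psi=S'_M(\ph)$, to the fundamental-theorem-of-calculus argument of Example \ref{ftc1}. Your pull-back/solve/push-forward organization is only a cosmetic rearrangement of that computation (and your final vector field is simply $\tilde X\circ r^{-1}$, with no extra adjoint of $dr$ needed to undo, since the intertwining gives $P\,r^{-1}(\ph)=S'_M(\ph)$ directly); the support and regularity caveats you raise are real but are likewise left implicit in the paper's own proof.
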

\begin{proof}
From the properties of the M{\o}ller operator we obtain
\[
{S}'_M\circ r_{S_0+\lambda F(g),S_0} ={S_0}'_M\,.
\]
Now let $\Delta^R$ be the fundamental solution for the free action. Since 
\[
{S}'_M(r_{S_0+\lambda F(g),S_0}\circ\Delta^R)={S_0}'_M\circ\Delta^R=\id\,,
\]
it follows that $r_{S_0+\lambda F(g),S_0}\circ\Delta^R$ is an inverse of ${S}'_M$ and we can solve the equation ${S}'_M\ph=\psi$ for $\psi$. Now we rewrite a functional $F\in\F_0(M)$ in the new variables $\psi$ and the condition $F\in\F_{0}$ can be again reformulated as $\tilde{F}(0)=0$. The rest follows exactly along the lines of \ref{ftc1}.
\end{proof}

Let us now assume that we are given an action $S$ such that $\F_0(M)=\delta_S(\V(M))$ holds.
Then we have:
\[
\F_S(M)=\F(M)/\F_0(M)=\F(M)/\im\delta_S\,.
\]
This can be easily translated into the language of homological algebra. Consider a chain complex:
\be\label{Kshort}
\begin{array}{c@{\hspace{0,2cm}}c@{\hspace{0,2cm}}c@{\hspace{0,2cm}}c@{\hspace{0,2cm}}c@{\hspace{0,2cm}}c@{\hspace{0,2cm}}c}
0&\xrightarrow{}&\V(M)&\xrightarrow{\delta_S}&\F(M)&\rightarrow &0\\
 &&1&&0&&
\end{array}
\ee
The numbers below indicate the chain degrees. The $0$-order homology of this complex is equal to: $\F(M)/\F_0(M)=\F_S(M)$. This completes the first step in finding the homological interpretation of $\F_S(M)$. Next we construct a \textit{resolution} of $\F_S(M)$. 
\begin{defn}
In homological algebra a resolution of an algebra $A$ is a differential graded algebra $(\A,\delta)$, such that $H_0(\delta)=A$ and  $H_n(\delta)=0$ for $n>0$. 
\end{defn}
We can start constructing the resolution of  $\F_S(M)$ from the chain complex (\ref{Kshort}). We said before that the space of multivector fields  $\La\V(M)$ is a graded commutative algebra with respect to the product (\ref{wedge}). Moreover it is equipped with the natural bracket $\{.,.\}$. Since $\delta_S(X)$ is just $\{X,L_M(f)\}$ for $f\equiv 1$ on $\supp X$, $X\in \V(M)$, we can extend $\delta_S$ to $\La\V(M)$ by means of this graded bracket and obtain the complex:
\be\label{Kos}
\begin{array}{c@{\hspace{0,2cm}}c@{\hspace{0,2cm}}c@{\hspace{0,2cm}}c@{\hspace{0,2cm}}c@{\hspace{0,2cm}}c@{\hspace{0,2cm}}c@{\hspace{0,2cm}}c@{\hspace{0,2cm}}c}
\ldots&\rightarrow&\La^2\V(M)&\xrightarrow{\delta_S}&\V(M)&\xrightarrow{\delta_S}&\F(M)&\rightarrow& 0\\
 &&2&&1&&0&&
\end{array}\ ,
\ee
where $\delta_S$ is called the Koszul map\index{Koszul!map}. Now we want to calculate $H_1(\La\V(M),\delta_S)$.
First we identify the elements of  $\ke(\delta_S)_{\V(M)\rightarrow\F(M)}$. In the BV formalism they are called
\textit{symmetries}. We discuss them in detail in section \ref{symm}. 
\subsection{Antifields and antibracket}\label{aa}
To end this section we want to add a remark on the relation to the standard approach to the BV formalism. Vector fields $\V(M)$ correspond to objects called in physics literature: functionals of \textit{the antifields}\index{antifields}. Originally they were interpreted only as formal generators of some graded algebra. A geometrical interpretation was first given in \cite{Witten}, but it applies only to the case where the configuration space is finite dimensional. We generalize these ideas to the field-theoretic context and also simplify the structure.
To understand the relation to the traditional approach we write formally the action of a vector field on a functional in the ``integral'' notation:
\[
\partial_XF(\ph)=\left<F^{(1)}(\ph),X(\ph)\right>\stackrel{\mathrm{formal}}{=}\int_M\dvol X(\ph)(x)\frac{\delta F(\ph)}{\delta\ph(x)}\,.
\]
In this sense  we can identify the functional derivatives $\frac{\delta}{\delta\ph(x)}$ with the antifields $\ph^\ddagger(x)$ and write elements of $\V(M)$ as:
\[
X(\ph)\stackrel{\mathrm{formal}}{=}\int_M\dvol X(\ph)(x)\frac{\delta}{\delta\ph(x)}\equiv \int_M\dvol X(\ph)(x)\ph^\ddagger(x)\,.
\]
The algebra of alternating multivector fields is then the algebra ``generated'' by fields and antifields, and the \textbf{\textit{antibracket}}\index{antibracket} used in the physics literature is in our interpretation just the Schouten bracket on the space of multivector fields. To fix the sign convention we can write the antibracket in a slightly formal notation used commonly in the literature:
\be\label{antibracketformal}
\{X,Y\}=-\int dx\left(\frac{\delta X}{\delta\ph(x)}\frac{\delta Y}{\delta\ph^\ddagger(x)}+(-1)^{|X|}\frac{\delta X}{\delta\ph^\ddagger(x)}\frac{\delta Y}{\delta\ph(x)}\right)\,.
\ee
The derivative with respect to the antifield has to be understood as the left derivative (see section \ref{fer}). The extra ``-'' sign comes from the fact that in the literature one also introduces the right derivative.%---------------------------------------------------------------------------------------------------------
\section{Symmetries}\label{symm}
%---------------------------------------------------------------------------------------------------------
Now it's time for symmetries\index{symmetries}! As we already mentioned, they should characterize the kernel of $\delta_S$ at the $1^{\mathrm{st}}$ order of (\ref{Kshort}). Therefore
the proper understanding of symmetries is the very heart and essence of the BV formalism.
In our formulation we define them as certain vector fields on $\E(M)$, that describe directions in the configuration space in which the action $S$ is constant. This can be expressed as the condition that
\be\label{sym0}
0=\delta_SX(\ph)=\left<S_M'(\ph),X(\ph)\right>=:\partial_X(S_M)(\ph)\,,
\ee 
for all $\ph\in\E(M)$ (see also \cite{Urs,FR}). A symmetry $X$ is called \textit{trivial}\index{symmetries!trivial} if it vanishes on-shell, i.e. $X(\ph)=0$ for all $\ph\in\E_S(M)$. It was recognized in \cite{Henneaux:1992ig} that the trivial symmetries play an important role in the BV construction. This issue appeared in the discussion  of
the so called \textit{open symmetry algebras}\index{open symmetry algebra}. We explain what this notion means in our formalism in section \ref{open}. The Lie subalgebra of  $\V(M)$ consisting of all symmetries would be denoted by $\sm(M)$. The trivial ones are the elements of $\sm_0(M)$. Now we show that the assignment of these spaces is indeed functorial.
\begin{prop}\label{syms}
Let $S$ be fixed. The map which assigns to every $M\in\Loc$ the corresponding algebra of symmetries $\sm(M)$ and to each morphism $\chi$, a morphism $\sm_\chi={\V}\chi$ is a contravariant functor $\Loc\rightarrow\Vect$.
\end{prop}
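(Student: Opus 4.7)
The plan is to show that the statement reduces to a single substantive verification, namely that $\V\chi$ restricts to a map $\sm(M)\to\sm(N)$. Once this is established, the functorial axioms $\V(\chi'\circ\chi)=\V\chi'\circ\V\chi$ and $\V\id_M=\id$ together with the fact that each $\V\chi$ is a continuous linear map of topological vector spaces are inherited directly from $\V$, which was already shown to be a functor $\Loc\to\Vect$. (The proposition states ``contravariant'', but since $\V$ is covariant and $\sm_\chi=\V\chi$, what one actually proves is that $\sm$ is a subfunctor of $\V$ in the same variance.)

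The core step uses the naturality of the generalized Lagrangian $L$, which is a natural transformation between $\D$ and $\F_\loc$. Differentiating the identity $L_N(\D\chi\tilde f)(\psi)=L_M(\tilde f)(\E\chi\psi)$ in $\psi\in\E(N)$ along $h\in\E_c(N)$ and using that $\E\chi$ is linear, I get
\[
\langle L_N(\D\chi\tilde f)^{(1)}(\psi),h\rangle=\langle L_M(\tilde f)^{(1)}(\E\chi\psi),\E\chi\, h\rangle.
\]
Given $Y\in\E_c(M)$, I pick $\tilde f\in\D(M)$ with $\tilde f\equiv 1$ on $\supp Y$, so that $\D\chi\tilde f\equiv 1$ on $\chi(\supp Y)=\supp(\E_c\chi Y)$. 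Setting $h=\E_c\chi Y$ and using $\E\chi\circ\E_c\chi=\id_{\E_c(M)}$ (restriction recovers a compactly supported section that was extended by zero), the definition (\ref{ELd}) of the Euler--Lagrange derivative yields the key intertwining relation
\[
\langle S_N'(\psi),\E_c\chi(Y)\rangle=\langle S_M'(\E\chi\psi),Y\rangle,\qquad \psi\in\E(N),\ Y\in\E_c(M).
\]

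With this in hand, let $X\in\sm(M)$ and substitute $Y=X(\E\chi\psi)\in\E_c(M)$. By (\ref{funct}) one has $\V\chi(X)(\psi)=\E_c\chi(X(\E\chi\psi))$, and hence
\[
\langle S_N'(\psi),\V\chi(X)(\psi)\rangle=\langle S_M'(\E\chi\psi),X(\E\chi\psi)\rangle=\partial_X(S_M)(\E\chi\psi)=0,
\]
since $X$ is a symmetry on $M$ (see (\ref{sym0})). Thus $\V\chi(X)\in\sm(N)$, which is what was needed.

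The main obstacle is the careful bookkeeping of supports in applying the definition of $S_M'$: one must produce a test function on $N$ that equals $1$ on $\supp(\E_c\chi Y)$ by pushing forward one on $M$, and verify that the identity $\E\chi\circ\E_c\chi=\id$ holds at the level of compactly supported sections; once this is set up, all remaining steps are the chain rule and the naturality of $L$. After checking this, functoriality and the $\Vect$-morphism property are automatic from $\V$, completing the proof.
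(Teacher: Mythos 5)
Your proof is correct and follows essentially the same route as the paper's: the paper's one-line computation uses the naturality of $L^{(1)}$ together with the invariance of the dual pairing under simultaneous push-forward, which is exactly the intertwining relation $\langle S_N'(\psi),\E_c\chi(Y)\rangle=\langle S_M'(\E\chi\psi),Y\rangle$ you derive by differentiating the naturality of $L$ and using $\E\chi\circ\E_c\chi=\id$. Your parenthetical remark on the variance is also apt -- the paper defines $\V$ as a covariant functor and sets $\sm_\chi=\V\chi$, so the word ``contravariant'' in the statement is an inconsistency of the paper, not of your argument.
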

\begin{proof}
We have to show that $\sm_\chi={\V}\chi$ maps symmetries into symmetries. Let $X \in \sm(M)$, then $\forall f\in\D(M)$ with $\supp(X)\subset f^{-1}(1)$, $\ph\in\E(N)$ we have 
\begin{align*}
\left<L_N(\chi_*f)^{(1)}(\ph),(\V\chi X)(\ph)\right>&=\left<\chi_*(L_M(f)^{(1)}(\chi^*\ph)),\chi_*(X(\chi^*\ph))\right>=\\
&=\left<L_M(f)^{(1)}(\chi^*\ph),X(\chi^*\ph)\right>=0\,.
\end{align*}
\end{proof}
The trivial symmetries form a subalgebra  $\sm_0(M)\subset\sm(M)$. Alternatively we can characterize them as:
\begin{equation}
\sm_0(M)\doteq\{X\in\sm(M)|\,\partial_XF\in\F_{0}(M)\ \forall F\in\F(M)\}\,,\label{fieldsinv0}
\end{equation}
The space of nontrivial symmetries is defined as the quotient: 
\[
\smp(M)\doteq\sm(M)/\sm_0(M)\,,
\]
so it is a set of equivalence classes of vector fields on $\E(M)$, where the equivalence relation is given by:
\begin{equation}\label{eqv1}
X\sim Y\quad\textrm{if }\ X-Y\textrm{ vanishes on-shell}
\end{equation}
The algebra of symmetries $\sm(M)$ has a natural action on $\F(M)$ by derivations: $\sm(M)\ni X:F\mapsto \pa_XF$. Its quotient  $\smp(M)$ acts faithfully on $\F_S(M)$. 

The aim of the BV formalism is to analyze the structure of the spaces $\sm(M)$ and $\smp(M)$.
One can show that trivial symmetries are contained in the image of $\delta_S$, $(\im\delta_S)_{\La^2\V(M)\rightarrow\V(M)}$, so they don't contribute to $H_1(\La\V(M),\delta_S)$. We conclude that the first homology of the Koszul complex (\ref{Kos}) is trivial if the action $S$ doesn't possess any \textit{nontrivial local symmetries}. This condition can be formulated as follows \cite{FR}:
\begin{equation}\label{sym}
X(\ph)\perp S_M'(\ph)\quad \forall\ph\in\E(M)\Rightarrow X(\ph)=0\ \forall\ph\in\E_S(M)\ .
\end{equation}
From (\ref{sym}) follows a sufficient condition for an action $S$ to be free of nontrivial symmetries.
Let $\langle S_M'(\ph),X(\ph)\rangle=0$ then differentiating with respect to $\ph$ results in
\begin{equation} \label{no nontrivial symmetries}
\langle S_M''(\ph),X(\ph)\otimes\psi\rangle+\langle S'(\ph),X^{(1)}(\ph)\psi\rangle=0 \ .
\end{equation}
where the second derivative $S''$ of an action $S\equiv S(L)$ is a natural transformation $\E\to \E_c'\otimes  \E_c'$ defined by
\[S''_M= L_M(f)^{(2)} \text{ on } \E(K)\otimes  \E(K)\]
with any compact subset $K\subset M$ and with $f\equiv 1$ on $K$. 
From the locality it follows that $S_M''(\ph)$ induces a differential operator $\E(M)\rightarrow \E_c'(M)$,  defining the linearized equation of motion around the field configuration $\ph$. Now for $\ph\in \E_S(M)$ the second term in \eqref{no nontrivial symmetries} vanishes, hence
\[
\langle S_M''(\ph),X(\ph)\otimes\psi\rangle=0\qquad\forall\psi\in\E_c(M) \ .
\]
This means that for all $\ph\in \E_S(M)$, $X(\ph)$ has to be a solution of the linearized equation of motion. Since $X(\ph)\in\E_c(M)$, the action $S$ possesses no nontrivial symmetries if the linearized equation of motion doesn't have any nontrivial compactly supported solutions. In particular this is the case when $S_M''(\ph)$  is a normally hyperbolic differential operator. If an action doesn't possess any nontrivial symmetries, then $H_1(\La\V(M),\delta_S)=0$ and we get
\begin{align*}
H_0\left(\La\V(M),\delta_S\right)&=\F_S(M)\,,\\
H_k\left(\La\V(M),\delta_S\right)&=0,\ k>0\,.
\end{align*}
In this case the complex $(\La\V(M),\delta_S)$ is a resolution of $\F_S(M)$, called the \textit{Koszul resolution}\index{Koszul!resolution}. 

It was already stressed in \cite{FR}, that in the present setting, where the configurations are not compactly supported and the manifold $M$ is non-compact, the operator
 $\delta_S$ is not an inner derivation with respect to the antibracket. This is a major difference with respect to other approaches. A reason for this is the fact that the action itself is not an element of $\F_{\loc}(M)$, but rather an equivalence class of natural transformation between the functors $\D$ and $\F_{\loc}$. Nevertheless, locally $\delta_S$ can be written in terms of inner derivations, since 
$\delta_{S(L)}X=\{X,L_M(f)^{(1)}\}$ for $f\equiv 1$ on $\supp\, X$, $X\in \V(M)$.

To end this section we provide a simple finite dimensional example, which demonstrates the interplay between the trivial and non trivial symmetries.
\begin{exa}[Finite dimensional system with symmetries]\label{findimex}
 Let the configuration  space $E$ be a finite dimensional manifold. An action $S\in\Ci(E,\RR)$ is a functional on $E$. Let $d$ be the exterior derivative. Let $E_S$ be the set of all critical points of $S$, i.e.
\begin{equation}
E_S=\{x\in P|\ dS(x)\equiv 0\}.
\end{equation}
The condition $dS(x)=0$ can be written in local coordinates (with respect to a chart ($U_\alpha,\varphi_\alpha$)  as a system of $n$ equations for $n$ variables: $\sigma_i((\varphi_\alpha^{-1})^1(x),\ldots,(\varphi_\alpha^{-1})^n(x))=0$, $i=1,\ldots n$. These correspond to ``equations of motion''. The set $E_S$ corresponds to the space of solutions. A critical point is called nondegenerate if at this point the (local) Hessian matrix $H_S(\varphi_\alpha^{-1}(x))$ is nondegenerate. In this case we have a system of independent equations. This is a case with no nontrivial symmetries. Trivial ones can be written in local coordinates (in the following we denote the local coordinates by $x^1\doteq(\varphi_\alpha^{-1})^1(x)$ and we keep the local chart implicit) as $X(x)^j=\si_iM^{ij}(x)$, for some antisymmetric matrix $M$. In general only $k<n$ of the equations are independent. This is the case of a system with symmetries. As an example we take $E=\RR^3$ and we choose  the action $S$ to be:
\begin{equation}
S(x,y,z)=z\prod\limits_{k=1}^N(x^2+y^2-k^2)
\end{equation}
The corresponding equations of motion take the form:
\begin{align}
\sigma_1(x,y,z)&=2xz\sum\limits_{k=1}^N\prod\limits_{{l=1}\atop{l\neq k}}^N(x^2+y^2-k^2)\\
\sigma_2(x,y,z)&=2yz\sum\limits_{k=1}^N\prod\limits_{{l=1}\atop{l\neq k}}^N(x^2+y^2-k^2)\\
\sigma_3(x,y,z)&=\prod\limits_{k=1}^N(x^2+y^2-k^2)
\end{align}
It is easy to check, that the solution space $E_S$ is the disjoint union of concentric circles of radii $1\ldots N$, lying in the $z=0$ plane.
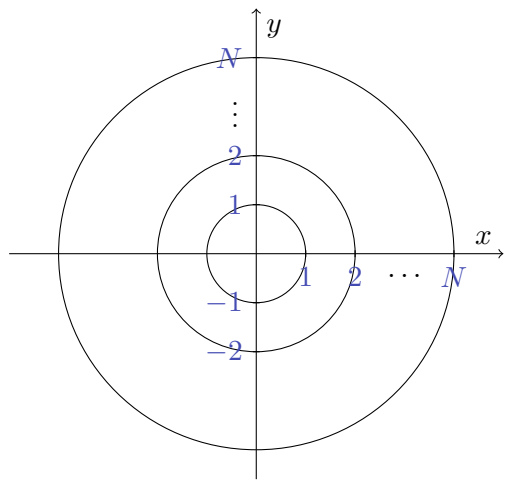
\begin{figure}[htb]
\begin{center}
\begin{tikzpicture}[scale=1.3]
 \useasboundingbox (-4,-1.8) rectangle (4,2.8); 
%\draw[step=.5cm,help lines] (-1.4,-1.4) grid (1.4,1.4); 
\draw[->] (-2.5,0) -- (2.5,0);	
\draw[->] (0,-2.3) -- (0,2.5); 
\draw (0,0) circle (1cm);
\draw (0,0) circle (0.5cm);
\draw (0,0) circle (2cm);
\foreach \x in {1,2} \draw (0.5*\x cm,1pt) -- (0.5*\x cm,-1pt) node[anchor=north,color=see] {$\x$};
\draw (2cm,1pt) -- (2cm,-1pt) node[anchor=north,color=see] {$N$};
\draw (1.5,-0.1) node[anchor=north] {$\ldots$};
\draw (-0.07,1.5) node[anchor=east] {$\vdots$};
\foreach \y in {-2,-1,1,2} \draw (1pt,0.5*\y cm) -- (-1pt,0.5*\y cm) node[anchor=east,color=see] {$\y$};
\draw (1pt, 2cm) -- (-1pt,2cm) node[anchor=east,color=see] {$N$};
\draw (2.3,0) node[anchor=south] {$x$};
\draw (0,2.3,0) node[anchor=west] {$y$};
\end{tikzpicture}
\end{center}
\caption[Finite dimensional system with symmetries: the solution space]{Solution space for the action functional $S(x,y,z)=z\prod\limits_{k=1}^N(x^2+y^2-k^2)$}
\end{figure}
The functional $S$ is invariant under the action of group $G=\Ci(\RR^3,SO(2))$. W can represent elements of $G$  by matrices acting on $\RR^3$ by matrix multiplication:
\begin{equation}
\alpha_\theta(x,y,z)=\left(\begin{array}{ccc}
\cos(\theta(x,y,z))&\sin(\theta(x,y,z))&0\\
-\sin(\theta(x,y,z))&\cos(\theta(x,y,z))&0\\
0&0&1
\end{array}
\right)\,,
\end{equation}
where $\theta\in\Ci(\RR^3)$. The group of trivial transformations $G_0$ consists of those elements of $G$ for which $\theta$ vanishes on $E_S$:
\begin{equation}
G_0=\{\alpha_\theta\in  G |\theta(x,y,z) =0\ \forall (x,y,z)\in E_S\}
\end{equation}
An example of such element can be $\alpha_\theta$, for $\theta(x,y,z)=z\prod\limits_{k=1}^N(x^2+y^2-k^2)$. The group of physical transformations $G_\p$ is in this case:
\begin{equation}
G_\p=G/G_0=\{\alpha_\theta\in  G |\theta\in\Ci(E_S)\}
\end{equation}
There are $N$ gauge orbits on $E_S$. A single gauge orbit is a circle with radius $1,\ldots,N$. Gauge invariant functions on $E_S$ have to be constant on each of the circles so we find that:
\begin{equation}
\Ci_\inv(E_S)\cong\RR^N
\end{equation} 
The algebra of on-shell symmetries is just
\begin{equation}
\smp=\mathfrak{so}(2)\otimes\Ci(E_S)
\end{equation}
It can be parametrized by:
\begin{equation}
X_\theta(x,y,z)=\theta(x,y,z)\left(\begin{array}{cc}
0&1\\
-1&0
\end{array}\right), \qquad \theta\in\Ci(E_S)
\end{equation}
Action of $\smp$ on $\Ci(E_S)$ can be written as:
\begin{equation}
\Ci(E_S)\ni f\mapsto X_\theta f(x,y,z)=\theta(x,y,z)(y\partial_x-x\partial_y)f(x,y,z)\label{che1}
\end{equation}
\eex\end{exa}
%------------------------------------------------------------------------------------------------------------------------
\subsection{General theories with symmetries}
%------------------------------------------------------------------------------------------------------------------------
In this section we provide a general discussion of theories possessing nontrivial local symmetries. The first issue that arises is the formulation of the regularity conditions on the action, that allow us to characterize $\F_0(M)$ as the image of the Koszul map $\delta_S$. Clearly $S''_M(\ph)$ is not an invertible operator and therefore the argument from the example \ref{ftc1} will not work. We need a weaker notion of ``invertibility'' of $S''_M$. Before we put it in mathematical terms let us discuss the underlying physical intuition. If the symmetries of $S$ arise from a Lie group action, then 
$S'_M$ is a well defined, nondegenerate map on the space of gauge orbits. Unfortunately in general 
this space is not a vector space and not even a manifold \cite{Rudolph,Abb,Kon,Kon2}. The way out is to parametrize it locally by choosing representants of the gauge equivalence classes. This amounts to performing \textit{local gauge fixing}. It was shown by Singer for the case of Yang-Mills theories that this cannot be done globally \cite{Sing,Sing2}, see also \cite{Rudolph,Abb,Kon,Kon2} for a detailed description of the structure of the gauge orbit space\index{gauge!orbits}. This problem is related to the so called Gribov ambiguities \cite{Grib}. In our setting this choice of parametrization will be made by defining a projection $R$ in each open neighborhood of $\E_S(M)$. This formulation applies also to the situation, where the symmetries don't necessarily arise from a Lie-group action.

Let $\{\Ocal_\alpha\}$ be an open covering of the solution space $\E_S(M)\subset\E(M)$. We say that the action \textit{doesn't have nontrivial local symmetries after the gauge fixing}\index{gauge!fixing} if for each $\Ocal_\alpha$ there exists an operator $R_\al:\Ocal_\alpha\rightarrow \E(M)$ such that $R_\al^2=R_\al$,  $R_\al(\Ocal_\alpha)$ and $(1-R_\al)\Ocal_\alpha$ are linearly independent and it holds:
\begin{equation}\label{sym2}
X(\ph)\perp S_M'(\ph)\quad \forall\ph\in R_\al(\Ocal_\al)\Rightarrow X(\ph)=0\ \forall\ph\in R_\al(\Ocal_\al)\,,
\end{equation}
where $X$ is a multilocal vector field on $R_\al(\Ocal_\al)$ (i.e. $X\in\Gamma(TR_\al(\Ocal_\al))$) with image in the compactly supported sections. We also require the operators $R_\al$, $R_{\al'}$ to be compatible on the intersection of $\Ocal_\al$ and $\Ocal_{\al'}$. In other words $R_\al$ projects locally to a subset of $\E(M)$ such that there are no symmetries tangent to it. It follows now that $S''_M$ is invertible on the tangent space of  $R_\al(\Ocal_\al)$  in the sense that for $\ph\in R_\al(\Ocal_\al)\cap\E_S(M)$
\[
\langle S_M''(\ph),X(\ph)\otimes\psi\rangle=0\quad\forall\psi\in \E_c(M)\Rightarrow X(\ph)=0\,,
\]
where $X$ is a vector field as in (\ref{sym2}). Reversing this reasoning results in a sufficient condition for an action to be free of nontrivial local symmetries after the gauge fixing. We phrase it in the following way:
\begin{ass}[Regularity condition]\label{reg2}
Assume that there exists an open covering $\{\Ocal_\alpha\}$ of $\E_S(M)$ and a family of operators $R_\al$, such that the linearized equations of motion $S''_M$ don't have nontrivial compactly supported solutions in $T_\ph(R_\al(\Ocal_\al))$, $\ph\in R_\al(\Ocal_\al)$.
\end{ass}
Under this assumption, (\ref{sym2}) holds and if the gauge-fixed $S''_M$ is normally hyperbolic, one can show that the ideal $\F_0(M)$ is generated by the \textsc{eom}'s. The above condition basically means that we can divide the equations of motion into dependent and independent ones and the independent can be chosen as new variables. Indeed, let us first consider the case when $S$ is quadratic. Let
 $\ph\in\Ocal_\alpha$ and $\psi=R_\al\circ S''_M\ph+(1-R_\al)\ph$. Assuming that $S''_M$ is hyperbolic as an operator acting on $R_\al(\Ocal_\al)$, we obtain the retarded solution $\Delta^R$, that fulfills 
 \[
 R_\al\circ S''_M\circ R_\al\circ\Delta^R\circ R_\al=R_\al\,.
 \]
 It follows that $\ph=R_\al\circ\Delta^R\circ R_\al \psi+(1-R_\al)\psi$ and we can write a functional $F\in\F(\Ocal_\al)$ in terms of the new variables as: $F(\ph)=F(R_\al\circ\Delta\circ R_\al \psi+(1-R_\al)\psi)=\tilde F(R_\al \psi,(1-R_\al)\psi)$. On-shell we have $R_\al \psi=R_\al\circ S''_M\ph=0$ and $(1-R_\al)\psi=(1-R_\al)\ph$. So if $F\in\F_0(M)$, then
\[
\tilde{F}(0,(1-R_\al)\psi)=0
\]
Denoting $\psi_1\doteq R_\al\psi$, $\psi_2\doteq(1-R_\al)\psi$ we can write $\tilde{F}$ as:
\[
\tilde{F}(\psi_1,\psi_2)=\int\limits_0^1 d\tilde{F}(t\psi_1,\psi_2)[\psi_1]dt
=\int\limits_0^1 \frac{\delta\tilde{F}}{\delta \psi_1}(t\psi_1,\psi_2)[\psi_1]dt\,.
\]
Setting $\tilde{f}(\psi_1,\psi_2)=\int\limits_0^1\frac{\delta\tilde{F}}{\delta \psi_1}(t\psi_1,\psi_2)[{}_\bullet]dt$ we obtain: $\tilde{F}(\psi)=\left<\tilde{f}(\psi_1,\psi_2),\psi_1\right>$. Rewriting it again in terms of  $\ph\in\Ocal_\alpha$ results in $\left<\tilde{f}(R_\al\circ S''_M\ph,(1-R_\al)\ph),R_\al\circ S''_M\ph\right>=F(\ph)$. We can repeat it for all $\Ocal_\alpha$ from the open cover and it follows that $\F_0(M)$ is indeed generated by equations of motion in the open neighborhood of $\E_S(M)$.

The above result is independent of the choice of $\{\Ocal_\alpha\}$ and corresponding projections and it provides a description of $\F_{0}(M)$ in the open neighborhood of $\E_S(M)$. A global argument can be obtained using the fact, that the configuration space $\E(M)$ is a Montel space and therefore paracompact \cite{Jar,Koe,Bou}. The following proposition is an infinite dimensional version of a formal argument provided in appendix 1.A of \cite{Henneaux:1992ig}.
\begin{prop}\label{idealth2}
Let $S$ be a generalized action fulfilling regularity conditions formulated above. Then each element of $\F_0$ can be written as
\begin{equation*}
F(\ph)=\left<f(\ph),S''_M\ph\right>,\qquad \ph\in \E(M)\,.
\end{equation*}
\end{prop}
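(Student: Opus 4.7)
The idea is to globalize, via a partition of unity on $\E(M)$, the representation that the discussion preceding the proposition has already produced on each member of the cover $\{\Ocal_\alpha\}$ of $\E_S(M)$. First I would enlarge that cover to one covering the whole configuration space by adjoining $\Ocal_0\doteq\E(M)\setminus\E_S(M)$; this is open because $\E_S(M)=(S'_M)^{-1}(0)$ is the preimage of a point under the continuous Euler--Lagrange derivative.

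On each $\Ocal_\alpha$ the local computation already carried out delivers a multilocal map $\tilde f_\alpha$ into $\E_c(M)$ satisfying
\[
F(\ph)=\langle \tilde f_\alpha(\ph),S''_M\ph\rangle,\qquad\ph\in\Ocal_\alpha.
\]
On $\Ocal_0$ a representation of the required form is much easier to obtain: for any $\ph\in\Ocal_0$ one has $S''_M\ph\not\equiv 0$, so there is some $\psi_\ph\in\E_c(M)$ with $\langle\psi_\ph,S''_M\ph\rangle\neq 0$; by continuity this inequality persists on an open neighbourhood $U_\ph\subset\Ocal_0$, on which the assignment
\[
f_\ph(\varphi)\doteq\frac{F(\varphi)}{\langle\psi_\ph,S''_M\varphi\rangle}\,\psi_\ph
\]
is well defined, smooth, compactly supported, and satisfies $\langle f_\ph(\varphi),S''_M\varphi\rangle=F(\varphi)$. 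The collection of the $\Ocal_\alpha$'s together with the $U_\ph$'s is thus an open cover of $\E(M)$ on each member of which $F$ admits the desired local representation.

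Now I would invoke paracompactness of $\E(M)$, which holds because $\E(M)$ is a Montel space, to extract a locally finite refinement $\{U_\beta\}_{\beta\in B}$ together with a smooth partition of unity $\{\chi_\beta\}$ subordinate to it. Writing $f_\beta$ for the local section attached to $U_\beta$ by the previous step, the gluing
\[
f(\ph)\doteq\sum_{\beta\in B}\chi_\beta(\ph)\,f_\beta(\ph)
\]
is a well-defined $\E_c(M)$-valued map on $\E(M)$, and
\[
\langle f(\ph),S''_M\ph\rangle=\sum_{\beta}\chi_\beta(\ph)\,\langle f_\beta(\ph),S''_M\ph\rangle=\sum_{\beta}\chi_\beta(\ph)\,F(\ph)=F(\ph),
\]
as required.

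The principal technical obstacle is the existence of smooth partitions of unity on the infinite-dimensional configuration space subordinate to a prescribed open cover, together with the verification that the resulting $f$ retains enough regularity to count as an admissible coefficient (in particular, that the summation respects the multilocality built into $\V(M)$). In the nuclear Fr\'echet setting the required cutoffs can be built by pulling back finite-dimensional bump functions through continuous seminorms on $\E(M)$, and this is precisely the point at which the Montel hypothesis is essential; once this partition-of-unity machinery is in place, the gluing step is formal.
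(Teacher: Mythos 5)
Your argument is correct and follows essentially the same route as the paper's proof: complete the cover $\{\Ocal_\alpha\}$ of $\E_S(M)$ by open sets disjoint from $\E_S(M)$ on which $S''_M$ does not vanish, produce local coefficients there by dividing $F$ by a nonvanishing quantity built from $S''_M\ph$, and glue with a smooth partition of unity obtained from the paracompactness of the Montel space $\E(M)$. Your off-shell coefficient $f_\ph(\varphi)=F(\varphi)\,\psi_\ph/\langle\psi_\ph,S''_M\varphi\rangle$ is a slightly cleaner rendering of the paper's $hF(\ph)/S''_M(\ph)$, and you correctly flag the smoothness and multilocality of the cutoffs as the remaining technical point, which the paper likewise only sketches (it requires the $\chi_\tau$ to be local functionals, using that the topology of $\E(M)$ is initial with respect to local functionals).
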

\begin{proof}
Let $\{\Ocal_\rho\}$ be the open covering of $\E_S(M)$ constructed above. We complete it to the open covering of $\E(M)$ by choosing the family of open sets $\{\Vcal_\alpha\}$ that don't intersect $\E_S(M)$ such that $\{\Wcal_\tau\}\doteq\{\Ocal_\rho\}\cup\{\Vcal_\alpha\}$ covers $\E(M)$  and for each index $\alpha$ 
\begin{equation*}
S''_M(\ph)\neq0,\quad\forall\ph\in \Vcal_\alpha\,.
\end{equation*}
Now we can write each $F\in\F_{0}(M)$ in the form $F(\ph)=\left<f_{\alpha}(\ph), S''_M(\ph)\right>$ for $\ph\in \Vcal_\alpha$, where $f_{\alpha}(\ph)\doteq hF(\ph)/S''_M(\ph)$ and $h\in\D(M)$ is an arbitrary test function such that $\int\limits_M\dvol h(x)dx=1$. Together with the result proved before we obtain for each $\Wcal_\tau$ an expression: $F(\ph)=\left<f_{\tau}(\ph), S''_M(\ph)\right>$, $\ph\in \Wcal_\tau$. Coefficients $f_{\tau}$ are defined only locally, but the global ones can be constructed using the partition of unity. Since $\E(M)$ is paracompact, we can choose a partition of unity $\sum\limits_\tau \chi_\tau=1$ such that $\supp \chi_\tau\subset \Wcal_\tau$ and at each point $\ph\in\E(M)$ only finitely many functions $\chi_\tau$ don't vanish. The global coefficients can be defined by $f(\ph)\doteq\sum\limits_\tau \chi_\tau f_{\tau}(\ph)$. There is no problem with the convergence, since at each configuration $\ph$ the sum has only finitely many nonvanishing terms. It is important that we can choose the partition of unity in such a way that functions $\chi_\tau$ are local functionals. This is possible, since the topology of the configurations space is initial with respect to local functionals\footnote{See \cite{Michor} for the discussion of the existence of smooth partitions of unity in infinite dimensional analysis.}. 
%To see this, note that the compact open topology on $\E(M)$ can equivalently be generated by a family of seminorms:
%\[
%p_{f,m}(\ph)=\sup_{|\alpha|<m}\left(\int |f\partial^\alpha\ph|^2(x)d^4x\right)^{1/2}\,,
%\]
%where $\alpha\in\NN^N$ is a multiindex and $f\in\D(M)$ is a compactly supported function. 
\end{proof}
The case of a nonlinear action can be handled in a similar way as in the case without the symmetries, provided that the projections $R_\alpha$ can be chosen to be the same in the free and interacting case. We illustrate the general construction presented above on two examples. First we give a finite dimensional one, and next we describe the case of  Yang-Mills theories.
\begin{exa}[Finite dimensional system with symmetries]\label{findimex2}
We continue analysing the example \ref{findimex}. The regularity conditions imposed on $S$ in the finite dimensional case take the form: for each point $x\in E_S$ there exists an open neighborhood with the corresponding chart $(U_\alpha,\varphi_\alpha)$ such that  $\sigma_i((\varphi_\alpha^{-1})^1(x),\ldots,(\varphi_\alpha^{-1})^k(x))=0$, $i=1,\ldots k$ are independent, i.e. the Hessian matrix $H(\varphi_\alpha^{-1}(x))$ is of rank $k$ for all $x\in E_S$. Under this regularity condition we can choose $(\sigma_1\ldots\sigma_k, x_{k+1},\ldots, x_n)$ as new local coordinates in the neighbourhood $U_\alpha$ of each point of $E_S$. The projection $R_\alpha$ can be explicitly written as $R_\alpha(x)(\si_1(x),...,\si_k(x),0,...0)$ and $(1-R_\alpha)(x)=(0,...,0,x_{k+1},..., x_n)$. For the Lagrangian from example \ref{findimex} the Hessian matrix on the solution space has the form:
\begin{equation}
H(x,y,z)=2\sum\limits_{k=1}^N\prod\limits_{{l=1}\atop{l\neq k}}^N(x^2+y^2-k^2)\left(\begin{array}{ccc}
0&0&x\\
0&0&y\\
x&y&0
\end{array}\right)
\end{equation}
The rank of $H(x)$ is equal to 2 for $x\in E_S$. We can choose $\sigma_2$ and $\sigma_3$ as independent equations. The local projections $R_\alpha$ are best written in cylindric coordinates. As an example we take the open neighborhood $\Ocal_1$ (see figure \ref{coord}). 
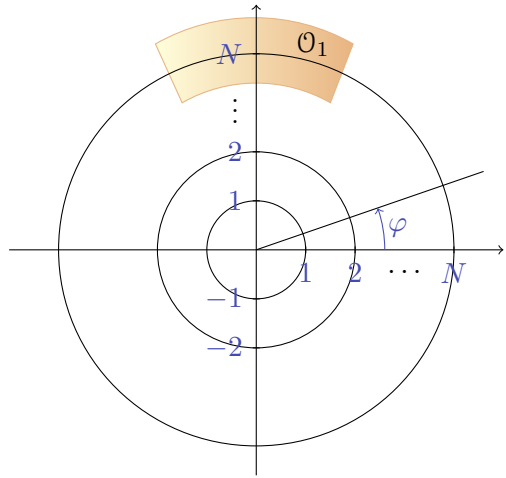
\begin{figure}[htb]
\begin{center}
\begin{tikzpicture}[scale=1.3]
 \useasboundingbox (-4,-1.8) rectangle (4,2.8); 
%\draw[step=.5cm,help lines] (-1.4,-1.4) grid (1.4,1.4); 
\shadedraw[left color=lighthoney,right color=honey, draw=honey] (-0.75,1.5) arc (120:60:1.5cm) -- (0.98,2.1) arc (60:120:2cm) -- cycle;
\draw[->] (-2.5,0) -- (2.5,0);	
\draw[->] (0,-2.3) -- (0,2.5); 
\draw(0,0) -- (2.3,0.8); 
\draw (0,0) circle (1cm);
\draw (0,0) circle (0.5cm);
\draw (0,0) circle (2cm);
\draw[->,color=see] (1.3cm,0mm) arc (0:19:1.3cm) node[anchor=north west] {$\ph$};
\foreach \x in {1,2} \draw (0.5*\x cm,1pt) -- (0.5*\x cm,-1pt) node[anchor=north,color=see] {$\x$};
\draw (2cm,1pt) -- (2cm,-1pt) node[anchor=north,color=see] {$N$};
\draw (0.3,2.1) node[anchor=west] {$\Ocal_1$};
\draw (1.5,-0.1) node[anchor=north] {$\ldots$};
\draw (-0.07,1.5) node[anchor=east] {$\vdots$};
\foreach \y in {-2,-1,1,2} \draw (1pt,0.5*\y cm) -- (-1pt,0.5*\y cm) node[anchor=east,color=see] {$\y$};
\draw (1pt, 2cm) -- (-1pt,2cm) node[anchor=east,color=see] {$N$};
%\draw (2.3,0) node[anchor=south] {$x$};
%\draw (0,2.3,0) node[anchor=west] {$y$};
\end{tikzpicture}
\end{center}
\caption[Finite dimensional system with symmetries: the choice of coordinates.]{Local choice of coordinates for the action functional $S(x,y,z)=z\prod\limits_{k=1}^N(x^2+y^2-k^2)$.\label{coord}}
\end{figure}
The corresponding projection takes the form:
$R_1(\ph,r,z)=(0,\sigma_2/(r\sin(\ph)),\sigma_3)=:(0,u,v)$, $(1-R_1)(\ph,r,z)=(\ph,0,0)$. This change of coordinates can be inverted on $\Ocal_1$, resulting in $z=u$, $r=\sqrt{v+1}$, so each functional $F(\ph,r,z)$ on $\Ocal_1$ induces a functional $\tilde{F}(\ph,\sigma_2,\sigma_3)=F(\ph,\sqrt{\sigma_3+1},\frac{\sigma_2}{\sin(\ph)\sqrt{\sigma_3+1}})$ and the condition of vanishing on-shell can be expressed as $\tilde{F}(\ph,0,0)=0$. 

Note that the gauge orbits are closed and the change of variables we performed works only locally. A similar problem arises in the infinite dimensional case, when the gauge fixing cannot be done globally. This is a toy model for the so called Gribov problem \cite{Grib}.
\eex\end{exa}
\begin{exa}[Yang-Mills theory]
After a simple finite-dimensional example we can now move to a more physical one. Consider the Yang-Mills theory with the Lagrangian:
\[
L_M(f)(A)=-\frac{1}{2}\int_M f\tr(F\wedge *F)\,,
\]
where $F=dA+[A,A]$ is the field strength corresponding to the gauge potential $A\in\E(M)=\Omega^1(M,g)$, $g$ is a finite dimensional Lie group and $*$ is the Hodge operator. The equations of motion are: $\hinv \!D\!*\!DA=0$, where $D$ is the covariant derivative (the geometry of Yang-Mills theories will be discussed in details in section \ref{geom}). A choice of a gauge fixing defines a surface in $\E(M)$ and local projections $R_\alpha$ are projections to this surface. One can also see it as providing a split in the space of gauge equivalence classes. In the literature concerning the structure of gauge orbit space \cite{Rudolph,Abb,Kon,Kon2} it is referred to as the \textbf{\textit{choice of the gauge slice}}.
% For simplicity we take $M$ to be the Minkowski spacetime. We choose an open neighborhood $U\subset M$ of $x_0\in M$. An open neighborhood (with respect to the compact open topology) of a field configuration $\nul{A}\in\E_S(M)$ can be defined as:
%\[
%\Ocal\doteq\{A|A=\nul{A}+h,\ h\in\E_c(M),\ \supp(h)\subset U\}\,.
%\] 
%Using the local coordinates we can express $A$ in this neighborhood by the integral:
%\[
%A(x)=\nul{A}(x)+h(x_0)+\int\limits_0^1 d\lambda\, \pa_\mu h(x+\lambda(x-x_0))[(x-x_0)^\mu]\,.
%\]
%The Lorenz gauge can be implemented by a projection to the subspace where $\pa_0A^0=-\pa_iA^i$, i.e.  $\pa_0h^0=\pa_\mu \nul{A}^\mu-\pa_iA^i$. In the open neighborhood $\Ocal$ this projection can be expressed as $(RA)^i=A^i$ and
%\begin{multline*}
%(RA)^0(x)=\nul{A}^0(x)+h^0(x_0)+\int\limits_0^1 d\lambda\, \pa_i h^0(x+\lambda(x-x_0))[(x-x_0)^i]+\\
%+\int\limits_0^1 d\lambda\, (\pa_\mu \nul{A}^\mu-\pa_i h^i)(x+\lambda(x-x_0))[(x-x_0)^0]\,.
%\end{multline*}
%Clearly $\partial_\mu(RA)^\mu=0$.
\eex\end{exa}
%--------------------------------------------------------------------------------------------------------
\subsection{Koszul-Tate resolution}
%--------------------------------------------------------------------------------------------------------
From now on, we shall assume that we are given an action $S$ such that $S''_M$ is invertible after the gauge fixing. The Koszul complex is modified to the Koszul-Tate complex\index{Koszul-Tate!complex} \cite{Tate}. Its underlying graded module is just $\sm(M)\oplus\V(M)\oplus\F(M)$ and the differential is defined as $\delta\doteq\iota\oplus\delta_S\oplus0$, where $\iota$ is the natural inclusion map. We obtain the following short exact sequence:
\[
0\rightarrow \sm(M)\xhookrightarrow{\iota}\V(M)\xrightarrow{\delta_S}\F(M)\rightarrow 0\,.
\]
The space of on-shell functionals is characterized as $\F_S(M)=H_0(\delta)$. Like in the case without nontrivial symmetries, we can construct from the graded module $\sm(M)\oplus\V(M)\oplus\F(M)$ a graded algebra by extending it with graded-symmetric tensor powers. The resulting structure is:
\be\label{KoTa}
\KT(M)\doteq S^\bullet_{\sst\F(M)}\,\sm(M)\underset{{\sst\F(M)}}{\otimes}\bigwedge_{\sst\F(M)} \V(M)\,.
\ee
The differential $\delta$ is extended to $\KT(M)$ by requiring the graded Leibniz rule. We obtain a sequence:
\[
\dots\rightarrow \bigwedge_{\sst\F(M)}^2\V(M)\oplus\sm(M)\xhookrightarrow{\delta_S\oplus\iota}\V(M)\xrightarrow{\delta_S}\F(M)\rightarrow 0\,.
\]
By construction, the kernel of $\delta_S$ is the image of $\delta_S\oplus\iota$, so the above sequence is exact in degree 1. In the next step one has to check if it is also exact in degree 2. If this is not the case, one has to further extend the graded module above to $\sm_1(M)\oplus\sm(M)\oplus\V(M)\oplus\F(M)$ and the differential  to $\delta\doteq\iota_1\oplus\iota\oplus\delta_S\oplus0$, in such a way that the cohomology in degree 2 is ``killed'' by these new generators. The procedure continues until one obtains a resolution, which is called the \textit{Koszul-Tate resolution}\index{Koszul-Tate!resolution}. In general, there is no guarantee that this procedure terminates after finitely many step. Fortunately, in case of local symmetries, one can use the locality to reduce the problem of constructing the Koszul-Tate resolution to a finite dimensional one. In this work we assume that \eqref{KoTa} is already a resolution. This is justified in examples, which we discuss in chapters \ref{YM} and \ref{grav}.
%--------------------------------------------------------------------------------------------------------
\subsection{Chevalley-Eilenberg cohomology}\label{CheE}
%--------------------------------------------------------------------------------------------------------
%At the beginning of this section we defined the space of symmetries $\sm(M)$ of the action $S$.
We discuss now in detail the structure of the space of symmetries. The space of all the diffeomorphisms of $\E(M)$ is $\Ci(\E(M),\E(M))$. This space is ``badly'' infinite dimensional and introducing a Lie group structure on it is not possible. The space of diffeomorphisms that leave the action invariant can be too big as well. This, however, doesn't pose a problem for the application in physics, since we are interested only in the ``infinitesimal symmetries'', i.e. vector fields on $\E(M)$, which build a Lie algebra.

It is well known in homological algebra \cite{Weibel,Hilton,Eisen} that with the Lie-algebra action one can associate in a canonical way a co-chain complex, called \textit{the Chevalley-Eilenberg complex}. It turns out however, that in our case a slightly different structure would be more appropriate. From the definition, $\sm(M)$ is the subalgebra of the algebra of vector fields $\V(M)$. This in turn is in a trivial way a so called \textit{Lie algebroid}. In section \ref{cate} we describe it in more detail. Essentially this is a generalization of a Lie algebra.
 From this point of view it is more natural to use the algebroid Chevalley-Eilenberg construction for the action of $\sm(M)$ on $\F(M)$. In section \ref{cate} we discuss in details how it covers many of the physical examples and how it relates to the Lie algebra picture. For now the only consequence of this interpretation is the use of a following definition of the algebraic Chevalley-Eilenberg complex: $\left(\bigwedge_{\sst\F(M)}\sm^*(M),\gamma\right)$, where now the tensor products are over the ring $\F(M)$.  The space  $\sm^*(M)$ is defined as $\sm^*(M)\doteq\Ci(\E(M),\E'(M))/\mathcal{J}$, where $\mathcal{J}\subset\Ci(\E(M),\E'(M))$ is the ideal consisting of forms that vanish on $\sm(M)$ and the duality between $\sm^*(M)$ and $\sm(M)$ is given by:
 \[
 \left<\omega,\xi\right>(\ph)\doteq \left<\omega(\ph),\xi(\ph)\right>\qquad \omega\in\sm^*(M), \xi\in\sm(M)\,.
 \]
The assignment of  $\sm^*(M)$ to a spacetime can be made into a covariant functor. Morphisms $M\rightarrow N$ are mapped to $\sm^*\chi$, defined as:
\[
(\sm^*\chi)(\omega)\doteq \E'\chi\circ \omega\circ \E\chi\,.
\]
The grading of $\bigwedge_{\sst\F(M)}\sm^*(M)$ is called the \textit{pure ghost number} and we denote it by $\#\pg$.
The differential $\gamma$ is given by the formula:
\begin{align}
\gamma:&\ \bigwedge_{\sst\F(M)}^q\sm^*(M)\rightarrow\bigwedge_{\F(M)}^{q+1}\sm^*(M)\,,\nonumber\\
(\gamma \omega)(\xi_0,\ldots, \xi_q)&\doteq\sum\limits_{i=0}^q(-1)^i\partial_{\xi_i}(\omega(\xi_0,\ldots,\hat{\xi}_i,\ldots,\xi_{q+1}))+\nonumber\\
&+\sum\limits_{i<j}(-1)^{i+j}\omega\left([\xi_i,\xi_j],\ldots,\hat{\xi}_i,\ldots,\hat{\xi}_j,\ldots,\xi_{q+1}\right)\,,\label{gaM}
\end{align}
%This is up to now only the algebraic definition, since we didn't specify yet the topologies on the spaces involved. 
%In a similar manner one can write down the Chevalley-Eilenberg complex associated to the action of $\sm(M)$ on $\F_S(M)$ as $\left(\bigwedge_{\sst\F(M)}\smp^*(M)\Ftens\F_S(M),\gamma\right)$.
Note that if $F\in\F_S^\inv(M)$ is an on-shell functional  invariant under the action of $\sm(M)$, then $\gamma F\equiv 0$. Therefore the $0$-order cohomology of the Chevalley-Eilenberg complex characterizes the gauge invariant on-shell functionals.\index{gauge!invariant functionals}

The Chevalley-Eilenberg complex can be assigned to a spacetime in a functorial way. Let $\dgA$ be the category with differential graded algebras as objects and differential graded algebra homomorphisms as morphisms. We define a covariant functor $\CE$ from $\Loc$ to $\dgA$, by setting $\CE(M)\doteq\bigwedge\limits_{\sst\F(M)}\sm(M)^*$ for objects. Morphisms are mapped in the way discussed above. Similarly we define $\CE_S(M)\doteq\bigwedge\limits_{\sst\F(M)}\sm(M)^*\Ftens\F_S(M)$. 
%--------------------------------------------------------------------------------------------------------
\subsection{Going off-shell}\label{off}
%--------------------------------------------------------------------------------------------------------
The Chevalley-Eilenberg complex constructed in the previous subsection allows us to characterize the gauge invariant functionals as a certain cohomology but we still have to work \textit{on-shell}. We would like to avoid it, since the functional approach to field theory is based on the \textit{off-shell} setting. One would be tempted to simply take the full symmetry algebra $\sm(M)$, construct the Chevalley-Eilenberg complex  $\left(\bigwedge_{\sst\F(M)}\sm^*(M),\gamma\right)$ corresponding to its action on $\F(M)$, calculate the 0-cohomology and go on-shell at the end.
 This, however, is not the optimal solution, since the corresponding cohomology doesn't capture really the gauge invariant on-shell functionals. This can be seen on a simple example 
\begin{exa}\label{findimex3}
Consider the finite dimensional system from \ref{findimex}. An example trivial symmetry is given by:
\[
X(p)=\sigma_3(p)\partial_x-\sigma_1(p)\partial_z\,.
\]
We already showed in \label{findimex2} that the gauge invariant on-shell functionals are those constant on the concentric circles. Let $F(r)$ be such a functional. Acting  on it with $X$ we obtain:
\[
\partial_XF(p)=2r\cos(\ph)\sigma_3(p)F'(r)\,,
\]
therefore $F$ is invariant under $X$ only if $F$ is constant, so the Chevalley-Eilenberg cohomology corresponding to the action of $\sm(M)$ on $\F(M)$ is just $\RR$, whereas the space of gauge-invariant on shell functionals is characterized by $\RR^N$.
\eex\end{exa}
From this example and the discussion of the previous subsection we see that what we need is not the cohomology of $\left(\bigwedge_{\sst\F(M)}\sm^*(M),\gamma\right)$, but of $\left(\bigwedge_{\sst\F(M)}\sm^*(M)\Ftens\F_S(M),\gamma\right)$. To go off-shell we have to replace $\F_S(M)$ by its Koszul-Tate resolution (\ref{KoTa}). We obtain the algebra:
\be\label{BV00}
\BV(M)\doteq\bigwedge_{\sst\F(M)}\sm^*(M)\Ftens S^\bullet_{\sst\F(M)}\,\sm(M)\underset{{\sst\F(M)}}{\otimes}\bigwedge_{\sst\F(M)} \V(M)\,,
\ee
with a differential $\delta$ acting on $\bigwedge_{\sst\F(M)}\sm^*(M)$ as the identity.  We extend the grading $\#\pg$ of $\CE(M)$ to a grading $\#\gh$ (called total ghost number) on $\BV(M)$ by 
\[
\#\gh=\#\pg-\#\af\,.
\]
The antifield number $\#\af=1$ is assigned to the vector fields, the antifield number $\#\af=2$ to the elements of $\sm(M)$, whereas elements of $\bigwedge_{\sst\F(M)}\sm^*(M)$ have $\#\af=0$. 
The Chevalley-Eilenberg differential acts on $H_0(\delta)=\bigwedge\limits_{\sst\F(M)}\sm^*(M)\Ftens\F_S(M)$ and we have:
\[
H^0(H_0(\delta),\gamma)=\F_S^{\inv}(M)\,.
\]
The differential $\gamma$ can be in a natural way extended to vector fields $\V(M)$. Note that $\sm(M)$ has a natural Lie algebroid action on $\V(M)$ via the commutator, so $\gamma$ on $\V(M)$ is just the Chevalley-Eilenberg differential constructed for this action. Explicitly we can write it as:
\begin{align}
\gamma:&\ \bigwedge_{\sst\F(M)}^q\sm^*(M)\Ftens\V(M)\rightarrow\bigwedge_{\sst\F(M)}^{q+1}\sm^*(M)\Ftens\V(M)\,,\nonumber\\
(\gamma \omega)(\xi_0,\ldots, \xi_q)&\doteq\sum\limits_{i=0}^q(-1)^i[\xi_i,\omega(\xi_0,\ldots,\hat{\xi}_i,\ldots,\xi_{q+1})]+\nonumber\\
&+\sum\limits_{i<j}(-1)^{i+j}\omega\left([\xi_i,\xi_j],\ldots,\hat{\xi}_i,\ldots,\hat{\xi}_j,\ldots,\xi_{q+1}\right)\,,\label{gaM}
\end{align}
A similar reasoning applies to the action of $\sm(M)$ on itself. By requiring the graded Leibniz rule we extend $\gamma$ to the full graded algebra $\BV(M)$. This way we obtain a double complex $(\BV(M),\delta,\gamma)$. We can picture this on a diagram\footnote{Since all the constructions are functorial, for the simplicity of notation we don't write the dependence on $M$ explicitly.}:
\be\label{BVshort}
\begin{tikzpicture} \matrix(a)[matrix of math nodes, row sep=2em, column sep=1.5em, text height=2.3ex, text depth=0.2ex] {
  &  0            &0          &0                                                      &\\
...&\bigwedge\limits^2_{\sst\F}\V\oplus\sm &\V&\F                                               &0\\
 &     ...          &{\bigwedge\limits^1_{\sst\F}\sm^*\Ft\V}          &{\bigwedge\limits^1_{\sst\F}\sm^*}&0\\
 &               &          &{\ldots}&\\
 };
 \path[->,font=\scriptsize](a-2-1) edge node[above]{} (a-2-2); 
 \path[->,font=\scriptsize](a-2-2) edge node[above]{$\delta_S\oplus\iota$} (a-2-3); 
 \path[->,font=\scriptsize](a-3-2) edge node[above]{} (a-3-3); 
 \path[->,font=\scriptsize](a-2-3) edge node[above]{$\delta_S$} (a-2-4); 
 \path[->,font=\scriptsize](a-3-3) edge node[above]{$\delta_S$} (a-3-4); 
 \path[->,font=\scriptsize](a-2-4) edge node[above]{} (a-2-5); 
 \path[->,font=\scriptsize](a-3-4) edge node[above]{} (a-3-5); 
 \path[->,font=\scriptsize](a-1-4) edge node[above]{} (a-2-4); 
  \path[->,font=\scriptsize](a-1-3) edge node[above]{} (a-2-3); 
  \path[->,font=\scriptsize](a-1-2) edge node[above]{} (a-2-2); 
 \path[->,font=\scriptsize](a-2-2) edge node[right]{$\gamma$} (a-3-2); 
 \path[->,font=\scriptsize](a-2-4) edge node[right]{$\gamma$} (a-3-4); 
 \path[->,font=\scriptsize](a-2-3) edge node[right]{$\gamma$} (a-3-3); 
 \path[->,font=\scriptsize](a-3-4) edge node[right]{$\gamma$} (a-4-4); 
\end{tikzpicture}
\ee
The full BV differential is in our case defined as $s=\delta+\gamma$. In general, if $\gamma^2$ is not 0 off-shell, $s$ has to contain also higher order terms, but in the present work we study the examples where $\delta+\gamma$ is already a differential. Since $(\BV(M),\delta)$ is a resolution, one can use the main theorem of homological perturbation theory\footnote{See \cite{FH,Henneaux:1992ig}.} to conclude that
\[
H^0(\BV(M),s)=H^0(H_0(\BV(M),\delta),\gamma)\,.
\]
The right hand side of the above relation is called the \textit{cohomology of $\gamma$ modulo $\delta$} and is denoted by  $H^*(\gamma|H_*(\delta))$. We can identify the gauge invariant functionals\index{gauge!invariant functionals} with the $0$-cohomology of $s$, i.e. $\F_\inv^S(M)=H^0(\BV(M),s)$. 
 This is the most general situation in the BV formalism. Although we already obtained the homological interpretation of $\F_\inv^S(M)$, from the practical point of view it is worth to work on this structure a little bit more, in order to find more explicit formulas for the objects involved.
%--------------------------------------------------------------------------------------------------------
\subsection{Open and closed algebras}\label{open}
%--------------------------------------------------------------------------------------------------------
In this subsection we will analyze in detail the structure of $\smp(M)$.
%First we define for each $\ph\in\E(M)$ a subspace of $\E_c(M)$, denoted by $\E_c(M)_{\ph,S}$ consisting of test sections $h$ that can be obtained by evaluation of some vector fields $X\in\smp(M)$ at the point $\ph$, i.e.
%\[
%\E_c(M)_{\ph,S}\doteq \{h\in \E(M)|\, \exists X\in\smp(M)\textrm{ s.t. }X(\ph)=h\}\,.
%\]
%Consider now the space 
%\[
%\B_S(M)\doteq\bigcup\limits_{\ph\in\E_S(M)}\{\ph\}\times\E_c(M)_{\ph,S}\,.
%\]
%This structure is already close to a vector bundle, but there is \textsl{a priori} no natural isomorphism between spaces $\E_c(M)_{\ph,S}$. This is however the case in many physical examples, since the symmetries arise as a consequence of a group (or more generally a groupoid) action.
%\begin{ass}\label{vb} Assume that for all $\ph\in\E_S(M)$, the spaces $\E_c(M)_{\ph,S}$ are isomorphic to an abstract vector space $\fk_c(M)$.
%\end{ass}
%In fact assumption \ref{vb} in not very restrictive since in the finite dimensional case it is always fulfilled as a consequence of the Serre-Swan theorem, which says that finitely generated projective modules correspond to vector bundles (see \cite{Froe} for discussion in the context of BV-formalism). If assumption \ref{vb} holds, we can characterize elements of $\smp(M)$ as smooth compactly supported multilocal (in the sense of \ref{vvf}) sections of the vector bundle $\B_S(M)$. 
We want to make contact with the standard approach \cite{Batalin:1981jr,Batalin:1983wj,Batalin:1983jr} and show how the structures we are using relate to those used in the literature. Therefore, we shall now discuss the interpretation of \textit{open} and \textit{closed} algebras.
In section \ref{off}, we showed that one can interpret $\F_\inv^S(M)$ as a cohomology of $\gamma$ modulo $\delta$, where the Chevalley Eilenberg complex was constructed for the full symmetry algebra $\sm(M)$. This algebra is ``much bigger'' than $\smp(M)$ since it contains all the trivial symmetries. In praxis one wants to work with a smaller space. For concrete examples this can be done, but involves a great deal of arbitrary choices. Nevertheless, we think it is worth to follow the underlying reasoning to see at which points the choices have to be made.

The goal is to find a resolution of the graded differential algebra $\CE_S(M)$. The difficulty
 is that one has to bring off-shell not only the observables, but also the symmetries. This is a little bit problematic from a conceptual point of view, since in a general case the physical symmetries $\smp(M)$ are an intrinsic property of the solution subspace $\E_S(M)$. Their extension to $\E(M)$ is in principle an extension problem for vector fields defined on a submanifold and need not be possible in general. Nevertheless in a wide class of examples one can find a 
% vector bundle $\B(M)$  over $\E(M)$ with fiber $\fk(M)$ such that there exists a bundle morphism $\B_S(M)\hookrightarrow\B(M)$ covering the natural embedding $\E_S(M)\hookrightarrow\E(M)$. Clearly, even if such a bundle exists, the choice of $\B(M)$ is not unique. 
subset $\sr(M)\subset\sm(M)$ such that we can write each element $X\in\sm(M)$ in a form: $X=Y+I$, where $Y\in\sr(M)$ and $I\in\sm_0(M)$ is a trivial symmetry. The space $\sr(M)$ is in general not a Lie subalgebra of   $\sm(M)$. We shall call it the \textit{reduced space of symmetries}\index{symmetries!reduced}. This leads to a classification usually used in physics:
\begin{itemize}
\item The space of reduced symmetries $\sr(M)$ is a subalgebra of $\sm(M)$. In physics terminology this situation corresponds to a \textit{\textbf{closed algebra}}.
\item The space of reduced symmetries $\sr(M)$ is \textit{not} a subalgebra of $\sm(M)$. We deal with a \textit{\textbf{open algebra}} situation.\index{open symmetry algebra}
\end{itemize} 
It was already stressed in \cite{Henneaux:1992ig} that this terminology doesn't really concern the algebra of symmetries $\sm(M)$, which is always a Lie algebra, but rather a specific parametrization of it. In the book \cite{Henneaux:1992ig} this parametrization is related to the choice of generating sets, whereas in our setting it corresponds to the choice of a split (as a vector space not as a Lie algebra) of the quotient $\smp(M)$. Although $\sr(M)$ is not a Lie-subalgebra of $\sm(M)$, we can still define a graded algebra $\CE(M)\doteq\bigwedge_{\sst\F(M)}\sr(M)^*$ and a derivation $\gamma$ on it is given by (\ref{gaM}). In general it is not a differential on $\CE(M)$, since it is not nilpotent of degree 2, but becomes a differential on-shell. In homological algebra this notion has a precise meaning as:
\begin{defn}
Let $\delta \in \Der(A)$ be a differential of degree 1 on the graded algebra $A$. We consider its homology $H^*(\delta)$. Let $d$ be a derivation also with degree $1$, satisfying:
$d\delta + \delta d = 0$ and $d^2$ is $\delta$-exact, i.e., there is a derivation $D$ such that
$d^2 =[D,\delta]$. Then $d$ induces a differential (which we still call $d$) on $H^*(\delta)$. We denote the cohomology of $d$ on $H^*(\delta)$ by $H^*(d|H_*(\delta))$ and call $d$ a \textit{differential modulo $\delta$}.
\end{defn}
In case of an ``open algebra'' $\gamma$ can be defined as a differential modulo $\delta$ and the main theorem of homological perturbation theory can be again applied to conclude, that $\F_\inv^S(M)=H^0(\gamma|H_0(\delta))$.
%--------------------------------------------------------------------------------------------------------
\subsection{Appendix: Interpretation in terms of category theory}\label{cate}%--------------------------------------------------------------------------------------------------------
In this section we want to provide a motivation for the definition of the Chevalley-Eilenber complex given in \ref{CheE}. First we introduce some definitions from category theory \cite{MacLane}.
\begin{defn}
A Lie algebroid $\fa$ is a triple $(E, [\cdot,\cdot], \rho)$ consisting of a vector bundle $E$ over a manifold $\Mcal$, together with a Lie bracket $[\cdot,\cdot]$ on its module of sections $\Ga(E)$ and a morphism of vector bundles $\rho: E\rightarrow T\Mcal$ called the \textit{anchor}. The anchor and the bracket are to satisfy the Leibniz rule:
\[
    [X,fY]=\rho(X)f\cdot Y + f[X,Y]
\]
where $X,Y \in \Gamma(E), f\in C^\infty(\Mcal)$ and $\rho(X)f$ is the derivative of $f$ along the vector field $\rho(X)$. It follows that:
\[
    \rho([X,Y]) = [\rho(X),\rho(Y)]
\]
for all $X,Y \in \Gamma(E)$.
\end{defn}
The corresponding Chevalley-Eilenberg algebra is defined as $\Big(\bigwedge_{\Ci(\Mcal)}\Gamma(E)^*,\gamma\Big)$, where the tensor products and the dualization are over the ring $\Ci(\Mcal)$  and the differential $\gamma$ is given by:
\begin{align*}
(\gamma\omega)(\xi_0,\ldots, \xi_q)&\doteq\sum\limits_{i=0}^q(-1)^i\rho(\xi_i)(\omega(\xi_0,...,\hat{\xi}_i,...,\xi_q))+\\
&+\sum\limits_{i<j}(-1)^{i+j}\omega\left([\xi_i,\xi_j],...,\hat{\xi}_i,...,\hat{\xi}_j,...,\xi_q\right)\,.
\end{align*}
An ``integrated'' version of Lie algebroids are Lie groupoids. They play a role similar as the Lie groups play to Lie algebras.
\begin{defn}
A \textit{groupoid} is a small category in which every morphism is an isomorphism.
\end{defn}
In other words a groupoid $G\rightrightarrows B$ consists of a set $B$ of objects (usually called the base), and a set $G$ of morphisms, usually called the arrows. Each arrow has an associated source object and an associated target object. This means that there are two maps
$s, t : G  \rightrightarrows B$ called the source and target, respectively. 
A Lie groupoid is a groupoid where the set $B$ of objects and the set $G$ of morphisms are both manifolds, the source and target operations are submersions, and all the category operations are smooth.
More explicitly a Lie groupoid is given by a following data:
\begin{itemize}
\item two smooth manifolds $G$ (arrows) and B (objects), 
\item two smooth maps (source and target): $s, t : G  \rightrightarrows B$,
\item a smooth embedding $\iota : B\rightarrow  G$ (the identities or constant arrows),
\item a smooth involution $I : G \rightarrow G$, also denoted $x \mapsto x^{\minus}$,
\item a multiplication $m:G_2 \rightarrow G$, $(x,y) \mapsto x\cdot y$,
where $G_2 =G_s \times_t G=\{(x,y)\in G\times G|s(x)=t(y)\}$, such that the source map and target map are surjective submersions (hence $G_2$ is a smooth manifold), the multiplication is smooth and
\begin{itemize}
\item $s(x\cdot y)=s(y)$, $t(x\cdot y)=t(x)$, 
\item $x\cdot(y\cdot z)=(x\cdot y)\cdot z$, 
%\item $\iota$ is a section of both $s and $t$,
\item $\iota(t(x))\cdot x=x\cdot\iota(s(x))$,
\item $s(x^{\minus}) = t(x)$, $t(x^{\minus}) = s(x)$,
\item $x \cdot x^{\minus} = \iota(t(x))$, $x^{\minus}\cdot x = \iota(s(x))$,
\end{itemize}
whenever $(x, y)$ and $(y, z)$ are in $G_2$.
\end{itemize}
In particular with a left action of a Lie group $G$ on a manifold $\Mcal$ one can associate a groupoid called \textit{the action groupoid} in a following way:
\begin{itemize}
\item $G\times \Mcal$ is the set of morphisms,
\item an object is an element of $\Mcal$,
\item a morphism from $x\in \Mcal$ to $y\in \Mcal$ is a group element $g\in G$ with $gx=y$. A general morphism is a pair $(g,x):x\rightarrow gx$.
\item The multiplication of $(g,x):x\rightarrow gx=y$ and $(h,y):x\rightarrow hy$ is $(hg,x):x\rightarrow hgx$.
\item $s(g, x)=g^{\minus}x$,
\item $t(g, x)=x$,
\item $I(g,x)=(g^{\minus},g^{\minus}x)$,
\item $\iota(x)=(e,x)$, where $e$ is the unit of $G$.
\end{itemize}
As we mentioned before one can associate to a Lie groupoid a Lie algebroid in a canonical way. In case of the action groupoid we obtain the action algebroid. Explicitely it is given by the Lie algebra $\frakg$ of $G$ acting on the manifold $\Mcal$. This induces a morphism of Lie algebras
$\frakg \rightarrow \Gamma(T\Mcal)$, $\xi\mapsto X_{\xi}$. Consider now $\fa = \frakg \times \Mcal$ a trivial vector bundle over $\Mcal$. We identify sections of $\fa$ with maps $\Mcal \rightarrow \frakg$ and define a bracket on sections by:
\[
[a, b](x) = [a(x), b(x)] + \pa_{X_{b(x)}}a  - \pa_{X_{a(x)}}b\,.
\]
The anchor map $\rho:\fa\rightarrow T\Mcal$ is defined by:
\[
(\xi,x) \mapsto X_{\xi}(x)\,.
\]
Coming back to the BV formalism we see that the action of a Lie group on the configuration space $\E(M)$ (considered as a trivial manifold) in a natural way provides us a Lie algebra $\Ci(\E(M),\frakg)$ and a morphism $\rho$. We can identify $\Ci(\E(M),\frakg)$ with the reduced symmetry algebra $\sr(M)$ and the Chevalley-Eilenberg complex of the Lie algebroid $(\sr(M),[.,.],\rho)$ is exactly the complex we defined in section \ref{CheE}.
%---------------------------------------------------------------------------------------------------------
\section{Yang-Mills theories}\label{YM}
%---------------------------------------------------------------------------------------------------------
In the previous section,  we discussed the general structure of the BV construction. Now we turn to a particular example, where it can be applied, namely to Yang-Mills theory. In this case one has a simple characterization of symmetries as being induced by the action of a Lie group on the configuration space. We start this section with some geometrical preliminaries concerning the structure of gauge theories. Next we construct the Batalin-Vilkovisky complex and perform the gauge fixing. We end this section with providing a construction of the Poisson algebra of classical Yang-Mills theories. 
%---------------------------------------------------------------------------------------------------------
\subsection{Geometrical preliminaries}\label{geom}
%---------------------------------------------------------------------------------------------------------
To put gauge theories into the category theory setting we recall first some basic definitions concerning principal bundles (see for example \cite{Kob,Gan,Michor2,Nak}). First, we introduce a category of fibered manifolds $\FM$. A fibered manifold is defined as:
\begin{defn}
A triple $(N, \pi, M )$, where $\pi : N \rightarrow M$ is a surjective submersion, is called a \textit{fibered manifold}\index{fibered manifold}. $N$ is called the total space, $M$ is called the 
base.
\end{defn}
In this definition one point needs few words of comment. A mapping $f : N \rightarrow M$ between manifolds is called a \textit{submersion}\index{submersion} at $x\in  N$, if the rank of $T_x f : T_x N \rightarrow T_{f (x)} M$ equals dim $M$ . The mapping $f$ is 
said to be a submersion, if it is a submersion at each $x\in N$. 

To have a category, beside objects, we still need morphisms. In case of fibered manifolds they are defined in quite a straightforward way. Given another fibered manifold
$(N',\pi',M')$, a morphism $(N,\pi,M)\rightarrow (N',\pi',M')$ means a smooth map $f:N\rightarrow N'$ transforming each fiber $N_x\doteq \pi^{-1}(x)$ of $N$ into a fiber of $N'$. The relation $f(N_x) \subset N'_{x'}$ defines a map $\underline{f} : M\rightarrow M'$, which is characterized by the property: 
$\pi' \circ f = \underline{f} \circ \pi$. We say that $f$ \textit{covers} $\underline{f}$. Using this definitions one can introduce a following category:
\begin{itemize}
\item[$\FM$] 
\begin{itemize}
 \item[$\ $] $\obj(\FM)$: fibered manifolds over $\obj(\Loc)$\index{category! of fibered manifolds}
  \item[$\ $] {\bf Morphisms}: fibered manifolds' morphisms that cover those of $\Loc$
\end{itemize}
\end{itemize}
There is an important functor from $\FM$ into the category of spacetimes $\Loc$. It generalizes the notion of a base space to the level of categories.
\begin{defn}
A \textit{base functor}\index{base functor} is a functor $\Ba:\FM\rightarrow  \Loc$ which assigns to every fibered manifold $(N, \pi, M )$  its base $M$ and to every fibered manifold morphism $f : (N, \pi, M ) \rightarrow ( N',\pi',M')$ the induced map $\underline{f} : M\rightarrow M'$.
\end{defn}
A special case of a fibered manifold is a principal bundle. This structure is of particular interest in the present work, since we want to consider the example of Yang-Mills theories. In the rest of this mathematical warm-up we want to revise some basic facts concerning principal bundles, but we will do it already in the more abstract language. At the same time we want to show the relation to more standard definitions used in the physics literature. One can consider this section as a dictionary, which makes it easier to jump from very abstract concepts to practical calculations. We start with recalling the definition of the $G$-bundle structure. This is the first step on the way to define principal bundles. Let $G$ be a fixed finite dimensional Lie group and $g$ it's Lie algebra. 
\begin{defn}[\cite{Michor2}, 10.1.]\label{10.1.}
Let $(N, \pi, M)$ be a fiber bundle with a standard fiber $S$. A $G$-bundle structure on the fiber bundle consists of the following data:
\begin{enumerate}
\item A left action: $\ell:G\times S\rightarrow S$ of the Lie group on the standard fiber. 
\item A fiber bundle atlas $(U_\alpha,\psi_\alpha)$ whose transition functions $\psi_{\alpha\beta}$ act on $S$ via the $G$-action: There is a family of smooth mappings $(\varphi_{\alpha\beta}: U_{\alpha\beta}\rightarrow G)$ which satisfies the cocycle condition $\varphi_{\alpha\beta}(x)\varphi_{\beta\gamma}(x) = \varphi_{\alpha\gamma}(x)$ for $x\in U_{\alpha\beta\gamma}$ and $\varphi_{\alpha\alpha}(x) = e$, the unit in the group, such that $\psi_{\alpha\beta}(x,s)=\ell(\varphi_{\alpha\beta}(x), s) =\varphi_{\alpha\beta}(x).s$.
A fiber bundle with a G-bundle structure is called a G-bundle.
\end{enumerate}
\end{defn}
To summarize, a $G$-bundle is a fiber bundle equipped with a left action of the Lie group $G$ on the standard fiber. This action has to be of course compatible with the bundle structure. A special case of a $G$-bundle is a fibered manifold, where the group $G$ itself is the standard fiber. This structure is called a \textit{principal bundle}.
\begin{defn}[\cite{Michor2}, 10.2.]\label{10.2.}
A	principal	(fiber)\index{principal!bundle}	bundle	$P(G,M,\pi)$ over the base $M$	is	a	$G$-bundle	with typical fiber a Lie group $G$, where the left action of $G$ on $G$ is just the left translation.
\end{defn}
Group $G$ is called \textit{the structure group}\index{structure group}. Each principal bundle admits a unique right action\index{principal!right action} $r : P \times G \rightarrow P$ , called the \textit{principal right action}, given by
\[
\ph (r(\ph^{-1}(x, a), g)) = (x, ag)\,.
\]
In order to define a category of principal bindles we have to specify the morphisms. 
Let $P(G,M,\pi)$ and  $P'(G,M',\pi')$ be principal bundles over manifolds $M$ and $M'$ respectively with $\pi$ and $\pi'$ being projections to the base. A pair $(\chi, \underline{\chi})$, $\chi: P\rightarrow P'$, $\underline{\chi}:M\rightarrow M'$ is called a bundle mapping of principal bundles $P(G,M,\pi)$  and $P'(G,M',\pi)$  if $\chi$, $\underline{\chi}$ are smooth, $\chi(p\cdot g)=\chi(p)\cdot g$ holds for all $p\in P$, $g\in G$ (equivariance) and we have a following diagram:
 \begin{equation*}
\begin{CD}
P(G,M) @>\chi>>P(G',M')\\ 
@V{\pi}VV     @VV{\pi'}V\\
M@>{\underline{\chi}}>>M'
\end{CD}
\end{equation*}
Given $\chi$, the map $\underline{\chi}$ is uniquely determined by the requirement, that the above diagram commutes. If this is an identity map, $\chi$ is a bundle isomorphism. We define the following category:
\begin{itemize}
\item[$\PB(G)\!\!\!\!\!\!$] 
\begin{itemize}
 \item[$\ $] $\obj(\PB(G))$: principal $G$-bundles over manifods $M\in\Loc$.\index{category!of principal bundles}
 \item[$\ $] {\bf Morphisms}: bundle mappings $\chi$
\end{itemize}
\end{itemize}
With those definitions we can now describe quantities that are of importance in gauge theories\index{gauge!theories}, namely the \textit{\textbf{connections}}. We start with a more general geometrical definition and at the end we make contact with the more practical approach, used in the physics literature.

Consider the tangent bundle $TP$ of the principal bundle $P$. One defines the vertical bundle $V$ that consists of all vectors which are tangent to the fibers as $V=\ke( d\pi)$, where $d\pi:TP\rightarrow \pi^*TM$ and  $\pi^*TM$ is the pullback bundle. An \textbf{\textit{Ehresmann connection}}\index{connection} on E is a smooth subbundle $H$ of $TP$, called the horizontal bundle of the connection, which is complementary to $V$, i.e. $TP = H\oplus V$. It is a \textbf{\textit{principal connection}}\index{connection!principal} if it is additionaly $G$-equivariant in the sense that $dr_g(H)=H$, where $g\in G$ and $r$ is the principal right action.

A connection allows us to decompose each vector $X\in T_pP,\ p\in P$ into horizontal and vertical part: $X=X^H+X^V$. Now let $\phi\in\Omega^m(P)\otimes V$ be a vector-valued $m$-form on $P$ ($V$ is a $k$-dimensional vector space with basis $\{e_a\}$). A connection provides the \textbf{\textit{covariant derivative}}, defined as:
\begin{equation*}
D\phi(X_1,\ldots, X_{m+1})\doteq d_P\phi(X_1^H,\ldots, X^H_{m+1})\,,
\end{equation*}
where $d_P\phi\doteq d_P\phi^a\otimes e_a$, $\phi=\sum\limits_{a=1}^k\phi\otimes e_a$ and $d_P$  is the exterior derivative.

Connections on principal bundles are uniquely defined by the so called  \textbf{\textit{connection 1-forms}}\index{connection!1-form}. This notion is especially important in the physics context.
\begin{defn} A (gauge) connection\index{gauge!connection} 1-form on $P(G,M,\pi)$ is a $g$-valued form $\alpha\in \Omega^1(P,g)$ such that:
\begin{enumerate}
\item $\alpha(Z_{\xi})=\xi$, for $Z_{\xi}$ the fundamental vector field generated by $\xi\in g$,
\item $R^*_g\alpha=\ad_{g^{-1}}\alpha$, where $R_g$ is the right action of $G$ on itself, $\ad$ is the adjoint representation of $G$ on $g$.
\end{enumerate}
\end{defn}
For the completeness we recall that the fundamental vector field\index{fundamental vector field} $Z_{\xi}\in\Gamma(TP)$ is defined by the relation $Z_\xi(x)\doteq \frac{d}{d t}(e^{t\xi}.x)\big|_{t=0}= T_e(l^x).\xi$, where $l^x$ is the map $G\rightarrow P$ given by $l^x(a)=l(a,x)=a.x$, $a\in G$.

%We follow the approach of \cite{Michor2}. Let $J^r_x(M,N)$ denote the set of all $r$-jets from manifold $M$ to $N$ with source $x\in M$ and target $y\in N$. Let $\FM$ denote the category of fibered manifolds of dmension $m$. One can define a following functor:
%\begin{defn}
%The functor $\euQ : \PB_m \rightarrow \FM$ associates with each principal 
%fiber bundle $P(M, G,p)$ the fiber bundle $\euQ P$ over the base $M$ with standard 
%fiber $J^1_0(\mathbb{R}^m,G)_e$. The smooth sections of $\euQ P$ are in bijection with the principle connections on $P$. 
%\end{defn}
%This construction is a special case of a more general construction of \textbf{gauge natural bundle} (see \cite{Michor2}).
%\begin{defn}
%A \textbf{gauge natural} bundle over $m$-dimensional manifolds is a functor 
%$\euF: \PB_m(G) \rightarrow \mathfrak{FM}$ such that:
%\begin{enumerate} 
%\item every $\PB_m(G)$-object $\pi : P \rightarrow \Ba P$ ( $\Ba$ is the base functor) is transformed into a fibered manifold: $q_P : \euF P \rightarrow \Ba P$ over $\Ba P$ , 
%\item every $\PB_m(G)$-morphism $f : P \overline{P}$ is transformed into a fibered morphism $\euF f : \euF P\rightarrow \euF  \overline{P}$ over $\Ba f$, 
%\item for every open subset $U \subset \Ba P$ , the inclusion $i : \pi^{-1} (U)\rightarrow P$ is transformed into the inclusion $\euF i : q^{-1} (U ) \rightarrow \euF P$.
%\end{enumerate}
%\end{defn}
%%wyjaśnić oznczenia
Now we make contact with the usual physics formulation by writing the object defined above in  local coordinates. Let $\{U_i\}_{i\in I}$ be an open covering of $M$. Using local sections $\sigma_i$ we can define a pullback of a connection 1-form $\alpha$ on each open set $U_i$:
\begin{equation*}
\euA_i\equiv\sigma_i^*\alpha\in\Omega^1(U_i,g)\,.
\end{equation*}
When $P(M,G)$ is the trivial bundle, then we can define this pullback globally. Otherwise it exists only locally. Passing between two different trivializations results in identities between corresponding $\euA_i$ and $\euA_j$, namely:
\be\label{gaug}
\euA_i=\ad_{g_{ij}}\circ \euA_j+g_{ji}^*\theta\,,
\ee
where $g_{ij}:U_i\cap U_j=:U_{ij}\rightarrow G$ are the transition functions and $\theta$ is the Maurer-Cartan form. Given a family of such 1-forms $\euA_i\in\Omega^1(U_i,g)$ satisfying condition (\ref{gaug}) on overlaps $U_{ij}$ one can construct a globally defined $\alpha\in\Omega^1(P,g)$.

Another important structure in gauge theories is the \textbf{\textit{field strenth}}. It arises as a local expression for the curvature 2-form of the gauge connection.
\begin{defn}
\textit{The curvature 2-form}\index{curvature 2-form} $\Omega$ is the covariant derivative of the connection 1-form $\alpha$:
\begin{equation*}
\Omega\doteq D\alpha\,.
\end{equation*}
\end{defn}
Given a local trivialization $\{(U_i\sigma_i)\}_{i\in I}$ one defines the pullback:
\begin{equation*}
\euF_i\doteq\sigma^*\Omega\,.
\end{equation*}
This is called a \textit{field strength}\index{field strength}. From Cartan's structure equations follows that it is related to the local form $\euA_i$ by
\begin{equation*}
\euF_i=d\euA_i+\euA_i\wedge \euA_i\,,
\end{equation*} 
where $d$ is the exterior derivative on $M$. After presenting the local description of the connection 1-form we would like now to describe the structure of the space of all such forms. It turns out that it is an affine space. Before we get into details, we recall one more definition important in the gauge theory, namely the \textit{\textbf{associated bundle}}\index{associated bundle}.
\begin{defn}
Let $P(G,M,\pi)$ be a principal bundle and let $\ell:G\times S\rightarrow S$ be a left action of $G$ on a manifold $S$. We can construct an associated bundle $(P\times_G S)\rightarrow M$, denoted by $P[S,\ell]$ as the quotient  $P\times_G S:= (P \times S) /G$ by the action $(p,v)g =(pg,l_{g^{\minus}}v)$.
\end{defn}
%There is a useful characterization of the space of sections of an associated bundle.
%Let $\mathcal{C}^{\infty}(P,S,\ell)^G$ denote the space of all smooth mappings $f:P\rightarrow S$ which are $G$-equivariant, i.e.: $f(p\cdot g)=\ell(g^{-1},f(p))$, $g\in G$, $p\in P$. One has a following theorem (see \cite{Michor}):
%\begin{thm}[\cite{Michor}, 37.16]\label{isom}
%The sections of the associated bundle $P[S,\ell]$ correspond exactly to the $G$-equivariant mappings $P\rightarrow S$. We have a bijection (in general not an isomorphism of smooth structures): $\mathcal{C}^{\infty}(P,S,\ell)^G\cong\mathcal{C}^{\infty}(M\leftarrow P[S,\ell])$
%\end{thm}
%\begin{proof}
%If $f \in \mathcal{C}^{\infty}(P, S )^G$ then the maping $(\textrm{Id},f):P\rightarrow P\times S$ is also $G$ equivariant and it induces a map $s_f$:
%\begin{equation*}
%\begin{CD}
%P\times\{\textrm{point}\}\cong@.P @>(\textrm{Id},f)>>P\times S\\ 
%				     @.     @V{p}VV     @VV{q}V\\
%				     @.M @>{s_f}>>P[S]
%\end{CD}
%\end{equation*}
%Conversely if $s\in \mathcal{C}^{\infty}(M \leftarrow P [S])$ one defines $f_s\in \mathcal{C}^{\infty} (P, S)^G$ by $f_s \doteq \tau^S\circ (\textrm{Id}_P\times_M s) : P = P \times_M M \rightarrow P \times_M P [S] \rightarrow S$. A smooth mapping $\tau^S$ for an associated bundle $P[S,\ell]$ is defined by:
%\begin{equation*}
%\begin{array}{rll}
%\tau^S &:& P\times_M P [S,\ell] \rightarrow S\,,\\
%\tau^S(u_x , v_x )& :=& q^{-1}_{u_x}(v_x )\,,
%\end{array}
%\end{equation*}
%where $q_u: \{u\} \times S \rightarrow (P \times_G S )_{p(u)}$ is  a  diffeomorphism for each $u \in P$.
%\end{proof}
In particular we can choose $S$ to be $g$ and the action $\ell$ to be the adjoined action $\ad$ of $G$ on $g$. We call the resulting associated bundle $P[S,\ell]=:\adP$, the \textit{adjoint bundle}. 

We can now come back to our discussion of the space of all the affine connections. Since it is not a vector space, but an affine space, to get a vector space structure one has to fix a reference connection $\alpha_0$. Now we can characterize all the gauge connections using the following result: Let $\alpha$ be another connection 1-form on a principal bundle $P$, then the difference  $A=\alpha-\alpha_0$ is a $G$-equivariant g-valued 1-form on $P$ which is horizontal in the sense that it vanishes on any section of the vertical bundle $V$ of $P$. Hence it is determined by a 1-form on $M$ with values in the adjoint bundle. Conversely, any element of $\Omega^1(M,\adP)$, defines a $G$-equivariant horizontal 1-form on $P$ and the space of principal connections is an affine space for this space of 1-forms. We call elements of $\Omega^1(M,\adP)$ \textit{\textbf{gauge potentials}}\index{gauge!potential}.
Locally $A$ is given by a family of $g$-valued 1-forms $A_i\in \Omega^1(U_i,g)$, called  \textit{\textbf{ local gauge potentials}}. Those are the objects familiar in physics. The field strength corresponding to $\alpha$ can be locally written as:
\[
\euF_i=\euF_{0,i}+dA_i+[A_i,A_i]\,,
\]
We identify the configuration space of the gauge theory with $\Omega^1(M,\adP)$. It is justified, since our setting is perturbational, i.e. we consider as dynamical variables only the perturbation around the given background connection. Therefore we can consider as a physical configuration space the vector space underlying the affine space of connections, instead of the affine space itself.
%To see it more explicitly we can take a look back at the classical theory of the scalar field formulated in section \ref{scal}. In equation (\ref{peierls1}) the Poisson bracket between two functionals $\{F,G\}_S(\ph)=-\left<F^{(1)}(\ph),\Delta(\ph)*G^{(1)}(\ph)\right>$ is defined by means of $\Delta(\ph)=\Delta_R(\ph)-\Delta_A(\ph)$, where the retarded and advanced solutions are the inverses of $S''_M(\ph)$. Note that there was no need to write the dependance on $\ph$ explicitly since in the scalar case $S''_M=\Box+m^2$ doesn't depend on the background field. Nevertheless in general we can admit such a case. Now $S''_M(\ph)$ is defined by means of the second derivative $L_M(f)^{(2)}(\ph)$, which is a 2-form, i.e. a linear map on the tangent space $(T_\ph\E(M))^{\otimes2}$ (see definition \ref{}). The causal propagator is also an operator acting on the tangent space so the dynamical structure is defined on $T\E(M)$ rather than $\E(M)$ itself. In case of the scalar field this is all rather trivial, since $\E(M)$ is a vector space anyway. We can consider however a more general situation where $\E(M)$ is an affine space or a manifold. 

We want to make this assignment functorial. There is however a problem, since for a given spacetime $M$ with a nontrivial topology one can construct different principal bundles $P$. There are two ways to overcome the problem. Either we assume that the bundle is always trivial, or we can replace the category $\Loc$ with the category of principal bundles $\PB$ as the underlying structure for the locally covariant field theory. The second option is justified if there are some physical effects that depend on the topology and distinguish between different topologies. The Dirac monopole can be considered as such an effect. Nevertheless, for now we settle for the first option, namely we assume the bundle to be trivial. This provides us a functor from $\Loc$ to $\PB$ and composition of this functor with the functor of sections going from $\PB$ to $\Vect$ results in a contravariant functor $\E:\Loc\rightarrow \Vect$. It assigns to a spacetimes $M$ the configuration space $\E(M)=\Omega^1(P,g)^G\cong \Omega^1(M,g)$. 
Now let $\chi:M\rightarrow N$ be a causal isometric embedding. We can define a morphism from $\E(N)$ to $\E(M)$ in a natural way by setting: $\E\chi(\omega\otimes a):=\chi^*\omega\otimes a$, where $\omega\in\Omega^1(M)$, $a\in g$ and the pullback of a differential form is defined as:
 $(\chi^*\omega)_x:=\omega_{\chi(x)}\circ d_x\chi$. In this way $\E$ becomes a contravariant functor between the categories $\Loc$ and $\Vect$. One can also define a covariant functor $\E_c$ by assigning to a spacetime the space of compactly supported $g$-valued forms $\Omega_c^1(M,g)$. In this case ${\E_c}\chi$ maps forms to their push-forwards.

Now we introduce the generalized Lagrangian\index{generalized Lagrangian! of Yang-Mills theory}
 \be\label{LagrYM}
 L_M(f)=-\frac{1}{2}\int_M f\,\tr(F \wedge * F)\,,
 \ee
 where $F=dA+[A,A]$ is the field strength corresponding to the gauge potential $A$ and $*$ is the Hodge operator. One can check that for this action the linearized equation of motion might possess nontrivial compactly supported solutions. Therefore from the criterion (\ref{sym}) follows that $S(L)$ has nontrivial local symmetries. Actually these symmetries can be easily characterized. They arise as a consequence of the group action of the so called gauge group\index{gauge!group}. This is an infinite dimensional space consisting of vertical $G$-equivariant compactly supported diffeomorphisms of $P$:
 \[
 \Gcal:=\{\alpha\in \Diff_c(P)|\alpha(p\cdot g)=\alpha(p)\cdot g, \pi(\alpha(p))=\pi(p),\ \forall g\in G, p\in P\}\,.
 \]
This space can be also characterized by $\Gcal\cong\Gamma_c(M\leftarrow(P\times_G G))$. For a trivial bundle $P$ this is just $\Gcal(M)\cong\Ci_c(M,G)$. It was shown (\cite{Neeb04,Gloe,Michor}, see also \cite{Neeb,Wock}) that $\Ci_c(M,G)$ can be equipped with a structure of an infinite dimensional Lie group modeled on its Lie algebra $\frakgc(M):=\Ci_c(M,g)$. The exponential mapping can be defined and it induces a local diffeomorphism at 0. While the gauge group is a subgroup of $\Diff(P)$, it has a natural action on $\Omega^1(P,g)^G$ by the pullback:
\[\rho_M(\alpha)A=(\alpha^{-1})^*A, \quad\alpha\in\mathcal{G},\quad A\in\Omega^1(P,g)^G\] 
The derived action of the Lie algebra $\frakgc$ on $\Omega^1(P,g)^G$ is therefore defined as:
\be\label{rho}
\rho_M(\xi)A\doteq\frac{d}{dt}\Big|_{t=0}\rho_M(\exp t\xi)A=\frac{d}{dt}\Big|_{t=0}(\exp(-t\xi))^*A=\pounds_{Z_{\xi}}A=d\xi+[A,\xi]\,,
\ee
where $Z_{\xi}$ is the fundamental vector field on $P$ associated to $\xi$. We see that $\rho_M$ induces a map from $\frakgc$ to the space of vector fields on $\E(M)$, which assigns to a gauge parameter $\xi\in\frakgc$ a vector field $\rho_M(\xi)$. This is indeed an element of $\V(M)$, since $\rho_M(\xi)$ associates to the field configuration $A\in\E(M)$ the compactly supported configuration $d\xi+[A,\xi]$, i.e.
\be\label{rhoM}
\partial_{\rho_M(\xi)}F(A)=\left<F^{(1)}(A),\rho(\xi)A\right>=\left<F^{(1)}(A),d\xi+[A,\xi]\right>\,.
\ee
 Clearly, $\rho$ is a natural transformation from $\frakgc$ to $\V$ as may be seen from the relation
\[
\rho_N(\chi_*\xi)A=\chi_*(\rho_M(\xi)(\chi^*A))
\]
for a causal embedding $\chi:M\to N$.

Moreover one can check that the full algebra of nontrivial local symmetries can be characterized by $\fG(M)\doteq\Ci_\ml(\E(M),\frakgc(M))$, the $\F$-module of smooth compactly supported multilocal\footnote{The notions of support and locality are defined as in \ref{vvf} with $\W(M)=\frakgc$, i.e. $\Xi^{(k)}(\ph)\in\Gamma'(M^{k+1},V^{\otimes k}\otimes g)$.} functions from $\E(M)$ to $\frakgc(M)$. The map from $\fG(M)$ to $\V(M)$, denoted by the same symbol as in ($\ref{rhoM}$), is defined as:
\be\label{rhoM2}
\partial_{\rho_M(\Xi)}F(A)=\left<F^{(1)}(A),\rho(\Xi(A))A\right>\,,
\ee
where $\Xi\in\fG(M)$. $\fG$ becomes a covariant functor after setting: $\fG\chi(\Xi):=\frakgc\chi\circ\Xi\circ\E\chi$. With this definition (\ref{rhoM2}) yields a natural transformation $\rho$ between $\fG$ and $\V$. The space $\fG$ completely characterizes the nontrivial local symmetries, since any symmetry may be obtained by a sum of a trivial one and one of the form $\rho_M(\Xi)$ with $\Xi\in\fG(M)$.
%--------------------------------------------------------------------------------------------------------
\subsection{Chevalley-Eilenberg complex} \label{ChEil}
%--------------------------------------------------------------------------------------------------------
We start this section with the construction of the off-shell Chevalley-Eilenberg cochain complex\index{Chevalley-Eilenberg!complex} $\CE(M)$.  The space of on-shell symmetries $\smp(M)$ from section \ref{CheE} is in this case identified with $\Ci_\ml(\E_S(M),\frakg_c)$. Since the symmetries arise from an action of an abstract Lie group, they are well defined also off-shell and we can construct an off-shell version of the Chevalley-Eilenberg complex by means of the reduced symmetry algebra $\sr(M)=\Ci_\ml(\E(M),\frakg_c)$. We recall that the algebraic Chevalley-Eilenberg complex\index{Chevalley-Eilenberg!complex!algebraic} is given by $\left(\bigwedge_{\sst\F(M)}\sr^*(M),\gamma\right)$. The topological one can be defined by restricting the underlying algebra to
\[
\CE(M)\doteq\Ci_\ml\left(\E(M),\bigwedge\frakg'(M)\right)\,,
\]
equipped with the pointwise convergence topology. 
%Note that the definition of $\sr^*(M)$ implies that we had to take only \textit{the compactly supported} linear functionals on $\sr(M)$. 
%From this follows the restriction to mappings that have image in compactly supported distributions $\frakg'(M)$. Moreover we required them to be multilocal in the sense of \ref{vvf}. 
%where the differential $\gamma$ is defined on the dense subspace
%\begin{align*}
%\gamma:&\ \Lambda^q\frakg'(M)\otimes\F(M)\rightarrow\Lambda^{q+1}\frakg'(M)\otimes\F(M)\,,\nonumber\\
%(\gamma \omega)(\xi_0,\ldots, \xi_q)&\doteq\sum\limits_{i=0}^q(-1)^i\partial_{\rho_M(\xi_i)}(\omega(\xi_0,\ldots,\hat{\xi}_i,\ldots,\xi_q))+\nonumber\\
%&+\sum\limits_{i<j}(-1)^{i+j}\omega\left([\xi_i,\xi_j],\ldots,\hat{\xi}_i,\ldots,\hat{\xi}_j,\ldots,\xi_q\right)\,,
%\end{align*}
%and extended by continuity.

The assignment of $\frakg(M)$ to a manifold $M$ is a contravariant functor. It associates to a morphism $\chi:M\rightarrow N$ a map $\frakg\chi$ acting on functions as a pullback: $\frakg\chi(f\otimes a):=\chi^*f\otimes a$ for $f\in\Ci(N)$, $a\in g$.
From the naturality of $\rho$ and $[.,.]$ it follows that $\CE(M)$ becomes a covariant functor from $\Loc$ to $\dgA$ if we set $\CE\chi(\omega):=(\frakg'\chi)^k\circ\omega\circ\E\chi$, for $\omega\in \Ci(\E(M),\Lambda^{k}\frakg'(M))$. 
%The space of gauge invariant functionals\index{gauge!invariant functionals} is now recovered as: $\F^\inv(M)=H^0(\CE(M),\gamma)$. This is again a covariant functor.

To end this subsection we want to make a comment on the relation to the ``standard'' approach. 
In physics literature one often writes distributions in the integral notation (see the discussion in section \ref{distr}). This way the elements of 
$\CE(M)$ would be formally written as:
\[
F(\ph)\stackrel{\textrm{formal}}{=}\int \dvol (x^1)\ldots\ \dvol(x^n) f_{a_1...a_n}(\ph)(x_1,\ldots, x_n) C(x_1)^{a_1}\wedge...\wedge C(x_n)^{a_n}\,,
\]
where $f_{a_1...a_n}(\ph)$ is a compactly supported distribution and $C^a(x)\in  {\frakg}'$ are coefficients of the Maurer-Cartan form $C$ on $\Gcal(M)$. In physics one calls them the \textit{ghost fields}\index{ghosts}. They can be seen as formal ``generators'' of the algebra $\CE(M)$. In the present setting they appear naturally as elements of the Chevalley-Eilenberg complex.
%The grading of this complex is called the \textit{pure ghost number}. We denote it by $\#\pg$.
%--------------------------------------------------------------------------------------------------------
\subsection{Batalin-Vilkovisky complex}\label{KTcom}%--------------------------------------------------------------------------------------------------------
We have finally come to the main point of this section, namely to the construction of the Batalin-Vilkovisky  complex. 
%Our goal is the homological interpretation of the space of gauge-invariant functionals on-shell $\F_S^\inv(M)$. 
%In principle this amounts to extending the diagram (\ref{BVshort}) to a certain double complex. Since we already know that $\F^\inv(M)=H^0(\CE_S(M))$, it remains to find the Koszul resolution of the differential graded algebra $\CE_S(M)$.
First we want to present one more interpretation of its underlying algebra (\ref{BV00}). Just as in \ref{K} we extended the algebra of functionals with its local compactly supported derivations, we can extend the graded algebra  $\bigwedge_{\F(M)}\sr(M)^*$ with graded symmetric powers of its graded derivations. One can easily check, that
\[
\bigwedge_{\sst\F(M)}\sr(M)^*\Ftens S^\bullet_{\sst\F(M)}\,\sr(M)\underset{{\sst\F(M)}}{\otimes}\bigwedge_{\sst\F(M)} \V(M)\subset S^\bullet_{\sst\F(M)}\Der\Big(\bigwedge_{\sst\F(M)}\sr(M)^*\Big)\,.
\]
This means that elements of the algebraic BV complex\index{BV!complex} can be treated as graded tensor powers of derivations of the Chevalley-Eilenberg algebra. To introduce some topology into the structure, we restrict ourselves to those derivations which are smooth compactly supported multilocal maps. From now on we will use the following definition of the underlying topological algebra of the BV complex: 
\be\label{BV}
\BV(M)=\Ci_\ml(\E(M),\A(M))\,,
\ee
where we denoted $\A(M):=\Lambda\E_c(M)\widehat{\otimes}S^\bullet \frakg_c(M)\widehat{\otimes}\Lambda {\frakg}'(M)$\footnote{We choose the completion $\widehat{\otimes}$ of the tensor product to be the space of compactly supported distributions $\prod\limits_{\mathclap{\sst{\stackrel{n=0}{n=k+l+m}}}}^\infty \Gamma'(M^n,\Lambda^k V\otimes S^l g\otimes\Lambda^m g)$, smooth in the first $k+l$ arguments.}. The topology on $\BV(M)$ is again the topology of pointwise convergence.
 The space $\BV(M)$ contains in particular $\V(M)$ and $\fG(M)=\Ci_\ml(\E(M),\frakg_c(M))$.
A pair $(X,\xi)\in\V(M)\oplus\fG(M)$ acts on $\CE(M)$ in the following way:
\be \label{interior}
(\partial_{(X,\xi)}F)(\ph):=(\partial_XF)(\ph)+i_{\xi(\ph)} F(\ph)\,,
\ee
where $i_{\xi(\ph)}$ is the interior product, i.e. the insertion of $\xi(\ph)\in\frakg_c(M)$ into
$\La\frakg'(M)$. The action of a general derivation of the form (\ref{BV}) can be now defined by imposing the graded distributive rule and the Leibniz rule. 
%We extend the grading $\#\pg$ of $\CE(M)$ to a grading $\#\gh$ (called total ghost number) on $\BV(M)$ by 
%\[
%\#\gh=\#\pg-\#\af
%\]
%The antifield number $\#\af=1$ is assigned to the vector fields, the antifield number $\#\af=2$ to the elements of $\fG(M)$, whereas elements of $\CE(M)$ have $\#\af=0$.

The graded commutator $[.,.]$ on $\Der(\CE(M))$ and the evaluation of a derivation on an element of $\CE(M)$ can be extended to the Schouten bracket $\{.,.\}$ on  $\BV(M)$. Like in the scalar case, this structure is called \textit{the antibracket}\index{antibracket}.
\begin{rem}
 Analogously to section \ref{aa} we can formaly write the antibracket in the form (\ref{antibracketformal}). The only difference is that the field configurations can be of fermionic type, so we get additional sign rules:
\be\label{antibracketformal2}
\{X,Y\}=-\!\sum\limits_\alpha(-1)^{(1+|X|)|\ph_\al|}\!\int\! dx\!\left(\!\frac{\delta X}{\delta\ph_\al(x)}\frac{\delta Y}{\delta\ph_\al^\ddagger(x)}+(-1)^{|X|}\frac{\delta X}{\delta\ph_\al^\ddagger(x)}\frac{\delta Y}{\delta\ph_\al(x)}\!\right)\,,
\ee
where $\alpha$ runs through different field types (ghosts, physical fields, etc.), $|.|$ denotes the ghost grading $\#\gh$ and we used the fact that $\#\gh(\ph_\al^\ddagger)=-\#\gh(\ph_\al)-1$.
\end{rem}
Using the fact that we restricted ourselves to derivations with compact support, one can show that $\BV(M)$ is a covariant functor from $\Loc$ to $\Vect$. The Chevalley-Eilenberg derivation $\gamma$ itself is not an element of $\BV(M)$, since it is not compactly supported. Compare this with a similar situation encountered in Section \ref{K}, where we showed that $\delta_S$ is not an inner derivation with respect to the antibracket. Nevertheless, locally it can be written with the use of inner derivations. First note that $\gamma$ decomposes into the sum $\gamma=\gamma^{(0)}+\gamma^{(1)}$, where:
\begin{align}
(\gamma^{(0)}F)(\xi)&\doteq\partial_{\rho_M(\xi)}F,\quad\quad F\in\F(M),\ \xi\in\frakg(M)\,,\label{gamm}\\
(\gamma^{(1)}\omega)(\xi_1,\xi_2)&\doteq-\omega([\xi_1,\xi_2]),\quad\ \omega\in \frakg'(M),\ \xi_1,\xi_2\in\frakg(M)\nonumber
\end{align}
and $\gamma^{(0)}$, $\gamma^{(1)}$ are extended to the whole space $\CE(M)$ by applying the graded Leibniz rule. Although they are not inner with respect to $\{.,.\}$ we can consider a following family of mappings $\theta_M$ from $\D(M)$ to $\BV(M)$
\begin{align*}
(\theta_M^{(0)}(f)F)(\xi)&\doteq\partial_{f\rho_M(\xi)}F,\quad\quad F\in\F(M),\ \xi\in\frakg(M)\,,\\
(\theta_M^{(1)}(f)\omega)(\xi_1,\xi_2)&\doteq\omega(f[\xi_1,\xi_2]),\quad\ \omega\in\frakg'(M),\ \xi_1,\xi_2\in\frakg(M)\,,
\end{align*}
where $f\in\D(M)$ is a test function. It follows now that $\{\omega,\theta_M(f)\}=\gamma(\omega)$ if $\supp(\omega)\subset f^{-1}(1)$, $\omega\in\CE(M)$. The family of maps $\theta$ constructed in this way can be identified with a natural transformation between the functors $\D$ and $\BV$. One sees immediately the analogy with the generalized Lagrangians. 
Finally we obtain an equivalent definition of the BV-differential\index{BV!operator!classical}\index{BV!differential!classical}:
\be\label{s}
sF=\{F,L_M(f)+\theta_M(f)\}\,,
\ee
where $f\equiv1$ on $\supp\, F$, $F\in \BV(M)$. This differential can be expanded in the antifield number and it contains two terms: $s=s^{({\minus})}+s^{(0)}$. 
We have the following structure of a bicomplex\footnote{We omit the dependence on $M$, since all the maps are natural and can be written on the level of functors.} (compare with (\ref{BVshort}):
\[
\begin{CD}
\ldots@>s^{({\minus})}>>\La^2\V\oplus\fG @>s^{({\minus})}>>\V@>s^{({\minus})}>>\F@>s^{({\minus})}>>0\\ 
@.     @VV{s^{(0)}}V@VV{s^{(0)}}V@VV{s^{(0)}}V@.\\
\ldots@>s^{({\minus})}>>{\Ci_\ml\big(\E,(\La^2\E_c\oplus\frakg_c)\widehat{\otimes}\frakg'}\big)@>s^{({\minus})}>>{\Ci_\ml\big(\E,\E_c\widehat{\otimes}\frakg'\big)}@>s^{({\minus})}>>{\Ci_\ml\big(\E,\frakg'\big)}@>s^{(-1)}>>0
\end{CD}
\]
The first row of this bicomplex corresponds to the resolution of $\F_S(M)$. This can be easily seen, since  $s^{(-1)}$ on $\fG(M)$ is just $\rho_M$ and $\im(\rho_M)_{\fG(M)\rightarrow\V(M)}=\ke(\delta_S)_{\V(M)\rightarrow\F(M)}$. Moreover  $\F_0(M)=\im(\delta_S)_{\V(M)\rightarrow\F(M)}$.
Explicitly the first row of the bicomplex can be written as:
 \[
\ldots\rightarrow\La^2\V\oplus\fG\xrightarrow{\delta_S\oplus\rho}\V\xrightarrow{\delta_S}\F\rightarrow 0
\]
The 0-order homology of this complex is just $\F_S(M)$ and the higher homology groups are trivial. We can therefore recognize two terms in the decomposition of $s$ as:
\begin{enumerate}
\item $s^{({\minus})}$ is the Koszul-Tate differential\index{Koszul-Tate!differential} providing the resolution of $\CE_S(M)$, it contains the information on the off-shell terms.
\item $s^{(0)}$ on $\CE(M)$ is the off-shell
Chevalley-Eilenberg differential.
\end{enumerate}
Since all the steps were done in the covariant way, $\BV$ can be made into a functor from $\Loc$ to $\dgA$.   
The main theorem of homological perturbation theory says that the 0-cohomology of $s$ is given by:
\[
H^0(\BV(M),s)=H^0(\CE_S(M),s^{(0)})=\F^\inv_S(M)
\]
%---------------------------------------------------------------------------------------------------------
\subsection{Classical master equation}\label{CME}
%---------------------------------------------------------------------------------------------------------
A very important aspect of the BV-construction is the so called \textit{classical master equation}\index{master equation!classical}. It is extensively discussed for example in \cite{Henneaux:1992ig,BarHenn}, where it is used as a guiding principle to derive the right form of the extended action. The underlying idea is the observation that the BV operator can be written formally as an inner derivation with respect to the antibracket. 
This has to be understood differently in our setting, since we are working on non-compact manifold and typical configurations are not compactly supported. Therefore, as we pointed out before, the BV operator can only locally be written in terms of the antibracket. Nevertheless, the general reasoning is similar. We start with the action $S$ that induces locally the Koszul map\index{Koszul!map} $\delta_S$ through its Euler Lagrange derivative $S'_M$. Next, we include in our construction the Chevalley-Eilenberg map that can also be written locally as the antibracket with a derivation $\theta_M(.)$ (formula (\ref{s})) and we can interpret this as modifying the original Lagrangian $L$ by an additional term $\theta$. This way of looking at the BV differential turns out to be very fruitful, since it allows to obtain the gauge fixed action at the end of the construction (we discuss this in detail in section \ref{gaugefixing}). It also allows to systematically analyze possible deformations of the action and allowed couplings \cite{BarHenn}. In the standard treatment of BV formalism, the classical master equation (CME) is just the condition that the antibracket of the action with itself is equal to 0. This is in turn equivalent to the nilpotency of the BV operator. In the present setting this has to be understood in a different way since Lagrangians are no more fixed functionals but rather natural transformations. It was already shown in \cite{FR} that this can be done in a consistent way. We give a recap of that argument in the present section. 

First, recall that the generalized Lagrangians are natural transformations between the functors $\D$ and $\F_\loc$. In the BV construction we extended the space of functionals $\F(M)$ to the BV complex $\BV(M)$ so the space of Lagrangians should also be extended. Let $\BV_\loc(M)$ denote the linear subspace of  $\BV(M)$ consisting of local functions. We can extend the notion of a Lagrangian to a natural transformation between the functors $\D$ and $\BV_\loc$. Let $\Nat( \D,\BV_\loc )$ denote the set of natural transformations. To obtain a structure that will be closed under the ``pointwise'' product we make one more generalization. Let $\D^k$ be a functor from the category $\Loc$ to the product category ${\Vect}^k$, that assigns to a manifold $M$ a $k$-fold product of the test section spaces $\D(M)\times\ldots\times \D(M)$. Let $\Nat(\D^k,\BV_\loc)$ denote the set of natural transformations from $\D^k$ to $\BV_\loc$. We define extended Lagrangians $L\in Lgr$ to be elements of the space $\bigoplus_{k=0}^\infty \Nat(\D^k,\BV_\loc)$ satisfying: $\supp(L_M(f_1,...,f_n))\subseteq \supp f_1\cup...\cup\supp f_n$ and the additivity rule in each argument. We can introduce on $Lgr$ an equivalence relation similar to (\ref{equ}). We say that $L_1\sim L_2$, $L_1,L_2\in\Nat(\D^k,\BV_\loc)$ if:
\be\label{equ2}
\supp((L_1-L_2)_M(f_1,...,f_k))\subset \supp(df_1)\cup...\cup\supp(df_k),\qquad\forall f_1,...,f_k\in\D^k(M)
\ee
The natural transformation $L^{\ex}:=L+\theta$ is an example of a generalized Lagrangian in $Lgr$. We call the corresponding equivalence class $S^{\ex}$ the \textit{extended action}. As noted before $sF=\{F,L_M^\ex(f)\}$, for $f\equiv 1$ on the support of $F$, $F\in \BV(M)$.  To make a contact with the standard approach we write  $L^{\ex}$ formally as:
\begin{multline*}
L^{\ex}_M(f)\form-\frac{1}{2}\int\limits_M f\tr(F\wedge *F)+\\
+\int\limits_M\dvol f\big(dC+\frac{1}{2}[A,C]\big)^I_\mu(x)\frac{\delta}{\delta A^I_\mu(x)}+\frac{1}{2}\int\limits_M\dvol f [C,C]^I(x)\frac{\delta}{\delta C^I(x)}\,.
\end{multline*}
This is the standard extension of the Yang-Mills action in the BV-formalism. The antibracket can be lifted to a bracket on $Lgr$ by the definition:
\be\label{ntbracket}
\{L_1,L_2\}_M(f_1,...,f_{p+q})=\frac{1}{p!q!}\sum\limits_{\pi\in P_{p+q}}\{{L_1}_M(f_{\pi(1)},...,f_{\pi(p)}),{L_2}_M(f_{\pi(p+1)},...,f_{\pi(p+q)})\}\,,
\ee
where $P_{p+q}$ denotes the permutation group.
% The lifted antibracket is again graded antisymmetric and satisfies the graded Jacobi identity (\ref{Jacid}). It will be shown in section \ref{ntBV} that if we extend $Lgr$ to a graded algebra, (\ref{ntbracket}) satisfies also the Leibniz rule and is therefore a graded Poisson bracket.
% 
The classical master equation\index{master equation!classical} extended to the natural transformations (ECME) can now be formulated as:
\be\label{CME0}
\{L^{\ex},L^{\ex}\}\sim 0\,,
\ee 
with the equivalence relation defined in (\ref{equ2}). It guarantees the nilpotency of $s$ defined by $sF=\{F,L^{\ex}_M(f)\}$, where $f\equiv1$ on $\supp F$, $F\in \BV(M)$.
%@@@ Maybe add something on the possible deformations: differential s|d should correspond to the differential s on the equivalenc e classes of the Lagrangians.
%---------------------------------------------------------------------------------------------------------
\subsection{Gauge fixing}\label{gaugefixing}
%---------------------------------------------------------------------------------------------------------
The structure of the BV complex presented up to this point was done on a very general level, but still the main question remains not answered: how to implement the gauge fixing\index{gauge!fixing}? This is indeed the most important issue in BV formalism, since from the physical point of view, this is exactly the reason to invoke it.
In the present section we summarize the present understanding of the gauge fixing procedure in the BV setting. Those issues are not so often addressed in mathematical literature, since the gauge fixing procedure involves some amount of arbitrariness. In physics literature one should mention the book by M.~Henneaux and C.~Teitelboim \cite{Henneaux:1992ig} that covers also this issue, among many others. There is also a paper of G.~Barnich, M.~Henneaux, T.~Hurth and K.~Skenderis \cite{Barnich:1999cy} that adreses the problem of comparing the gauge-fixed and non-gauge-fixed cohomology of the BV complex.%@@@ add more citations. In particular the lagrangian submanifold thing.

We presented a more algebraic interpretation of the gauge fixing procedure that fits into the present formalism in \cite{FR}. Here we shall follow this approach. Recall that in the functional approach to classical field theory the dynamical structure is encoded in the Peierls bracket\index{Peierls bracket} \cite{Pei,Mar}. We recalled its construction in section \ref{scal}. Note that in order to construct the causal propagator we need first a normally hyperbolic system of equations of motion. This is of course not the case when the action has symmetries (see criterion \ref{sym0}), so to introduce the Peierls bracket we are forced to fix the gauge.
This can be done systematically with the help of the Batalin-Vilkovisky complex. 
 In the BV framework gauge fixing means eliminating the antifields by setting them equal to some functions of fields \cite{Batalin:1981jr,Henneaux:1989jq,Henneaux:1992ig,Froe}.
 
The gauge fixing is usually done in two steps. Firstly, one performs a transformation of  $\BV(M)$, that leaves the antibracket $\{.,.\}$ invariant\footnote{In literature \cite{Henneaux:1992ig} this is dubbed \textit{a canonical transformation} by analogy to the Hamiltionan formalism.}.
This way we obtain a new extended action $\tilde{S}^{\ex}$ and a new differential $\tilde{s}$. The cohomology classes of the new and old BV differential are isomorphic $H^0(\BV(M),s)\cong H^0(\BV(M),\tilde{s})$. In the second step we want to set the antifields to 0. This is done in a systematic way by introducing a new grading on $\BV(M)$, the so called \textit{total antifield number} $\#\ta$. It has value $0$ on fields and value $1$ on all antifields. We expand the differential $\tilde{s}$ with respect to this new grading: $\tilde{s}=\delta^g+\gamma^g+\ldots$ (for Yang-Mills theories this expansion has only two terms). From the nilpotency of $\tilde{s}$ it follows that $\delta^g$ is a differential and $\gamma^g$ is a differential modulo $\delta^g$. Moreover $\delta^g$ can be interpreted as the Koszul map corresponding to the so called ``gauge-fixed action'' $S^g$. We have to choose the canonical transformation of $\BV(M)$ in such a way that this extended action doesn't have nontrivial symmetries. In this case the Koszul map  $\delta^g$ provides a resolution and using the main theorem of homological perturbation theory \cite{Henneaux:1992ig,Barnich:1999cy} one can conclude that:
\be\label{gfixh}
H^0(\BV(M),\tilde{s})=H^0(H_0(\BV(M),\delta^g),\gamma^g)\,.
\ee
The r.h.s of (\ref{gfix}) is called the \textit{gauge-fixed cohomology}\index{gauge!-fixed cohomology}. It was discussed in details in \cite{Barnich:1999cy}. Now we shall describe this construction in details for the example of the Yang-Mills theory. We shall follow closely the presentation given in \cite{FR}.
%---------------------------------------------------------------------------------------------------------
\subsubsection{Nonminimal sector}
%---------------------------------------------------------------------------------------------------------
To implement the usual gauge fixing (for example the Lorenz one) we need first to introduce Lagrange multipliers. In the spirit of classical Lagrangian field theory, these are auxiliary, non-physical fields, that have to be eliminated at the end by performing a quotient of the field algebra. In the homological framework we have to introduce them in a way that would not change the cohomology classes of $s$.
In other words we replace the original BV complex by  $\Ci_\ml(\E(M),\Nm(M)\widehat{\otimes}\A(M))$, where $\Nm(M)$ is a certain graded algebra called \textit{\textbf{the nonminimal sector }}\index{nonminimal sector} \cite{Henneaux:1992ig}. The natural way to do it is to extend $\BV(M)$ by contractible pairs. 
\begin{defn}
Two elements $a$, $b$ of the cochain complex with a differential $d$ form a contractible pair if $a=db$ and $a\neq0$. 
\end{defn}
Let $a$ be of degree $n$ and $b$ of degree $n-1$. Since $H^n(d)=\ke(d_n)/\im(d_{n-1})$, $a$ and $b$ are mapped to the trivial elements  $[0]$ of the cohomology classes. 
This observation provides us with a method to add Lagrange multipliers to the BV-complex. For concreteness we take the Lorenz gauge $G(A)=\hinv d*\!A$, where $*$ denotes the Hodge dual and $G$ is a map from $\E(M)$ to $\frakg(M)$. The Lagrange multipliers are therefore elements of the dual $\frakg'(M)$. Hence we extend the BV-complex by tensoring with the space: $S^\bullet\frakg'(M)$ and its elements have grade $\#\gh=0$. Together with this space we introduce the space $\La\frakg'(M)=\bigoplus\limits_{k=0}^\infty\La^k\frakg'(M)$. These are the so called \textit{antighosts}\index{antighosts} and have $\#\gh=-k$. They form trivial pairs with Nakanishi-Lautrup fields\index{Nakanishi-Lautrup fields} if we define
\begin{align}
sF&=0,\qquad\qquad\quad\, F\in S^1\frakg'(M)\,,\label{nonm}\\
sG&=\Pi G\circ m_i,\qquad G\in\La^1\frakg'(M)\,,\nonumber
\end{align}
where $\Pi$ denotes the grade shift by $+1$ and $m_i$ the multiplication of the argument by $i$, i.e.
 $G\circ m_i(B)\doteq G(iB)$, for $B\in\frakg(M)$. The appearance of the imaginary unit $i$ is just a convention used in physics to make antighosts hermitian. Together with antighosts and Nakanishi-Lautrup fields we can introduce the corresponding antifields. Just like in the minimal sector those are the derivations of  $\Ci_\ml(\E(M),S^\bullet\frakg'(M)\widehat{\otimes}\La\frakg'(M))$, that can be written as smooth compactly supported multilocal maps on the configuration space. In other words the full nonminimal sector is of the form:
\[
\Nm(M)=\La \frakg' [-1]\widehat{\otimes}S^\bullet\frakg'(M)[0]\widehat{\otimes}S^\bullet{\frakgc}(M)[0]\widehat{\otimes}\La {\frakgc}[-1]\,,
\]
where we indicated the grades explicitly in brackets. The new BV-complex consists of compactly supported multilocal maps $\Ci_\ml(\E(M),\Nm(M)\widehat{\otimes}\A(M))$ with the BV-differential $s$ defined above.

Again it is worth to write $s$ as a locally inner derivation with respect to the antibracket. 
On the nonminimal sector $\{.,.\}$ is again just the Schouten bracket. We require $sF=\{F,L^\ex(f)\}$, where $f\equiv 1$ on $\supp F$, $F\in \BV(M)$. From equation (\ref{nonm}) we can read off that $L^\ex=L+\theta+L^{\textrm{nm}}$, where $L^{\textrm{nm}}$ is a natural transformation between functors $\D$ and $\BV$. Concretely $L^{\textrm{nm}}_M(f)$ is a graded derivation of the graded $\F(M)$-module $\Ci_\ml(\E(M),\frakg'(M)[0]\oplus\frakg'(M)[-1])$ defined as:
 \[
\partial_{L^{\textrm{nm}}_M(f)}(F\oplus G)(\ph)=G(\ph)\circ m_{if}\oplus 0\,,
 \]
 where $m_{if}$ denotes multiplication of the argument with a function $if$. To make contact with the standard approach we discuss now how the formal expressions appearing in the literature can be interpreted in our formulation. Let us denote by $B^I(x)$ the evaluation functional on the space of Lagrange multipliers ${\frakg}(M)$, i.e. $B^I(x)\in S^1\frakg'(M)$. In this sense elements of $S^\bullet\frakg'(M)$ can be seen formally as integrals 
 \[
 F\form \int dx_1...dx_n f_{a_1...a_n}(x_1,...,x_n) B^{a_1}(x_1)\otimes...\otimes B^{a_n}(x_n)\,.
 \]
  Similarly we can write $G\in \La\frakg'(M)$ as 
 \[
 G\form\int dx_1...dx_n g_{a_1...a_n}(x_1,...,x_n) \bar{C}^{a_1}(x_1)\wedge...\wedge  \bar{C}^{a_n}(x_n)\,,
 \] 
 with the evaluation functionals $\bar{C}^{a_n}(x_n)$. Moreover we identify $\frac{\delta}{\delta \bar{C}^I(x)}$ with the derivation that acts on $\La\frakg'(M)$ as the left derivative (see \cite{Rej} for the detailed discussion). In this notation we can write the extended Lagrangian $L^\ex=L+\theta+L^{\textrm{nm}}$ as:
\begin{multline}\label{Snm}
L^{\ex}_M(f)\form-\frac{1}{2}\int\limits_M f\tr(F\wedge *F)+\int\limits_M\dvol\, f\big(dC+\frac{1}{2}[A,C]\big)^I_\mu(x)\frac{\delta}{\delta A^I_\mu(x)}+\\
+\frac{1}{2}\int\limits_M\dvol\, f [C,C]^I(x)\frac{\delta}{\delta C^I(x)}-i\int\limits_M\dvol\, f B_I(x)\frac{\delta}{\delta \bar{C}_I(x)}\,.
\end{multline}
The last term corresponds to $L_M^{\textrm{nm}}(f)$.
Moreover the action of $s$ on the non-minimal  sector can be formally written as: $sB^I(x)=0$, $s\bar{C}^I(x)=iB^I(x)$. The expression (\ref{Snm}) is what is usually meant under the name \textit{extended action}.
 %---------------------------------------------------------------------------------------------------------
 \subsubsection{Gauge fixing for the Yang-Mills theory}\label{GFYM}
 %---------------------------------------------------------------------------------------------------------
 With the structure extended by the nonminimal sector we can now turn back to the gauge-fixing\index{gauge!fixing!in Yang-Mills theories}. 
Let $\psi\in\BV(M)$ be a fixed algebra element of degree $\#\gh=-1$ and  $\#\af=0$. 
Using $\psi$ we define a linear transformation $\alpha_\psi$ on $\BV(M)$ 
by
\be\label{gfermion}
%\alpha_\psi(X):=\sum_{n=0}^{\infty}\frac{1}{n!}\{\dots\{X,\underbrace{\psi\},\dots,\psi\}}_n\,,
\alpha_\psi(X):=\sum_{n=0}^{\infty}\frac{1}{n!}\underbrace{\{\psi,\dots,\{\psi}_n,X\}\dots\}\,,
\ee
The antibracket with $\psi$ preserves the ghost number $\#\gh$ and lowers the antifield number $\#\af$ by 1. Hence the sum in \eqref{gfermion} is finite and $\alpha_\psi$ preserves the grading with respect to the ghost number. Moreover, it preserves as well the product and the antibracket itself. %@@@ maybe a proof 
Let now $\Psi$ be a natural transformation from $\D$ to $\BV$ such that $\Psi_M(f)$ satisfies the conditions which were stated on $\psi$ above. $\Psi$ is called \textit{the gauge fixing fermion}\index{gauge!fixing!fermion}. %@@@comment on it
We define an automorphism on $\BV(M)$ by
\[
\alpha_{\Psi}(X)\doteq\alpha_{\Psi_M(f)}(X),\qquad X\in\BV(M)\,
\]
where $f\equiv1$ on the support of $X$. Let $\tilde{L}^{\ex}=\alpha_\Psi\circ L^\ex$ be the transformed generalized Lagrangian. 
We define a new BV-operator as $\tilde{s}F:=\{F,\tilde{L}^\ex_M(f)\}$, for $f\equiv 1$ on $\supp F$, $F\in\BV(M)$ . We have:
\[
\F_S^{\inv}(M)\cong H^0(\BV(M),\tilde{s})\,,
\]
where the isomorphism is given by means of $\alpha_\Psi$. 

%@@@ give details 
For the Lorenz gauge we choose the gauge-fixing fermion of the form:
\be\label{Lorenz}
\Psi_M(f)=i\int\limits_M f\left(\frac{\alpha}{2}\kappa(\bar{C},B)+\left<\bar{C},*d*\!A\right>_g\right)\dvol\,,
\ee
where $\kappa$ is the Killing form on the Lie algebra $g$ and $\left<.,.\right>_g$ is the dual pairing between $g$ and $g'$. It is interesting to compare this formulation with the standard definition  of the gauge-fixing fermion \cite{Henneaux:1992ig}. In the standard approach it is required to be a local functional. This fits very well in the present formalism, since the condition of locality is basically replaced with the condition to be a natural transformation. Explicitly the transformed Lagrangian takes the form:
\[
\alpha_{\Psi}(L^\ex_M(f))=L_M(f)+\theta_M(f)+\{\Psi_M(f_1),\theta_M(f)+L_M^{\textrm{nm}}(f)\}\,,
\]
where $f_1\equiv 1$ on the support of $f$. Formally this can be written with the more familiar expression:
\begin{multline}\label{extendYM}
L^{\ex}_M(f)\form-\frac{1}{2}\int_M f\tr(F\wedge *F)-i\int_M f\tr\left(*DC\wedge d\bar{C}\right)+\int_M f\tr\Big(*DC\wedge\frac{\delta}{\delta A}\Big)+\\
+\frac{1}{2}\int_M\dvol\, f [C,C]\frac{\delta}{\delta C}-i\int_M\dvol\, f B\frac{\delta}{\delta \bar{C}}+\int_M\dvol\, f\left(\frac{\alpha}{2}\kappa(B,B)+\left<B,\hinv d*\!A\right>_g\right)\,,
\end{multline}
where we suppressed all the form and Lie-algebra indices. 

In the second step of the gauge fixing procedure we expand the differential $\tilde{s}$ with respect to the total antifield number: $\tilde{s}=\delta^g+\gamma^g$, where $\delta^g$ lowers $\#\ta$ by 1 and $\gamma^g$ preserves it. Let $X\in\BV(M)$ be a derivation of total antifield number $\#\ta=1$. The action of $\delta^g$ on $X$ is given by:
\be
\delta^gX=\{X,\tilde{L}^\ex_M(f)\}\Big|_{\#\ta=0\atop \textrm{terms}}=\{X,L^g_M(f)\},\qquad f\equiv 1\textrm{ on }\supp X\,,\label{deltag}
\ee
where $L^g$ is the so called gauge-fixed Lagrangian\index{gauge!-fixed!Lagrangian} and is obtained from $\tilde{L}$ by putting all antifields to 0. The corresponding equivalence class $S^g$ is the gauge-fixed action\index{gauge!-fixed!action}. The ideal of $\BV(M)$ generated by all terms of the form (\ref{deltag}) is the graded counterpart of the ideal of $\F(M)$ generated by the equations of motion. In the next section we shall see that one can introduce a notion of a derivative on $\BV(M)$ which makes this correspondence precise. In this sense the $0$-order homology of $\delta^g$ is the algebra of on-shell functions for the gauge-fixed action $S^g$. For Yang-Mills theory the gauge-fixed Lagrangian reads:
 \[
L_M^g(f)=S_M(f)+\gamma^g\Psi_M(f)\,.
\]
In case of the Lorenz gauge we obtain:
 \be
L_M^g(f)\form-\frac{1}{2}\int\limits_M f\tr(F\wedge *F)-i\int\limits_M f\tr(d\bar{C}\wedge*D C)-\int\limits_M\dvol\, f\left(\frac{\alpha}{2}\kappa(B,B)+\left<B,\hinv d*\!A\right>_g\right)\,.\label{fixed}
\ee

The differential $\gamma^g$ is called \textit{the gauge-fixed BRST differential}\index{BRST!gauge-fixed differential}. 
The action of the gauge-fixed BRST differential on the functions in $\BV(M)$ is summarized in the table below.\\
\begin{center}
{\setlength{\extrarowheight}{2.5pt}
\begin{tabular}{ll}
\toprule%
& $\gamma^g$\\\otoprule%
$F\in\F(M)$&$\left<F^{(1)},dC+[.,C]\right>$\\%\cmidrule(l){2-2}
 $C$&$-\frac{1}{2}[C,C]$\\
 $B$& $0$\\
  $\bar{C}$& $iB$\\\bottomrule
\end{tabular}}
\end{center}
 %---------------------------------------------------------------------------------------------------------
\subsection{Peierls bracket}\label{Peierls}
 %---------------------------------------------------------------------------------------------------------
We come finally to the discussion of the dynamical structure. As already mentioned in Section \ref{gaugefixing}, we first need to fix the gauge, before we can define the Peierls bracket\index{Peierls bracket}, that implements the dynamics. As discussed in section \ref{scal}, the space of multilocal functionals is not closed under the Peierls bracket. To fix this we replace it by the space $\F_\mc(M)$ of \textit{microcausal functionals}, equipped with the topology $\tau_\Xi$.  Multilocal functionals are dense in $\F_\mc(M)$ with respect to this topology. Using this space as a starting point one can repeat the construction of the BV-complex given in section \ref{KTcom} with some technical changes. The extended BV graded algebra $\BV_\mc(M)$ is defined as the space of microcausal vector-valued functions. Now we want to extend the BV differential to those more singular objects.
Since the Lagrangian $L_M(f)$ is a local functional and its functional derivative is a smooth test section, the Koszul operator can be extended from multilocal vector fields to $\V_\mc(M)=\Ci_\mc(\E(M),\E'(M))$ and the resolution of  $\F_\mc(M)$ in the scalar case is provided by the differential graded algebra $\La\V_\mc(M)$. Similar reasoning applies also to the case when symmetries are present. The extended Chevalley-Eilenberg complex $\CE_\mc(M)$ consists of microcausal vector-valued functions with $W_n=\Lambda^n g$ (see section \ref{vvf}). Since the map $\partial_{\rho_M(.)}F$ is an element of 
 $\CE_\mc(M)$, for all $F\in\F_\mc(M)$, we can conclude that:
 \[
 H^0(\CE_\mc(M),\gamma)=\F^\inv_\mc(M)\,.
 \]
The full BV complex is equipped with the topology $\tau_{\Xi}$ and since multilocal functionals lie dense in $\BV_\mc(M)$, the differential  $s$ can be extended to the full complex by continuity.
From the above discussion it follows that $H^0(\BV_\mc(M),s)$ is the space of microcausal gauge invariant functionals on-shell. We want to point out however, that the antibracket itself is not well defined on the whole $\BV_\mc(M)$. This is because the commutator of vector fields $\V(M)$ can be extended only to those elements of the space $\V_\mc(M)$ that have smooth first derivative. 

In Yang-Mills theories the elements of the extended BV complex are now microcausal vector-valued functions 
\be\label{BVYM}
\BV_\mc(M)=\Ci_\mc(\E(M),\A(M))\,, 
\ee
where $\A(M)$ is of the form:
%\[
%\begin{array}{rclclclcl}
%\A&=\prod\limits_{k,l,m=0}^\infty\Gamma'_{\Xi_{n}}(M^n,&g^{\otimes k}&\otimes&\Lambda^{l}g&\otimes&\Lambda^{m}g&\otimes&\textrm{Antifields}),\quad n=k+l+m\\
%    &  &\multicolumn{2}{l}{\textrm{Lagrange}  }&\multicolumn{2}{c}{\textrm{antighosts}}&\multicolumn{2}{l}{\textrm{ghosts}}&\\
%    &	&\multicolumn{2}{l}{\textrm{multi-}} &\multicolumn{2}{c}{}&\multicolumn{2}{l}{}                           &\\
%    &	&\multicolumn{2}{l}{\textrm{pliers}} &\multicolumn{2}{c}{}&\multicolumn{2}{l}{}                           &\\
%\textrm{$\#\gh$}& & 0& &-1&&1&&-\#\af\\
%\end{array}
%\]
\[
\A(M)=\prod\limits_{k,l,m=0}^\infty\Gamma'_{\Xi_n}(M^n,S^kg\otimes\Lambda^{l}g\otimes\Lambda^{m}g\otimes\textrm{Antifields})\,.
\]
The first three factors in the space $W_n$ correspond accordingly to the Lagrange multipliers, antighosts and ghosts. The subsequent factors are the antifields. The fact that gauge fixing is possible implies that if we keep all the unphysical fields with fixed values, then the initial value problem is well posed for the physical fields. This is however not enough. Since the differential $\delta^g$ is the Koszul map for the gauge fixed action, we should understand the equations of motion as equations for the full field multiplet with the auxiliary fields included. 

To make precise what we mean by the \textsc{eom}'s for odd variables we use similar definitions as in section \ref{fer}. The functional derivative of a function $F\in\Ci_\mc(\E(M),\A(M))$ at the point $A_0\in\E(M)$ is an $\A(M)$-valued distribution: $F^{(1)}_A(A_0):=\frac{\delta F}{\delta A}(A_0)\in \E'(M)\widehat{\otimes}\A(M)$\footnote{The completed tensor product is to be understood as the space of distributions $\prod\limits_{k,l,m=0}^\infty\Gamma'_{\Xi_n}(M^n,g\otimes S^kg\otimes\Lambda^{l}g\otimes\Lambda^{m}g\otimes\textrm{Antifields})$ equipped with the H\"ormander topology.}. The derivatives with respect to the odd variables are defined pointwise.  For example for the functions of ghost fields we have: $F^{(1)}_C(A_0):= \big(F(A_0)\big)^{(1)}_C$, $F(A_0)\in\A(M)$, where the derivative  on the graded algebra $\A(M)$ is defined as in section \ref{fer}:
\be\label{d1}
F^{(1)}(a)[h]:=F(h\wedge a)\quad F\in \La^p{\frakg}'(M),\ a\in\La^{p-1}\frakg(M),\ h\in\frakg(M)\ p>0
\ee
Note that $F^{(1)}_C(A_0)\in{\frakg}'(M)\widehat{\otimes}\A(M)$. Now to implement the equations of motion we take the quotient of $\BV(M)$ by the ideal generated by graded functions of the form:
\be\label{eomf}
A_0\mapsto\langle {S^g}_\alpha'(A_0), \beta(A_0)\rangle\,,
\ee
where $A_0\in\E(M)$, $\alpha$ is  $A,B,C$ or $\bar{C}$ and $\beta(A_0)$ is the appropriate test section. Note that in the graded case, when $S^g$ is of degree higher than $1$ in anticommuting variables, we don't have an interpretation of the equations of motion as equations on the configuration space. Instead, the algebraic definition on the level of functionals can still be applied. We can compare this situation to the purely bosonic case, when we had to show that the ideal $\F_0(M)$ is generated by the equations of motion for a given action functional. In the fermionic case, we reason differently and \textit{define} this ideal as generated by elements of the form (\ref{eomf}). See section  \ref{fer} for details. Equivalently we can say that $\F_0(M)$ is the image of the map $\delta^g$ acting on  derivations with $\#\ta=1$. Hence the on-shell functionals for the action $S^g$ are characterized by $H_0(\A(M),\delta^g)$.

We conclude that after the gauge fixing the full dynamics is described by the action $S^g$ and therefore this generalized Lagrangian is the starting point for the construction of the Peierls bracket\index{Peierls bracket}. The off-shell formalism \cite{DF04,DF02} is to be understood with respect to $S^g$ and going on-shell means taking the quotient by the ideal generated by the equations of motion. The construction of the Peierls bracket is a straightforward generalization of the construction done in the scalar case \cite{BDF,BFR}. The only subtle point is the grading. We discussed the general case of Peierls bracket for anticommuting fields in sections \ref{vvalued} and \ref{fer} (see also \cite{Rej}). All the distributional operations have to be generalized to distributions with values in a graded algebra. The distributional operations like convolution, contraction and pointwise product generalized to the $\A(M)$-valued distributions are now graded commutative. As in the case of $\RR$-valued distributions, the pointwise product is well defined only when the sum of the wave front sets of the arguments does not intersect the zero section of the cotangent bundle. We point out that the use of $\A(M)$-valued distributions already accounts for the grading, so there is no need to introduce additional Grassman algebras by hand.

Since $S^g$ has at most quadratic terms with respect to the anticommuting variables, its second derivative can be again treated as an operator on the extended configuration space \cite{Rej}. To construct the Peierls bracket, we need this operator to be normally hyperbolic. Therefore we need to find a gauge-fixing fermion which makes the linearized equations of motion of $S^{g}$ into a normally hyperbolic system in variables $A,B,C$ and $\bar{C}$. The existence of such a fermion in a general case is an open question. In case of Yang-Mills theory, it suffices to take the Lorenz gauge with $\Psi$ given by (\ref{Lorenz}). Taking the first functional derivative of $S^g$ results in a following system of equations\footnote{These equations should be understood as relations in the algebra $\A(M)$, that we have to quotient out. For example (\ref{1YM}) means that we quotient out the ideal generated (in the algebraic and topological sense) by evaluation functionals $(\hinv D\!*\!F-dB-i[d\bar{C}, C])^I_\mu(x)$.\label{eqs}}:
\begin{align}
\hinv D\!*\!F=\hinv D\!*\!DA&=-dB-i[d\bar{C}, C]\label{1YM}\,,\\
\hinv d\!*\!A+\alpha B&=0\,,\label{gaugecond}\\
\hinv d\!*\!DC&=0\,,\nonumber\\
\hinv D\!*\!d\bar{C}&=0\,,\nonumber
\end{align}
where $D\omega=D+[A,\omega]$ denotes the covariant derivative. Acting with $\hinv D*$ on equation (\ref{1YM}) we obtain an evolution equation for $B$:
\begin{equation*}
\hinv D*dB=-i*[d\bar{C}, *DC]\,.
\end{equation*}
For every field from configuration space the second variational derivative of (\ref{fixed}) is an  integral kernel of a normal hyperbolic differential operator. Indeed, in the linearized system of equations the only terms containing second derivatives in (\ref{1YM}) are of the form $\hinv d*dA=\Box A-d\hinv d*A$. From the gauge fixing condition (\ref{gaugecond}) it follows that $\hinv d*A=-\alpha B$ and therefore the only contributions containing the second derivatives are of the form $\Box \phi^\alpha$, where $\phi^\alpha=A,C,\bar{C}$ or $B$. This means that ${S^g}''_M$ provides a hyperbolic system of equations and one can construct the advanced and retarded Green's functions $\Delta^R_{S^g}$, $\Delta^A_{S^g}$. 
We define the Peierls bracket\index{Peierls bracket} by:
\begin{align}
\{F,G\}_{S^g}&\doteq R_{S^g} (F,G)-A_{S^g}(F,G)\,,\label{peierls}\\
R_{S^g} (F,G)&\doteq\sum\limits_{\alpha,\beta}(-1)^{(|F|+1)|\phi_\alpha|}\left<F^{(1)}_\alpha,(\Delta^R_{S^g})_{\alpha\beta}*G_\beta^{(1)}\right>\,,\label{ret}\\
A_{S^g} (F,G)&\doteq\sum\limits_{\alpha,\beta}(-1)^{(|F|+1)|\phi_\alpha|}\left<F^{(1)}_\alpha,(\Delta^A_{S^g})_{\alpha\beta}*G_\beta^{(1)}\right>\,,\label{adv}
\end{align}
where $\Delta^{R/A}_{S^g}$ has to be understood as a matrix, $\phi^\alpha=A,C,\bar{C}$ or $B$ and $|.|$ denotes the ghost number. The sign convention chosen here comes from the fact that we use only left derivatives. One can show that $\{.,.\}_{S^g}$ is a well defined graded Poisson bracket on $\A(M)$. Moreover the algebra  $\A(M)$ is closed under this bracket. The next proposition shows that there is a relation between this dynamical structure and the BRST symmetry.
\begin{prop}
The BRST operator $\gamma^g$ satisfies the graded Leibniz rule with respect to the Peierls bracket:
\begin{equation}
\gamma^g\{F,G\}_{S^g}=(-1)^{|G|}\{\gamma^gF,G\}_{S^g}+\{F,\gamma^gG\}_{S^g}\,.\label{BRSideal}
\end{equation}
\end{prop}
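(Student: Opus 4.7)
The strategy is to recognize that $\gamma^g$, restricted to the $\#\ta=0$ sector $\A(M)$, acts as a graded derivation induced by a vector field $X_\gamma$ on the extended configuration space, and that $X_\gamma$ is a graded symmetry of the gauge-fixed action $S^g$. Once these two facts are in place, the graded Leibniz rule will follow from the general Peierls-bracket principle that (graded) symmetries of the action act as (graded) derivations of the associated bracket, just as in the scalar and fermionic cases treated in Sections \ref{scal} and \ref{fer}.

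First, I will invoke the classical master equation $\{\tilde{L}^{\ex},\tilde{L}^{\ex}\}\sim 0$, which remains valid after the canonical transformation $\alpha_\Psi$ since $\alpha_\Psi$ preserves the antibracket. Expanding $\tilde{L}^{\ex}=L^g+L^{(1)}+L^{(2)}+\dots$ in the total antifield number and selecting the $\#\ta=0$ component of the {\cme} yields $\{L^{(1)},L^g\}\sim 0$. Since $\tilde{s}F=\{F,\tilde{L}^{\ex}_M(f)\}$ locally, and $\gamma^g$ is the $\#\ta$-preserving piece of $\tilde{s}$, this identity translates to $\gamma^g L^g\sim 0$, i.e.\ the gauge-fixed Lagrangian is BRST-invariant modulo total divergence.

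Second, on $\A(M)$ the operator $\gamma^g$ is determined by its action on the individual fields of the multiplet (for Yang-Mills: $A\mapsto dC+[A,C]$, $C\mapsto -\frac{1}{2}[C,C]$, $B\mapsto 0$, $\bar{C}\mapsto iB$) and extended by the graded Leibniz rule. This identifies $\gamma^g|_{\A(M)}=\partial_{X_\gamma}$ for a graded compactly supported multilocal vector field $X_\gamma$ on the extended configuration space. Applying the identity $\gamma^g L^g\sim 0$ to each field component gives $\langle (S^g)'_M(\ph),X_\gamma(\ph)\rangle\sim 0$, the statement that $X_\gamma$ is a symmetry of $S^g$ in the sense of \eqref{sym0}.

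Third, I will extend the standard Peierls argument to the graded case. Differentiating the symmetry identity once with respect to $\ph$ produces the compatibility $(S^g)''_M\cdot X_\gamma^{(1)}+(X_\gamma)^*\cdot (S^g)''_M\sim 0$ (with appropriate graded signs), which implies, by uniqueness of the advanced/retarded Green's functions of the normally hyperbolic ${S^g}''_M$, that the linearization of $X_\gamma$ commutes with $\Delta^{R/A}_{S^g}$ in the graded sense. Substituting this into the definitions \eqref{peierls}--\eqref{adv} and applying the graded chain rule to $\partial_{X_\gamma}\{F,G\}_{S^g}$, the sign $(-1)^{|G|}$ on the $\{\gamma^g F,G\}_{S^g}$ term emerges from commuting the odd operator $\gamma^g$ past $G$ via the Koszul rule, reproducing exactly \eqref{BRSideal}.

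The main technical obstacle will be the careful bookkeeping of signs in the graded setting, in particular the interplay between the left-derivative convention used in \eqref{d1}, the sign $(-1)^{(|F|+1)|\phi_\alpha|}$ appearing in the definition of the Peierls bracket, and the odd degree of $\gamma^g$. A secondary point is that the symmetry identity $\gamma^g L^g\sim 0$ holds only modulo the equivalence \eqref{equ2} of Lagrangians, so one must verify that the resulting relations among Green's functions descend to the quotient by the ideal generated by the gauge-fixed equations of motion (cf.\ footnote \ref{eqs}); this is a straightforward consequence of the fact that $\gamma^g$ commutes with $\delta^g$, which in turn is the $\#\ta=0$ content of the nilpotency $\tilde{s}^2=0$.
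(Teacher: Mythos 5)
Your proposal is correct and follows essentially the same route as the paper: the paper's proof likewise applies the graded Leibniz rule for $\gamma^g$ (as a graded derivation on $\BV(M)$) to the retarded and advanced products $\left<F^{(1)}_\alpha,(\Delta^{R/A}_{S^g})_{\alpha\beta}*G^{(1)}_\beta\right>$ and then uses the $\gamma^g$-invariance of $S^g$ (a consequence of the nilpotency of $\tilde{s}$, i.e.\ the classical master equation) to eliminate the term in which $\gamma^g$ hits the Green's function. Your extra step of differentiating the symmetry identity to obtain the intertwining of the linearized symmetry with $\Delta^{R/A}_{S^g}$ is just a more explicit rendering of that same cancellation.
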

\begin{proof}
From the definition of the BRST operator we know that it is a graded derivation on the algebra $\BV(M)$, acting from the right. Therefore it holds:
\begin{multline*}
\gamma^g\left<F^{(1)}_\alpha,({\Delta^R_{S^g}}_{\alpha\beta})*G_\beta^{(1)}\right>=(-1)^{|\phi_\alpha|+|G|}\left<\gamma^g\left(F^{(1)}_\alpha\right),({\Delta^R_{S^g}}_{\alpha\beta})*G_\beta^{(1)}\right>+\\
+(-1)^{|\phi_\beta|+|G|}\left<F^{(1)}_\alpha,\left(\gamma^g({\Delta^R_{S^g}})_{\alpha\beta}\right)*G_\beta^{(1)}\right>+\left<F^{(1)}_\alpha,({\Delta^R_{S^g}})_{\alpha\beta}*\gamma^g\left(G_\beta^{(1)}\right)\right>\,.
\end{multline*}
Now, using the fact that $S^g$ is $\gamma^g$-invariant (follows from the nilpotency of $\tilde{s}$), we obtain:
\[
\gamma^gR_{S^g}(F,G)=(-1)^{|G|}R_{S^g}(\gamma^gF,G)+R_{S^g}(F,\gamma^gG)\,.
\]
The same holds for $A_{S^g}(F,G)$, so the result follows from the definition (\ref{peierls}).
\end{proof}
Now we want to show that the Peierls bracket\index{Peierls bracket} $\{.,.\}_{S^g}$ is well defined on the algebra of gauge invariant observables.
We recall that $\F^{\inv}_S(M)\cong H^0(H_0(\BV(M),\delta^g),\gamma^g)$ and $H_0(\delta^g,\BV(M))$ is the on-shell algebra of the gauge-fixed action $S^g$. For the scalar field the subalgebra of functionals that vanish on-shell is a Poisson ideal with respect to $\{.,.\}_S$. A similar reasoning can be applied also to the graded case and one shows that the image of $\delta^g$ in degree $\#\ta=0$ is a Poisson ideal with respect to
$\{.,.\}_{S^g}$. This means that the Peierls bracket is well defined on-shell, i.e. on the homology classes  $H_0(\delta^g,\BV(M))$. To see that it is also compatible with the differential $\gamma^g$ we consider $F,G\in \ke(\gamma^g)$ and from (\ref{BRSideal}) we conclude that
\begin{multline}\label{Poii}
\{F,G+\gamma^g H\}_{S^g}=\{F,G\}_{S^g}+\{F,\gamma^gH\}_{S^g}=\\
=\{F,G\}_{S^g}+\gamma^g\{F,H\}_{S^g}-(-1)^{|H|}\{\gamma^gF,H\}_{S^g}\\
=\{F,G\}_{S^g}+\gamma^g\{F,H\}_{S^g}\,.
\end{multline}
This shows that $\{.,.\}_{S^g}$ is compatible with the cohomology classes of $\gamma^g$. Using this result and the previous one, concerning the $0$-th homology of $\delta^g$, we conclude that the Peierls bracket is well defined on $\F^{\inv}_S(M)$. As a final remark, we note that for Yang-Mills theories the Poisson structure on $\F^{\inv}_S(M)$ defined by the gauge-fixed action doesn't depend on the gauge-fixing fermion $\Psi$. Indeed, let $S^{g}_1=S+\gamma^g\Psi_1$, whereas  $S^{g}_2=S+\gamma^g\Psi_2$. Therefore $S^{g}_2=S^{g}_1+\gamma^g(\Psi_1-\Psi_2)$. It follows now that for $F,G\in \ke(\gamma^g)$ we have:
\begin{equation*}
\{F,G\}_{S^{g}_2}=\{F,G\}_{S^{g}_1}+\gamma^g(\ldots)\,.
\end{equation*}
It means that $\{F,G\}_{S^g_1}$ and $\{F,G\}_{S^g_2}$ are in the same cohomology class.

To end this section we discuss the functoriality of the construction presented above. Let $\PgAlg$ denote the category of graded topological Poisson algebras with continuous faithful graded Poisson algebra morphisms as morphisms. It can be shown that the assignment of $(\BV(M),\{.,.\}_{S^{g}})$ to $M$ is a covariant functor from $\Loc$ to $\PgAlg$. 
%It is interesting to note, that on $\BV(M)$ we have two Poisson structures, one is the antibracket $\{.,.\}$ and the other one the Peierls bracket $\{.,.\}_{S^g}$. Both structures are natural and depend on the generalized Lagrangian $S$, defining the concrete classical theory.
%%%%%%%%%%%%%%%%%%%%%%%%%%%%%
%---------------------------------------------------------------------------------------------------------
\section{Classical gravity}\label{grav}
%---------------------------------------------------------------------------------------------------------
We finally come to the most interesting example, where the locally covariant construction of the BV complex provides some new insights. As we know, the diffeomorphism invariance of the general relativity is one of the conceptual problems standing on the way of quantization. We think that a better understanding of this feature in the classical theory will be a first step to get a grasp of it also in the quantum case. The fundamental question that arises here is the definition of diffeomorphism invariant observables. If we can characterize them in a systematic way in the classical theory, then we can take this structure as an input to some quantization procedure. In the present thesis, we propose such a definition and show how it fits into the general scheme of locally covariant field theory.

The treatment of quantum gravity in the framework of local covariance was proposed in \cite{F,BF1}. The first step towards this program is the proper understanding of structures that appear already on the classical level. 
%It turns out that the direct application of the results known from the Yang-Mills theories is impossible, since there are no local diffeomorphism (i.e. gauge) invariant observables and the set $\F^\inv_S(M)$ would be trivial. 
%%We will show that the BV construction done for a fixed manifold $M$ leads to a trivial $0$-cohomology class. 
%It turns out, however, that the framework of locally covariant field theory provides a solution.
We have to identify what corresponds to diffeomorphism invariant physical quantities in our framework. This a very subtle question.
We already anticipate that objects we need have to be global. Therefore, a single element of a local algebra of observables is not a good candidate for such an invariant quantity. Fortunately, the framework of locally covariant field theory provides a very natural solution to the problem of observables. It was proposed in \cite{F,BF1} to consider gauge invariant \textit{fields} as physical objects.
 Here the fields are understood as natural transformations \cite{BFV,Few} between the functors $\E_c$ and $\F$. Let $\Nat( \E_c,\F )$ denote the set of natural transformations.
In analogy to Section \ref{CME} we define the set of fields as $\bigoplus\limits_{k=0}^\infty \Nat(\E_c^k,\F)$. It was shown in \cite{FR} that this is the right structure to consider as a starting point for the BV construction. Indeed in general relativity one always uses objects that are natural, for example the scalar curvature. Although it doesn't make sense to consider it at a fixed spacetime point, it is still meaningful to treat it as an object defined in all spacetimes in a coherent way. This is the underlying idea of identifying the physical quantities with natural transformations. We discuss those issues in detail in section \ref{difftrans}.
\subsection{Geometrical preliminaries}
For the classical gravity the configuration space is $\E(M)=(T^*M)^{2\otimes}\doteq T^0_2M$, the space of rank $(0,2)$ tensors. The Einstein-Hilbert action reads\footnote{In this chapter we use the metric signature $(-+++)$ and the conventions for the Riemann tensor agreeing with \cite{Wald}.}:%@@@ maybe something more in an appendix

\be\label{EH}\index{generalized Lagrangian!Einstein-Hilbert}
S_{(M,g)}(f)(h)\doteq \int R[\tilde{g}]f\,\textrm{d vol}_{(M,\tilde{g})}\,,
\ee
where $g$ is the background metric, $h$ the perturbation and $\tilde{g}=g+h$. For every $g$ the local functional $S_{(M,g)}(f)(h)$ is defined in some open neighborhood $U_g\subset\E(M)$. We can make this neighborhood small enough to guarantee that $\tilde{g}$ is a Lorentz metric with the signature $(-+++)$. Since we are interested only in the perturbation theory, we don't need $S_{(M,g)}(f)(h)$ to be defined on the full configuration space. The diffeomorphism invariance of (\ref{EH}) means that the symmetry group of the theory is the diffeomorphism group $\Diff(M)$. Since we are interested only in local symmetries, we can restrict our attention to $\Diff_c(M)$. It is an infinite dimensional Lie group modeled on $\X_c(M)$, the space of compactly supported vector fields on $M$ \cite{Michor80,Michor,Gloe06,Gloe07}. We can now define the action of $\Diff_c(M)$ on $\E(M)$ or more generally on arbitrary tensor fields. Let $\Tens(M)$ denote the space of smooth sections of the vector bundle $\bigoplus\limits_{k,l}T^k_lM$, where $T^k_lM\doteq \underbrace{TM\otimes\ldots\otimes TM}_{k}\otimes \underbrace{T^*M\otimes\ldots\otimes T^*M}_{l}$. We define a map $\rho_M:\Diff_c(M)\rightarrow L(\Tens(M),\Tens(M))$ as a pullback, namely:
\begin{equation}
\rho_M(\phi)=(\phi^{-1})^*t,\quad \phi\in \Diff_c(M),\ t\in\Tens(M)
\end{equation}
The corresponding derived representation of $\X_c(M)$ on $\Tens(M)$  is the Lie derivative:
\begin{equation}
\rho_M(X)t\doteq\frac{d}{dt}\Big|_{t=0}(\exp(-tX))^*t=\pounds_{X}t,\label{repre}
\end{equation}
where $X\in\X_c(M)$ and the last equality follows from the fact that the exponential mapping of the diffeomorphism group is given by the local flow. The most general nontrivial symmetries of the action (\ref{EH}) can be written as elements of $\Ci_\ml(\E(M),\X_c(M))$ so we can identify the space of reduced symmetries with $\sm_{\textrm{r}}(M)=\Ci_\ml(\E(M),\X_c(M))$. Like in gauge theories one can define the action of $\X_c(M)$ on $\F(M)$, the space of functionals on the configuration space. It is given by: 
\be\label{rhoM3}
\partial_{\rho_M(X)}F(h)=\left<F^{(1)}(h),\pounds_{X}\tilde{g}\right>,\qquad F\in\F(M), X\in\X_c(M)\,.
\ee
% @@@ write about consistency conditions for gravity?
%---------------------------------------------------------------------------------------------------------
\subsection{Observables and diffeomorphism invariance}\label{difftrans}
%---------------------------------------------------------------------------------------------------------
To understand what the diffeomorphism invariance means in our framework we have to bring the discussion from the previous section to the level of natural transformations. The intuitive idea behind this is actually very simple. If we think about an experiment that locally probes  the geometric structure of spacetime, we can 
associate to our setup a causally convex spacetime region $\Ocal$ of spacetime $M$
and an observable $\Phi$ localised in $\Ocal$, which we measure. Since the experiment has a finite resolution, 
we don't really measure values of the geometric data at a point. There is always some smearing involved.
 For example, in case of the Ricci curvature $R$ we can model it by 
defining our observable quantity as $\Phi(f)=\int f(x)R(x)$, where $f$ is the smearing function with $\supp(f)\subset \Ocal$. In certain situations, 
we can think of the measured observable as a perturbation of the fixed background metric. This is for example the case if we want to observe gravitational waves. In other words, we can do 
 a tentative split into: $\tilde{g}_{\mu\nu}=g_{\mu\nu}+h_{\mu\nu}$. The situation is pictured on the figure \ref{exper}. To formulate what the diffeomorphism invariance means, we first have to answer the question: what happens if we move our experimental setup to a different region $\Ocal'$?
\begin{figure}[htb]
\begin{center}
 \begin{tikzpicture}
\shadedraw[shading=axis,shading angle=90, left color=see!80!white,right color=white] plot[smooth cycle] coordinates{(-1,-1) (-2,0)(-2.1,1)(-1,2)(1,2.3)(2,2.5)(3,2)(3,1)(2,0)(1,-1)(0,-1.2)};
\draw (0,0.5) node[diamond,draw,minimum size=2cm,name=s] {};
\draw (1.8,1.2) node[diamond,draw,minimum size=2cm] {$\Ocal'$};
\draw(-1,1.4) node {\small{$(M,g)$}};
\draw(0,-2) node [name=F]{$\Phi_{(\Ocal,g)}(f)(h)$};
\shade[shading=radial, inner color=darksee!80!white,outer color=see!80!white,scale=0.18] plot[smooth cycle] coordinates{(0,2)(-2,3)(-3.2,4)(-2.4,5.5)(1,6)(3,5)(3.9,3.2)(3,2.2)};
\draw(0,0.8) node [color=lighthoney] {$f$};
\draw(0,0) node {$\Ocal$};
node {};
\path[->] (s) edge  [color=red, bend right]  node[right] {} (F);
\end{tikzpicture}
\end{center}
\caption{Experimental situation while probing the spacetime geometry.\label{exper}}
\end{figure}
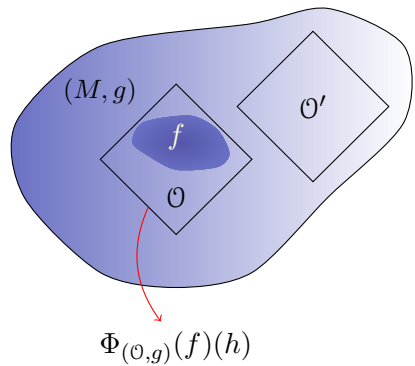
Intuitively thinking, we have to transform the metric and the test function by the pushforward (see figure \ref{mov}). Now to compare $\Phi_{(\Ocal,g)}(f)$ and $\Phi_{(\Ocal',\alpha_*g)}(\alpha_*f)$ we need to know what does it mean to have ``the same observable in a different region''. This is where the notion of fields comes in. It was proposed in \cite{BFV}, that fields should allow us to compare the results of experiments performed in different regions of a spacetime in the absence of symmetries. Moreover, they should be compatible with the net structure given by the isometric embeddings. In this sense, the family $\{\Phi_M\}_{M\in \obj(\Loc)}$ defines a \textit{field}\index{locally covariant!field!classical} if it satisfies certain naturality conditions. It can be made precise in the language of category theory. The idea to treat fields as natural transformations goes back to R. Brunetti and appeared first in \cite{BFV}. One defines fields as natural transformations between functors $\D$ and $\F$.
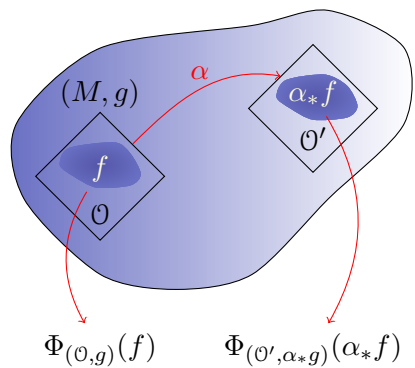
\begin{figure}[!b]
\begin{center}
\begin{tikzpicture}
\shadedraw[shading=axis,shading angle=90, left color=see!80!white,right color=white] plot[smooth cycle] coordinates{(-1,-1) (-2,0)(-2.1,1)(-1,2)(1,2.3)(2,2.5)(3,2)(3,1)(2,0)(1,-1)(0,-1.2)};
\draw(-1,0.3) node[diamond,draw,minimum size=1.7cm,name=s] {};
\draw(1.8,1.2) node[diamond,draw,minimum size=1.7cm,name=s2] {};
%\draw<3-6>(-1.7,0.5) node {$M_2$};
%\draw<2-5>(-1,1.4) node {\small{$M$}};
\draw(-1,1.4) node {$(M,g)$};
\draw(-1,-2) node [name=F]{$\Phi_{(\Ocal,g)}(f)$};
\draw(1.8,-2) node [name=F2]{$\Phi_{(\Ocal',\alpha_*g)}(\alpha_*f)$};
\shade[shading=radial, inner color=darksee!80!white,outer color=see!80!white,scale=0.15] plot[smooth cycle] coordinates{(-7,1)(-9,2)(-10.2,3)(-9.4,4.5)(-5.4,5)(-4,4)(-3.1,2.2)(-4,1.2)};
\draw(-1,0.4) node [name=f,color=lighthoney]{$f$};
\shade[shading=radial, inner color=darksee!80!white,outer color=see!80!white,scale=0.15] plot[smooth cycle] coordinates{(12,7)(10,8)(8.8,9)(9.6,10.5)(13,11)(15,10)(15.9,8.2)(15,7.2)};
\draw(1.8,1.4) node [name=f2,color=lighthoney]{$\alpha_*f$};
\draw(-1,-0.2) node {$\Ocal$};
\draw(1.8,0.8) node {$\Ocal'$};
%\draw (0,0) .. controls (1,1) and (2,1) .. (2,0);
%\draw<3,5-6> (0.2,0.5) node[name=s,minimum size=2cm,draw,diamond] {}
%node[anchor=south west] {};
%\draw<3,5-6> (0,-3) node[shading=axis,shading angle=90, left color=see,right color=white,name=t,minimum size=2cm,draw,diamond] {}
node {};
\path[->] (f) edge  [color=red, bend right]  node[right] {} (F);
\path[->] (s.north east) edge  [color=red,out=45,in=170]  node[above] {$\alpha$} (s2.north west);
\path[->] (f2) edge  [color=red, bend left]  node[right] {} (F2);
\end{tikzpicture}
\caption{Moving the experimental setup to a different region of spacetime.\label{mov}}
\end{center}
\end{figure}
The condition for $\Phi$ to be a natural transformation reads
\be\label{nat}
\Phi_\Ocal(f)(\chi^*h)=\Phi_{M}(\chi_*f)(h) \,.
\ee
The interpretation of physically relevant quantities as natural transformations fits very well with the intuitive picture. In classical gravity we understand geometrical data not as pointwise objects but rather as something defined on all the spacetimes in a coherent way. On the practical side, we can think of natural transformations $\Nat(\D,\F)$ as elements of $\ \prod\limits_{\mathclap{\sst{M\in\textrm{Obj}(\Loc)}}}L(\D(M),\F(M))$. In QFT one can drop the assumption of linearity, i.e. compose functors $\F$ and $\D$ with a forgetful functor to the category $\Sts$ of sets or $\TA$ of topological algebras.

Now we want to understand how the fields transform under diffeomorphisms. Firstly we need a notion of transforming all the spacetimes in a coherent way compatible with the embedding structure. Since our discussion is local, we can concentrate on infinitesimal diffeomorphisms, i.e. vector fields in $\X(M)\doteq \Gamma(TM)$. Let $\chi:(M,g')\rightarrow (N,g)$. $\X$ can be made into a \textit{contravariant functor} to the category $\Vect$ as follows: to $\xi\in\X((N,g))$ we associate a 1-form $i_\xi g$ and pull it back to a form $\chi^*(i_\xi g)\in\Gamma(T^*M)$. This again corresponds to a vector field by means of the $g'$-induced duality between $\Gamma(T^*M)$ and $\Gamma(TM)$. We denote the resulting map by $\X\chi:\X(N)\rightarrow\X(M)$. If $\chi$ is a diffeomorphism, this is just the pullback map $\chi^*$.

We can now define a Lie algebra $\Xcal$, which provides us with a notion of transforming all the spacetimes in a coherent way.
	\[
	\mathcal{X}\doteq\prod\limits_{M\in \obj(\Loc)}\!\!\!\!\X(M)\,.
	\]
Let now $\vec{\xi}\in\Xcal$ be an element of this algebra with all the components compactly supported. In this case we can apply the exponential map and define a diffeomorphism: $\alpha_{\sst{M}}\doteq \exp(\xi_{\sst{M}})$.
Let $\vec{\alpha}$ denote the corresponding sequence of diffeomorphisms. It acts on  $\Nat(\D,\F)$ as:
 \[
(\vec{\alpha}\Phi)_{(M,g)}[h]\doteq\Phi_{(\alpha_{M}(M),{\alpha_{M}}_*g)}[h+g-{\alpha_{\sst{M}}}_*g]\,.
\]
From the naturality condition (\ref{nat}) follows that:
\[
 \Phi_{(\alpha_{M}(M),{\alpha_{M}}_*g)}({\alpha_{\sst{M}}}_*\,f)[h+g-{\alpha_{\sst{M}}}_*g]=\\
	 =\Phi_{(M,g)}(f)[\alpha_{\sst{M}}^*\tilde{g}-g]\,.
\]
Therefore it holds \textit{always}: 
\[
(\vec{\alpha}\Phi)_{(M,g)}(f)[h]=\Phi_{(M,g)}({\alpha_{\sst{M}}^{\minus}}_*\,f)[\alpha_{\sst{M}}^*\tilde{g}-g]\,.
\] 
This is the diffeomorphism \textit{\textbf{covariance}}\index{diffeomorphism!covariance}. It is a consequence of the locally covariant formulation of the theory. Now the diffeomorphism \textit{\textbf{invariance}}\index{diffeomorphism!invariance} is a stronger condition, that states:
\[
(\vec{\alpha}\Phi)_{(M,g)}(f)[h]=\Phi_{(M,g)}(f)[h]\,.
\]
Now we come back to the discussion of infinitesimal diffeomorphisms. The derived action of compactly supported elements of $\Xcal$ on fields reads:
\be\label{diffeo}
(\vec{\xi}\Phi)_{(M,g)}(f)[h]=\left<(\Phi_{(M,g)}(f))^{(1)}(h),\pounds_{\xi_M}\tilde{g}\right>+\Phi_{(M,g)}(\pounds_{\xi_M}\,f)[h]\,.
\ee
The right hand side is well defined also if we drop the compact support condition on $\vec{\xi}$, so we can adapt the above formula as the definition of the action of $\Xcal$ on $\Nat(\D,\F)$. Diffeomorphism invariance is now the requirement that: $\vec{\xi}\Phi=0$.
\begin{exa}
As an example of a diffeomorphism invariant field we can take
\[
\Phi_{(M,g)}(f)[h]=\int R[\tilde{g}]f\,\textrm{dvol}_{(M,\tilde{g})}\,.
\]
One can check that $\xi\Phi=0$ for all $\xi\in\Xcal$. Note that both the scalar curvature and the volume form depend on the full metric $\tilde{g}$. However if we take a field defined as $\int R[\tilde{g}]f\,\textrm{dvol}_{(M,g)}$ it is still diffeomorphism \textit{covariant}, but no longer invariant, i.e. $\qquad\vec{\xi}\Phi\ /\!\!\!\!\!\!\equiv0$.
\eex\end{exa}
After discussing the generalities concerning the diffeomorphism invariance, we can now reformulate it in the homological language. In other words we have to build the BV complex that will describe the infinitesimal symmetries acting on $\Nat(\D,\F)$ according to (\ref{diffeo}).
%---------------------------------------------------------------------------------------------------------
\subsection{BV complex on the level of natural transformations}\label{BVnat}
%---------------------------------------------------------------------------------------------------------
In this section we define the BV-complex of general relativity, basing on the concept proposed by R. Brunetti to treat fields as natural transformation. The construction follows the one we proposed in \cite{FR}. Firstly we note that in (\ref{diffeo}) the action of infinitesimal symmetries on the elements of $\Nat(\D,\F)$ has two terms, where the first one is analogous to the infinitesimal transformation in gauge theories (\ref{rhoM2}). The second term is characteristic to the theories where symmetries are a consequence of diffeomorphism invariance. 
Since we are now working on the level of natural transformations, the underlying algebra of the BV complex
also has to be built from a set of natural transformations. It is useful to make one more generalization. Instead of test functions, we can consider more general objects,  such as arbitrary compactly supported tensors $\Tens_c(M)$. From now on we set $\E_c=\Tens_c$ and use this functor as a functor associating to a manifold the space of test field configurations.

The Chevalley-Eilenberg complex on the level of natural transformations is defined as $\bigoplus\limits_{k=0}^\infty \Nat(\E_c^k,\CE)$, where $\CE$ is a functor constructed analogously as in Yang-Mills theories, i.e.  $\CE(M)$ is the graded algebra of smooth compactly supported multilocal maps $\CE(M)\doteq\Ci_\ml(\E(M),\La{\X}'(M))$ with the differential $\gamma$ defined as:
\begin{align*}
\gamma:&\ \Lambda^q\X'(M)\otimes\F(M)\rightarrow\Lambda^{q+1}\X'(M)\otimes\F(M)\,,\nonumber\\
(\gamma \omega)(\xi_0,\ldots, \xi_q)&\doteq\sum\limits_{i=0}^q(-1)^i\partial_{\rho_M(\xi_i)}(\omega(\xi_0,\ldots,\hat{\xi}_i,\ldots,\xi_q))+\nonumber\\
&+\sum\limits_{i<j}(-1)^{i+j}\omega\left([\xi_i,\xi_j],\ldots,\hat{\xi}_i,\ldots,\hat{\xi}_j,\ldots,\xi_q\right)\,,
\end{align*}
and extended by continuity.

On the level of natural transformations $\bigoplus\limits_{k=0}^\infty \Nat(\E_c^k,\CE)$ we can define the Chevalley-Eilenberg differential $\gamma=\gamma^{(0)}+\gamma^{(1)}$ by:
\begin{align*}
\gamma^{(0)}\Phi&=\rho(.)\Phi,\qquad \Phi\in\Nat(\E_c,\F)\,,\\
\gamma^{(1)}\Phi&=-\Phi\circ[.,.],\qquad\Phi\in\Nat(\E_c,\X')\,.
\end{align*}
In the above formula $\rho(.)\Phi$ is a natural transformation between the functors $\D$ and $\Ci_\ml(\E,\X')$
 given by (\ref{diffeo}), i.e.
 \[
(\rho(.)\Phi)_{(M,g)}(f)(h,X)=\left<(\Phi_{(M,g)}(f))^{(1)}(h),\pounds_{X}\tilde{g}\right>+\Phi_{(M,g)}(\pounds_{X}\,f)[h]\,.
\]
The requirement of the graded Leibniz rule allows us to extend $\gamma$ to the whole $\bigoplus\limits_{k=0}^\infty \Nat(\E_c^k,\CE)$. We define the extended algebra of fields\index{extended algebra of fields} as:
\[
Fld=\bigoplus\limits_{k=0}^\infty \Nat(\E_c^k,\BV)\,,
\]
where $\BV$ is a functor from $\Loc$ to $\PgAlg$ defined in analogy to the gauge theories:
\be
\BV(M)=\Ci_\ml\big(\E(M),\La\E_c(M)\widehat{\otimes}\La{\X}'(M)\widehat{\otimes}S^\bullet \X_c(M)\big)\label{BVfix}
\ee
The set $Fld$ becomes a graded algebra if we equip it with addition defined pointwise and a graded product defined as:
\be\label{ntprod}
(\Phi\Psi)_M(f_1,...,f_{p+q})=\frac{1}{p!q!}\sum\limits_{\pi\in P_{p+q}}\mathrm\Phi_M(f_{\pi(1)},...,f_{\pi(p)})\Psi_M(f_{\pi(p+1)},...,f_{\pi(p+q)})\,,
\ee
where the product on the right hand side is the product of the algebra $\BV(M)$.
We can also introduce on $Fld$ a graded bracket using definition (\ref{ntbracket}), i.e.:
\[
\{\Phi_1,\Phi_2\}_M(f_1,...,f_{p+q})=\frac{1}{p!q!}\sum\limits_{\pi\in P_{p+q}}\{{\Phi_1}_M(f_{\pi(1)},...,f_{\pi(p)}),{\Phi_2}_M(f_{\pi(p+1)},...,f_{\pi(p+q)})\}\,.
\]
This bracket is graded antisymmetric, satisfies the graded Jacobi identity (\ref{Jacid}) and the Leibniz rule (\ref{leibniz}) with respect to the product (\ref{ntprod}), so $(Fld,\{.,.\})$ is a graded Poisson algebra.
 As in the Yang-Mills case we can define natural transformations $\theta^{(0)},\theta^{(1)}\in Fld$ that locally implement the Chevalley-Eilenberg differential on the fixed background:
 \begin{align*}
(\theta_M^{(0)}(f)F)(X)&\doteq\left<F^{(1)}(h),\pounds_{fX}\tilde{g}\right>,\quad\quad F\in\F(M),\ X\in\X(M)\,,\\
(\theta_M^{(1)}(f)\omega)(X_1,X_2)&\doteq\omega(f[X_1,X_2]),\quad\ \omega\in\X'(M),\ X_1,X_2\in\X(M)\,,
\end{align*}
Let $\theta=\theta^{(0)}+\theta^{(1)}$.
The full Chevalley-Eilenberg differential can be written as:
\[
(\gamma \Phi)_{M}(f)=\{\Phi,\theta\}_M(f,f_1)+(-1)^{|\Phi|}\Phi_{M}(\pounds_{(.)}\,f)\,
\]
where $\Phi\in Fld$ and  $f_1\equiv1$ on $\supp f$. The BV differential $s$ is now defined as:
\[
(s\Phi)_M(f)\doteq\{\Phi,L+\theta\}_M(f,f_1)+(-1)^{|\Phi|}\Phi_M(\pounds_{(.)}f)\quad f_1\equiv1\,\mathrm{on}\,\supp f\,.
\]
The $0$-cohomology of $s$ is nontrivial, since it contains for example the Riemann tensor contracted with itself, smeared with a test function:
\[
\Phi_{(M,g)}(f)(h)=\int\limits_M R_{\mu\nu\alpha\beta}[\tilde{g}]R^{\mu\nu\alpha\beta}[\tilde{g}] fd\textrm{vol}_{(M,\tilde{g})},\qquad \tilde{g}=g+h\,.
\]
Based on the discussion in \ref{difftrans}, we claim that the physical quantities should be identified with the elements of $H^0(Fld,s)$. As natural transformations they define what it means to have the same physical objects in all spacetimes. In this sense we get a structure that is completely covariant. We can now introduce dynamics on $H^0(Fld,s)$ by defining the Poisson bracket.
%---------------------------------------------------------------------------------------------------------
\subsection{Peierls bracket}\label{GRPeierls}
%---------------------------------------------------------------------------------------------------------
To obtain an algebra closed under the Peierls bracket\index{Peierls bracket} we 
have to replace the multilocal with the microcausal functionals and use the topology $\tau_\Xi$.  The corresponding space of natural transformation is denoted by $Fld_{mc}$. We can construct the Peierls bracket analogously as in Section \ref{Peierls}. Our starting point is a Lagrangian $L^\ex=L+\theta$. Formally it can be written as
\be
L^\ex_M(f)\form\int fR[\tilde{g}]\,d\textrm{vol}_{(M,\tilde{g})}+\int\!\!d\textrm{vol}_{(M,g)}( f\pounds_C \tilde{g}_{\mu\nu})\frac{\delta}{\delta h_{\mu\nu}}+\frac{1}{2}\int\!\!d\textrm{vol}_{(M,g)}(f[C,C]^\mu)\frac{\delta}{\delta C^{\mu}}\,.
\ee
To impose the gauge fixing we introduce the nonminimal sector. We shall do it already on the level of natural transformations. The functions of Nakanishi-Lautrup\index{Nakanishi-Lautrup fields} fields will be the elements of $\Nat(\E_c,S^\bullet\X')$ and functions of antighosts\index{antighosts} will belong to $\Nat(\E_c,\La\X')$. We can define the BV operator on the nonminimal sector simply as: $s\Phi_1:=\Pi\Phi_1\circ m_i$, $s\Phi_2=0$ for $\Phi_1\in \Nat(\E_c,\La^1\X')$, $\Phi_2\in \Nat(\E_c,S^1\X')$. To impose the gauge fixing we shall use a gauge fixing fermion $\Psi\in Fld_{mc}$. It induces a transformation of $Fld_{mc}$ given by (\ref{gfermion}). This transformation is an isomorphism on the cohomology groups, since:
 \[
 (\tilde{s}\tilde{X})_M(f_1)=\widetilde{\{X_M(f_1),L_M(f_2)\}}+(-1)^{|X|}\tilde{X}_M(\pounds_{(.)}f_1)= \widetilde{(sX)}_M(f_1)\,,
 \]
where $X$ is a natural transformation with values in derivations, $\tilde{X}:=\alpha_\Psi(X)$ and $f_2\equiv 1$ on the support of $f_1$. The above result can be written more compactly as:
 \[
 \tilde{s}\tilde{X}=\widetilde{sX}
 \]
 To fix the gauge we have to choose $\Psi$. Since it has to be covariant, the most natural choice is the background gauge (see \cite{Ichi,NiOk}), i.e.:
 \[
 \Psi_{(M,g)}(f)=i\int\!\!\textrm{dvol}_{(M,g)}\left(\frac{\alpha}{2}\bar{C}_\mu B^\mu+\frac{1}{\sqrt{-g}}\bar{C}_\mu\nabla_\nu (\gt{\nu}{\mu})\right)=i\int\!\!g\big(\bar{C}, \frac{\alpha}{2}B+K(h)\big)\textrm{dvol}_{(M,g)}\,,
 \]
 where the indices are lowered in the background metric $g$, $\nabla$ is the covariant derivative on $(M,g)$ and we denoted $\tilde{\textfrak{g}}^{\nu\lambda}:=\sqrt{-\tilde{g}}\tilde{g}^{\nu\lambda}$, $K^\mu(h)=\frac{1}{\sqrt{-g}}\nabla_\nu \gt{\nu}{\mu}$. For $\alpha=0$ this is just the harmonic gauge. After putting antifields to $0$ we obtain the following form of the gauge-fixed Lagrangian:
 \[
 L^g_M(f):=\int fR[\tilde{g}]\,\textrm{dvol}_{(M,\tilde{g})}+\int\!\!\textrm{dvol}_{(M,g)}f\left(ig\big(\bar{C},\frac{\delta K}{\delta h}[\pounds_C \tilde{g}]\big)-g\big(B, \frac{\alpha}{2}B+K(h)\big)\right)\,.
 \]
The differential $\tilde{s}$  can be expanded with respect to the total antifield number as $\tilde{s}=\delta^g+\gamma^g$, where $\delta^g$ is the Koszul differential of the gauge fixed action and $\gamma^g$ is the gauge-fixed BRST differential given by:
 \[
(\gamma^g\Phi)_M(f)=\gamma_0^g(\Phi_M(f))+(-1)^{|\Phi|}\Phi_M(\rho(.)f)\,,
 \]
 where:
 \begin{center}
{\setlength{\extrarowheight}{2.5pt}
\begin{tabular}{ll}
\toprule%
& $\gamma_0^g$\\\otoprule%
$F\in\F(M)$&$\left<F^{(1)},\pounds_C(g+.)\right>$\\%\cmidrule(l){2-2}
 $C$&$-\frac{1}{2}[C,C]$\\
 $B$& $0$\\
  $\bar{C}$& $iB$\\\bottomrule
\end{tabular}}
\end{center}
The algebra of physical microcausal fields can be recovered as $Fld_{ph}:=H^0(H_0(Fld_{mc},\delta^g),\gamma^g)$. To introduce the Poisson structure on it, we shall first do it on $Fld_{mc}$. We start with finding the field equations for the action $S^g$. We use the fact that in local coordinates:
 \[
 \pounds_C\gt{\mu}{\nu}=-\gt{\alpha}{\nu}\nabla_\al C^\mu-\gt{\alpha}{\mu}\nabla_\al C^\nu+\nabla_\alpha (C^\alpha\gt{\mu}{\nu})\,.
\]
The field equations\footnote{The field equations have to understood in the algebraic sense, see footnote \ref{eqs}.} take the form:
\begin{align*}
R_{\nu\lambda}[\tilde{g}]&=-i\Big(\nabla_{(\nu}\bar{C}_{|\mu|}\nabla_{\la)}C^\mu+\nabla_\mu\bar{C}_{(\nu}\nabla_{\la)}C^\mu+(\nabla_\al\nabla_{(\nu}\bar{C}_{\la)})C^\al\Big)-\nabla_{(\la}B_{\nu)}\,,\\%
% Ghost:
\tilde{\textfrak{g}}^{\nu\lambda}\nabla_{\nu}\nabla_{\lambda} C^\mu&=\tilde{\textfrak{g}}^{\alpha\nu}R_{\lambda\nu\alpha}^{\quad\ \mu}[g]C^\lambda+\alpha\sqrt{-g}\Big(B^\lambda\nabla_\lambda C^\mu-\nabla_\lambda(B^\mu C^\lambda)\Big)\,,\\%
% Antighost:
\tilde{\textfrak{g}}^{\nu\lambda}\nabla_{\nu}\nabla_{\lambda} \bar{C}_\mu&=-\tilde{\textfrak{g}}^{\nu\lambda}R_{\nu\mu\lambda}^{\quad\ \alpha}[g]\bar{C}_\alpha+\alpha\sqrt{-g}\Big(B^\lambda\nabla_\lambda \bar{C}_\mu+B^\lambda \nabla_\mu \bar{C}_\lambda\Big)\,,
\\%
% Gauge fixing:
\nabla_\nu \tilde{\textfrak{g}}^{\nu\mu}&=-\alpha\sqrt{-g} B^\mu\,.
\end{align*}
The details of the calculation are given in the appendix \ref{variables}. This system is gauge-fixed but not normally hyperbolic in all the variables, since we have second derivatives of the ghosts in the first equation. To find a solution of this problem we rewrite the first equation using the Leibniz rule:
\[
(\nabla_\al\nabla_{(\nu}\bar{C}_{\la)})C^\al=-\nabla_{(\nu}(C^\al\nabla_{|\al|}\bar{C}_{\la)})-\nabla_\al\bar{C}_{(\la}\nabla_{\nu)}C^\al-\frac{1}{2}\Big(R_{\al\nu\ \ \la}^{\quad\bet}\bar{C}_\bet C^\al+R_{\al\la\ \ \nu}^{\quad\bet}[g]\bar{C}_\bet C^\al\Big)\,.
\]
It follows that
\[
-R_{\nu\lambda}[\tilde{g}]=i\nabla_{(\nu}\bar{C}_{|\mu|}\nabla_{\la)}C^\mu+\nabla_{(\nu}\Big(B_{\la)}-iC^\al\nabla_{|\al|}\bar{C}_{\la)}\Big)-iR_{\al\nu\bet\la}[g]\bar{C}^{(\bet} C^{\al)}\,.
\]
It is now evident that the system can be made hyperbolic by a suitable variable change. Before setting antifields to $0$ we perform a canonical transformation of the algebra $\BV(M)$ by setting $b_\la=B_{\la}-iC^\al\nabla_{\al}\bar{C}_{\la}$. The antifields have to transform in such a way that the antibracket remains conserved, i.e.: 
\begin{align}\label{vartr}
b_\la&=B_{\la}-iC^\al\nabla_{\al}\bar{C}_{\la}&\frac{\delta}{\delta b_\la}&=\frac{\delta}{\delta B_\la}\,,\nonumber\\
c_\nu&=C_\nu&\frac{\delta}{\delta c_\la}&=\frac{\delta}{\delta C_\la}+i\nabla^{\la}\bar{C}_{\bet}\frac{\delta}{\delta B_\bet}\,,\\
\bar{c}_\nu&=\bar{C}_\nu&\frac{\delta}{\delta \bar{c}_\la}&=\frac{\delta}{\delta \bar{C}_\la}-iC^\al\nabla_{\al}(.)\circ\frac{\delta}{\delta B_\la}\,.\nonumber
\end{align}
We used a short-hand notation: $\big(C^\al\nabla_{\al}(.)\circ\frac{\delta}{\delta B_\la}\big)(x)\doteq\int dz C^\al(z)\nabla_\al^z\delta(x-z)\frac{\delta}{\delta B_\la(z)}$. The new gauge-fixed Lagrangian takes the form
 \begin{multline}
 L^g_M(f)=\int fR[\tilde{g}]\,d\textrm{vol}_{(M,\tilde{g})}+\\
 +\int\!\!d\textrm{vol}_{(M,g)}f\left(ig\big(\bar{c},\frac{\delta K}{\delta h}[\pounds_c \tilde{g}]\big)-g\Big(b+(ic^\al\nabla_{\al})\bar{c}, \frac{\alpha}{2}(b+(ic^\al\nabla_{\al})\bar{c})+K(h)\Big)\right)\,.
 \end{multline}
The gauge-fixed BRST differential $\gamma^g_0$ is now defined as:
 \begin{center}
{\setlength{\extrarowheight}{2.5pt}
\begin{tabular}{ll}
\toprule%
& $\gamma_0^g$\\\otoprule%
$F\in\F(M)$&$\left<F^{(1)},\pounds_c(g+.)\right>$\\%\cmidrule(l){2-2}
 $c$&$-\frac{1}{2}[c,c]$\\
 $b$& $i(c^\bet\!\wedge c^\al\nabla_\bet\nabla_\al)\bar{c}+c^\al\nabla_\al b$\\
  $\bar{c}$& $ib-c^\la\nabla_\la \bar{c}$\\\bottomrule
\end{tabular}}
\end{center}
The equations of motion in the new variables can be written as:
\begin{align}
\tilde{R}_{\nu\lambda}&=-i\nabla_{(\nu}\bar{c}_{|\mu|}\nabla_{\la)}c^\mu-\nabla_{(\nu}b_{\la)}+iR_{\al\nu\bet\la}\bar{c}^{(\bet} c^{\al)}\label{sys2}\,,\\%
% Ghost:
\tilde{\textfrak{g}}^{\nu\lambda}\nabla_{\nu}\nabla_{\lambda} c^\mu&=\tilde{\textfrak{g}}^{\alpha\nu}R_{\lambda\nu\alpha}^{\quad\ \mu}c^\lambda+\alpha\sqrt{-g}\Big((b^\la+ic^\al\nabla_\al \bar{c}^\la)\nabla_\lambda c^\mu-\nabla_\lambda((b^\mu+ic^\al\nabla_\al \bar{c}^\mu) c^\lambda)\Big)\,,\nonumber\\%
% Antighost:
\tilde{\textfrak{g}}^{\nu\lambda}\nabla_{\nu}\nabla_{\lambda} \bar{c}_\mu&=-\tilde{\textfrak{g}}^{\nu\lambda}R_{\nu\mu\lambda}^{\quad\ \alpha}\bar{c}_\alpha+\alpha\sqrt{-g}\Big((b^\la+ic^\al\nabla_\al \bar{c}^\la)\nabla_\lambda \bar{c}_\mu+(b^\la+ic^\al\nabla_\al \bar{c}^\la) \nabla_\mu \bar{c}_\lambda\Big)\,,\nonumber
\\%
% Gauge fixing:
\nabla_\nu \tilde{\textfrak{g}}^{\nu\mu}&=-\alpha\sqrt{-g} (b^\mu+ic^\al\nabla_\al \bar{c}^\mu)\,,\nonumber
\end{align}
where we denoted $\tilde{R}_{\nu\la}:=R_{\nu\la}[\tilde{g}]$ and $R_{\al\bet\gamma\la}:=R_{\al\bet\gamma\la}[g]$. The equation for $b$ can be obtained from the first equation by means of the Bianchi identity. One can already see that after linearization we obtain a normally hyperbolic system of equations since from the antisymmetry of ghost fields follows that $c^\la c^\al\partial_\lambda\partial_\al \bar{c}^\mu=0$ and all the other second order terms are of the d'Alembertian-like form.
For such a system, retarded and advanced solutions of the linearized equations exist and one can define the Peierls bracket\index{Peierls bracket} on $Fld_{mc}$. Like in case of Yang-Mills theories it is well defined also on $Fld_{ph}$ and we obtain a Poisson algebra $(Fld_{ph},\{.,.\}_{S^g})$. Although the Poisson structure on $Fld_{mc}$ can depend on the choice of variables in the extended algebra, this doesn't affect the structure induced on $Fld_{ph}$. Note that  for the harmonic gauge ($\al=0$) and the Minkowski background the system (\ref{sys2}) simplifies to (compare with \cite{NaOji}):
\begin{align*}
\tilde{R}_{\nu\lambda}&=-i\partial_{(\nu}\bar{c}_{|\mu|}\partial_{\la)}c^\mu-\partial_{(\la}b_{\nu)}\,,\\%
% Ghost:
\Box_{\tilde{g}}c^\mu&=0\,,\\%
% Antighost:
\Box_{\tilde{g}}\bar{c}_\mu&=0\,,\\%
% B field:
\Box_{\tilde{g}}b_\mu&=0\,,\\%
% Gauge fixing:
\partial_\nu \tilde{\textfrak{g}}^{\nu\mu}&=0\,.
\end{align*}
%---------------------------------------------------------------------------------------------------------
\subsection{Appendix: calculation of the equations of motion}\label{variables}
%---------------------------------------------------------------------------------------------------------
The gauge fixed Lagrangian is equivalent to: $L^g\sim L^{\sst{\mathrm{EH}}}+L^{\sst{\mathrm{FP}}}+L^{\sst{\mathrm{GF}}}$, where:
\begin{align*}
L^{\sst{\mathrm{EH}}}_{\sst{M}}(f)&=\int d\textrm{vol}_{(M,\tilde{g})} R[\tilde{g}]f\,,\\
L^{\sst{\mathrm{FP}}}_{\sst{M}}(f)&=i\int d\textrm{vol}_{(M,g)}\frac{1}{\sqrt{-g}}\nabla_\nu\bar{C}_\mu\Big(\gt{\lambda}{\nu}\nabla_\lambda C^\mu+\gt{\lambda}{\mu}\nabla_\lambda C^\nu-\nabla_\lambda(C^\lambda\gt{\mu}{\nu})\Big)f\,,\\
L^{\sst{\mathrm{GF}}}_{\sst{M}}(f)&=-\int d\textrm{vol}_{(M,g)}\Big(\frac{\alpha}{2}B^\mu B_\mu+\frac{1}{\sqrt{-g}}B_\mu\nabla_\nu(\gt{\nu}{\mu})\Big)f\,.
\end{align*}
We used the fact that in local coordinates:
 \[
 \pounds_C\gt{\mu}{\nu}=-\gt{\alpha}{\nu}\nabla_\al C^\mu-\gt{\alpha}{\mu}\nabla_\al C^\nu+\nabla_\alpha (C^\alpha\gt{\mu}{\nu})\,.
\]
Now we calculate the field equations. We obtain following formulas for the Euler-Lagrange derivatives of the Fadeev-Popov term:
\begin{align}
(S^{\sst{\mathrm{FP}}}_{\sst{M}})'_{\bar{C}_\mu}&=-\frac{i}{\sqrt{{\sst{-}}g}}\nabla_\nu\Big(\gt{\la}{\nu}\nabla_\la C^\mu+\gt{\la}{\mu}\nabla_\la C^\nu-\nabla_\la(C^\la\gt{\mu}{\nu})\Big)\,,\label{ghost}\\
(S^{\sst{\mathrm{FP}}}_{\sst{M}})'_{C^\mu}&=-\frac{i}{\sqrt{{\sst{-}}g}}\Big(\nabla_\la(\gt{\la}{\nu}\nabla_\nu\bar{C}_\mu)+\nabla_\la(\gt{\la}{\nu}\nabla_\mu\bar{C}_\nu)-\gt{\al}{\bet}\nabla_\mu\nabla_\al\bar{C}_\bet\Big)\,,\label{antigh}\\
(S^{\sst{\mathrm{FP}}}_{\sst{M}})'_{h^{\lambda\nu}}&=\frac{\sqrt{{\sst{-}}\tilde{g}}}{\sqrt{{\sst{-}}g}}\bigg(i\Big(\nabla_{(\nu}\bar{C}_{|\mu|}\nabla_{\la)}C^\mu+\nabla_\mu\bar{C}_{(\nu}\nabla_{\la)}C^\mu+(\nabla_\al\nabla_{(\nu}\bar{C}_{\la)})C^\al\Big)+\nonumber\\
&-\frac{i}{2}\tilde{g}_{\la\nu}\Big(\tilde{g}^{\bet\al}\nabla_\al\bar{C}_\mu\nabla_\bet C^\mu+\tilde{g}^{\bet\mu}\nabla_\al\bar{C}_\mu\nabla_\bet C^\al+\tilde{g}^{\mu\al}(\nabla_\bet\nabla_\al\bar{C}_\mu)C^\bet\Big)\bigg)\,.\label{FPvar}
\end{align}
Similarly for the gauge fixing term:
\begin{align}
(S^{\sst{\mathrm{GF}}}_{\sst{M}})'_{B_\mu}&=-\al\sqrt{{\sst{-}}g} B^\mu-\nabla_\nu(\gt{\nu}{\mu})\,,\label{gfix}\\
(S^{\sst{\mathrm{GF}}}_{\sst{M}})'_{h^{\lambda\nu}}&=\frac{\sqrt{{\sst{-}}\tilde{g}}}{\sqrt{{\sst{-}}g}}\Big(\nabla_{(\la}B_{\nu)}-\frac{1}{2}\tilde{g}_{\nu\la}\tilde{g}^{\al\bet}\nabla_\al B_\bet\Big)\,.\label{GFvar}
\end{align}
The variation of $L_{\mathrm{EH},M}(f)$ gives simply the Einstein's equation for the full metric:
\be
(S^{\sst{\mathrm{EH}}}_{\sst{M}})'_{h^{\lambda\nu}}=\frac{\sqrt{{\sst{-}}\tilde{g}}}{\sqrt{{\sst{-}}g}}\,\tilde{G}_{\nu\la}\,,\label{EHvar}
\ee
where we denoted $\tilde{G}_{\nu\la}:=G[\tilde{g}]_{\nu\la}$. Using the fact that $\tilde{G}_{\nu\lambda}=\tilde{R}_{\nu\lambda}-\frac{1}{2}\tilde{g}_{\nu\lambda}\tilde{g}^{\al\bet}\tilde{R}_{\al\bet}$ we can combine equations (\ref{FPvar}), (\ref{GFvar}) and (\ref{EHvar}) into
\be\label{riem2}
-\tilde{R}_{\nu\lambda}=i\Big(\nabla_{(\nu}\bar{C}_{|\mu|}\nabla_{\la)}C^\mu+\nabla_\mu\bar{C}_{(\nu}\nabla_{\la)}C^\mu+(\nabla_\al\nabla_{(\nu}\bar{C}_{\la)})C^\al\Big)+\nabla_{(\la}B_{\nu)}\,.
\ee
The equation for the ghost (\ref{ghost}) can be rewritten as:
\begin{multline*}
0=\gt{\la}{\nu}\nabla_\nu\nabla_\la C^\mu+\gt{\la}{\mu}\nabla_\nu\nabla_\la C^\nu-(\nabla_\nu\nabla_\la C^\la)\gt{\mu}{\nu}+\\+(\nabla_\nu\gt{\la}{\nu})\nabla_\la C^\mu+(\nabla_\nu\gt{\la}{\mu})\nabla_\la C^\nu-C^\la \nabla_\nu\nabla_\la \gt{\mu}{\nu}-(\nabla_\nu C^\la )\nabla_\la\gt{\mu}{\nu}-(\nabla_\la C^\la )\nabla_\nu\gt{\mu}{\nu}\,.
\end{multline*}
Using the formula for the commutator of covariant derivatives we obtain:
\begin{multline*}
0=\gt{\la}{\nu}\nabla_\nu\nabla_\la C^\mu+\gt{\la}{\mu}(\nabla_\nu\nabla_\la C^\nu-R_{\nu\la\al}^{\quad\ \nu}C^\al)-\gt{\la}{\mu}(\nabla_\la\nabla_\nu C^\nu)+(\nabla_\nu\gt{\la}{\nu})\nabla_\la C^\mu+\\-(\nabla_\la C^\la )\nabla_\nu\gt{\mu}{\nu}-C^\la (\nabla_\la\nabla_\nu \gt{\mu}{\nu}+R_{\la\nu\al}^{\quad\ \mu}\gt{\al}{\nu}+R_{\la\nu\al}^{\quad\ \nu}\gt{\al}{\mu})=\\
=\gt{\la}{\nu}\nabla_\nu\nabla_\la C^\mu-\gt{\la}{\mu}R_{\nu\la\al}^{\quad\ \nu}C^\al+(\nabla_\nu\gt{\la}{\nu})\nabla_\la C^\mu-(\nabla_\la C^\la )\nabla_\nu\gt{\mu}{\nu}+\\-C^\la (\nabla_\la\nabla_\nu \gt{\mu}{\nu}+R_{\la\nu\al}^{\quad\ \mu}\gt{\al}{\nu}-R_{\nu\al\la}^{\quad\ \nu}\gt{\al}{\mu})=\\
=\gt{\la}{\nu}\nabla_\nu\nabla_\la C^\mu+(\nabla_\nu\gt{\la}{\nu})\nabla_\la C^\mu-(\nabla_\la C^\la )\nabla_\nu\gt{\mu}{\nu}-C^\la (\nabla_\la\nabla_\nu \gt{\mu}{\nu}+R_{\la\nu\al}^{\quad\ \mu}\gt{\al}{\nu})\,.
\end{multline*}
Terms containing the derivatives of the full metric can be eliminated with the gauge fixing condition (\ref{gfix}). The result is:
\begin{align}
\gt{\la}{\nu}\nabla_\nu\nabla_\la C^\mu&=C^\la R_{\la\nu\al}^{\quad\ \mu}\gt{\al}{\nu}+\al\sqrt{{\sst{-}}g}\Big( B^\la\nabla_\la C^\mu-(\nabla_\la C^\la ) B^\mu-C^\la \nabla_\la B^\mu\Big)=\nonumber\\
&=C^\la R_{\la\nu\al}^{\quad\ \mu}\gt{\al}{\nu}+\al\sqrt{{\sst{-}}g}\Big( B^\la\nabla_\la C^\mu-\nabla_\la (C^\la B^\mu)\Big)\,.\label{ghost2}
\end{align}
Similarly one obtains from the equation for the antighost (\ref{antigh}):
\be\label{antigh2}
\gt{\la}{\nu}\nabla_\la\nabla_\nu \bar{C}_\mu=-\gt{\la}{\nu}R_{\la\mu\nu}^{\quad\ \al}\bar{C}_\al+\al\sqrt{{\sst{-}}g}\Big(B^\nu\nabla_\nu \bar{C}_\mu+B^\nu\nabla_\mu\bar{C}_\nu\Big)\,.
\ee
Together equations (\ref{gfix}), (\ref{riem2}),  (\ref{ghost2}) and (\ref{antigh2}) constitute the following system:
\begin{align}
-\tilde{R}_{\nu\lambda}&=i\Big(\nabla_{(\nu}\bar{C}_{|\mu|}\nabla_{\la)}C^\mu+\nabla_\mu\bar{C}_{(\nu}\nabla_{\la)}C^\mu+(\nabla_\al\nabla_{(\nu}\bar{C}_{\la)})C^\al\Big)+\nabla_{(\la}B_{\nu)}\,,\nonumber\\%
% Ghost:
\tilde{\textfrak{g}}^{\nu\lambda}\nabla_{\nu}\nabla_{\lambda} C^\mu&=\tilde{\textfrak{g}}^{\alpha\nu}R_{\lambda\nu\alpha}^{\quad\ \mu}C^\lambda+\alpha\sqrt{-g}\Big(B^\lambda\nabla_\lambda C^\mu-\nabla_\lambda(B^\mu C^\lambda)\Big)\,,\label{sys1}\\%
% Antighost:
\tilde{\textfrak{g}}^{\nu\lambda}\nabla_{\nu}\nabla_{\lambda} \bar{C}_\mu&=-\tilde{\textfrak{g}}^{\nu\lambda}R_{\nu\mu\lambda}^{\quad\ \alpha}\bar{C}_\alpha+\alpha\sqrt{-g}\Big(B^\lambda\nabla_\lambda \bar{C}_\mu+B^\lambda \nabla_\mu \bar{C}_\lambda\Big)\,,\nonumber
\\%
% Gauge fixing:
\nabla_\nu \tilde{\textfrak{g}}^{\nu\mu}&=-\alpha\sqrt{-g} B^\mu\,.\nonumber
\end{align}
For the harmonic gauge $\al=0$ this amounts to
\begin{align*}
-\tilde{R}_{\nu\lambda}&=i\Big(\nabla_{(\nu}\bar{C}_{|\mu|}\nabla_{\la)}C^\mu+\nabla_\mu\bar{C}_{(\nu}\nabla_{\la)}C^\mu+(\nabla_\al\nabla_{(\nu}\bar{C}_{\la)})C^\al\Big)+\nabla_{(\la}B_{\nu)}\,,\\%
% Ghost:
\tilde{\textfrak{g}}^{\nu\lambda}\nabla_{\nu}\nabla_{\lambda} C^\mu&=\tilde{\textfrak{g}}^{\alpha\nu}R_{\lambda\nu\alpha}^{\quad\ \mu}C^\lambda\,,\\%
% Antighost:
\tilde{\textfrak{g}}^{\nu\lambda}\nabla_{\nu}\nabla_{\lambda} \bar{C}_\mu&=-\tilde{\textfrak{g}}^{\nu\lambda}R_{\nu\mu\lambda}^{\quad\ \alpha}\bar{C}_\alpha\,,\\%
% Gauge fixing:
\nabla_\nu \tilde{\textfrak{g}}^{\nu\mu}&=0\,.
\end{align*}
In the Minkowski background this system simplifies to
\begin{align*}
-\tilde{R}_{\nu\lambda}&=i\Big(\partial_{(\nu}\bar{C}_{|\mu|}\partial_{\la)}C^\mu+\partial_\mu\bar{C}_{(\nu}\partial_{\la)}C^\mu+(\partial_\al\partial_{(\nu}\bar{C}_{\la)})C^\al\Big)+\partial_{(\la}B_{\nu)}\,,\\%
% Ghost:
\Box_{\tilde{g}}C^\mu&=0\,,\\%
% Antighost:
\Box_{\tilde{g}}\bar{C}_\mu&=0\,,\\%
% Gauge fixing:
\partial_\nu \tilde{\textfrak{g}}^{\nu\mu}&=0\,.
\end{align*}
%
%==============================
\part{Quantum field theory}\label{quant} %
%==============================
In this part, we show that using the classical structures discussed before one can gain some insight into the quantum theory. The interpretation of the BV complex formulated in the language of infinite dimensional geometry allows us to treat the BV quantization as a deformation of the geometrical structure resulting from the deformation of the pointwise product. Using this method, we provide the definition of the renormalized quantum BV operator. The problem in incorporating the renormalization into the BV formalism is present since the first papers of Batalin and Vilkovisky \cite{Batalin:1981jr,Batalin:1983wj,Batalin:1983jr}. The gauge invariance of the path integral measure is in this formalism equivalent to the \textit{quantum master equation}. It is similar to the classical one but contains an additional contribution from a functional differential operator $\Lap=\frac{\delta}{\delta \ph^\ddagger(x)\delta \ph(x)}$, which can be seen as a divergence of vector fields on the configuration space. The quantum master equation (\textsc{qme}) reads: $\frac{1}{2}\{S,S\}-i\hbar\!\Lap\! S=0$. It was clear from the beginning, that the operator $\Lap$ is a very singular object and the \textsc{qme} is not well defined. Nevertheless, it can be used for formal manipulations in the path integral quantization of gauge theories, as long as one gives up on the geometrical meaning of the structure. In \cite{Batalin:1983jr} authors comment on this problem pointing out the existence of divergences in higher loop order in \textsc{qme}:
\begin{center}
\begin{minipage}{12cm}
\it
The solution of Eq. (3.7) \textrm{[quantum master equation]} can be expanded in powers of $\hbar$
\[
W=S+\sum\limits_{p=1}^\infty\hbar^pM_p.\qquad\qquad\qquad\qquad\qquad\qquad\qquad\quad\ (3.11)
\]
This gives 
\begin{align*}
&p =0:&\ \{S,S\}=0,\qquad\qquad\qquad\qquad\qquad\qquad\qquad(3.12)\\ 
&p =1:&\{M_1,S\}=i\Lap\! S,\qquad\qquad\qquad\quad\qquad\qquad\quad (3.13)\\
&p\geq 2:&\{M_p,S\}=i\Lap\! M_{p-1}-\frac{1}{2}\sum\limits_{q=1}^{p-1}\{M_q,M_{p-q}\}\qquad\, (3.14)
\end{align*}
\end{minipage}
\begin{minipage}{12cm}
\it (\ldots) However, in a local basis of the gauge algebra the right-hand sides of Eqs. (3.13) and (3.14) are proportional to $\delta(0)$. In the framework of a regularization which annihilates such divergences one may put $M_p=0, p\geq 1$.
\end{minipage}
\end{center}
As seen from  the above quote, the first idea to deal with the divergences was to apply some regularization scheme that puts all the terms proportional to $\delta(0)$ to 0 and then perform the renormalization. In \cite{Troost} it was proposed to use instead a regularization that puts the divergent terms of \textsc{qme} at finite non-zero values. This approach allowed the analysis of the anomalies in a more systematic way and their relation to obstructions in fulfilling the \textsc{qme}. 
%However there was still a problem, because the regularization scheme used in \cite{Troost} was applied only after putting antifields to 0, so after the gauge fixing. 
%This is a remark from the introduction to this paper:
%\begin{quotation}
%\textit{
%A hidden dependence on the gauge choice sits however in the regularisation scheme, which in
%the more usual approaches depends on the gauge chosen. If one found a regularisation scheme which can be used before the antifields are eliminated using the gauge choices, then one would be able to compute all quantities in eq. (1.1) with the antifields still present, and one might try to solve the hierarchy of equations which constitute the master equation by algebraic methods.}
%\end{quotation}
The regularization used in \cite{Troost} is the Pauli-Villars scheme and the discussion is restricted only to the 1-loop order. A method valid for higher loop orders was proposed in \cite{Pa95}, but the regularization  scheme used there is non-local. The dimensional regularization and renormalization in the context of BV formalism were discussed in \cite{Tonin}. The BPHZ renormalization is discussed in \cite{JPT}. All of the mentioned approaches rely on some regularization scheme and involve arbitrary choices. From the conceptual point of view it is still unclear how the \textsc{qme} should be interpreted in the renormalized theory. An alternative treatment of \textsc{qme} which involves certain extension of the field-antifield formalism was presented in \cite{Barnich}.

As seen from the above review, the status of the renormalized \textsc{qme} is still in the focus of ongoing research. Our aim is to show that it arises naturally in the framework of perturbative algebraic field theory, without referring to a specific regularization scheme. As mentioned at the beginning of this section, we need firstly to deform the classical geometric structure of the BV complex. The idea of deformation quantization goes back to Bayen, Flato, Fronsdal, Lichnerowicz and Sternheimer \cite{BFFLS} and the first 
attempt to use these structures in quantum field theory is due to Dito \cite{Dit90}.
Based on these ideas Brunetti, D\"utsch, and Fredenhagen \cite{Duetsch:2000nh,Dutsch:2004hu,DF,DF04,DF02,BreDue,BDF} developed a formalism, which uses a purely algebraic formulation of perturbative QFT and treats the renormalization on a very general level. An appealing feature of this formalism is the idea to treat the renormalization procedure in terms of certain differential operators on the infinite dimensional space of functionals. This approach can be related to the path-integral formalism, but it is formulated on the purely algebraic level.
 In the present work we follow this idea and show in particular that the renormalized time-ordered product
is a full product on a certain domain, which is invariant under the renormalization. The BV differential defined with respect to this product provides a mathematically rigorous interpretation of the quantum BV operator. We will show that the renormalized \textsc{qme} can be related to the Master Ward Indentity (\textsc{mwi}) postulated in \cite{DueBoas} in the context of p\textsc{aqft} and analyzed in detail in \cite{DF02}, where it was proved to hold also in the classical theory.
%==============================================================
\chapter{Batalin-Vilkovisky formalism in p\textsc{aqft} }\label{pAQFT}
%==============================================================
\vspace{-5ex}
\begin{flushright}
 \begin{minipage}{10cm}
%\textit{Wielością Istnienia Poszczególnego jest wielość jakości, ich następstwo w trwaniu i współczesność w rzeczywistej jego przestrzeni (...). Tajemnicą Istnienia jest jedność w wielości i nieskończoność jego(...).}
\textit{Zasadniczą cechą dzieł sztuki jest jedność w wielości bez względu na to, jakie są elementy tej wielości i jakim sposobem osiągnięta jest ich jedność. Tę właśnie cechę jedności nazywamy pięknem danego tworu; pojęcie to może stosować się i do dzieł sztuki, i do innych tworów (...).}
\begin{flushright}
S.I. ``Witkacy'' Witkiewicz
\end{flushright}
%\textit{The multiplicity of Specific Existences is reflected in the multiplicity of their qualities, their causal order and their existence in time and space (...). The unity of the individual experience and its finite nature relative to the infinity of being is the Mystery of Existence (...).}
\textit{The main feature of artistic creations is their unity in multiplicity, no matter what the elements of this multiplicity are and how the unity is achieved. This feature of unity is called the beauty of a given artistic creation; this notion applies not only to artistic creations, but to all human creations in general (...).}
\begin{flushright}
S.I. `Witkacy'' Witkiewicz (translation K.R.)
\end{flushright}
 \end{minipage}
\end{flushright}
\vspace{5ex}
\noindent\rule[2pt]{\textwidth}{1pt}
%\minitoc
%\clearpage
\vspace{1ex}\\
Here we come to the final chapter of this thesis and to the end of our metaphorical journey. Some questions that we posed at the beginning  will now be answered and after a long struggle through the technical subtleties of mathematics and physics we arrived at the point where it all somehow gets its purpose and justification. It is always gratifying in science to see how a multitude of threads and concepts converge together to a unified idea. This feeling that everything has its right place is certainly one of the features of the mathematical beauty of a theory. In this sense it is impossible not to admire, how the quantum field theory is built on the fundament of rich mathematical and experimental background. After almost 100 years' history it still undergoes development and provides us with surprises. In particular the QFT on curved spacetimes is a relatively new research field and a lot has been done in it in recent years (for example \cite{BFK95,BF0,HW,HW2,HW3,HW5,H,DHP,Thomas,BDF,Chi}). The major step was done in the seminal paper of Radzikowski \cite{Rad}, who recognized, that the microlocal spectrum condition allows us to define a class of states on  generic globally hyperbolic spacetimes, that behave similar to the vacuum in Minkowski spacetime. Based on this observation, the Wick polynomial algebra has been constructed in \cite{BFK95,BF0,HW,HW5}. It was done by directly encoding the Wick theorem in the algebraic structure. This approach, called \textit{deformation quantization} has been introduced in \cite{BF0} and further developed in \cite{DF,BDF}. The interaction is introduced there with the use of Epstein-Glaser \cite{EG} causal perturbation theory and  the whole approach is commonly called \textit{perturbative algebraic quantum field theory} ({\paqft}). 
 
There are of course other frameworks for {\qft} aiming at mathematically rigorous results. Some of them cover only specific problems, without going into the deep conceptual roots, but all these ideas influence each other and constitute together a remarkable construction. Sometimes one finds a link between different approaches and each such result can lead to completely new developments. Concerning causal perturbation theory and the BV formalism, a first example of combining technics of both frameworks is due to Hollands and appeared in \cite{H}, in the context of Yang-Mills theories. There were however still some open questions concerning the general structure. The BV framework was originally associated with the path integral formalism and up to now there were no attempts to develop it intrinsically within {\paqft}. The present thesis proposes a solution of this problem. In the previous chapters we prepared all the mathematical tools and now finally we can show how the BV quantization can be understood from the point of view of {\paqft}. Following the idea of deformation quantization, we start with the classical structure and deform the pointwise product into the noncommutative one. All the classical structures we discussed in the previous chapters can be now deformed into the quantum ones in a systematical way. In particular the geometric construction of the BV complex can be easily translated to the quantized theory. Before we move to this task, we recall in the next section basic definitions and results of perturbative algebraic quantum field theory.
%==============================================================
\section{General structure}\label{general}
%==============================================================
%==============================================================
\subsection{Algebraic formulation of perturbative QFT}\label{algQFT}%==============================================================
We start with a short  review of the {\paqft} formalism. Ideas presented in this subsection come mainly from \cite{BDF,Duetsch:2000nh,DF,DF04,Kai}. In causal perturbation theory one starts with the free action and then introduces the interaction in a perturbative way, following the prescription of Bogoliubov \cite{BS}.  For concreteness we will use the example of the free minimally coupled scalar field with the generalized Lagrangian given by (\ref{Lscalar}). In the algebraic approach one focuses on the abstract structure of the theory. This is what was done in the classical case (chapter \ref{classfunc}), where we fixed the specific theory by 
defining a certain Poisson algebra. Analogously, in the quantum case our aim is to construct an involutive associative algebra ($*$-algebra), or rather a local net of algebras, in the spirit of local quantum physics. We use as a starting point the same topological space as in the classical theory, namely $\F_\mc(M)$. At the beginning, to avoid technical difficulties, we restrict ourselves only to regular functionals. Let $\F_{\reg}(M)\subset\F_\mc(M)$ be the space of functionals whose derivatives are smooth sections. This space is already equipped with the commutative pointwise product $\cdot$, but to obtain a quantized structure we need a noncommutative one. To this end we use the deformation quantization scheme. The space of formal power series in $\hbar$ with coefficients in $\F_{\reg}(M)$ will be denoted by $\F_{\reg}(M)[[\hbar]]$. On this space we can define a noncommutative $\star$-product\index{product!star}:
\be\label{star product}
F\star G\doteq m\circ \exp({i\hbar \DC})(F\otimes G) \ ,
\ee
where $m$ is the pointwise multiplication and $\DC$ is the functional differential operator
\begin{equation}\label{star product2}
\DC\doteq\frac{1}{2}
                  \int dx\, dy \De(x,y)\frac{\de}{\de\varphi(x)}\otimes\frac{\de}{\de\varphi(y)},\qquad \De=\De_R-\De_A\,.
\end{equation}
To simplify the notation we write $dx$ instead of $\ \dvol(x)$ whenever the choice of the integration measure is clear from the context. 
We can view the definition of the $\star$-product as a pullback of the pointwise product $\cdot$ by means of the following diagram:
\begin{center}
\begin{tikzpicture} \matrix(a)[matrix of math nodes, row sep=2em, column sep=0.4em, text height=2.5ex, text depth=0.4ex] {
\F_\reg(M)[[\hbar]]^{\otimes 2}          &&\F_\reg(M)[[\hbar]]^{\otimes 2}\\
&\F_\reg(M)[[\hbar]]&\\};
\path[->,font=\scriptsize] (a-1-1) edge node[above]{$\exp(i\hbar\DC)$} (a-1-3); 
\path[->,font=\scriptsize](a-1-1) edge[dashed] node[left]{$\star$} (a-2-2); 
\path[->,font=\scriptsize](a-1-3) edge node[right]{$\cdot$} node[right]{} (a-2-2);  
\end{tikzpicture}
\end{center}
The complex conjugation satisfies the relation:
\be
\overline{F\star G}=\overline{G}\star\overline{F}\,.
\ee
Therefore we can use it to define an involution  $F^*(\ph)\doteq\overline{F(\ph)}$. The resulting structure is a $*$-algebra $(\F_{\reg}(M)[[\hbar]],\star)$, which provides the quantization of $(\F_{\reg}(M),\{.,.\}_S)$. For the example of a scalar field we obtain the following commutation relations:
\[
[\Phi(f),\Phi(g)]_{\star}=i\hbar \langle f,\De g\rangle \ , \quad f,g\in \D(M)\,,
\]
where $\Phi(f)(\ph)\doteq \int\! f \ph\,\  \dvol $ is a smeared field. Compare it with the corresponding Poisson bracket $\{\Phi(f),\Phi(g)\}_S=\langle f,\De g\rangle$, constructed in \ref{scal}. The algebra $(\F_{\reg}(M)[[\hbar]],\star)$ contains an ideal generated (with respect to $\star$) by the free equations of motion. We can obtain the standard algebra of the free scalar field after performing the quotient by this ideal. Nevertheless it is more convenient not to do it in the beginning and to work in the off-shell formalism instead.

Up to now this was all free field theory. Now we want to introduce the interaction. To this end we need yet
another algebraic structure, namely the \textit{\textbf{time-ordered product}} $\T$\index{product!time ordered}. Following \cite{BDF} we define it by means of the time ordering operator $\TT$.
\[
\TT(F)\doteq e^{i\hbar\DD}(F)\,,
\]
where $\DD=\int dxdy \Delta_D(x,y)\frac{\delta^2}{\delta\ph(x)\delta\ph(y)}$ and $\Delta_D=\frac{1}{2}(\Delta_R+\Delta_A)$ is the Dirac propagator. The antitime-ordering $\overline{\Tcal}$ is the inverse of $\Tcal$ and is defined by replacing $\Delta_D$ with $-\Delta_D$. With these operators we define the time-ordered product $\T$ on $\Dcal_{\TT}(M)\doteq\TT(\F_\reg(M)[[\hbar]])$ by:
\[
F\T G\doteq \Tcal(\Tcal^{\minus}\cdot\Tcal^{\minus}G)
\]
The time-ordered product is associative, commutative and even equivalent to the pointwise product! This is a good news, because it can serve as a mean to carry over the classical structures to the quantum world. We will make use of this feature to bring the BV complex to the quantum level. The time ordered product can be also seen as a pullback by the diagram:
\begin{center}
\begin{tikzpicture} \matrix(a)[matrix of math nodes, row sep=2em, column sep=3em, text height=2.5ex, text depth=0.3ex] {
\F_\reg(M)[[\hbar]]^{\otimes 2}&\Dcal_{\TT}(M)^{\otimes 2}\\
\F_\reg(M)[[\hbar]]&\Dcal_{\TT}(M)\\};
\path[->,font=\scriptsize] (a-1-1) edge node[above]{$\TT^{\otimes 2}$} (a-1-2); 
\path[->,font=\scriptsize] (a-2-1) edge [dashed] node[above]{$\TT$} (a-2-2); 
\path[->,font=\scriptsize](a-1-1) edge node[left]{$\cdot$} (a-2-1); 
\path[->,font=\scriptsize](a-1-2) edge node[right]{$\T$} node[right]{} (a-2-2);  
\end{tikzpicture}
\end{center}
Formally the time ordering operator $\TT$ may be understood as the operator of convolution with the oscillating Gaussian measure with covariance $i\hbar\Delta_D$,
\be\label{path integral}
\TT F(\ph)\form \int d\mu_{i\hbar\Delta_D}(\phi)F(\ph-\phi) \ . 
\ee
To see that $\T$ is indeed the time ordered product for $\star$ we can look at the example of the scalar field. In this case we have:
\begin{align*}
\Phi(f)\T\Phi(g)&=\Phi(f)\cdot\Phi(g)+ \frac{i\hbar}{2}\langle f, (\De_R+\De_A)g\rangle\,,\\
\Phi(f)\star\Phi(g)&=\Phi(f)\cdot\Phi(g) + \frac{i\hbar}{2}\langle f,(\De_R-\De_A)g\rangle\,.
\end{align*}
From the support properties of $\Delta_R$ and $\Delta_A$ it follows that $\Phi(f)\T\Phi(g)=\Phi(f)\star\Phi(g)$ if the support of $f$ is later than the support of $g$, i.e. $\supp f\gtrsim\supp g$. The time ordered product provides us with means to introduce the interaction using the local S-matrices. For localized $V\in\TT(\F_\reg(M))$ the formal S-matrix is defined as the time-ordered exponential:
\[
\Scal(V)\doteq e_{\sst{\TT}}^V=\TT(e^{\TT^{\minus}V})\,.
\]
Note that it is not meant to be understood as an actual scattering matrix, but as a generating functional for the higher order time-ordered products:
\be\label{Smatrix}
\Scal(V)=\sum_{n=0}^\infty\frac{1}{n!}V\T...\T V\equiv\sum_{n=0}^\infty\frac{1}{n!}\TT^n(V^{\otimes n})\,.
\ee
We can now define the relative S-matrix by the formula of Bogoliubov:
\be\label{Bog}
\Scal_V(F)\doteq\Scal(V)^{\star-1}\star \Scal(V+F)\,.
\ee
$\Scal_V$ plays a role of an intertwining M{\o}ller map\index{Moller map@M{\o}ller map!quantum} between the free and the  interacting theory. You can compare it with the classical version of M{\o}ller maps introduced in section \ref{inter}. 
Interacting observables can be now obtained from $\Scal_V(F)$,  which serves as a generating functional. 
We will discuss it in detail in section \ref{renorm}. Up to now we only defined the time-ordered products for regular functionals. This is not enough to introduce a ``sensible'' interaction, since typical interaction terms are local nonlinear functionals and the expression $\Scal(V)$ would be ill-defined. To amend this, one has to carefully extend $\Scal(V)$ to more general objects. It was shown in \cite{EG} that this can be done by means of causal perturbation theory but the extension is not unique. Its ambiguity corresponds to the renormalization freedom. 
We discuss the problem of renormalization in section \ref{renorm} but for now we want to stay for a while in the realm of non-renormalized time-ordered products to look closer at the algebraic structure. Interacting quantum fields are generated by $\Scal_{iV/\hbar}(F)$ and we can write them as formal power series:
\be\label{Rv}
\frac{d}{d\lambda}\Big|_{\lambda=0}\Scal_{iV/\hbar}(\lambda F)=\sum\limits_{n=0}^\infty\frac{1}{n!} R_{n,1}(V^{\otimes n},F)\equiv R_V(F)\,,
\ee
where the coefficients $R_{n,1}$ are called \textit{\textbf{retarded products}}\index{retarded product!quantum} and are obtained from (\ref{Bog})
by differentiation:
\begin{align*}
\Big(\frac{\hbar}{i}\Big)^n\, R_{n,1}(V^{\otimes n},F)&=\frac{d}{d\lambda}\Big|_{\lambda=0}\Scal(V)^{\star-1}\star \Scal(V+\lambda F)=\\
&=\frac{d}{d\lambda}\Big|_{\lambda=0}\sum\limits_{k=0}^n\overline{\TT}(V^{\otimes k})\star \TT((V+\lambda F)^{\otimes (n-k)})
\end{align*}
You can compare this definition with the classical counterpart (\ref{retarded:class}). In the classical case the higher ordered retarded products were coefficients of the expansion of a M{\o}ller map $r_{F}$ in powers of $\lambda$. The same structure appears now in the quantum case. More explicitly the intertwining map $R_V$ can be written as
\[
R_V(F)=\left(e_{\sst{\TT}}^{iV/\hbar}\right)^{\star\minus}\star\left(e_{\sst{\TT}}^{iV/\hbar}\T F\right)\,.
\]
When we switch on the interaction, also the star product has to change. A natural definition can be obtained with the use of the intertwining map $R_V(F)$. Following \cite{F11} we define the interacting star product as (compare it with the classical counterpart (\ref{intertwine:class})):
\be\label{interacting:star}
F\star_V G\doteq R_V^{\minus}\left( R_V(F)\star R_V(G)\right)\,,
\ee
where the inverse of $R_V$ is given by:
\[
R_V^{\minus}(F)=e_{\sst{\TT}}^{-iV/\hbar}\T\left(e_{\sst{\TT}}^{iV/\hbar}\star F\right)\,.
\]

Now we want to relate algebraic formulas we are using to a more ``standard'' formulation of QFT. This is particularly interesting in the context of the BV formalism, which was developed originally in the path integral approach. For the moment let $M=\M$, the Minkowski spacetime.
Before we discuss the relation between the algebraic and path integral formalisms, we want to make a small comment. 
In the definition of the time-ordering operator, instead of $\Delta_D$ we could also use the Feynman propagator $\Delta_F=i\Delta_D+\Delta_1$, where $\Delta_1$ is the symmetric part of the Wightmann 2-point function $\Delta_+$. In fact one can take a Hadamard solution, that satisfies certain conditions, in place of  $\Delta_1$ and the algebraic structure will still be equivalent \cite{BDF}. We discuss this in detail in section \ref{quantalg}. For the comparison with the path integral formalism it is convenient to use for a moment $\Delta_F$ in the definition of time-ordered products and correspondingly $\Delta_+=\frac{i}{2}\Delta+\Delta_1$ for the $\star$-product.
Let $\omega_0(F)\doteq F(0)$, $F\in\F_\reg(M)$. This functional plays a role of the vacuum state. 
As we already mentioned, the expectation value in $\omega_0$ is related to the formal expression of integrating with the Gaussian measure, therefore:
\be\label{omV}
\omega_0(\Scal(iV/\hbar))\form \int d\mu_{i\hbar\Delta_F}e^{:V:}\form \int\! D\phi\, e^{\frac{i}{\hbar}(S+:V:)}\,.
\ee
Note that in the present framework the normal ordering is implemented by applying $\TT^{\minus}$ to $V$. 

Assume that we can take the adiabatic limit, so in this limit $\Scal$ is translation invariant and we have a unique vacuum. Then (\ref{Bog}) relates to the Gell-Man-Law formula, since 
$\omega_0(\Scal(V)\star\Scal_V(F))=\omega_0(\Scal(V))\omega_0(\Scal_V(F))$ and we can write formally:
\[
\omega_0(\Scal_V(F))\form\frac{\int d\mu_{i\hbar\Delta_F}e^{:V:}:F:}{ \int d\mu_{i\hbar\Delta_F}e^{:V:}}\,.
\]
With this little dictionary at hand we can now set to understand the BV quantization in the formalism of \paqft.
%==============================================================
\subsection{BV quantization in the algebraic framework}\label{BValg}
%==============================================================
In the previous section we were only reviewing the results on algebraic quantization available elsewhere in the literature. Finally it's 
time to use these tools to tackle a new problem. The main result of the present thesis is the formulation of the BV quantization within the framework of {\paqft} and now we can outline the underlying idea of our construction. The ideas presented here are the fundament for a completely new way to look at the BV complex in the context of {\paqft}. We will elaborate on this concept in the rest of the present chapter. 

In order to perform the construction of the BV complex we need to extend the algebra of functionals with its derivations, i.e. vector fields. As in the classical case this extension is provided by the space $\V_\mc(M)$ of microcausal maps from $\E(M)$ to $\E_c(M)$. By a slight abuse of notation we call it \textit{the space of microcausal vector fields}\index{microcausal!vector fields}. Now we want to extend the time ordering operator $\TT$ to vector fields. For the moment we restrict ourselves to the regular ones. We define $\V_\reg(M)$ to be the space of vector fields such that for $n>0$ the functional derivative $X^{(n)}(\ph)$  is a test section in $\Gamma_c(M^{n+1},V^{\otimes(n+1)})$. In particular this excludes all the local non-constant vector fields since in this case $X(\ph)(x)$ is a function of the infinite jet of $\ph$, i.e.: $X(\phi)(x)=f(j^\infty_x(\ph))$ so the functional derivative is proportional to the derivatives of delta distribution.
%% \[
%% \frac{\delta}{\delta \ph(y)}X(\ph_0)(x)=\delta(x-y)\sum\limits_{k\geq 0} (-1)^k \partial_{\mu_1}...\partial_{\mu_k}\left(\frac{\partial f}{\partial(\partial_{\mu_1}...\partial_{\mu_k}\ph)}(j_x^\infty(\ph_0))\right)\,,
%% \]
%Hence for local vector fields  the expression $\int dxdydz \Delta_D(x,y)\frac{\delta^2 X(z)}{\delta\ph(x)\delta\ph(y)} \frac{\delta}{\delta \ph(z)}$ would be divergent like $\Delta_D(0)$. 
It was already discussed in section \ref{Vfonconfspace} that vector fields can be seen from two viewpoints: on one hand as derivations of $\F(M)$ and on the other as sections of the tangent bundle, i.e. maps from $\E(M)$ to $\E_c(M)$. These two roles played by vector fields have their consequences for the definition of the time ordering operator on $\V_\reg(M)$. Indeed, if we think of an element $X\in \V_\reg(M)$ as a section, then $\TT$ acts on it simply as a differential operator and we can put forth a following definition:
\be\label{timeordvf}
\TT X\doteq \int\!dx\, \TT (X(x))\frac{\delta}{\delta \ph(x)}
\ee
In section \ref{Vfonconfspace} we denoted by $\partial_X$  the derivation of $\F_\reg(M)$ corresponding to the vector field $X$. It is now natural to ask how this derivation transforms. The guiding principle for all our definitions is to use $\TT$ as a mean to transport the classical structure to the quantum algebra. In this spirit we can associate with $Y\in\TT(\V_\reg(M))$ a derivation of $\TT(\F_\reg(M))$ defined as
\be\label{timeorder}
\partial^{\sst{\TT}}_Y F=\mathcal{T}\langle \mathcal{T}^{-1}Y,\mathcal{T}^{-1}F^{(1)} \rangle\,\qquad F\in\TT(\F_\reg(M))\,.
\ee
From the above formula it is evident that $\partial^{\sst{\TT}}_Y$ is a derivation of $\TT(\F_\reg(M))$ with respect to the time ordered product $\T$:
\be
\partial^{\sst{\TT}}_Y(F\T G)=(\partial^{\sst{\TT}}_Y F)\T G+F\T(\partial^{\sst{\TT}}_Y G)\,,
\ee
Moreover we obtain a following identity:
\be
\partial^{\sst{\TT}}_{\mathcal{T}X}=\mathcal{T}\circ\partial_X\circ\mathcal{T}^{-1}
\ee
The construction we performed shows that we can recover in a natural way all the classical structures of the BV complex in the quantum algebra, but they are defined with respect to the the time-ordered product, not $\star$. The advantage of using $\T$ is that it is still a commutative product (in contrast to $\star$), so the Koszul-Tate complex makes sense in this case. 
The graded algebra of antifields is transformed into $\TT(\Lambda\V_\reg(M))$. This algebra is equipped with the Schouten bracket $\{.,.\}_{\TT}$ defined as:
\be
\{X,Y\}_{\sst{\TT}}=\mathcal{T}\{\mathcal{T}^{-1}X,\mathcal{T}^{-1}Y\}
\ee
Again we see that this is the graded extension of the commutator of derivations and the evaluation of a derivation on a functional in $\TT(\F_\reg(M))$, since it holds:
\begin{align}
\partial^{\mathcal{T}}_{\{X,Y\}_{\sst{\TT}}}&=[\partial^{\sst{\TT}}_X,\partial^{\sst{\TT}}_Y]\\
\{X,F\}_{\sst{\TT}}&=\partial^{\sst{\TT}}_X F
\end{align}
Now we want to see how the ideal generated by the equations of motion is transforming under the time ordering. We identify it as the image of the time-ordered Koszul operator:
\be
\delT=\mathcal{T}\circ\delta_{\mathcal{T}^{\minus}S}\circ\mathcal{T}^{-1}\,,
\ee
where $S\in\TT(\F_\reg(M))$. We have just characterized the classical ideal of equations of motion, but what about the quantum one? There is also a nice way to describe it, but before we turn to this task, we need one more definition. We already defined the $\T$-product of antifields, but we need also the $\star$. The definition is quite natural if we treat vector fields as functions $\E(M)\rightarrow\E_c(M)$ and apply to them the operator $\exp({i\hbar \DC})$ defined by (\ref{star product}) and (\ref{star product2}). We spell out this definition explicitly. Let $X,Y\in\V_\reg(M)$. Then we define
\be\label{star:product:v}
X\star Y\doteq \exp({i\hbar \DC})(X\wedge Y) \ ,
\ee
where $\DC$ is the functional differential operator defined in (\ref{star product2}). We can also write it in terms of the vector field ``coefficients'' as:
\be\label{star:product:v2}
(X\star Y)(x,y)=X(x)\star Y(y)-X(y)\star Y(x)\,.
\ee
From the above formula we can see that $X\star Y$ is an antisymmetric compactly supported function on $M^2$ with values in the algebra $(\F_\reg(M),\star)$. 
Let as now have a closer look at the image of $\delT$. Acting  on a time-ordered vector field $X\in\TT(\V_\reg(M))$ with $\delT$ we obtain
\[
\delT(X)=\TT(\delta_{\mathcal{T}^{\minus}\!S}(\TT^{\minus} X))=m \circ e^{i\hbar\DDp}\left(\int\! dx\, X(x)\otimes \frac{\delta S}{\delta \ph(x)}\right)\,,
\]
where in the second step we used the Leibniz rule. Since $S$ is a functional of second order in $\ph$, the expansion of $e^{i\hbar\DD}$ has only two nontrivial terms and we finally obtain:
\begin{align}
\delT(X)&=\int\! dx\,X(x)\frac{\delta S}{\delta \ph(x)}+i\hbar \int\! dxdydz\,\frac{\delta X(x)}{\delta\ph(y)}\frac{\delta^2 S}{\delta \ph(x)\delta\ph(z)}\Delta_D(y,z)=\nonumber\\
&=\int\! dx\,X(x)\frac{\delta S}{\delta \ph(x)}+i\hbar \int\!dx\, \frac{\delta X(x)}{\delta\ph(x)}=\nonumber\\
&=\delta_S(X) +i\hbar\Lap\! X\label{QMO}\,,
\end{align}
where $\Lap$ is a map that acts on regular vector fields $\V_\reg(M)$ like a divergence\footnote{This operator is in the literature denoted by $\Delta$, but we use here a slightly different symbol $\Lap$, to distinguish it from the causal propagator $\Delta(x,y)$.}:
\[
\Lap Q\doteq \int\! dx\,\frac{\delta^2 Q}{\delta\ph^\ddagger(x)\delta\ph(x)}=\int\! dx\,\frac{\delta Q(x)}{\delta\ph(x)},\qquad Q\in\V_\reg(M)\,.
\]
This operator can be extended also to multi-vector fields $\La\V_\reg(M)$ in such a way that it becomes a differential, i.e. $\Lap^2=0$ is fulfilled. Explicitly we can write $\Lap$ as:
\[
\Lap Q=(-1)^{(1+|Q|)}\int dx \frac{\delta^2 Q}{\delta\ph^\ddagger(x)\delta\ph(x)},\qquad Q\in\Lambda\V_\reg(M)\,.
\]
It has also some nice properties with relation to the antibracket. For example it holds:
\be\label{Delta:bracket}
\{P,Q\}= \Lap(PQ)-\Lap(P)Q-(-1)^{|P|}P\Lap\!(Q)\,,
\ee
where $P,Q\in\La\V_\reg(M)$.
Moreover, using (\ref{Delta:bracket}) and the nilpotency of $\Lap$, one can show that:
\be\label{Delta:bracket2}
\Lap\{P,Q\}=-\{\Lap(P),Q\}-(-1)^{|P|}\{P,\Lap\!(Q)\}\,.
\ee
The graded algebra $\La\V_\reg(M)$ together with the antibracket $\{.,.\}$ and the differential $\Lap$ form a structure, which is called in mathematics the BV-algebra\index{BV!algebra}.

Note that since the time ordering commutes with both derivatives $\frac{\delta}{\delta\ph(x)}$ and $\frac{\delta}{\delta\ph^\ddagger(x)}$, it also commutes with $\Lap$. Hence we obtain
\be\label{Delta:Tbracket}
\{X,Y\}_{\TT}= \Lap(X\T Y)-\Lap(X)\T Y-(-1)^{|X|}X\T \Lap(Y)\,,
\ee
where $ X,Y\in\TT(\La\V_\reg(M))$.
Now we can come back to the problem of comparing the quantum and the classical ideal of {\eom}'s. To see the relation between them, we use the fact that 
\be\label{identity:S:star}
\int\! dx\,X(x)\cdot\frac{\delta S}{\delta \ph(x)}=\int\! dx\, X(x)\star\frac{\delta S}{\delta \ph(x)}\,,
\ee
 and we can rewrite (\ref{QMO}) as:
\be\label{tkoszul}
\delT(X)=\int\! dx\,X(x)\star\frac{\delta S}{\delta \ph(x)}+i\hbar\! \Lap\!(X)\,.
\ee
In this formula both the time-ordered and the $\star$-product appear 
and it is natural to ask, if there is a $\star$-transformed version for the antibracket. In analogy to (\ref{Delta:bracket}) and (\ref{Delta:Tbracket}) we can define it as\footnote{Note that this is \textit{not} a Poisson bracket, essentially because $\star$ is not graded commutative. Nevertheless $\{.,Y\}$ defines a derivation with respect to $\star$ if $\frac{\delta Y}{\delta\ph(x)}$ and $\frac{\delta Y}{\delta\ph^\ddagger(x)}$ are central.}:
\be\label{Delta:star:bracket}
\{X,Y\}_{\star}= \Lap(X\star Y)-\Lap(X)\star Y-(-1)^{|X|}X\star \Lap(Y)\,.
\ee
The sign rule in (\ref{Delta:star:bracket}) was chosen to be consistent with formulas (\ref{antibracketformal}) and (\ref{antibracketformal2}).
This can also be written as:
\be\label{antibracketstar}
\{X,Y\}_{\star}=-\int\! dx\!\left(\!\frac{\delta X}{\delta\ph(x)}\star\frac{\delta Y}{\delta\ph^\ddagger(x)}+(-1)^{|X|}\frac{\delta X}{\delta\ph^\ddagger(x)}\star\frac{\delta Y}{\delta\ph(x)}\!\right)\,,
\ee
In this new notation we can write (\ref{tkoszul}) as:
\be\label{tkoszul2}
i\hbar\! \Lap\!(X)=\{X,S\}_{\TT}-\{X,S\}_\star\,.
\ee
This formula provides us with a nice interpretation of $\Lap$ as an operator describing the difference between the classical ideal of equations of motion represented by the image of $\{.,S\}_\TT$ and the quantum one, characterized as the image of  $\{.,S\}_\star$.
Using the identity (\ref{identity:S:star}) it is easy to see that the operator $\{.,S\}_\star$ is a derivation with respect to the $\star$-product. We can view it as the quantum Koszul map\index{Koszul!map!quantum} of the free action. 
The fact that $\{.,S\}_\TT$  and $\{.,S\}_\star$ differ  by a $\hbar$-order term corresponds to the Schwinger-Dyson type equations. 
This is of course something we expected, since quantum theory should be distinguishable from the classical one. In the following sections we will analyze a similar structure for the renormalized time ordered products. This involves many technical subtleties but the guiding principle is the same as in this section. We start with the classical structure and then construct the quantum algebra, where two products are defined, $\star$-product and the time-ordered product. Various relations between these two  products result also in relations between certain ideals and differential operators arising from the BV structure.
%==============================================================
\subsection{Quantum master equation and the quantum BV operator}\label{algQFT}%==============================================================
The really interesting story in the BV quantization starts when we consider an action with symmetries. We shall do it on the example of Yang-Mills theories in section \ref{YMquant}. It will involve some technicalities, so in order to keep
focused on the essential structure we want to discuss some preliminary steps already on the level of nonrenormalized time-ordered product. To this end we consider $\BV_\reg(M)$, a subspace of the BV-complex (\ref{BVYM}) consisting of graded functionals and derivations that are regular. 
\begin{rem}
The algebra $\BV_\reg(M)$ contains also functionals of fermionic fields, so some additional signs appear in formulas used in the previous section. The operator $\Lap$ in the graded case is defined as:
\[
\Lap X=\sum\limits_\alpha(-1)^{|\ph_\al|(1+|X|)}\int dx \frac{\delta^2 X}{\delta\ph^\ddagger(x)\delta\ph(x)}\qquad X\in\BV_\reg(M)\,,
\]
where $|.|$ denotes the ghost number $\#\gh$ and $\alpha$ runs through all the field configuration types of the theory, i.e. physical fields, ghosts, antighosts, etc. The antibrackets $\{.,.\}_{\TT}$ and  $\{.,.\}_{\star}$ are now simply given by formulas (\ref{Delta:Tbracket}), (\ref{Delta:star:bracket}) with the graded version of $\Lap$ defined above.
\end{rem}
With these structures at hand we want now to discuss the gauge fixing. Our starting point is a classical Lagrangian, where a suitable canonical transformation (\ref{gfermion}) was performed, so that the term of $\#\ta=0$, quadratic in fields, induces a normally hyperbolic system of equations. We denote it again by $S$. This is the free part of the Lagrangian and we use it to define the free time-ordered product $\TT$. The interacting part of our action has to be chosen with some cautiousness. We don't want to use the transformed Yang-Mills Lagrangian (\ref{extendYM}) yet, since it is local and therefore its nonlinear part is not an element of $\BV_\reg(M)$. Instead we consider for the moment some other functional $V\in\TT(\BV_\reg(M))$ with ghost number $\#\gh=0$ which also contains antifields. The natural question to ask is, what will happen, if we change the gauge-fixing fermion. In other words we want to perform again a canonical transformation $\alpha_\psi$ and see how the structure is changing. We choose the new gauge-fixing fermion $\psi$ as an element of $\TT(\BV_\reg(M))$ with $\#\gh=-1$. Assume that $\psi$ doesn't contain antifields. Just like in the classical case, first we define an automorphism (\ref{gfermion}) of the algebra $\TT(\BV_\reg(M))$ by
\be\label{gfermionq}
%\alpha_\psi(X):=\sum_{n=0}^{\infty}\frac{1}{n!}\{\dots\{X,\underbrace{\psi\},\dots,\psi\}}_n\,,
\alpha_{\psi}(X):=\sum_{n=0}^{\infty}\frac{1}{n!}\underbrace{\{\psi,\dots,\{\psi}_n,X\}_{\TT}\dots\}_{\TT}=\TT(\alpha_{\TT^{\minus}\psi}(\TT^{\minus}X))\,,
\ee
In the second step of the gauge fixing procedure we set all the elements with $\#\ta>0$ in $\TT(\BV_\reg(M))$ at 0. Following the standard approach to BV-quantization (see for example \cite{Henneaux:1992ig}) we want now to provide a condition, which assures that the quantum average would be independent of $\psi$, modulo terms that vanish on-shell. This can be formulated as
\be\label{indepobs}
\frac{d}{d\lambda}\,e_{\sst{\TT}}^{i\alpha_{\la\psi}(V)/\hbar}\T \alpha_{\la\psi}(F)=\left\{\psi,e_{\sst{\TT}}^{i\tilde{V}/\hbar}\T \tilde{F}\right\}_{\TT}\os 0\,,
\ee
where ``o.s.'' means ``on shell'' and we denoted $\tilde{F}\doteq\alpha_{\la\psi}(F)$, $\tilde{V}\doteq\alpha_{\la\psi}(V)$ . In particular we require the independence of the S-matrix of the gauge fixing, i.e.
\be\label{indepS}
\left\{\psi,e_{\sst{\TT}}^{i\tilde{V}/\hbar}\right\}_{\TT}\os 0\,.
\ee
We can rewrite the left hand side of this equation using the identity (\ref{Delta:Tbracket}):
\be\label{gaugeindep}
\Lap\left(\psi\T e_{\sst{\TT}}^{i\tilde{V}/\hbar}\right)+\psi\T \Lap(e_{\sst{\TT}}^{i\tilde{V}/\hbar})\os0\,.
\ee
Now on the first term we can use the identity (\ref{tkoszul2}) which corresponds to the Schwinger-Dyson equation and obtain:
\[
-\frac{i}{\hbar}\left(\left\{\psi\T e_{\sst{\TT}}^{i\tilde{V}/\hbar},S\right\}_{\TT}-\left\{\psi\T e_{\sst{\TT}}^{i\tilde{V}/\hbar},S\right\}_\star\right)-\psi\T\Lap(e_{\sst{\TT}}^{i\tilde{V}/\hbar})\os0\,.
\]
Note that the second term is an element of the ideal of equations of motion. Therefore a sufficient condition to fulfill (\ref{indepS}) on-shell can be formulated as
\be\label{suff:condition}
\frac{i}{\hbar}\left\{\psi\T e_{\sst{\TT}}^{i\tilde{V}/\hbar},S\right\}_{\TT}+\psi\T\Lap(e_{\sst{\TT}}^{i\tilde{V}/\hbar})\os0\,.
\ee
The operator $\Lap$ acting on the exponential function produces: $\Lap(e_{\sst{\TT}}^{i\tilde{V}/\hbar})=\frac{i}{\hbar}(\Lap \tilde{V}+\frac{i}{2\hbar}\{\tilde{V},\tilde{V}\})\T e_{\sst{\TT}}^{i\tilde{V}/\hbar}$. Moreover from (\ref{Delta:bracket2}) it follows that $\Lap(\al_{\la\psi}V)=\al_{\la\psi}(\Lap V)$.  Using again the Leibniz rule in (\ref{suff:condition}) and the fact that $\psi$ doesn't contain antifields we arrive finally at a condition of the form:
\[
\psi\T\al_{\la\psi}\Big(\frac{i}{\hbar}\{V,S\}_{\TT}+ \Lap V+\frac{i}{2\hbar}\{V,V\}_{\TT}\Big)\T e_{\sst{\TT}}^{i \tilde{V}/\hbar}\os0\,.
\]
Since the above condition has to be valid for arbitrary $\psi$, we obtain the following sufficient condition
\[
\{ V,S\}_{\TT}+\frac{1}{2}\{V,V\}_{\TT}-i\hbar\Lap V=0\,.
\]
Using  the fact, that $S$ doesn't contain antifields, we can write the above result in the form of the \textbf{\textit{quantum master equation}}\index{master equation!quantum} (\qme).
\be\label{QME0}
\frac{1}{2}\{S+V,S+V\}_{\TT}=i\hbar\Lap (S+V)\,.
\ee
Note that this is exactly the same condition, which is used in the path integral formalism to guarantee the gauge independence of the gauge-fixed ``measure'' \cite{Henneaux:1992ig}. In a similar way, by manipulating expression (\ref{indepobs}) we can conclude that the condition that the expectation value of an observable $F\in\TT(\BV_\reg(M))$ on-shell is independent of the gauge fixing is guaranteed by:
\[
\frac{i}{\hbar}F\cdot \{V,S\}_{\TT}+\frac{i}{\hbar}\{F,S\}_{\TT}+ F\cdot\Lap V+\Lap F+\frac{i}{\hbar}\{F,V\}_{\TT}+\frac{i}{2\hbar}F\cdot\{V,V\}_{\TT}=0\,.
\]
Using the {\qme} we obtain:
\[
\{F,S+V\}_{\TT}-i\hbar \Lap F=0\,. 
\]
We can conclude, that the expectation value of $F$ is independent of the gauge fixing, modulo on-shell terms, if $F$ lies in the kernel of
\[
\hat{s}=\{.,S+V\}_{\TT}-i\hbar \Lap\,.
\]
This operator is called in the literature the \textit{\textbf{quantum BV operator}}\index{BV!operator!quantum}. Using the above considerations we can express $\hat{s}$ off-shell in terms of time-ordered products. This interpretation is completely new and we argue that it allows for a consistent treatment of renormalization. First we note that if $X$ is a regular function or a vector field  (i.e. $X\in\TT(\V_\reg(M)\oplus\F_\reg(M))$), then from the relation (\ref{tkoszul2}) follows that:
\be\label{QBV1}
\left\{ e_{\sst{\TT}}^{iV/\hbar}\T X,S\right\}_{\TT}=\left\{ e_{\sst{\TT}}^{i V/\hbar}\T X,S\right\}_\star+
i\hbar \Lap( e_{\TT}^{iV/\hbar}\T X)\,,
\ee
In particular for $X=1$ we can rewrite this formula as:
\be\label{QME2}
e_{\sst{\TT}}^{iV/\hbar}\T\left(\frac{1}{2}\{S+V,S+V\}_{\TT}-i\hbar\Lap (S+V)\right)=\{e_{\sst{\TT}}^{i V/\hbar},S\}_{\star}\,,
\ee
so the {\qme} is a statement that $\{e_{\sst{\TT}}^{i V/\hbar},S\}_{\star}=0$ also off-shell. This is just invariance of the S-matrix under the free quantum Koszul operator. We can now obtain a formula for the quantum BV operator by using the {\qme} in equation (\ref{QBV1}).
\be\label{QBV0}
\hat{s}X=e_{\sst{\TT}}^{-i V/\hbar}\T\left(\{e_{\sst{\TT}}^{ iV/\hbar}\T X,S\}_{\star}\right)\,.
\ee
In section \ref{renormBV} we will show that this expression for the quantum BV operator can be generalized to renormalized time-ordered products and no divergences appear. Now we will show yet another way to interpret the formula (\ref{QBV0}). To obtain interacting fields one uses the intertwining map $R_V(F)$, so it is natural to ask, what is the transformed version of the free quantum Koszul operator $\{.,S\}_\star$. Using the definition (\ref{Rv}) and (\ref{Bog}) we obtain:
\begin{align*}
(\{.,S\}_\star\circ R_V)(X)&=\{(e_{\sst{\TT}}^{iV/\hbar})^{\star-1}\star (e_{\sst{\TT}}^{iV/\hbar}\T X),S\}_\star=(e_{\sst{\TT}}^{iV/\hbar})^{\star-1}\star \{e_{\sst{\TT}}^{iV/\hbar}\T X,S\}_\star=\nonumber\\
&=(e_{\sst{\TT}}^{iV/\hbar})^{\star-1}\star( e_{\sst{\TT}}^{iV/\hbar}\T e_{\sst{\TT}}^{-iV/\hbar}\T\{e_{\sst{\TT}}^{iV/\hbar}\T X,S\}_\star)=\nonumber\\
&=R_V(e_{\sst{\TT}}^{-iV/\hbar}\T\{e_{\sst{\TT}}^{iV/\hbar}\T X,S\}_\star)=(R_V\circ \hat{s})(X)\,.
\end{align*}
In the first step we used the {\qme}, which guarantees that $\{e_{\sst{\TT}}^{iV/\hbar},S\}_\star=0$ and since $\{.,S\}_\star$ is a derivation it holds $\{(e_{\sst{\TT}}^{iV/\hbar})^{\star-1},S\}_\star=0$ as well. The above result justifies the interpretation of $\hat{s}$ as the interacting quantum BV operator. It is obtained from the free Koszul operator by means of the intertwining map $R_V$:
\be\label{intertwining:s}
\{.,S\}_\star\circ R_V=R_V\circ\hat{s}\,.
\ee
It is now clear that $\hat{s}$ is a derivation with respect to the interacting star product $\star_V$. Moreover we can characterize the (co)homology of $\hat{s}$ knowing the  one of $\{.,S\}_\star$. 

To close this section we want to reflect a while on the equation  (\ref{QME2}). Looking at this formula, it is natural to ask, if we can add to the free Lagrangian a term that  contains antifields? Let us denote it by $\theta_0\in\TT(\BV_\reg(M))$. Of course it has to be linear both in fields and antifields. The full extended action takes the form $S+\theta_0+\lambda(\theta_1+S_1)=S+\theta_0+V$, where $\theta_1$ is linear in antifields, and $S_1\in\TT(\BV_\reg(M))$. We can interpret $\theta_0$ as the free BRST operator. The 0-th order in the coupling constant of equation (\ref{QME0}) is a statement that
\be\label{Stheta}
 \{S,\theta_0\}_{\TT}=0\,.
\ee
From this relation we obtain the following result:
\begin{prop}\label{brackettheta}
Let $S$, $\theta_0$ be defined as above. If (\ref{Stheta}) holds, then:
\[
\left\{e_{\sst{\TT}}^{i V/\hbar}\T X,\theta_0\right\}_\TT=\left\{e_{\sst{\TT}}^{iV/\hbar}\T X,\theta_0\right\}_\star\,.
\]
\end{prop}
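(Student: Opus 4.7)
The plan is to reduce the proposition to the more symmetric statement that the Schwinger--Dyson identity (\ref{tkoszul2}) extends from the free action $S$ to the enlarged ``free'' expression $S+\theta_0$, namely
\[
\{Y,S+\theta_0\}_\TT-\{Y,S+\theta_0\}_\star=i\hbar\,\Lap Y \qquad \text{for every } Y\in\TT(\BV_\reg(M)).
\]
Granting this, the proposition follows at once by subtracting (\ref{tkoszul2}) applied to $Y=e_{\sst\TT}^{iV/\hbar}\T X$: the $i\hbar\Lap Y$ terms cancel, and since both antibrackets are linear in their second argument, the remainder is exactly $\{Y,\theta_0\}_\TT-\{Y,\theta_0\}_\star=0$.

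To establish the extended identity I would rerun the derivation of (\ref{QMO})--(\ref{tkoszul2}) with $S$ replaced by $S+\theta_0$. The key structural input is the definition of $\theta_0$ as a term bilinear in fields and antifields, so that writing $\theta_0=\int a(y,z)\ph(z)\ph^\ddagger(y)\,dy\,dz$ one has
\[
\frac{\delta(S+\theta_0)}{\delta\ph^\ddagger(x)}=\Theta_0(x)[\ph],\qquad \frac{\delta(S+\theta_0)}{\delta\ph(x)}=P\ph(x)+\int a(y,x)\ph^\ddagger(y)\,dy,
\]
both at most linear in the combined variable $(\ph,\ph^\ddagger)$. Consequently, when one computes $\{Y,S+\theta_0\}_\TT=\TT\{\TT^{\minus}Y,\,S+\theta_0\}$ by expanding $e^{i\hbar\DDp}$, all contributions beyond the zeroth and first order in $\hbar$ vanish, exactly as in the case of the free scalar action. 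The first-order contraction pairing $\frac{\delta(S+\theta_0)}{\delta\ph^\ddagger}$ against $\frac{\delta Y}{\delta\ph}$ via $\Delta_D$ collapses, using the propagator identity $P\Delta_D=-\delta$, to $i\hbar\Lap Y$ plus $\theta_0$-dependent residues, while the pairing of $\frac{\delta Y}{\delta\ph^\ddagger}$ against $\frac{\delta\theta_0}{\delta\ph}$ contributes only the pointwise antibracket term (since $\frac{\delta\theta_0}{\delta\ph}$ is $\ph$-independent, no $\T$--$\star$ discrepancy arises there).

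The main obstacle will be showing that the residual $\theta_0$-dependent $\hbar$-contractions on the two sides cancel, and this is where the full strength of~(\ref{Stheta}) enters. The hypothesis $\{S,\theta_0\}_\TT=0$, combined with $\Lap\theta_0=0$ (which is automatic from the bilinearity of $\theta_0$ and the definition of $\Lap$), translates into a kernel identity of the form $P_xa(x,w)+P_wa(w,x)=0$; this is exactly the symmetrization required to annihilate the mixed contractions produced by the $\Delta_D-\tfrac{1}{2}\Delta=\Delta_A$ residues in the $\T$-versus-$\star$ comparison. In addition, careful bookkeeping of the graded signs arising from the field parities $|\ph_\al|$ and from the ghost number of $Y$ is needed to make these cancellations manifest term by term. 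Once these technicalities are handled, the extended Schwinger--Dyson identity holds, and the proposition follows by the subtraction described above.
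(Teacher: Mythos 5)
Your strategy is, in substance, the paper's own proof in a different wrapper. Proving the extended Schwinger--Dyson identity for $S+\theta_0$ and then subtracting (\ref{tkoszul2}) is, by linearity of both antibrackets in the second slot, literally the same computation as proving the $\theta_0$-statement directly: the $S$-terms contribute nothing new, and all the work sits in the $\theta_0$-dependent contractions. Your three key observations --- the expansion of $e^{i\hbar\DDp}$ truncates after first order because $\theta_0$ is (bi)linear, the pairing of $\frac{\delta Y}{\delta\ph^\ddagger}$ with $\frac{\delta\theta_0}{\delta\ph}$ is product-independent because $\frac{\delta\theta_0}{\delta\ph}$ no longer depends on $\ph$, and the hypothesis (\ref{Stheta}) is equivalent to a symmetrized-kernel identity that annihilates the residual $\Delta_D$- and $\Delta$-contractions --- are exactly the three steps of the paper's argument, which kills the two residues separately against the pointwise product rather than comparing them via $\Delta_A$.

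Two slips in your bookkeeping should be fixed before you write this out. First, $i\hbar\Lap Y$ does \emph{not} arise from contracting $\frac{\delta(S+\theta_0)}{\delta\ph^\ddagger}$ against $\frac{\delta Y}{\delta\ph}$: since $S$ contains no antifields, that derivative equals $\frac{\delta\theta_0}{\delta\ph^\ddagger}$, which carries no factor of $P$, so the propagator identity cannot be invoked there and this pairing produces only the $\theta_0$-residue (precisely the term you must kill with (\ref{Stheta})). The $\Lap Y$ term comes from the other half of the bracket, $\frac{\delta Y}{\delta\ph^\ddagger}$ paired with $\frac{\delta S}{\delta\ph}=P\ph$; that is already the content of (\ref{QMO}) and (\ref{QBV1}) and cancels in your subtraction without further work. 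Second, you neither need nor should claim $\Lap\theta_0=0$: for a generic bilinear $\theta_0=\int a(y,z)\ph(z)\ph^\ddagger(y)\,dy\,dz$ one has $\Lap\theta_0=\pm\int a(x,x)\,dx$, which need not vanish. The kernel identity $P_xa(x,w)+P_wa(w,x)=0$ follows from (\ref{Stheta}) alone, because $\TT^{\minus}\theta_0=\theta_0$ and $\TT^{\minus}S$ differs from $S$ by a field-independent constant, so $\{S,\theta_0\}_{\TT}=0$ is equivalent to the vanishing of the classical quadratic functional $\pm\int (P\ph)(x)\,\frac{\delta\theta_0}{\delta\ph^\ddagger(x)}\,dx$, i.e.\ of the symmetric part of its kernel. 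With these corrections your plan reproduces the paper's proof.
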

\begin{proof}
To prove this identity we first note that:
\begin{multline}\label{theta0:exp}
\int\! dx\,\TT\left( \TT^{\minus}\frac{\delta}{\delta\ph(x)}(e_{\sst{\TT}}^{iV/\hbar}\T X)\cdot \TT^{\minus}\theta_0(x)\right)=
\int\! dx\, m\circ e^{i\hbar\DDp}\left(\frac{\delta}{\delta\ph(x)}(e_{\sst{\TT}}^{iV/\hbar}\T X)\otimes \theta_0(x)\right)=\\
\int\! dx\frac{\delta}{\delta\ph(x)}(e_{\sst{\TT}}^{iV/\hbar}\T X)\cdot\theta_0(x)+i\hbar\int\! dxdydz\ \Delta_D(y,z) \frac{\delta \theta_0(x)}{\delta\ph(y)}\frac{\delta^2}{\delta\ph(x)\delta\ph(z)}(e_{\sst{\TT}}^{iV/\hbar}\T X)\,.
\end{multline}
Now it remains to prove that 
\be\label{theta0:star}
\int dx\frac{\delta}{\delta\ph(x)}(e_{\sst{\TT}}^{iV/\hbar}\T X)\cdot\theta_0(x)=\int dx \frac{\delta}{\delta\ph(x)}(e_{\sst{\TT}}^{iV/\hbar}\T X)\star\theta_0(x)\,,
\ee
and that the second term of the expansion (\ref{theta0:exp}) vanishes. Actually both results can be obtained in a similar way. We start with the second one. From (\ref{Stheta}) it follows that: 
\[
\int\! dxdydz\ \Delta_D(y,z) \frac{\delta \theta_0(x)}{\delta\ph(y)}\frac{\delta^2 F}{\delta\ph(x)\delta\ph(z)}=
m\circ(\DDp)^2\left(\int\! dx\,\theta_0\frac{\delta S}{\delta\ph(x)}\otimes F\right)=0\,,
\]
for an arbitrary  argument $F\in\TT(\BV_\reg(M))$. To show (\ref{theta0:star}) we use a similar reasoning, but this time with the causal propagator:
 \[
\int\! dxdydz\ \Delta(y,z) \frac{\delta \theta_0(x)}{\delta\ph(y)}\frac{\delta^2 F}{\delta\ph(x)\delta\ph(z)}=
m\circ\DDp\circ\DC\left(\int\! dx\,\theta_0\frac{\delta S}{\delta\ph(x)}\otimes F\right)=0\,,
\]
It follows now that
\[
\TT\left( \int\! dx\,\TT^{\minus}\frac{\delta}{\delta\ph(x)}(e_{\sst{\TT}}^{iV/\hbar}\T X)\cdot \TT^{\minus}\theta_0(x)\right)=\int dx\frac{\delta}{\delta\ph(x)}(e_{\sst{\TT}}^{iV/\hbar}\T X)\star\theta_0(x)\,,
\]
so to end the proof we need to check
\[
\TT\left(\int\! dx\, \TT^{\minus}\frac{\delta}{\delta \ph^\ddagger(x)}(e_{\sst{\TT}}^{iV/\hbar}\T X)\cdot  \TT^{\minus}\frac{\delta\theta_0}{\delta \ph(x)}\right)=\int\! dx\,\frac{\delta}{\delta \ph^\ddagger(x)}(e_{\sst{\TT}}^{iV/\hbar}\T X)\star \frac{\delta\theta_0}{\delta \ph(x)}\,,
\]
but this is trivially fulfilled, since $\theta_0$ is linear and  hence $\frac{\delta\theta_0}{\delta \ph(x)}$ doesn't depend on fields anymore.
\end{proof}
Using the proposition \ref{brackettheta} and  equation (\ref{QBV1}) we obtain a following formula:
\be\label{braketfull}
\{e_{\sst{\TT}}^{iV/\hbar}\T X,S+\theta_0\}_{\TT}=\{e_{\sst{\TT}}^{i V/\hbar}\T X,S+\theta_0\}_\star+i\hbar\Lap\left(e_{\sst{\TT}}^{i V/\hbar}\T X\right)\,.
\ee
In particular for $X=1$ we have:
\[
e_{\sst{\TT}}^{iV/\hbar}\T\left(\{V,\theta_0+S\}_{\TT}+\frac{1}{2}\{V,V\}_{\TT}-i\hbar\Lap (V)\right)=\{e_{\sst{\TT}}^{iV/\hbar},S+\theta_0\}_\star\,.
\]
The {\qme} for the free action (0-th order in $\lambda$) states that $\frac{1}{2}\{S+\theta_0,S+\theta_0\}_{\TT}=i\hbar\Lap( S+\theta_0)$, so the {\qme} for the full action $S+\theta_0+V$ guarantees that $\{V,\theta_0+S\}_{\TT}+\frac{1}{2}\{V,V\}_{\TT}-i\hbar\Lap (V)=0$ and we obtain:
\[
\{e_{\sst{\TT}}^{iV/\hbar},S+\theta_0\}_\star=0\,.
\]
Therefore the quantum BV operator can be alternatively written as:
\be\label{QBV2}
\hat{s}X=e_{\sst{\TT}}^{-iV/\hbar}\T\left(\{e_{\sst{\TT}}^{i V/\hbar}\T X,S+\theta_0\}_{\star}\right)\,.
\ee
We also obtain  the formulation of the on-shell gauge invariance of the S-matrix, which is closer to the one given in \cite{H} (we come back to this discussion in section \ref{YMquant}):
\[
\{e_{\sst{\TT}}^{iV/\hbar},\theta_0\}_\star=0\qquad\textrm{on shell}\,,
\]
To summarize, we have shown in this section, that important notions of the BV quantization have a natural interpretation in the language of {\paqft}. All the steps were done in a mathematically precise way and expressions we obtained are well defined. The challenge we have to face right now is the generalization of these structures to more singular objects. As we already pointed out, the operator $\Lap$, which plays an important role in the BV-quantization is not well defined on local vector fields. This pathology results from the fact, that we were using the non-renormalized time-ordered product $\T$. Now we want to amend it, by means of renormalization. In the next section we argue, that one can completely avoid any divergences, if one works with the renormalized time-ordered product from the very beginning. 
%==============================================================
\section{Renormalization}\label{renorm}
%==============================================================
%==============================================================
\subsection{Algebra of observables}\label{quantalg}
%==============================================================
In the previous section we considered only very regular objects which allowed us to present the general structure of the quantum theory without going into technical details. Now we want to extend our discussion to more singular objects, namely elements of $\F_\mc(M)$ and $\V_\mc(M)$. We need here a little bit of distribution theory and not to make things too complicated at the beginning, we start our discussion with the Minkowski spacetime $\M$.  We review first the results obtained in \cite{BDF}. The condition (\ref{mlsc}) guarantees that the $\star$-product will be well defined if we replace $\frac{i}{2}\Delta$ with the Wightman 2-point function $\frac{i}{2}\Delta+\Delta_1$ in the definition (\ref{star product}), where $\Delta_1$ is the symmetric part. We denote this product by $\star_{\De_1}$. It is related to the previous one by a following transformation:
\[
\al_{\De_1}\doteq\exp({\hbar\Ga_{\De_1}}) : 
\F_\reg(\M)\rightarrow \F_\reg(\M)
\]
Let us write this relation explicitely:
\be
F\star_{\De_1}G\doteq\al_{\De_1}(\al_{\De_1}^{-1}(F)\star \al_{\De_1}^{-1}(G))\ .
\ee
The new star product can now be extended to the elements of $\F_\mc(M)$. This choice of the star product is called in the literature \textit{the Wick quantization}. It turns out however that it is not optimal in the context of locally covariant field theory.  In \cite{HW2,HW3} the smooth behavior under scaling of all dimensionful parameters at zero was crucial for the renormalization method and $\De_1$ doesn't depend smoothly on mass at $m^2=0$. We can amend this by replacing the mass dependent family $({\De_1}_m){m^2>0}$ by a family of Hadamard distributions  $H=(H_m)_{m^2\in\RR}$, $H_m\in\Dcal^\prime(\M^2)$, such that each $H_m$ is a distributional solution of the Klein-Gordon equation in both arguments, $H_m+i\De_m/2$ satisfies the microlocal spectrum condition (see \cite{Rad,BFK95,BF0}) and for each test function $f\in \Dcal(\M^2)$, $\langle H_m,f \rangle$ is a smooth function of $m^2$. For each such $H_m$ we can define a corresponding transformation $\al_{H_m}\doteq\exp({\hbar\Ga_{H_m}}):\F_\reg(\M)[[\hbar]]\rightarrow \F_\reg(\M)[[\hbar]]$ and a star product equivalent to the original one:
\be
F\star_{H_m} G\doteq\al_{H_m}(\al_{H_m}^{-1}(F)\star \al_{H_m}^{-1}(G)) \ .
\ee
From the mathematical point of view we are already happy with the construction, but how is it related to
more standard approaches used in QFT? Well, it turns out that this is nothing else, but an algebraic version of the Wick ordering. To see it clearly we need again a little bit of abstract reasoning.
Recall the discussion of topologies we had in section \ref{scal}. We introduced at that point the topology $\tau_\Xi$ on the space of microcausal functionals, which is induced by the H\"ormander topology. The space of regular functionals $\F_\reg(\M)$ is dense in $\F_\mc(\M)$ with respect to this topology. One can now equip 
$\F_\reg(\M)$ with the initial topology with respect to $\al_{H_m}$, i.e. the coarsest topology that makes this map continuous. We denote this topology by $\tau_{H_m}$ and the sequential completion of  $\F_\reg(\M)$ with respect to this topology is denoted by  $\overline{\F_\reg(\M)}^{\,\tau_{H_m}}$. Different choices of Hadamard functions lead to equivalent topologies. Now we can consider the following extension of  $\F_\reg(\M)$: we take all the sequences $(f_n)$ in $\F_\reg(\M)$ that converge to an element $\lim\limits_{n\rightarrow \infty}F_n=F\in\F_{\mc}(M)$ and map them back to  $\overline{\F_\reg(\M)}^{\,\tau_{H_m}}$ with the use of  $\al_{H_m}^{-1}$. This way we obtain sequences $(\al_{H_m}^{-1}F_n)$ that converge in the topology $\tau_{H_m}$ to elements $\lim\limits_{n\rightarrow \infty}(\al_{H_m}^{-1}F_n)$. Let $\fA^{(m)}(\M)$ denote the closure of $\F_\reg(\M)$ with respect to all such sequences. We can think of elements of $\fA^{(m)}(\M)$ as Wick products. This can be seen from the following example:
\begin{exa}
Consider a sequence $F_n(\ph)=\int \ph(x)\ph(y)g_n(y-x)f(x)$ with a smooth function $f$ and a sequence of smooth functions $g_n$ which converges to the $\delta$ distribution in the H\"ormander topology. By applying $\al_{H_m}^{-1}$ we obtain a sequence $\al_{H_m}^{-1}F_n= \int (\ph(x)\ph(y)g_n(y-x)f(x)-H_m(x,y)g(y-x)f(f))$. The limit of this sequence is an element of $\fA^{(m)}(\M)$ that can be identified with $\int :\ph(x)^2:f(x)$.
 \eex\end{exa}
Note that in even dimensions $H_m$ is not uniquely determined by the conditions above, but depends 
on an additional mass parameter $\mu>0$ (see \cite{DF04,H04}). The change of $\mu$ amounts to the transition to an equivalent product. The equivalence transformation between the products $\star_{m,\mu_1}$ and $\star_{m,\mu_2}$ is given by the linear continuous isomorphism of $(\F_\reg(\M),\tau_{H^{\mu_1}_m})$ and  $(\F_\reg(\M),\tau_{H^{\mu_2}_m})$ given by
\begin{equation}
\al_{w_{m}^{\mu_1,\mu_2}}\doteq\exp({\hbar\Ga_{w_{m}^{\mu_1,\mu_2}}})\,,
\end{equation}
where
\begin{equation}
w_{m}^{\mu_1,\mu_2}=H_{m}^{\mu_1}-H_{m}^{\mu_2}\,,
\end{equation}
is smooth. As in \cite{BDF} we can use these intertwining maps to  define a projective limit $\F_\reg^{(m)}(\M)$ of locally convex topological vector spaces $(\F_\reg(\M),\tau_{H^{\mu}_m})$. 
\[
\F_\reg^{(m)}(\M)\doteq \varprojlim_{\mu} (\F_\reg(\M),\tau_{H^{\mu}_m}) = \Big\{(F_\mu)_{\mu>0} \in \prod_{\mu\in \RR}\F_\reg(\M) \;\Big|\; F_{\mu_1} = \al_{w_{m}^{\mu_1,\mu_2}}(F_{\mu_2}),\mu_1 \leq \mu_2\Big\}. 
\]
We equip this space with the initial topology, which by definition is $m$-dependent, but $\mu$-independent. 
%  Again we can take the closure with respect to all the sequences, such that the image under $\alpha_{H_m^\mu}$ converges in $\F_\mc$
%The sequential completion we denote by $\Fcal^{(m)}(\M)$. 
Elements  of $\F_\reg^{(m)}(\M)$ may be identified with families $(\alpha^{-1}_{H_m^\mu}(F))_{\mu>0}$, $F\in\F_\reg(\M)$. Again we can take the completion of $\F_\reg^{(m)}(\M)$ with respect to all the sequences  $(\alpha^{-1}_{H_m^\mu}(F_n))_{\mu>0}$, such that $F_n$ converges in $(\F_\mc(\M),\tau_\Xi)$. We denote this completion by $\fA^{(m)}(\M)$. We can define a map $\alpha_{H_m}:\fA^{(m)}(\M)\rightarrow\F_\mc(\M)$ by setting
\[
\alpha_{H_m}\Big(\lim\limits_{n\rightarrow\infty}\big(\alpha^{-1}_{H_m^\mu}(F_n)\big)_{\mu>0}\Big)\doteq \lim\limits_{n\rightarrow\infty}(F_n)\,,
\]
where $\lim\limits_{n\rightarrow\infty}(\alpha^{-1}_{H_m^\mu}(F_n))_{\mu>0}$ is a generic element of $\F^{(m)}(\M)$. This map is sequentially continuous from the definition of the topology $\tau_{H_m}$. We introduce now a simplified notation  $\alpha^{-1}_{H_m}(F)\doteq\lim\limits_{n\rightarrow\infty}(\alpha^{-1}_{H_m^\mu}(F_n))_{\mu>0}$, where 
$\lim\limits_{n\rightarrow\infty}(F_n)=F\in\F_\mc(\M)$.  We can equip $\fA^{(m)}(\M)$ with a sequentially continuous product $\star_m$ defined as
\[
\alpha^{-1}_{H_m}(F)\star_m\alpha^{-1}_{H_m}(G)\doteq\lim\limits_{n\rightarrow\infty}(\alpha^{-1}_{H_m^\mu}(F_n\star_{H^\mu_m} G_n))_{\mu>0}\,.
\]
The construction we performed here may seem a little bit abstract, but it will safe us some effort later on, since now the only scale in the algebra is $m$. Following \cite{BDF} we define now the bundle of algebras
\[
\bigsqcup_{m^2\in\RR}\fA^{(m)}(\M)\,.
\] 
We denote by  $\fA(\M)$ the algebra of sections  $A=(A_m)_{m^2\in\RR}$ of this bundle such that
$\al_{H}(A)$, defined as
\[
(\al_{H}(A))_m\doteq\al_{H_{m}}(A_m),\ m^2\in\RR\,,
\]
is a smooth function of $m^2$. Note that $\al_{H_{m}}(A_m)=\al_{H_{m}^\mu}((A_m)_\mu)\in\F_\mc(\M)$.
By $\fA_\reg(\M)$ we denote the subalgebra of $\fA(M)$ consisting of sections taking values in $\F_\reg(\M)$, in the sense that $\al_{H_{m}}(A_m)\in\F_\reg(\M)$. In a similar way we define the subspace $\fA_{\loc}(\M)$ of 
$\fA(\M)$, whose elements will provide the possible interaction terms for a quantum field theory. We also define a subspace  $\fA_{\ml}(\M)$ consisting of multilocal observables. Star products $\star_m$ on $\fA^{(m)}(\M)$ induce a star product on $\fA(\M)$, which we denote by $\star$.
\be\label{star}
(F\star G)_{m}\doteq \alpha^{-1}_{H_m}\Big( \alpha_{H_m}(F_m)\star_m\alpha_{H_m}(G_m)\Big),\qquad F,G\in\fA(\M)\,.
\ee
Functional derivatives on $\F_\mc(\M)$ induce linear mappings on $\fA(\M)$ defined by
\be\label{eq:Aderivative}
\left\langle \frac{\de}{\de\ph}A,\psi\right\rangle=\alpha_H^{-1}\left\langle\frac{\de}{\de\ph}\alpha_H A,\psi\right\rangle\,.
\ee
Following (\cite{BDF}) we associate to every $A\in\fA(\M)$ a compact region (denoted as $\supp(A)$ by abuse of notation) defined as the set 
\[
\supp(A)\doteq\supp(\alpha_H(A))\ .
\]
We can see that all the structures defined on $\F_\mc(\M)$ can be easily brought to the space of observables $\fA(M)$. 
%\begin{rem}[Generalization to arbitrary globally hyperbolic spacetimes]
%In order to generalize this discussion to arbitrary backgrounds one needs to replace the two-point function of the Hadamard state which is a global object, by something local. Therefore one constructs the $\star$-products with the use of Hadamard parametrix $h$, instead of the two point function. The construction is done locally and the resulting algebra doesn't depend on the choice of the parametrix $h$ up to an isomorphism, since they differ only by smooth functions. The details concerning the construction of the observable algebra $\fA(M)$ are given in \cite{}.
%\end{rem}
%==============================================================
\subsection{Renormalized time ordered product}\label{renprod0}
%==============================================================
In the last section we saw that the wave front set of $\frac{i}{2}\Delta +H^\mu_m$ is such that using the microlocal spectrum condition \cite{BFK95,Rad}, the star product $\star_{H^\mu_m}$ can be uniquely extended by sequential continuity to $\F_\mc(\M)$. This allowed us to define the $\star$-product on $\fA(\M)$ by means of (\ref{star}).
The situation for the time-ordered product is more complicated. The time ordering operator $\TT:\F_\reg(\M)[[\hbar]]\rightarrow \fA(\M)[[\hbar]]$ can be defined by a family of maps $H_{F,m}^\mu$, where  $H_{F,m}^\mu\doteq i\Delta_D+H^\mu_{m}$. Explicitly we can write it as:
\[
(\TT F)_m\doteq\big(\alpha^{\minus}_{H_m}\circ\alpha_{H_{F,m}}^{\phantom{\minus}}\big)F=\alpha_{i\Delta_D}(F)
\]
Using $\TT$ we define a time-ordered product on $\fA_\reg(M)[[\hbar]]$ by
\[
F\T G\doteq \TT(\TT^{\minus}F\cdot\TT^{\minus}G)
\]
This can be also written as:
\begin{align}\label{THmn}
(\alpha_{H}^{\minus}F\T \alpha_{H}^{\minus}G)_m&=\alpha^{\minus}_{H_m}\!\!\circ\alpha_{H_{F,m}}^{\phantom{\minus}}\left(\alpha_{H_{F,m}}^{\minus}\!\!\!\circ\alpha^{\phantom{\minus}}_{H_m}((\alpha_{H}^{\minus}F)_m)\cdot \alpha_{H_{F,m}}^{\minus}\!\!\!\circ\alpha^{\phantom{\minus}}_{H_m}((\alpha_{H}^{\minus}G)_m)\right)=\nonumber\\
&=\alpha^{\minus}_{H^\mu_m}(F\cdot_{\TT_{m,\mu}}G)\,,
\end{align}
where $(F_{m,\mu})_{\mu>0}=\alpha^{\minus}_{H^\mu_m}(F)$, $F\in\F_\reg(\M)$ and similar for $G$. In the above formula  $\cdot_{\TT_{m,\mu}}$ is a product on $\F_\reg(\M)$ defined as
\[
F\cdot_{\TT_{m,\mu}}G\doteq \alpha_{H_{F,m}^\mu}(\alpha_{H_{F,m}^\mu}^{\minus}F\cdot \alpha_{H_{F,m}^\mu}^{\minus}G)\,.
\]
In contrast to the star product $\star_{H^\mu_m}$ this product is not continuous with respect to the topology $\tau_{H_m^{\mu}}$, therefore it cannot be extended to the full space of microcausal functionals $\F_\mc(\M)[[\hbar]]$. Nevertheless the $n$-fold time ordered product $\cdot_{{\TT}_{m,\mu}}$ can be defined for functionals $F_1,\ldots,F_n\in\F_\mc(\M)$ as long as their supports are disjoint and for such choice of $F_i$'s we can define an operator $\Tcal_{m,\mu}^{\,n}:\F_\mc(\M)[[\hbar]]^{\otimes n}\rightarrow \F_\mc(\M)[[\hbar]]$. 
\[
\Tcal_{m,\mu}^{\,n}(F_1,\ldots,F_n)=F_1\cdot_{{\TT}_{m,\mu}}\ldots\cdot_{{\TT}_{m,\mu}} F_n\,,
\]
These operators on $\F_\mc(\M)$ induce operators $\Tcal_H^{\,n}:\F_\mc(\M)[[\hbar]]^{\otimes n}\rightarrow \fA(\M)[[\hbar]]$ by means of (\ref{THmn}), i.e.
\[
\Tcal_H^{\,n}(F_1,\ldots,F_n)=\al_H^{\minus} (F_1)\T\ldots\T \al_H^{\minus} (F_n)\,,
\]
for $F_i\in \F_\mc(\M)$ with disjoint supports.
This is already something, but don't forget, that the reason for introducing time-ordered products was implementation of the interaction. Recall from the formula (\ref{Smatrix}) that for the definition of the S-matrix we need  to make sense of expressions like $\Tcal_H^{\,n}(\al_H (F)^{\otimes n})$, $F\in\fA_\loc(\M)$. It's clear that the extension of $\Tcal_H^{\,n}$ to local observables with disjoint supports is not enough. We need to work a little bit more,  since we want to make sense of the time ordered exponential of local interaction terms. This is where the most exciting adventure in perturbative quantum field theory begins. The extension of $\Tcal_H^{\,n}$ to arbitrary elements of $\F_\loc(\M)$ is the central problem of renormalization theory. With a similar construction we can also formulate this question for an arbitrary globally hyperbolic background manifold $M$. See \cite{BF0} for details.

Looking from the outside one can get a wrong impression that QFT has some internal problem, because plenty of divergencies appear everywhere. This is however not the case. The problem of renormalization can be formulated as a well posed mathematical question of distributions' extension. If we are careful enough with defining all our objects, no divergencies would appear.  The extension of $\Tcal_H^n$ can be constructed by means of the inductive procedure of Epstein and Glaser \cite{EG} (developed on the basis of ideas of St\"uckelberg \cite{Stuck} and Bogoliubov \cite{BS}).  There is of course a certain freedom in extending $\Tcal_H^{\,n}$ to functionals with coinciding supports, but this freedom is well understood and under control. In causal perturbation theory the choice of time ordering prescriptions is characterized by the renormalization group in the sense of St\"uckelberg-Petermann \cite{DF04}. Its relation to different notions of renormalization group like the Gell-Mann-Low or Willson renormalization groups was discussed in \cite{BDF}.

Equipped with powerful tools of distribution theory we can now start our quest to find the extension of  $\Scal$ to elements of $\fA_\loc(\M)$. There are some properties of the formal S-matrix that we want to keep. Here we list some of them, following \cite{BDF}:
\begin{enumerate}[{\bf S 1.}]
\item {\bf Causality.} $\Scal(A+B)=\Scal(A)\star \Scal(B)\ $ if $\supp(A)$ is later than $\supp(B)$.
\item {\bf Starting element.} $\Scal(0)=1$, $\Scal^{(1)}=\id\ $.
\item {\bf $\ph$-Locality.} $\al_H\circ S(V)(\ph_0)=\al_H\circ S\circ \al_H^{-1}(\al_H(V)_{\ph_0}^{(N)})(\ph_0)+O(\hbar^{N+1})$. This guarantees that for the computation of a certain coefficient in the $\hbar$-expansion of
$\al_H\circ S(V)$, we may replace $\al_H(V)(\ph)$ by a {\it polynomial} in $\ph$ (see \cite{BDF} for details).
\item {\bf Field Independence.}  $\langle\de \Scal(V)/\de\ph,\psi\rangle=\Scal^{(1)}(V)\langle\de V/\de\ph,\psi\rangle\ ,\ $ with $V\in\fA_{loc}(M)$.
\end{enumerate}
The axioms for $\Scal$ can be easily translated to properties of the renormalized time-ordered products, since
\[
\Scal(F)=\sum\limits_{n=0}^\infty \frac{1}{n!}\TTH^{\,n}(\al_H (F)^{\otimes n})\,.
\]
Usually the renormalized time ordered products $\TTH^{\,n}$ are required to fulfill the following normalization conditions (see for example \cite{H}):
\begin{enumerate}[{\bf T 1.}]
%\item {\bf Covariance.}
\item {\bf Starting element.} For the lowest order time-ordered products we require $\TTH^{\,0}=0$, $\TTH^1=\al_H^{-1}$.
\item {\bf  Symmetry.}The time ordered products are symmetric under a permutation of factors,
\[
\TTH^{\,n}(F_1 \otimes \cdots \otimes F_n)
= \TTH^{\,n}(F_{\pi (1)} \otimes \cdots \otimes F_{\pi (n)})\,,
\]
for any permutation $\pi$.
\item {\bf Unitarity.}
Let $\TTHb^{\,n}(\otimes_i F_i) =
[\TTH^{\,n}(\otimes_i F_i^*)]^*$ be
the antitime-ordered product. Then we require
\[
\TTHb^{\,n} \bigg( \bigotimes_{i=1}^n F_i \bigg) =
\sum_{I_1 \sqcup \dots \sqcup I_j = \underline{n}}
(-1)^{n + j} \TTH^{\,|I_1|}\bigg(
\bigotimes_{i \in I_1} F_i \bigg) \star \dots \star
\TTH^{\,|I_j|}\bigg(\bigotimes_{j \in I_j} F_j \bigg)\,,
\]
where the sum runs over all partitions of the set $ \underline{n}\doteq\{1, \dots, n\}$ into
pairwise disjoint subsets $I_1, \dots, I_j$.
\item {\bf Causal Factorization.}
If the supports of $F_1\ldots F_i$ are later than the supports of $F_{i+1},\ldots F_n$, then we have
\be\label{CausFact}
\TTH^{\,n}(F_1\otimes \dots \otimes F_n)=
\TTH^{\,i}(F_1\otimes \dots \otimes F_i) \star
\TTH^{\,n-i}(F_{i+1} \otimes \dots \otimes F_n) \, .
\ee
For the case of 2 factors, this means
\[
\TTH^{\,2}(F_1\otimes F_2) =
\begin{cases}
\al_H^{\minus}(F_1) \star \al_H^{\minus}(F_2) & \text{if $\supp(F_1) \gtrsim \supp(F_2) $;}\\
\al_H^{\minus}(F_2) \star \al_H^{\minus}(F_1) & \text{if $ \supp(F_2) \gtrsim \supp(F_1)$.}
\end{cases}
\]
\item {\bf Commutator.}
The commutator of a time-ordered product with a free field is given by
lower order time-ordered products multiplied with functions,
namely
\be
\left[\TTH^{\,n} \Big( \bigotimes_i^n F_i \Big), \Phi(x) \right]_{\star} =
i\hbar\sum_{k=1}^n \TTH^{\,n}\bigg(F_1 \otimes \dots \int \Delta(x,y)
\frac{\delta F_k}{\delta \phi(y)} \otimes \dots F_n\bigg),
\ee
\item {\bf  Field equation.}
The free field equation is implemented in a Schwinger-Dyson type equation:
\be\label{fieldeq}
\TTH^{\,n+1}\bigg(
\frac{\delta S_0}{\delta \phi(x)} \otimes \bigotimes_{i=1}^n F_i
\bigg)
= \sum_i
\TTH^{\,n}
\bigg(
F_1 \otimes \cdots \frac{\delta F_i}{\delta
    \phi(x)} \otimes \cdots F_n)
\bigg)+\frac{\delta S_0}{\delta \phi(x)}\TTH^{\,n}
\bigg(
F_1 \otimes \cdots  F_n
\bigg)
\ee
\end{enumerate}
Furthermore one can impose conditions like scaling, smoothness and fulfilling the action Ward Identity. Time ordered products fulfilling those properties were already constructed for the scalar field \cite{BF0}, Dirac fields \cite{DHP,Thomas} and Yang-Mills theory \cite{H}. The ambiguity in defining maps $\TTH^{\,n}$ is described by the St\"uckelberg-Petermann Renormalization Group $\Rcal$ which is the group of analytic maps of $\fA_{\loc}(\M)[[\hbar]]$ into itself with the properties:
\begin{enumerate}[{\bf Z 1.}]
\item $Z(0)  =0$,\label{Z0}
\item $Z^{(1)}(0)  =\id$,
\item $Z        = \id + O(\hbar)$,
\item $Z(A+B+ C)= Z(A+B)-Z(B)+Z(B+C)$ if $\supp(A)\cap\supp(C)=\emptyset$\label{Zloc1},
\item $\varphi$-locality, see \cite{BDF} for details,
\item $\de Z/\de\varphi = 0$\label{Zindep}.
\end{enumerate}
The main theorem of renormalization \cite{DF04,BDF} states that:
\begin{thm}\label{MTR}
Given two $S$-matrices $\Scal$ and $\widehat{\Scal}$ 
satisfying the conditions Causality, Starting Element, $\varphi$-locality, and 
Field Independence, there exists a unique $Z\in\Rcal$ such that 
\begin{equation}
\widehat{\Scal}=\Scal\circ Z \ .\label{mainthm}
\end{equation}
Conversely, given an $S$-matrix $\Scal$ satisfying the 
mentioned conditions and a $Z\in\Rcal$, Eq. (\ref{mainthm})
defines a new $S$-matrix $\widehat{\Scal}$ satisfying also these conditions.
\end{thm}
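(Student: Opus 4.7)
The plan is to define the candidate $Z$ explicitly as the composition $Z := \Scal^{\minus}\circ\widehat{\Scal}$, verify that it lies in $\Rcal$ by checking each of the properties \textbf{Z\ref{Z0}}--\textbf{Z\ref{Zindep}}, and then handle the converse direction by checking that $\Scal\circ Z$ inherits the four S-matrix axioms from $\Scal$. First I would note that both $\Scal$ and $\widehat{\Scal}$ are invertible as formal power series in $\hbar$ on $\fA_\loc(\M)[[\hbar]]$, because the Starting Element condition forces $\Scal(V)=1+V+O(\hbar)$ at first order, so one can invert order by order in the coupling. Uniqueness of $Z$ is then immediate: if $\widehat{\Scal}=\Scal\circ Z_1=\Scal\circ Z_2$, composing with $\Scal^{\minus}$ gives $Z_1=Z_2$. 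The only nontrivial content is thus existence and the verification of the $\Rcal$-axioms.

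The routine axioms come for free: $Z(0)=0$ from Starting Element applied to both S-matrices, $Z^{(1)}(0)=\id$ by taking the derivative at zero (both S-matrices have the same first derivative), $Z=\id+O(\hbar)$ because the renormalization ambiguity of time-ordered products lives in positive $\hbar$-orders, $\varphi$-locality is inherited since both S-matrices satisfy it, and $\delta Z/\delta\varphi=0$ follows from Field Independence (which says $\delta\Scal(V)/\delta\varphi$ is determined by $\Scal^{(1)}(V)\langle\delta V/\delta\varphi,\cdot\rangle$, so composing two such maps eliminates the $\varphi$-dependence of $Z$ itself). The key step is property \textbf{Z\ref{Zloc1}}, the additivity
\[
Z(A+B+C)=Z(A+B)-Z(B)+Z(B+C)\quad\text{when }\supp(A)\cap\supp(C)=\varnothing,
\]
and this is where I expect the main difficulty. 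The plan is to deduce it from the Causality axiom for both $\Scal$ and $\widehat{\Scal}$ by the standard trick of inserting auxiliary cutoff functions that separate $\supp(A)$ from $\supp(C)$ into future/past pieces relative to a suitable Cauchy surface, using that when supports are causally separated one can factorize $\Scal(A+B+C)$ into a $\star$-product of pieces in two different ways; taking the ratio $\Scal^{\minus}\circ\widehat{\Scal}$ causes the non-additive contributions to cancel and yields the desired relation. Equivalently, one writes $Z(A+B+C)-Z(B+C)=\Scal^{\minus}\bigl(\widehat{\Scal}(A+B+C)\star\widehat{\Scal}(B+C)^{\star\minus}\bigr)$ after partitioning supports, and shows that the right-hand side depends only on $A$ and $B$ (not on $C$) thanks to causal factorization of both S-matrices.

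For the converse, given $\Scal$ satisfying the four conditions and $Z\in\Rcal$, I would set $\widehat{\Scal}:=\Scal\circ Z$ and verify the axioms directly. Starting Element follows from \textbf{Z\ref{Z0}} and \textbf{Z2}. Field Independence follows from \textbf{Z\ref{Zindep}} together with the chain rule and Field Independence of $\Scal$. $\varphi$-Locality is inherited from the corresponding property of both $\Scal$ and $Z$. Causality of $\widehat{\Scal}$ is again the delicate step: assuming $\supp(A)\gtrsim\supp(B)$, one uses the additivity of $Z$ to write $Z(A+B)=Z(A)+Z(B)+R(A,B)$ where $R(A,B)$ vanishes if $\supp(A)\cap\supp(B)=\varnothing$, and then applies Causality of $\Scal$ to factor $\Scal(Z(A+B))=\Scal(Z(A))\star\Scal(Z(B))$ once one has verified that the ``remainder'' terms induced by $R$ respect the causal ordering. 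Overall the argument is a careful bookkeeping exercise in causal perturbation theory; the conceptual content is entirely contained in the interplay between Causality of S-matrices and the additivity axiom \textbf{Z\ref{Zloc1}} of $\Rcal$.
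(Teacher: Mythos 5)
First, a point of reference: the paper itself does not prove Theorem \ref{MTR} --- it is quoted from the literature with the citation \cite{DF04,BDF} --- so your proposal can only be measured against the proof given there. Your overall strategy coincides with that proof: set $Z:=\Scal^{-1}\circ\widehat{\Scal}$ (invertible order by order thanks to the Starting Element condition), note that uniqueness is then automatic, dispose of the routine axioms, and concentrate on the additivity property of $Z$, which is indeed the only place where real work happens and which must be extracted from the Causality axiom of both $S$-matrices. The converse direction is also essentially as you describe, and is in fact slightly easier than you make it: if $\supp(A)$ is later than $\supp(B)$ the two supports are automatically disjoint, so additivity gives $Z(A+B)=Z(A)+Z(B)$ with no remainder term, and $\supp Z(V)\subset\supp V$ lets you apply Causality of $\Scal$ directly.

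The gap is in your treatment of the key step. You propose to separate $\supp(A)$ from $\supp(C)$ by ``a suitable Cauchy surface''. But disjointness of two compact sets does not imply that they are causally orderable: a set $A$ disjoint from $C$ can contain points both in $J^+(\supp C)$ and in $J^-(\supp C)$ (a timelike segment passing beside a point is the standard example), and then no Cauchy surface separates them and neither causal factorization applies. The proof in \cite{DF04,BDF} handles this by decomposing $A$ and $C$ into finitely many pieces with small supports --- using that for $x\ne y$ at least one of $x\notin J^{+}(y)$, $x\notin J^{-}(y)$ holds and is an open condition --- so that each pair of pieces is causally orderable, and then reassembling by induction on the perturbative order; equivalently, one shows that each Taylor coefficient $Z^{(n)}$ is supported on the thin diagonal. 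Without this decomposition your argument does not go through. Two smaller defects: the displayed identity $Z(A+B+C)-Z(B+C)=\Scal^{-1}\bigl(\widehat{\Scal}(A+B+C)\star\widehat{\Scal}(B+C)^{\star-1}\bigr)$ does not typecheck, since $\Scal^{-1}$ is the compositional inverse of a nonlinear map and does not convert $\star$-quotients into differences (it is valid only when $B=0$ and $A$ is entirely later than $C$); and $Z=\id+O(\hbar)$ is not merely a slogan about where the ambiguity lives --- it requires the separate observation that the $\hbar^{0}$ (classical) part of the time-ordered products is uniquely fixed by the axioms, so that $\Scal$ and $\widehat{\Scal}$ agree at that order.
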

If $\Scal$ is replaced by $\hat{\Scal}=\Scal\circ Z$ with a renormalization group element $Z$, then for an observable $F\in\fA(\M)$ we obtain
\[
\hat{\Scal}_V(F)=\Scal_{Z(V)}(Z_V(F))
\]
where $Z_V(F)=Z(V+F)-Z(V)$. From the additivity of $Z$ it follows that 
\begin{equation}
Z_V(F)=Z_{V'}(F)\quad\mathrm{if}\quad\supp(V-V')\cap\supp\, F=\emptyset\ .\label{relationZ}
\end{equation}
The relation (\ref{relationZ}) implies that
\begin{equation} 
\supp\, Z_V(F)\subset\supp\, F \ .\label{suppZ} 
\end{equation}

After this short review of the standard methods of causal perturbation theory we move on to the next section, where some new results concerning the renormalized time-ordered product are proven.
%==============================================================
\subsection{Associativity of the renormalized time-ordered \-product}\label{renprod}
%==============================================================
Up to now the renormalized time-ordered product was not an algebraic product defined as a binary operation on the suitable domain. Instead the family of multilinear maps $\TTH^{\,n}$ was constructed. With this formulation it was difficult to prove the associativity.
In the present work we solve this problem and prove that the renormalized time-ordered product is an \textit{\textbf{associative product}} on a suitable subspace of $\fA(\M)[[\hbar]]$. We show that $\TTH^{\,n}:\F_\loc(\M)[[\hbar]]^{\otimes n}\rightarrow\fA(\M)[[\hbar]]$ can be pulled back to a map $\TTH:\F(\M)[[\hbar]]\rightarrow \fA(\M)[[\hbar]]$ and therefore the renormalized time ordered product can be really treated as a binary operation on the space $\TTH(\F(\M))[[\hbar]]$.
\begin{thm}\label{factorization}
The renormalized time-ordered product $\TRH$\index{product!time ordered!renormalized} is an associative product on $\TTH(\F(\M))$ given by
\[
F\TRH G\doteq\TTH(\TTH^{\minus}F\cdot\TTH^{\minus}G)\,,
\]
where $\TTH:\F(\M)[[\hbar]]\rightarrow\TTH(\F(\M))[[\hbar]]\equiv\Dcal_{\TTR}(\M)$ is defined by means of the commutative diagram:
\begin{center}
\begin{tikzpicture} \matrix(a)[matrix of math nodes, row sep=3em, column sep=1em, text height=2ex, text depth=1.5ex] {
\F(\M)[[\hbar]]&&S^\bullet\F^{(0)}_\loc(\M)[[\hbar]] \\
&\fA(\M)[[\hbar]]&\\};
\path[->,font=\scriptsize] (a-1-1) edge node[above,color=red]{$\beta$} (a-1-3); 
\path[->,dashed,font=\scriptsize](a-1-1) edge node[left]{$\TTH$} (a-2-2); 
\path[->,font=\scriptsize](a-1-3) edge node[right]{$\ \oplus_n\TTH^{\,n}$}(a-2-2);  
\end{tikzpicture}
\end{center}
In the above diagram $S^\bullet\F^{(0)}_\loc(\M)$ denotes the space of symmetric tensor powers of local functionals satisfying $F(0)=0$ and $\beta$ is the inverse of multiplication $m$.
\end{thm}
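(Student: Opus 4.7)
The plan is as follows.

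First I would establish that the multiplication map
\[
m : S^\bullet \F^{(0)}_\loc(\M) \to \F(\M)
\]
is a linear isomorphism of commutative algebras. Surjectivity is essentially tautological: by construction $\F(\M)$ is the algebraic completion of $\F_\loc(\M)$ under the pointwise product, and any constant factor arising in a product of locals can be absorbed into the lower-degree summands, so every element admits a decomposition into products of local functionals vanishing at $\varphi=0$. Once $m$ is a bijection, I would set $\beta := m^{\minus}$ and define $\TTH := \bigl(\bigoplus_n \TTH^{\,n}\bigr) \circ \beta$ as in the diagram; the sum $\sum_n \TTH^{\,n}(\beta(F)_n)$ is finite because $\beta(F)$ has finite degree, so $\TTH$ is well defined on all of $\F(\M)[[\hbar]]$.

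The hard part is injectivity of $m$. Here I would exploit locality properties carefully. For any $F = F_1 \cdots F_k$ with $F_i \in \F^{(0)}_\loc(\M)$, the $n$-th functional derivative $F^{(n)}(\varphi)$ is a compactly supported distribution on $M^n$ supported on the union of ``partial thin diagonals'' indexed by set partitions whose blocks each sit inside the support of one factor $F_i$. At $n=k$ and at configurations with all $k$ points in pairwise disjoint single-factor regions, $F^{(k)}(\varphi)$ factorises as a symmetric tensor product of the distributions $F_i^{(1)}(\varphi)$ localised on the thin diagonals. A cumulant-type inclusion-exclusion then extracts the individual factors $F_1, \ldots, F_k$ uniquely up to permutation from $F$, yielding the required injectivity. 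The requirement $F_i(0)=0$ is what rules out spurious rewrites of a $k$-fold product as a $(k{-}1)$-fold product with a constant, guaranteeing that the extraction is well posed.

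With $\beta$ in hand, the leading order of $\TTH$ in $\hbar$ is seen to reduce to $\alpha_H^{\minus}$: indeed each $\TTH^{\,n}$ at $\hbar^0$ reproduces the iterated pointwise product of $\alpha_H^{\minus}$-images, and this composed with $m$ is the identity. Since $\alpha_H^{\minus}$ is a bijection, $\TTH$ is invertible as a formal power series in $\hbar$ onto its image $\Dcal_{\TTR}(\M)$, and the formula
\[
F \TRH G := \TTH\bigl(\TTH^{\minus} F \cdot \TTH^{\minus} G\bigr), \qquad F, G \in \Dcal_{\TTR}(\M),
\]
defines an unambiguous binary operation. Associativity follows immediately: for $F, G, H \in \Dcal_{\TTR}(\M)$,
\begin{align*}
(F \TRH G) \TRH H
&= \TTH\bigl(\TTH^{\minus}F \cdot \TTH^{\minus}G \cdot \TTH^{\minus}H\bigr) \\
&= F \TRH (G \TRH H),
\end{align*}
since the pointwise product on $\F(\M)[[\hbar]]$ is associative and $\TTH$ intertwines the two products by construction. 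The main obstacle in the argument is the injectivity of $m$, which carries all the analytic content; the remaining steps are formal algebraic consequences.
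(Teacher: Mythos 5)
Your overall strategy coincides with the paper's: reduce everything to bijectivity of the multiplication map $m$, observe that surjectivity is cheap, put all the weight on injectivity, and then get associativity for free because $\TRH$ is by construction the pullback of the (associative) pointwise product along $\TTH$. The formal parts — finiteness of the sum defining $\TTH$, invertibility of $\TTH$ from its $\hbar^0$ part, the one-line associativity computation — are fine and match the paper.

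The gap is in the injectivity step, which you correctly identify as carrying all the content but then argue too loosely. First, injectivity of the linear map $m$ is a statement about its kernel on all of $S^\bullet\F^{(0)}_\loc(\M)$, i.e.\ about finite \emph{sums} of symmetric tensors of varying degree; your ``cumulant-type inclusion-exclusion'' is phrased for a single monomial $F_1\cdots F_k$ and claims to recover the individual factors ``uniquely up to permutation''. That is both more than is true (only the symmetric tensor $F_1\odot\cdots\odot F_k$ can be recovered, not the factors — bilinearity already lets you redistribute scalars, and a nontrivial linear combination of monomials need not match any other representation factor by factor) and less than is needed (it does not address why such a linear combination cannot lie in $\ke\, m$). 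The paper instead shows directly that $m(F)=0$ forces the top-degree component $F_n$ to vanish and inducts downward. Second, and more seriously, evaluating the $k$-th derivative at configurations whose points lie in pairwise disjoint regions only pins down the degree-$k$ component on test tuples with separated jet supports. To conclude that the component vanishes \emph{identically}, including at coinciding points, the paper uses that $F_k$ depends on its arguments only through their jets at $x_1,\dots,x_k$, inserts a partition of unity $\ph=\sum_i\chi_i\ph_i$ to decouple the arguments, and then invokes smoothness and boundedness of the associated jet-bundle function to extend the vanishing across the thin diagonals. Without that extension step your argument only controls the factors away from the diagonal, which does not yet give $F_k=0$ in $S^k\F^{(0)}_\loc(\M)$. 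Relatedly, your dismissal of surjectivity as ``tautological'' is acceptable for the bare existence of a preimage, but the paper's explicit inductive construction of $\beta$ via the jet-bundle representation and smooth extension is precisely what supplies the locality and smoothness properties of the degree-$k$ pieces that the injectivity argument then relies on.
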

\begin{proof}
To prove the theorem it remains to construct the map $\beta:\F(\M)\rightarrow S^\bullet\F^{(0)}_\loc(\M)$ which is the inverse of the multiplication $m$ and provides the factorization of a multilocal functional into local ones. Before we do it, we make a following observation:  let $F_1,...,F_n\in\F_\loc(\M)$, $F=m(F_1,...,F_n)=F_1\cdot...\cdot F_n$ and assume that $F_1(0)=...=F_n(0)=0$, then:
\[
F(\ph)=\int\limits_0^1 \big<F_1^{(1)}(\lambda_1 \ph),\ph\big>d\la_1\cdot...\cdot\int\limits_0^1 \big<F_n^{(1)}(\lambda_n \ph),\ph\big>d\la_n\,.
\]
Since a derivative of a local functional is a smooth section, we can change the order of the integration and write the above formula as:
 \be\label{multilocal:repr}
F(\ph)=\int dx_1...dx_n \int\limits_0^1 d\vec{\lambda}F_1^{(1)}(\lambda_1 \ph)(x_1)\cdot...\cdot F_n^{(1)}(\lambda_n \ph)(x_n)\,\ph(x_1)...\ph(x_n)\,,
\ee
where the integration is performed with respect to $\lambda_1$,\ldots, $\lambda_n$ from $0$ to $1$. We can symmetrize the expression on the right hand side with respect to $x_1$,\ldots, $x_n$ and using the fact that all the factors are local we can write $F$ as integration of the pullback of a function on the jet bundle:
 \[
F(\ph)=\int\! d\vec{x}\,((j^{\infty}\ph)^n)^*f(x_1,...,x_n)\,\ph(x_1)...\ph(x_n)\,,
\]
where $j^{\infty}\ph$ denotes the pullback by the jet prolongation of $\ph\in\Ci(\M)$ and $f$ is a smooth symmetric function on the jet bundle. One can picture it on the diagram:
\begin{center}
\begin{tikzpicture} \matrix(a)[matrix of math nodes, row sep=3em, column sep=1em, text height=2.5ex, text depth=0.25ex] {
J^{\infty}(\M\leftarrow \M\times\RR)^{\oplus n}&\M^n\times\RR \\
\M^n&\\};
\path[->,font=\scriptsize](a-1-1) edge node[above]{$f$} (a-1-2); 
\path[->,font=\scriptsize](a-1-1) edge node[right]{}  (a-2-1);  
\path[<-,font=\scriptsize,bend right](a-1-1) edge node[left]{$(j^\infty\ph)^{\oplus n}$}  (a-2-1);  
\path[->,font=\scriptsize](a-1-2) edge node[right]{}(a-2-1);
  \path[<-,font=\scriptsize,bend left,dashed](a-1-2) edge node[below]{$\quad\quad\quad\quad((j^{\infty}\ph)^n)^*f$}(a-2-1);  
\end{tikzpicture}
\end{center}
%@@@ with F(0)=0?
Knowing $f$ one can now define a functional $\tilde{F}\in S^n\F_\loc(\M)$ by setting:
 \be\label{Fk}
\tilde{F}(\ph_1,...,\ph_n)\doteq\int\! d\vec{x}\,j^{\infty}\vec{\ph}^{\ *}f(x_1,...,x_n)\,\ph_1(x_1)...\ph_n(x_n)\,,
\ee
where $j^{\infty}\vec{\ph}^{\ *}$ is the symmetrized pullback, i.e. $\vec{\ph}\doteq \frac {1}{n!}\sum\limits_{\pi\in P_n} (\ph_{\pi(1)},...,\ph_{\pi(n)})$. It follows from the construction that $\tilde{F}(\ph,...,\ph)=F(\ph)$. Therefore the problem of finding the factorization map $\beta:\F(\M)\rightarrow S^{\bullet}\F_\loc(\M)$ can be reduced to the construction of the jet bundle function $f$ for a given multilocal functional $F$.

Let $F\in\F(\M)$. 
%Consider it's Taylor expansion:
%\[
%F(\ph)=\sum\limits_{n\geq 0} \left<F^{(n)}(0),\ph^{\otimes n}\right>\,.
%\]
Since $\F(\M)$ contains only finite sums of products of local functionals, there exists a maximal $k>0$ such that $\exists \ph\in\E(\M)$ for which $\left<F^{(k)}(\ph),\ph_1\otimes...\otimes\ph_k\right>\neq 0$ for $\ph_1\otimes...\otimes\ph_k$  with pairwise disjoint supports of the corresponding infinite jets $j^\infty(\ph_1),..,j^\infty(\ph_k)$. Consider a functional: 
\be\label{Ftild}
\int\limits_0^1 d\vec{\lambda} \left<F^{(k)}(\lambda_1\ph_1+...+\lambda_k\ph_k),\ph_1\otimes...\otimes\ph_k\right>\,.
\ee
%where the integration is performed with respect to $\lambda_1$,\ldots, $\lambda_k$ from $0$ to $1$. 
Now we use the fact that $F$ is multilocal, so the only term of $F^{(k)}(\lambda_1\ph_1+...+\lambda_k\ph_k)$ that contributes to (\ref{Ftild}) is a smooth section. We denote this test section by $g(\lambda_1\ph_1+...+\lambda_k\ph_k)$. Using the theorem of Fubini-Tonelli we can now change the order of integration in (\ref{Ftild}) and write it as:
\be\label{Ftild2}
\int dx_1...dx_k \int\limits_0^1 d\vec{\lambda} g(\lambda_1\ph_1+...+\lambda_k\ph_k)(x_1,...,x_k)\ph_1(x_1)...\ph_k(x_k)
\ee
Now we want to show that $G_{x_1,...,x_k}(\ph_1,...,\ph_k)=\int\limits_0^1 d\vec{\lambda} g(\lambda_1\ph_1+...+\lambda_k\ph_k)(x_1,...,x_k)\ph_1(x_1)...\ph_k(x_k)$ depends only on jets of $\ph$'s at points $x_1,...,x_k$. We pick a test section $h$ infinitely flat at $x_1,...,x_k$ and calculate the change of $G_{x_1,...,x_k}(\ph_1,...,\ph_k)$:
\begin{multline*}
G_{x_1,...,x_k}(\ph_1,...,\ph_k)-G_{x_1,...,x_k}(\ph_1+h,...,\ph_k)=\\
=\int\limits_0^1 d\mu\!\! \int\limits_0^1 d\vec{\lambda} g^{(1)}(\lambda_1(\ph_1+\mu h)+...+\lambda_k\ph_k)(x_1,...,x_k)[h]\ph_1(x_1)...\ph_k(x_k)=\\
=\int\limits_0^1 d\mu\!\! \int\limits_0^1 d\vec{\lambda}\int\! dz g^{(1)}(\lambda_1(\ph_1+\mu h)+...+\lambda_k\ph_k)(x_1,...,x_k,z)h(z)\ph_1(x_1)...\ph_k(x_k)
\end{multline*}
From the maximality of $k$ follows that $g^{(1)}(\lambda_1(\ph_1+\mu h)+...+\lambda_k\ph_k)(x_1,...,x_k,z)$ differs from $0$ only if $z=x_i$ for some $i\in{1,...,k}$, but then $h(z)=0$, so:
$G_{x_1,...,x_k}(\ph_1,...,\ph_k)-G_{x_1,...,x_k}(\ph_1+h,...,\ph_k)=0$ and we conclude that $G_{x_1,...,x_k}(\ph_1,...,\ph_k)$ depends only on the jets of the arguments at points $x_1,...,x_k$. 
We can write now (\ref{Ftild2})  as:
\[
\int d\vec{x}\int\limits_0^1 d\vec{\lambda} g(j^\infty_{x_1}(\lambda_1\ph_1+...+\lambda_k\ph_k),..., j^\infty_{x_k}(\lambda_1\ph_1+...+\lambda_k\ph_k))(x_1,...,x_k)\ph_1(x_1)...\ph_k(x_k)
\]
Note that from the assumption on functions $\ph_1$,\ldots, $\ph_k$ follows that if $\ph_i(x_i)\neq0$, then $j^\infty_{x_j}\ph=0$ for $j\neq i$. This allows us to rewrite the above formula as:
\[
\int d\vec{x}\int\limits_0^1 d\vec{\lambda}\, \tilde{g}(j^\infty_{x_1}(\lambda_1\ph_1),..., j^\infty_{x_k}(\lambda_k\ph_k))(x_1,...,x_k)\ph_1(x_1)...\ph_k(x_k)
\]
We define a jet bundle function  $f$ by:
\[
f(j^\infty_{x_1}(\ph_1),..., j^\infty_{x_k}(\ph_k))(x_1,...,x_k)\doteq \int\limits_0^1 d\vec{\lambda}\, \tilde{g}(j^\infty_{x_1}(\lambda_1\ph_1),..., j^\infty_{x_k}(\lambda_k\ph_k))(x_1,...,x_k)
\]
Locally this function depends only on the finite jets, so it is just a finite dimensional function
and from the surjectivity of the jet projection follows that it is uniquely determined by the above definition. 

It remains to extend $f$ to arbitrary arguments. Note that $g$ is a smooth compactly supported function on the jet space. In particular it is bounded, together with all its derivatives. Therefore the function $f$ and its derivatives are bounded as well and smooth outside the thin diagonal. We can therefore extend $f$
 by the smooth extension to a smooth function on the whole jet bundle. 
Using $f$ we define $\tilde{F}_k$ by means of (\ref{Fk}).
%We have showed that $\tilde{F}_k$ is an element of $\underbrace{\F_\loc\otimes...\otimes\F_\loc}_k$. 
Now we take $F-\tilde{F}_k$. This functional has now maximal degree $l <k$ and we can now repeat all the steps to define $(\widetilde{F-\widetilde{F}_k})_l$. Repeating this until degree $0$ we construct inductively the desired map $\beta$. It follows from the construction that it is indeed the right inverse of the multiplication $m$. To show that it is also the left inverse it remains to show that $m$ is injective on $S^\bullet\F_\loc^{(0)}(\M)$. Let $F=\bigoplus\limits_{k=0}^nF_k \in S^\bullet\F_\loc^{(0)}(\M)$ and $m(F)=0$. It follows that also the $n$-fold derivative of $m(F)$ is equal to $0$. Let us take $x_1,\ldots, x_n\in\M$ such that $x_i\neq x_j$ for $i\neq j$. Then it follows that
\[
\frac{\delta^n(m(F))}{\delta\ph_1(x_1)\ldots\delta\ph_n(x_n)}(\ph)=n!\frac{\delta^n F_n(\ph,\ldots,\ph)}{\delta\ph_1(x_1)\ldots\delta\ph_n(x_n)}=0\,.
\]
We know from the assumption that $F_n$ depends on $\ph$ only via its jet prolongation. Therefore we can replace $\ph$'s in the argument of $F_n$ by different functions $\ph_1\ldots\ph_n$ such that $j^\infty_{x_i}(\ph_i)=j^\infty_{x_i}(\ph)$, $i=1,\ldots,n$. 
Let us now take arbitrary $\ph_1,\ldots,\ph_n$ and define a smooth partition of unity $1=\sum\limits_{i=1}^n\chi_i$, where $\chi_i(x_i)=1$ and $\supp\,\chi_i\cap\{x_j;j\neq i\}=\varnothing$. Now we set $\ph=\sum\limits_{i=1}^n\chi_i\ph_i$ and it follows that:
\be\label{multilical:0}
\frac{\delta^n F_n(\ph_1,\ldots,\ph_n)}{\delta\ph_1(x_1)\ldots\delta\ph_n(x_n)}=0\,.
\ee
This holds for arbitrary field configurations $\ph_1,\ldots,\ph_n$ and for distinct points $x_i$. From the assumption follows that the only potentially non-vanishing contribution to the above derivative is smooth and bounded in $x_i$. This is exactly the smooth function that appears on the right hand side of the formula  (\ref{multilocal:repr}). By smooth extension we can conclude from (\ref{multilical:0}) that this function vanishes everywhere and therefore $F_n=0$. A similar argument can be used for $F_{n-1}$ and if follows that $F=0$, hence $m$ is injective on $S^\bullet\F_\loc^{(0)}(\M)$ and $\beta$ is its inverse.

The existence of the map $\beta$ allows us to treat the renormalized time ordered product $\TRH$ as a binary operation on the space $\TTH(\F(\M))$.
\[
F\TRH G\doteq \TTH(\TTH^{\minus}F\cdot\TTH^{\minus} G),\qquad F,G\in\TTH(\F(M))\,.
\]
The associativity can be now easily shown with the above formula, since:
\begin{multline}
A\TRH(B\TRH C)=\TTH(\TTH^{\minus}A\cdot \TTH^{\minus}\circ\TTH(\TTH^{\minus}B\cdot \TTH^{\minus}C))=\\
=\TTH(\TTH^{\minus}A\cdot\TTH^{\minus}B\cdot \TTH^{\minus}C)=(A\TRH B)\TRH C\,.
\end{multline}
\end{proof}
Note that $\TRH$ is well defined not on the full $\fA(\M)$, but on a smaller space (which is invariant under the renormalization group action) namely $\Dcal_{\TTR}(\M)\doteq\TTH(\F(\M))[[\hbar]]$. 
%This domain of definition is invariant under the renormalization group flow, since different renormalized products differ only by local terms.
%We can characterize this space more explicitly by means of the following proposition
%\begin{prop}
%There is a continuous  embedding of $\Dcal_{\TTR}(\M)=\TTH(\F(\M))[[\hbar]]$ into the space $\bigoplus\limits_{n=0}^\infty\fA_\loc(\M)^{\star n}[[\hbar]]$.
%\end{prop}
%\begin{proof}
%\end{proof}
Similarly as in section \ref{general} we can use the renormalized time ordering operator $\TTH$ to bring classical structures to the quantum world. To avoid the notational inconvenience, we drop from now on the subscript $H$, so $\TTH$ will be now written as $\TTR$. In particular we can define  the time ordering of multilocal vector fields. Let $X\in\V(\M)$, then we define
\be\label{rprodv}
\TTR X\doteq \int dx \TTR( X(x))\frac{\delta}{\delta\ph(x)}\,.
\ee
Since $\TR$ is now defined as a full product on $\Dcal_{\TTR}(\M)$, we can repeat the reasoning from section \ref{general} and define the  $\TTR$-transformed Koszul operator with the renormalized time-ordered product in place of $\T$. Let $S\in\F_\loc(\M)$ be the free action functional. The renormalized time ordered Koszul map is defined as
\[
\delTR\doteq\TTR\circ\delta_{\mathcal{T}^{\minus}S}\circ\TTR^{-1}
\]
Clearly it is a well defined object and no divergences are present. We can also define the time-ordered antibracket:
\[
\{X,Y\}_{\sst{\TTR}}=\TTR\{\TTR^{-1}X,\TTR^{-1}Y\}
\]

Definitions introduced above allow us to provide a mathematically rigorous interpretation of the renormalized quantum BV operator and the renormalized {\qme}. Before we turn to this task we want to remark on the problems encountered in other approaches to the BV quantization. Note that the source of divergences in expression (\ref{QMO}) is the operator $\Lap$, which is ill defined on local vector fields.  In the standard approach this is solved by using an appropriate regularization scheme. Instead we argue, that this problem can be completely avoided if  we work with renormalized time ordered products $\TTR$ from the very beginning. We shall follow now all the steps outlined in \ref{general} and see what is changing when we take the renormalization into account.
%
%First we show, how the ideal generated by equations of motion transforms under the time-ordering. Note that $\delta_{\TTR S}$ is a derivation with respect to the renormalized time ordered product $\TR$, i.e.:  
%\[
%\delta_{\TTR S} (F\TR X)=F\TR(\delta_{\TTR S}(X))\,,
%\] 
%where  $F\in\Dcal_{\TTR}(\M)$ and $X\in\TTR(\V_\loc(\M))$. If we choose $Q$ to be a constant vector field $Q=\int f(x)\frac{\delta}{\delta \ph(x)}$ and take $X=\TTR Q$, then
%from the normalization condition (\ref{fieldeq}) follows that:
%\be\label{ren1}
%\delta_{\TTR S} (F\TR X)=\Tcal_\ren(F)\star\delta_S(X)+i\hbar\,\TTR(\partial_XF)\,.
%\ee
%We see that the ideal generated by equations of motion transforms nontrivially under the $\Tcal_\ren$.
%==============================================================
\section{The renormalized quantum BV operator and the quantum master equation}\label{renormBV}
%==============================================================
Now we have all the tools needed to introduce the interacting \textbf\textit{renormalized quantum BV operator}\index{BV!operator!quantum renormalized}. We start with the classical algebra $\BV(\M)$ underlying the BV-complex. It consists of functionals (elements with $\#\ta=0$) and derivations ($\#\ta>0$). The main difference with respect to the scalar case is the appearance of a grading. This implies that axioms for the time ordered products have to be modified by introducing appropriate sign rules. For example time ordered products of ghosts are antisymmetric instead of symmetric. Time ordered products of graded derivations are defined in the same way as time-ordered products of vector fields, i.e. by means of (\ref{rprodv}).

With these considerations in mind we can now set to define the renormalized BV operator. We can do it in a similar way to the non-renormalized case by using the expression (\ref{QBV0}) or  (\ref{QBV2}) with $\TT$ replaced by $\TTR$, namely:
\begin{align}\label{QBVr}
\hat{s}(X)&=e_{\sst{\TTR}}^{-iV/\hbar}\TR\left(\left(\frac{\delta}{\delta\ph^\ddagger(x)}( e_{\sst{\TTR}}^{iV/\hbar}\TR X)\right)\star\frac{\delta S}{\delta\ph(x)}\right)=\nonumber\\
&=e_{\sst{\TTR}}^{-iV/\hbar}\TR\left(\{e_{\sst{\TTR}}^{iV/\hbar}\TR X,S\}_{\star}\right)\,,
\end{align}
where $V, X\in\TTR(\BV_\loc(\M))$. To understand better this expression we shall use the anomalous Master Ward Identity ({\mwi}) \cite{BreDue,DueBoas}. In our formalism it takes the following form:
\begin{prop}
Let $X\in\TTR(\V_\loc(\M))$, $F,S\in\TTR(\F_\loc(\M))$, then it holds:
\be\label{MWI}
\int dx (e_{\sst{\TTR}}^{iF/\hbar}\TR X(x))\star\frac{\delta S}{\ph(x)}=e_{\sst{\TTR}}^{iF/\hbar}\TR(\{X,F+S\}_{\TTR}-\Lap_F(X))\,,
\ee
where $\Lap_F(X)$ is the anomaly\footnote{In the original paper \cite{BreDue} the anomaly term is denoted by $\Lap_X(F)$. We use an opposite convention since it resembles more the notation used for the Laplacian operator $\Lap$ defined on the regular vector fields.}. It is of order $\mathcal{O}(\hbar)$ and can be written in the form $\Lap_F(X)=\sum\limits_{n=0}^\infty \Lap^{(n)}(F^{\otimes n};X)$, where each $\Lap^{\!(n)}$ is local, linear in $X$ and  $\Lap^{\!(n)}(F^{\otimes n};X)=0$ if $\supp X\cap \supp F=\varnothing$.  
\end{prop}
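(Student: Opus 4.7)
The plan is to first verify the MWI on the regular subalgebra, where no renormalization is needed and the anomaly can be computed explicitly, and then to extend the statement to local arguments by controlling the Epstein--Glaser freedom via the axioms T1--T6 for $\TTR$.

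First I would verify (\ref{MWI}) for $X\in\V_\reg(\M)$ and $F\in\F_\reg(\M)$, where $\TTR$ coincides with $\TT$. Using identity (\ref{tkoszul2}), $\{Y,S\}_\star = \{Y,S\}_{\sst{\TT}} - i\hbar\Lap Y$ for any regular $Y$. Choosing $Y = e_{\sst{\TT}}^{iF/\hbar}\T X$ and exploiting the antifield-freeness of $F$ and $S$ (so that $\{F,S\}_{\sst{\TT}}=0$ and $\Lap F = 0$), the Leibniz rule for the derivation $\{\cdot,S\}_{\sst{\TT}}$ yields $\{e_{\sst{\TT}}^{iF/\hbar}\T X,S\}_{\sst{\TT}} = e_{\sst{\TT}}^{iF/\hbar}\T\{X,S\}_{\sst{\TT}}$, while iterating the Leibniz-type formula derived from (\ref{Delta:Tbracket}) gives
\[
\Lap(e_{\sst{\TT}}^{iF/\hbar}\T X) = e_{\sst{\TT}}^{iF/\hbar}\T\Lap X + \tfrac{i}{\hbar}\,e_{\sst{\TT}}^{iF/\hbar}\T\{F,X\}_{\sst{\TT}}.
\]
Combining the two and recognising the LHS of (\ref{MWI}) as $\{e_{\sst{\TT}}^{iF/\hbar}\T X,S\}_\star$ (up to sign) via the formal antibracket representation (\ref{antibracketstar}), one obtains (\ref{MWI}) with the explicit anomaly $\Lap_F(X) = i\hbar\Lap X$, which is $F$-independent in the regular sector and serves as the template to which the renormalized anomaly must reduce.

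To pass to local arguments I would expand (\ref{MWI}) as a formal power series in $F$; the coefficient of $F^{\otimes n}/n!$ on the LHS involves $\int dx\,\TTR^{n+1}(X(x)\otimes F^{\otimes n})\star\tfrac{\delta S}{\delta\ph(x)}$, which the renormalized field equation axiom T6 reduces to a sum of lower-order time-ordered products plus a contribution localized on the total diagonal. Because T4 (causal factorization) and T6 determine $\TTR^{n+1}$ uniquely away from this diagonal, the difference between the two sides of (\ref{MWI}) at $n$-th order is supported on $\{x=x_1=\dots=x_n\}$; the restriction of this difference defines $\Lap^{(n)}(F^{\otimes n};X)$ and is manifestly local and linear in $X$. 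The $\mathcal{O}(\hbar)$ order follows from the correspondence principle: at $\hbar=0$ both $\TTR$ and $\star$ degenerate to the pointwise product and (\ref{MWI}) reduces to the classical Leibniz rule for the antibracket, which holds without anomaly.

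The main obstacle will be establishing the support property $\Lap^{(n)}(F^{\otimes n};X)=0$ whenever $\supp X\cap\supp F=\varnothing$. Here I would exploit causal factorization: if the supports are disjoint, one can choose a causal order and write $\TTR^{n+1}(X\otimes F^{\otimes n})$ as a $\star$-product of two separately renormalized factors, one containing $X$ and the other containing $F^{\otimes n}$, so no extension of distributions is ever required at a diagonal mixing $X$ and $F$. In this factorized regime the manipulation of the second paragraph, carried out inside each factor independently, reproduces the regular-case identity and the mixed anomalous contributions drop out. A subtle additional step is to verify that the individual $\Lap X$ contributions arising inside the $X$-factor can be absorbed into a field-independent and hence permissible modification of $\TTR$; this is controlled by the main theorem of renormalization (Theorem \ref{MTR}), which guarantees that any two choices of $\TTR$ differing by $Z\in\Rcal$ alter $\Lap_F(X)$ by a term satisfying the same locality and disjoint-support vanishing properties, so the characterization of the anomaly in (\ref{MWI}) is stable under the renormalization ambiguity and the prescription can be fixed so that $\Lap^{(n)}$ has the claimed support.
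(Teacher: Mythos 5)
Your overall strategy coincides with the paper's: both define the anomaly as the defect of the identity, expand order by order in $F$, and establish locality and the disjoint-support vanishing by an induction based on causal factorization (the paper runs this induction directly on the recursively defined coefficients $\bar{\Lap}^{(n)}$, splitting the arguments into "earlier" and "later" groups and applying axiom T4; your appeal to T6 together with the Epstein--Glaser uniqueness away from the total diagonal amounts to the same mechanism). Two points, however, deserve attention. First, your argument for the $\mathcal{O}(\hbar)$ property --- "at $\hbar=0$ both products degenerate to the pointwise product" --- does not work as stated, because the identity involves $e_{\sst{\TTR}}^{iF/\hbar}$, whose $n$-th order term in $F$ carries an explicit factor $(i/\hbar)^n$; one cannot simply set $\hbar=0$ in the equation. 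The paper resolves this by multiplying with the $\star$-inverse of the $S$-matrix and invoking Bogoliubov's formula, so that the identity is reorganized into retarded products $R_F(\cdot)$, whose $\hbar^0$ parts are the classical retarded products; only then does the validity of the classical {\mwi} (the result of D\"utsch--Fredenhagen and Brennecke--D\"utsch) imply that the anomaly starts at order $\hbar$. Your argument needs this intermediate reorganization to be correct.

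Second, the final step of your proposal --- absorbing the "$\Lap X$ contributions inside the $X$-factor" into a field-independent redefinition of $\TTR$ controlled by the main theorem of renormalization --- is not needed and is somewhat misleading. The vanishing of $\Lap^{(n)}(F^{\otimes n};X)$ for $\supp X\cap\supp F=\varnothing$ follows directly from the causal-factorization induction for \emph{any} admissible time-ordering prescription; no choice within the renormalization group $\Rcal$ has to be made to secure the support property, and indeed $\Lap X$ itself is ill-defined on local vector fields, so it cannot appear as a separate contribution to be absorbed. The renormalization freedom only enters later (in the discussion of whether the anomaly can be removed from the quantum master equation), not in the proof of the anomalous {\mwi} itself.
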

\begin{proof}
The proof bases on \cite{BreDue,DueBoas,H}.
Firstly we rewrite (\ref{MWI}) in a slightly different form:
\be\label{anom1}
e_{\sst{\TTR}}^{iF/\hbar}\TR \!\bar{\Lap}_F(X)=\int dx \frac{\delta S}{\ph(x)}\cdot (e_{\sst{\TTR}}^{iF/\hbar}\TR X(x))-\delTR(e_{\sst{\TTR}}^{ iF/\hbar}\TR X)\,,
\ee
From (\ref{anom1}) it is clear that $\bar{\Lap}_F(X)$ describes the difference between the classical ideal generated by the equations of motion and the image of the time-ordered Koszul operator. Equivalently one can write it as:
\be\label{anom2}
\bar{\Lap}_F(X)=e_{\sst{\TTR}}^{-iF/\hbar}\TR\int dx \frac{\delta S}{\ph(x)}\cdot (e_{\sst{\TTR}}^{iF/\hbar}\TR X(x))-\delTR(X)\,.
\ee
In order to show that $\bar{\Lap}_F(X)$ is local we expand equation (\ref{anom1}) in powers of $F$. It follows that 
$\bar{\Lap}^{(0)}(1;X)=-\delta_S(X)$ and for $n>0$ we obtain:
\[
\bar{\Lap}^{(n)}(F^{\otimes n};X)=\Big(\frac{i}{\hbar}\Big)^n\int\! dx\,F^{n,\sst{\TTR}}\TR X(x)\cdot\frac{\delta S}{\ph(x)}-\sum\limits_{k=1}^n\Big(\frac{i}{\hbar}\Big)^k\binom{n}{k}F^{k,\sst{\TTR}}\TR\bar{\Lap}^{(n-k)}(F^{\otimes (n-k)};X)\,,
\]
where $F^{n,\sst{\TTR}}\doteq \underbrace{F\TR\cdots \TR F}_{n}$. From this formula it is already clear that all the operators $\bar{\Lap}^{(n)}(F^{\otimes n};X)$ are well defined. It remains to show that they are local. To this end we consider functionals $F_1,\ldots, F_n$ such that supports of $F_1,\ldots, F_k$ are later than supports of $F_{k+1},\ldots, F_n$ and the support of $X$. The proof will be done by induction. Assume that the hypothesis holds  up to order $n-1$, so $\bar{\Lap}^{(m)}(F_1\otimes\ldots\otimes F_m;X)=0$, $m\leq n-1$. We want to show that
\begin{multline*}
\bar{\Lap}^{(n)}(F_1\otimes\ldots\otimes F_n;X)=\Big(\frac{i}{\hbar}\Big)^n\int\! dx\,F_1\TR\ldots\TR F_n \TR X(x)\cdot\frac{\delta S}{\ph(x)}+\\-\sum\limits_{I\subset\underline{n}\atop I\neq\varnothing}\Big(\frac{i}{\hbar}\Big)^{|I|}\bigg(\,{\prod_{i\in I}}^{\sst{\TTR}}F_i\bigg)\TR\bar{\Lap}^{(n-|I|)}\big(\bigotimes_{j\leq n\atop j\notin I} V_j;X\big)=0\,,
\end{multline*}
where ${\prod\limits_{i\in I}}^{\sst{\TTR}}$ is the product with respect to $\TR$, indexed by $I$. Now we can use the causal factorization property (\ref{CausFact}) to both terms on the r.h.s. of the above formula. It follows from the induction hypothesis that
\begin{multline*}
\bar{\Lap}^{(n)}(F_1\otimes\ldots\otimes F_n;X)=\Big(\frac{i}{\hbar}\Big)^nF_{1}\TR\ldots\TR F_k\star\bigg(\int\! dx\,F_{k+1}\TR\ldots\TR F_n\TR X(x)\cdot\frac{\delta S}{\ph(x)}\bigg)+\\
-\Big(\frac{i}{\hbar}\Big)^kF_{1}\TR\ldots\TR F_k\star\bigg(\sum\limits_{J\subset\{k+1,...,n\}}\Big(\frac{i}{\hbar}\Big)^{|J|}\bigg(\,{\prod_{i\in J}}^{\sst{\TTR}}F_i\bigg)\TR\bar{\Lap}^{(n-k-|J|)}\big(\bigotimes_{k<l\leq n\atop l\notin J} F_l;X\big)\bigg)=\\
\Big(\frac{i}{\hbar}\Big)^kF_{1}\TR\ldots\TR F_k\star\bigg(\Big(\frac{i}{\hbar}\Big)^{n-k}\int\! dx\,F_{k+1}\TR\ldots\TR F_n\TR X(x)\cdot\frac{\delta S}{\ph(x)}+\\
-\sum\limits_{J\subset\{k+1,...,n\}\atop J\neq\varnothing}\Big(\frac{i}{\hbar}\Big)^{|J|}\bigg(\,{\prod_{i\in J}}^{\sst{\TTR}}F_i\bigg)\TR\bar{\Lap}^{(n-k-|J|)}\big(\bigotimes_{k<l\leq n\atop l\notin J} F_l;X\big)-\bar{\Lap}^{(n-k)}(F_{k+1}\otimes\ldots\otimes F_{n};X)\bigg)=\\
=\Big(\frac{i}{\hbar}\Big)^kF_{1}\TR\ldots\TR F_k\star\bigg(\bar{\Lap}^{(n-k)}(F_{k+1}\otimes\ldots\otimes F_{n};X)-\bar{\Lap}^{(n-k)}(F_{k+1}\otimes\ldots\otimes F_{n};X)\bigg)=0\,.
\end{multline*}
In the first term we also made use of the identity $\int dx\, G\star \big(Y(x)\cdot \frac{\delta S}{\ph(x)}\big)=\int dx\big(G\star Y(x))\cdot \frac{\delta S}{\ph(x)}\big)$ for arbitrary $Y\in\V_\mc(M)$, $G\in\F_\mc(M)$. This proves the induction step. We can now separate from $\bar{\Lap}_F$  the contribution from $\{X,F\}_{\TTR}$, which is also local and vanishes if $X$ and $F$ have disjoint supports. This way we obtain $\Lap_F(X)=\bar{\Lap}_F(X)+\{X,F\}_{\TTR}$. It remains to proof that $\Lap_F(X)$ is of order $\Ocal(\hbar)$. We use the argument of \cite{BreDue,DueBoas,DF02} that bases on the fact that \textsc{mwi} is satisfied in the classical theory. Firstly we can multiply both sides of the equation (\ref{MWI}) with the inverse of $e_{\sst{\TTR}}^{iF/\hbar}$ with respect to the $\star$-product:
\begin{multline}
(e_{\sst{\TTR}}^{iF/\hbar})^{-1,\star}\star e_{\sst{\TTR}}^{iF/\hbar}\TR\delTR( X)=\int dx \frac{\delta S}{\ph(x)}\cdot (e_{\sst{\TTR}}^{iF/\hbar})^{-1,\star}\star\TTR(e^{iF/\hbar}\cdot X(x))+\\-(e_{\sst{\TTR}}^{iF/\hbar})^{-1,\star}\star e_{\sst{\TTR}}^{iF/\hbar}\TR(\{X,F\}_{\TTR}-\!\Lap_F(X))
\end{multline}
Using the Bogoliubov's formula relating the time-ordered product with the retarded product $R_F(G)=(e_{\sst{\TTR}}^{iF/\hbar})^{-1,\star}\star (e_{\sst{\TTR}}^{iF/\hbar}\TR G)$ we obtain:
\be\label{MWIR}
R_F(\delTR( X)+\{X,F\}_{\TTR}+\Lap_F(X))=\int dx \frac{\delta S}{\ph(x)}\cdot R_F(X(x))\,.
\ee
  This is the \textsc{mwi} expressed in terms of retarded products. It was shown in \cite{DF02,BreDue} that in the classical theory (\ref{MWIR}) holds without the anomaly, so $\Lap_F(X)$ must be of order $\Ocal(\hbar)$.
\end{proof}
This identity can be also easily generalized for the case $F\in\BV_\loc(\M)$ (see \cite{H} for details). 
%Analogously to the formula (\ref{QBV2}),  we would like to cover also a situation where $S$ contain antifields. Let $S=S_0+\theta_0$, where $\theta_0$ is linear both in fields and antifields and $\{S_0,\theta_0\}=0$.
Let us take the interaction of the form $F=F_0+\int dx F_1(x)\frac{\delta}{\delta\ph(x)}$ with an even ghost number. It follows from (\ref{MWI}) that:
\begin{align*}
\hat{s}e_{\sst{\TTR}}^{iF/\hbar}&=e_{\sst{\TTR}}^{iF/\hbar}\TR(\{F,F+S\}_{\TTR}-\Lap_{F}(F_1))=\\
&=e_{\sst{\TTR}}^{iF/\hbar}\TR\left(\frac{1}{2}\{F+S,F+S\}_{\TTR}-\Lap_{F}(F_1)\right)\,,
\end{align*}
Comparing this formula with (\ref{QME0}) we see that \textsc{mwi} provides us with means to formulate the \textit{\textbf{renormalized quantum master equation}}\index{master equation!quantum renormalized}. The divergent term $\Lap$ is now replaced with the finite ``anomaly'' term. To keep a uniform notation we can trivially extend  $\Lap_{(.)}(F_0)$ to the case when $F_0$ is an element with $\#\ta=0$. We just set $\Lap_{(.)}(F_0)=0$. We obtain the renormalized {\qme} in the form:
\be\label{QMEr}
\frac{1}{2}\{F+S,F+S\}_{\TTR}=\Lap_{F}(F)\,.
\ee 
Just like in the non-renormalized case, fulfilling the {\qme} (\ref{QMEr}) is equivalent to the invariance of  the extended S-matrix under the quantum Koszul operator. This guarantees that the equation (\ref{intertwining:s}) is fulfilled also in the renormalized case, i.e.:
\be\label{intertwining:s:r}
\{.,S\}_\star\circ R_F=R_F\circ\hat{s}\,,
\ee
where  $R_F(G)=(e_{\sst{\TTR}}^{iF/\hbar})^{-1,\star}\star (e_{\sst{\TTR}}^{iF/\hbar}\TR G)$. We see that the interpretation of $\hat{s}$  as the $R_F$-transformed $\{.,S\}_\star$ carries over also to the renormalized theory. 
%This is an important point, since 
%we want to interpret the cohomology of  $\hat{s}$ as the gauge invariant on shell observables.

Using (\ref{QMEr}) we obtain for an arbitrary element $\TTR X\in\TTR(\BV(\M))$ a following simple expression for the renormalized quantum BV-operator:
\[
\hat{s}X=\{X,F+S\}_{\TTR}-\Lap_F(X)\,.
\]
We want to stress once again that in the construction we performed no divergences appear in the intermediate steps, since we work all the time with well defined objects, i.e. renormalized time-ordered products $\TTR$. It is maybe a  good moment to stop for a while and think back about the guiding principle of the {\paqft}. By going to a little bit abstract level we got a fresh look at the BV quantization. We used the path integral approach only as a guideline and an heuristic principle to relate abstract constructions with objects commonly used in practical calculations. The main technical tool of our approach is the theory of distributions and in particular microlocal analysis. By using these mathematical methods one can formulate the renormalization as the distributions' extension problem. In the same spirit we defined the BV operator and the {\qme} first on regular objects and then we extended them consistently to more singular ones. The ambiguity involved in this process is just the well understood renormalization freedom resulting from extension of time-ordered products to functionals with coinciding supports. The definition of {\qme} we provided contains in place of a singular operator $\Lap$, a finite expression, which is commonly called \textit{the anomaly term} of the anomalous master Ward identity. Clearly the appearance of $\Lap_F(F)$ in equation (\ref{QMEr}) is expected, since the same structure was present in case of the non-renormalized products. From this point of view we don't speak of any ``anomalous'' behavior here. Moreover the term $\Lap_F(X)$ in {\mwi} is just the reflection of the fact that certain structures transform nontrivially if we go from the classical to the quantum case. The appearance of such behavior is not surprising at all.
Also the fact that certain classical symmetries cannot be recovered in the quantum case is quite natural. Indeed, it is rather exceptional, that some of them can be obtained. Therefore we shouldn't be too much worried about the fact, that the quantum master equation differs from the classical one. Instead we will now take a closer look at the nature of the anomaly term and try to understand it better, by formulating certain consistency conditions.

First we note that in equation (\ref{QBVr}) the star product with $\frac{\delta S}{\delta\ph(x)}$ or $\frac{\delta S}{\delta\ph^\ddagger(x)}$ can be also replaced by the pointwise product and therefore:
\[
\{\{e_{\sst{\TTR}}^{iV/\hbar}\TR X,S\}_{\star},S\}_\star=0
\]
This implies that:
\[
\hat{s}^2( X)=0
\]
From this condition and the classical master equation (\ref{CME0}) for $S+V$ it follows that
\[
\{\Lap_V(X),V+S\}_{\TTR}+\Lap_V(\{X,V+S\}_{\TTR})+\Lap_V(\Lap_V(X))=0
\]
Note that $\{.,V+S\}_{\TTR}$ is just the classical BV operator $s=\{.,\TTR^{\minus}(V+S)\}$ transported to the quantum algebra by means of $\TTR$. Therefore, if $\TTR^{\minus}X$ is invariant under $s$, then also $\{X,V+S\}_{\TTR}=0$ and we obtain a condition analogous to the Wess-Zumino consistency condition \cite{Wess}:
\be\label{constcond2}
\{\Lap_V(X),V+S\}_{\TTR}=-\Lap_V(\Lap_V(X))
\ee
\subsection{Quantum field theory on globally hyperbolic spacetimes}
%==============================================================
One of the great advantages of the algebraic formalism is the relatively simple generalization from $\M$ to general globally hyperbolic spacetimes $M\in\obj(\Loc)$. In this section we recall basic definitions and theorems concerning quantum field theory on curved spacetimes, following \cite{BFV}. Construction of the quantum algebra performed in section \ref{quantalg} is done in a similar way, but one has to choose a reference Hadamard state $\omega$. This doesn't affect the algebraic structure, since the resulting quantum algebras are $*$-isomorphic.
 Also Wick products can be constructed as in section \ref{renprod}, but there is an additional renormalization freedom in defining them (i.e. additional freedom in constructiong $\TTH^1$). The detailed discussion of this fact is provided in \cite{BF0,HW}. An abstract quantum algebra $\fA(M)$ assigned in this way to a spacetime $M$ has to fulfill the axioms of locally covariant quantum field theory postulated in \cite{BFV}. The structure is similar to the classical case discussed in section \ref{axioms}. Now the category of observables is the category of $*$-algebras:
\begin{itemize}
\item[$\Obs$]
\begin{itemize}
 \item[$\ $] $\obj(\Obs)$: topological $*$-algebras \index{category!of algebras@of $*$-algebras}
 \item[$\ $] {\bf Morphisms}: continuous $*$-homomorphisms
  \end{itemize}
\end{itemize}
A \textbf{\textit{locally covariant quantum field theory}}\index{locally covariant!field theory!quantum} $\fA$ is defined to be a covariant functor between the categories $\Loc$ and $\Obs$ satisfying the properties of causality ($\fA^\otimes$ is a tensor functor) and the time-slice-axiom.
In the framework of locally covariant quantum field theory a concept of a \textit{\textbf{field}} also needs generalisation. Following \cite{BFV} we define a locally covariant quantum field\index{locally covariant!field!quantum}  $\Phi$ as a natural transformation between the functors $\D$ and $\fA$. This means, that to any isometric embedding $\chi: M \longrightarrow N$ it associates the following commutative diagram:
$$\begin{CD}
\D(M_1) @>\Phi_{(M_1)}>> \fA(M_1)\\
@V\chi_* VV  @VV\alpha_\chi V\\
\D(M_2) @>\Phi_{(M_2)}>>\fA(M_2)
\end{CD}$$
where $\chi_*$ is the push forward under $\D$. Clearly the field $\Phi\equiv(\Phi_{M})_{M\in\Loc}$ has the following covariance property:
\begin{equation}
\fA\chi \circ \Phi_{M_1} = \Phi_{M_2} \circ \chi_\ast\ .
\end{equation}
It was shown in \cite{BFV} that \textit{\textbf{local Wick polynomials}} are examples of such locally covariant quantum fields if certain cohomological condition is fulfilled. We recall briefly this argument here. The enlarged local algebras formed by the Wick polynomials defined in \cite{BF0} can be constructed with quasifree Hadamard states $\omega$ of the free field over globally hyperbolic spacetimes. It can be proved that it satisfies the condition of local covariance. However, it was shown in \cite{BFV,HW} that the Wick polynomials $:\ph^k:_\omega$ themselves are in general not locally covariant quantum fields in the sense of the above definition. Nevertheless  construction of local Wick monomials is possible along the lines of \cite{BFV}. Following this reference we give here an example of the construction of the Wick-square.
\begin{exa}
Let $\omega=\omega_{(M',g')}$ and $\omega'=\omega'_{(M',g')}$ be two
quasifree Hadamard states over the spacetime $(M',g')$.Then there is a smooth
function $B_{\omega,\omega'}$ on $M'$ such that $:\!\Phi^2\!:_{\omega}\!(x')
-\,:\!\Phi^2\!:_{\omega'}\!(x') \, = B_{\omega,\omega'}(x')$ and the 
covariance and cocycle condition
\[ B_{\omega,\omega'} + B_{\omega',\omega''} + B_{\omega'',\omega} =
0\,.\]
are satisfied. The problem of constructing locally covariant Wick monomials can be now reduced to finding a family of smooth functions $f_{\omega_{(M',g')}} \in C^{\infty}(M')$ that trivialize the above cocycle, i.e.
\[ B_{\omega,\omega'}(x') = f_{\omega}(x') - f_{\omega'}(x')\,, \quad x'
\in M'\,.
\]
If this cohomological problem can be solved, then a locally covariant Wick-square can be defined by setting
$ :\!\Phi^2\!:_{(M,g)}\!(x) = \, :\!\Phi^2\!:_{\omega_{(M,g)}}\!(x) -
f_{\omega_{(M,g)}}(x)$ 
for an arbitrary quasifree Hadamard state $\omega_{(M,g)}$ over $(M,g)$. It was shown in \cite{BFV} that the solution to this cohomological problem is provided by $H_{\omega}$, the smooth,
non-geometrical term of the two-point function of a quasifree Hadamard state $\omega$, i.e.
$  f_{\omega}(x) = H_{\omega}(x,x) \,, \quad x \in M$.
\eex\end{exa}
The notion of locally covariant quantum fields recalled here agrees also with the notion used by Hollands and 
Wald \cite{HW,HW2}.
%==============================================================
\subsection{Algebraic adiabatic limit}\label{adiablim}
%==============================================================
In the previous section we discussed how the free field theory can be formulated on generic spacetimes $M\in\obj(\Loc)$.
Now it's time to introduce the interaction. We will review now basic definitions and theorems from locally covariant quantum field theory, which will be useful later on in the context of the {\qme}. The main point of this section is to go one level of abstraction further and introduce the interaction on the level of natural Lagrangians. In the spirit of locally covariant quantum field theory also the S-matrix and the renormalization group can be understood in this setting. Recall that in the present framework Lagrangians themselves are natural transforamations and therefore, it is justified to ask, how the renormalization group acts on such objects. Besides, we have already shown in section \ref{CME} that the right framework for the formulation of the classical master equation involves the notion of natural transformations. Therefore one expects that also the {\qme} should be investigated in this context. Moreover the formulation in terms of natural Lagrangians allows us to take the so called \textit{\textbf{algebraic adiabatic limit}}. The construction provided here follows  \cite{BDF} and is similar to the classical one, which we recalled in section \ref{inter}. The only difference lies now in underlying categories.
\begin{defn}\label{df:af}
A generalized Lagrangian $L$\index{generalized Lagrangian} is a natural transformation between functors $\D$ and $\fA_\loc$, both seen as functors into the category of topological spaces (we dropped the linearity condition!), satisfying:
\begin{itemize}
\item $\supp(L(f))\subset\supp(f)\ $;
\item $L(0)=0$;
\item $L(f+g+h) =L(f+g) -L(g)+ L(g+h)\ ,$ if $\ \supp(f)\cap\supp(h)=\emptyset\ $;
\end{itemize}
\end{defn}
Just like in the classical case (section \ref{dyn}) we define the action $S$ to be an equivalence class of Lagrangians $S(L)$ in the sense of (\ref{equ}). We recall, that the proposition 6.2 in \cite{BDF} states that the space of generalized Lagrangians is invariant under the action $L\mapsto Z\circ L$ of the renormalization group, $Z\in\Rcal$. Moreover, if $L\sim L'$ in the sense of (\ref{equ}), then the same holds for $Z\circ L$ and $Z\circ L'$, so the renormalization group has a well defined action on the space of actions (equivalence classes of Lagrangians). We set
\[
Z_{S_M(L)}(A)=Z_{L_M(f)}(A)
\] 
with $f\equiv 1$ on a neighbourhood of $\supp\, A$. By the support properties of $L$ and by (\ref{suppZ})
the right hand side does not depend on $f$. 
Let $\Ocal$ be a relatively compact open subregion of spacetime $M$. 
The abstract algebra of observables associated to $\Ocal$ should be independent of an interaction swiched on outside
 of $\Ocal$.  Let us define
\[
\mathscr{V}_{S_1}(\Ocal) \doteq \{ V\in\fA_{loc}(M)\ |\ \supp(V-{L_1}_M(f))\cap\overline{\Ocal}=\varnothing,
\text{ if } [L_1] =S_1\text{ and } f\equiv 1 \text{ on }\Ocal \}\,,
\]
Note hat $Z_V(F)=Z_{S_1}(F)$ if $V\in \mathscr{V}_{S_1}(\Ocal)$ and $\supp\, F\subset \Ocal$ and moreover
\[
V\in\mathscr{V}_{S_1}(\Ocal) \Leftrightarrow Z(V)\in\mathscr{V}_{Z(S_1)}(\Ocal)\,, 
\]
We shall use the notation $Z(S_1)$ for expressing the equivalence class of the transformed Lagrangian $Z(L_1)$.
The relative $S$-matrix in the algebraic adiabatic limit is defined by
\[
\Scal^\Ocal_{S_1}(F)=(\Scal_V(F))_{V\in\mathscr{V}_{S_1}(\Ocal)}
\] 
for $F\in\fA(M)$ with $\supp\, F\subset \Ocal$. The $S$-matrix defined this way is a covariantly constant section  in the sense that for any $V_1,V_2\in\mathscr{V}_{S_1}(\Ocal)$ there exists an automorphism
$\beta$ of $\fA(M)$ such that
\[
\beta(\Scal_{V_1}(F))=\Scal_{V_2}(F)\quad\ \forall F\in\fA_{loc}(M)\ ,\ \supp\, F\subset \Ocal\,.
\]
Compare this with the classical structure defined in section \ref{inter}. The interpretation of $\Scal^\Ocal_{S_1}(F)$ as the algebraic adiabatic limit is justified, since the abstract algebra generated by $\Scal_V(F)\ ,\ \supp\, F\subset \Ocal$ is independent of the choice of $V\in \mathscr{V}_{S_1}(\Ocal)$. The interacting local algebra $\fA_{S_1}(\Ocal)$ of observables in the algebraic adiabatic limit is generated by the elements $\Scal^\Ocal_{S_1}(F)$, $\supp\, F\subset\Ocal$. This assignment can be made into a covariant functor. Let $\chi:M\rightarrow N$ be an isometric embedding, then the embedding $\fA_{S_1}\chi:\fA_{S_1}(M)\hookrightarrow \fA_{S_1}(N)$
is induced by $\fA_{S_1}\chi\,(\Scal^{M}_{S_1}(F))=\Scal^{N}_{S_1}(F)$.

Now, following \cite{BDF} we define the action of the renormalization group on the level of natural transformations.  We want to see how the S-matrix in the adiabatic limit is behaving under the renormalization. Let $Z\in\Rcal$ and $\hat{\Scal}=\Scal\circ Z$. Then
\[
\hat{\Scal}^\Ocal_{S_1}(F)=(\hat{\Scal}_V(F))_{V\in\mathscr{V}_{S_1}(\Ocal)}=
(\Scal_{Z(V)}(Z_V(F)))_{V\in\mathscr{V}_{S}(\Ocal)}
=\Scal^\Ocal_{Z(S_1)}(Z_{S_1}(F)) \,.
\]
The renormalization group flow on the level of natural Lagrangians can be understood by means of the following theorem  \cite{BDF}:
\begin{thm}[Algebraic Renormalization Group Equation]
Let $\hat{\fA}_{S_1}(\Ocal)$ and $\fA_{S_1}(\Ocal)$ 
denote the algebra of observables obtained by using $\hat{\Scal}$ and $\Scal$, respectively. 
The renormalization group element $Z\in\mathcal{R}$ induces an isomorphism 
$\al_Z=(\al_Z^\Ocal)$ of the nets,
\begin{equation} \label{RGE}
\al_Z^\Ocal: \hat{\fA}_{S_1}(\Ocal)\to \fA_{Z(S_1)}(\Ocal)\ , 
\end{equation}
such that $\iota_{\Ocal_2\Ocal_1}\circ\al_Z^{\Ocal_1}=\al_Z^{\Ocal_2}\circ\hat\iota_{\Ocal_2\Ocal_1}$ for $\Ocal_1\subset\Ocal_2$.
The isomorphism is given by 
\begin{equation}
\al^\Ocal_Z(\hat{\Scal}^\Ocal_{S_1}(F))=\Scal^\Ocal_{Z(S_1)}(Z_{S_1}(F))\ .
\end{equation} 
In particular, if $L$ and $Z(L)$ induce the same interaction, $\alpha_Z$ is an automorphism.
\end{thm}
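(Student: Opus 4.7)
The plan is to construct $\alpha_Z^\Ocal$ on the distinguished generators $\hat{\Scal}^\Ocal_{S_1}(F)$ by the prescribed formula, and then extend it multiplicatively to the whole algebra $\hat{\fA}_{S_1}(\Ocal)$, relying on the identity
\[
\hat{\Scal}_V(F) = \Scal_{Z(V)}(Z_V(F))
\]
which comes directly from the main theorem of renormalization (theorem \ref{MTR}). The two additional ingredients already recorded in the preceding text — namely the bijection $V \leftrightarrow Z(V)$ between $\mathscr{V}_{S_1}(\Ocal)$ and $\mathscr{V}_{Z(S_1)}(\Ocal)$, and the equality $Z_V(F) = Z_{S_1}(F)$ whenever $V \in \mathscr{V}_{S_1}(\Ocal)$ and $\supp F \subset \Ocal$ (both consequences of the locality axioms Z\ref{Zloc1}--Z\ref{Zindep} and of (\ref{suppZ})) — reduce the theorem to a bookkeeping exercise on covariantly constant sections of the algebra bundle (\ref{bundlealg}).

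First I would verify that the right-hand side of the defining formula is genuinely an element of $\fA_{Z(S_1)}(\Ocal)$, i.e. that $V \mapsto \Scal_{Z(V)}(Z_V(F))$ is a covariantly constant section of the bundle indexed by $\mathscr{V}_{Z(S_1)}(\Ocal)$. This follows because, under the bijection $V \leftrightarrow Z(V)$, the original covariantly constant family $(\hat{\Scal}_V(F))_{V \in \mathscr{V}_{S_1}(\Ocal)}$ is reindexed to the family in question, and the intertwining automorphisms $\alpha_{\id}$ from (\ref{3.6}) witnessing covariant constancy are preserved because on each fiber both sides live in the same ambient algebra $\fA(M)$ with identical $\star$-product.

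Next I would extend $\alpha_Z^\Ocal$ to a $\star$-algebra homomorphism by declaring it multiplicative on generators. The verification is essentially trivial fiberwise: at a fixed $V \in \mathscr{V}_{S_1}(\Ocal)$ and its image $Z(V) \in \mathscr{V}_{Z(S_1)}(\Ocal)$, both $\hat{\Scal}_V(F_1) \star \cdots \star \hat{\Scal}_V(F_n)$ and $\Scal_{Z(V)}(Z_V(F_1)) \star \cdots \star \Scal_{Z(V)}(Z_V(F_n))$ are products of elements of the one and the same free algebra $\fA(M)$, and they are equal by the identity $\hat{\Scal}_V(F_i) = \Scal_{Z(V)}(Z_V(F_i))$. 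Invertibility then follows from the group structure of $\mathcal{R}$: since $Z^{-1} \in \mathcal{R}$ exists, the symmetric construction produces $\alpha_{Z^{-1}}^\Ocal$ as a two-sided inverse. Compatibility with the net inclusions, $\iota_{\Ocal_2\Ocal_1} \circ \alpha_Z^{\Ocal_1} = \alpha_Z^{\Ocal_2} \circ \hat\iota_{\Ocal_2\Ocal_1}$, is immediate since the inclusions are implemented by restriction of sections and $V \mapsto Z(V)$ commutes with this restriction thanks to the support estimate (\ref{suppZ}).

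The main obstacle I anticipate is the covariant-constancy step: one must verify that the intertwining automorphism $\alpha_{\id}$ witnessing covariant constancy of $(\hat{\Scal}_V(F))_V$ really does coincide, after the relabeling $V \mapsto Z(V)$, with the corresponding intertwining automorphism of $\fA_{Z(S_1)}(\Ocal)$. This amounts to checking that if $V, V' \in \mathscr{V}_{S_1}(\Ocal)$ differ by a Lagrangian term supported outside $\overline\Ocal$, then $Z(V)$ and $Z(V')$ differ likewise; here one again invokes additivity Z\ref{Zloc1} and (\ref{suppZ}) to see that $Z(V) - Z(V')$ has support outside $\overline\Ocal$. Once this compatibility is in place, the final clause — that $\alpha_Z^\Ocal$ is an automorphism when $L$ and $Z(L)$ induce the same interaction — follows immediately, because then $S_1 = Z(S_1)$, so the source and target nets coincide.
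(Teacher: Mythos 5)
Your proposal is correct and follows essentially the same route as the paper, whose entire argument is the displayed chain $\hat{\Scal}^\Ocal_{S_1}(F)=(\hat{\Scal}_V(F))_{V}=(\Scal_{Z(V)}(Z_V(F)))_{V}=\Scal^\Ocal_{Z(S_1)}(Z_{S_1}(F))$ together with the previously recorded facts that $V\in\mathscr{V}_{S_1}(\Ocal)\Leftrightarrow Z(V)\in\mathscr{V}_{Z(S_1)}(\Ocal)$ and $Z_V(F)=Z_{S_1}(F)$ for $\supp F\subset\Ocal$. Your additional verifications (fiberwise the map is the identity on $\fA(M)$, invertibility via $Z^{-1}\in\Rcal$, compatibility with restriction of sections) are exactly the bookkeeping the paper leaves implicit, so no discrepancy arises.
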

%==============================================================
\subsection{Quantum master equation in the algebraic adiabatic limit}\label{adiablimqme}
%==============================================================
The idea to generalize the renormalization group to the level of natural transformations may seem a little bit abstract at the beginning. It is however very useful, if we want to have control on the cutoff needed to localize the interaction. In this section we will show that also the quantum master equation appears naturally in this setting. The idea is similar to the classical case discussed in \ref{CME}. Working on the level of natural transformations we avoid problems with boundary terms arising from the cutoff function. The interpretation of the {\qme} which we provide here is new and wasn't discussed in the literature before.

Let $S_0$ be the free action and $S_1$ the interaction term. Both are now to be understood as equivalence classes of natural transformations. The quantum BV operator is defined as
\be\label{eQBVr}
\hat{s}(X)=e_{\sst{\TTR}}^{-i{S_1}_M(f_1)/\hbar}\TR\left(\{e_{\sst{\TTR}}^{i{S_1}_M(f_1)/\hbar}\TR  X,{S_0}_M(f)\}_{\star}\right)\,,
\ee
where $\supp\, X\subset\Ocal$ and $f,f_1\equiv 1$ on $\Ocal$. The quantum master equation is a statement that the S-matrix in the algebraic adiabatic limit is invariant under the quantum BV operator, i.e.:
\[
\supp\left(e_{\sst{\TTR}}^{-i{S_1}_M(f_1)/\hbar}\TR\left(\{e_{\sst{\TTR}}^{i{S_1}_M(f_1)/\hbar},{S_0}_M(f)\}_{\star}\right)\right)\subset \supp\, df\cup \supp\, df_1\,.
\]
Using {\mwi} we can see that this expression is again an element of $\fA_\loc(M)$, so the condition above can be also formulated on the level of natural transformations:
\be\label{eQMEr0}
e_{\sst{\TTR}}^{-iS_1/\hbar}\TR\left(\{e_{\sst{\TTR}}^{i S_1/\hbar},S_0\}_{\star}\right)\sim 0\,,
\ee
This is the extended quantum master equation. We can write it in a more explicit form using (\ref{MWI}). Note that the anomaly term $\Lap_{{S_1}_M(f)}({S_1}_M(f))$ is a natural transformation as well, so (\ref{eQMEr0}) is equivalent to:
\be\label{eQMEr1}
\frac{1}{2}\{S_0+S_1,S_0+S_1\}_{\TTR}-\Lap_{S_1}(S_1)\sim 0\,.
\ee
Note the resemblance of this condition to the classical master equation {\cme} (\ref{CME0}). The quantum BV operator can be now written as
\[
\hat{s}X=\{X,{S_0}_M(f)+{S_1}_M(f)\}_{\TTR}-\Lap_{{S_1}_M(f)}(X)\,,
\]
where $f\equiv 1$ on the support of $X\in\TTR(\BV(M))$. Now we want to see how the {\qme} and the quantum BV operator are transforming under the renormalization group. We start with the  {\qme}.
\begin{prop}\label{renorm:qme}
Let $L_1$ be a natural Lagrangian that solves the {\qme} (\ref{eQMEr0}) for the renormalized time-ordered product $\TTR$. Let $Z\in\Rcal$ be the element of the renormalization group, which transforms between the $S$-matrices corresponding to $\TTR$  and ${\TTR}'$, i.e. $e_{\sst{\TTR}}^{{L_1}_M(f)}=e_{\sst{{\TTR}'}}^{Z({L_1}_M(f))}$. Then $Z(L_1)$ solves the {\qme} corresponding to ${\TTR}'$.
\end{prop}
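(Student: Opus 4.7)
The plan is to reduce the statement to a direct substitution in a compact reformulation of the extended quantum master equation. First I would rewrite the {\qme} in the form (\ref{eQMEr0}) rather than (\ref{eQMEr1}). Equation (\ref{QME2}) already established in the non-renormalized case that the two forms are equivalent, and the renormalized analogue follows by invoking the anomalous Master Ward Identity (\ref{MWI}): multiplying (\ref{eQMEr1}) on the left by $e_{\TTR}^{iS_1/\hbar}$ with respect to $\TR$ and applying (\ref{MWI}) converts the left hand side into $\{e_{\TTR}^{iS_1/\hbar},S_0\}_\star$. Hence the {\qme} for $L_1$ with respect to $\TTR$ is equivalent, on the level of natural Lagrangians, to
\[
\{e_{\TTR}^{iS_1/\hbar},S_0\}_\star\sim 0.
\]

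The second step is essentially a substitution. The hypothesis $e_{\TTR}^{L_1(f)}=e_{\TTR'}^{Z(L_1(f))}$ is exactly the statement of the main theorem of renormalization (Theorem \ref{MTR}) applied to the two time-ordering prescriptions. Inserting the right hand side into the compact {\qme} yields
\[
\{e_{\TTR'}^{iZ(S_1)/\hbar},S_0\}_\star\sim 0,
\]
and, reversing the reformulation of step one (this time applying the MWI associated with $\TTR'$), one recovers (\ref{eQMEr1}) with $\TTR$ replaced by $\TTR'$ and $S_1$ replaced by $Z(S_1)$. The resulting anomaly term $\Lap^{\TTR'}_{Z(S_1)}(Z(S_1))$ is finite by the same distribution-theoretic argument used to make $\Lap_{S_1}(S_1)$ finite in the original prescription.

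The main obstacle I expect is purely bookkeeping: the equivalence relation ``$\sim$'' refers to natural transformations supported in $\supp(df)$, and one must check that each of the manipulations above respects this equivalence. The MWI step is local, so it preserves the support properties, but the substitution step requires that $Z$ act well-definedly on equivalence classes of natural Lagrangians. This is where the locality axioms Z\ref{Zloc1} and (\ref{suppZ}) of the St\"uckelberg--Petermann renormalization group enter, together with proposition 6.2 of \cite{BDF} recalled in section \ref{adiablim}, which guarantees that $L_1\sim L_1'$ implies $Z(L_1)\sim Z(L_1')$ and that the action of $Z$ on generalized Lagrangians commutes with the formation of the action equivalence class. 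Once this compatibility is in place, the chain of equivalences goes through and establishes the proposition without any further computation; the whole argument is a transparent reflection of the fact that the compact form $\{\Scal(S_1),S_0\}_\star\sim 0$ of the {\qme} depends only on the S-matrix element $\Scal(S_1)$ and the free Koszul derivation $\{\cdot,S_0\}_\star$, neither of which is sensitive to the specific time-ordering prescription used to manufacture $\Scal$.
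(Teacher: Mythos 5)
Your overall strategy --- exploit the fact that $\{e_{\sst{\TTR}}^{i S_1/\hbar},S_0\}_\star$ is built only from the S-matrix element and the free quantum Koszul derivation, then substitute the renormalization-group relation --- is exactly the route the paper takes, and the detour through the {\mwi} and (\ref{eQMEr1}) is unnecessary, since both the hypothesis and the conclusion are already phrased via (\ref{eQMEr0}). There is, however, a gap at the central step. The {\qme} (\ref{eQMEr0}) is \emph{not} the statement that $\{e_{\sst{\TTR}}^{i S_1/\hbar},S_0\}_{\star}\sim 0$; it is the statement that this bracket, \emph{after} being multiplied by the inverse S-matrix with respect to $\TR$, has support in $\supp\,df\cup\supp\,df_1$. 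The bracket itself has support spreading over all of $\supp f_1$, so dropping the prefactor changes the content of the condition --- and it is precisely this prefactor that is sensitive to the time-ordering prescription. Concretely, the hypothesis provides a local $A(f,f_1)$ with $\supp(A(f,f_1))\subset\supp\,df\cup\supp\,df_1$ such that
\[
\{e_{\sst{\TTR}}^{i {L_1}_M(f_1)/\hbar},{L_0}_M(f)\}_{\star}=e_{\sst{\TTR}}^{i{L_1}_M(f_1)/\hbar}\TR A(f,f_1)\,.
\]
The left-hand side is indeed unchanged by the substitution $e_{\sst{\TTR}}^{i{L_1}_M(f_1)/\hbar}=e_{\sst{{\TTR}'}}^{iZ({L_1}_M(f_1))/\hbar}$, but the right-hand side is not: rewritten in terms of $\cdot_{{\TTR}'}$ it becomes $e_{\sst{{\TTR}'}}^{iZ({L_1}_M(f_1))/\hbar}\cdot_{{\TTR}'}\langle Z^{(1)}({L_1}_M(f_1)),A(f,f_1)\rangle$, and one must then invoke the support property (\ref{suppZ}) to conclude that the new remainder $\langle Z^{(1)}({L_1}_M(f_1)),A(f,f_1)\rangle$ is still supported in $\supp\,df\cup\supp\,df_1$. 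You do cite (\ref{suppZ}), but you deploy it only for the well-definedness of $Z$ on equivalence classes of Lagrangians; the transformation of the local remainder under the change of prescription --- which is where the renormalization-group locality actually does the work --- is never carried out. Supplying that one computation closes the gap and reproduces the paper's proof.
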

\begin{proof}
From the equation  (\ref{eQMEr0})  follows that there exists a local element $A(f,f_1)\in\fA_\loc(M)$, depending on test functions $f$, $f_1$, such that $\supp(A(f,f_1))\subset \supp\, df)\cup \supp\,df_1$, such that:
\[
\{e_{\sst{\TTR}}^{i {L_1}_M(f_1)/\hbar},{L_0}_M(f)\}_{\star}=e_{\sst{\TTR}}^{-i{L_1}_M(f_1)/\hbar}\TR A(f,f_1)
\]
We can now transform both sides with the renormalization group element $Z$ to obtain:
\[
\{e_{\sst{{\TTR}'}}^{i Z({L_1}_M(f_1))/\hbar},{L_0}_M(f)\}_{\star}=e_{\sst{{\TTR}'}}^{-iZ({L_1}_M(f_1))/\hbar}\cdot_{{\TTR}'} \langle Z^{(1)}({L_1}_M(f_1)),A(f,f_1)\rangle\,.
\]
Using the property (\ref{suppZ}) of the renormalization group, we can conclude that 
\[
\supp( \langle Z^{(1)}({L_1}_M(f_1)),A(f,f_1)\rangle)\subset\supp(A(f,f_1))\,.
\]
Hence:
\[
\supp\left(e_{\sst{{\TTR}'}}^{-iZ({L_1}_M(f_1))/\hbar}\cdot_{{\TTR}'}\{e_{\sst{{\TTR}'}}^{i Z({L_1}_M(f_1))/\hbar},{L_0}_M(f)\}_{\star}\right)\subset \supp\, df\cup \supp\, df_1\,.
\]
\end{proof}
We can see from the above proposition that the {\qme} is indeed an universal notion and transforms correctly under the renormalization group. A similar property can be shown for the $BV$ operator. To distinguish between operators corresponding to different interaction terms we denote by $\hat{s}_{\sst{S_1}}$ the quantum $BV$ operator defined for the action $S_1$ with respect to the time-ordering operator $\TTR$. 
For a different time ordering $\TTR'$ we obtain a corresponding  operator $\hat{s}'_{\sst{S_1}}$ in the form:
\be\label{s:hat:strich}
\int dx (e^{iS_1/\hbar}\cdot_{\TTR'} X(x))\star\frac{\delta S_0}{\ph(x)}= e_{\sst{\TTR'}}^{iS_1/\hbar}\cdot_{\TTR'}\hat{s}'_{\sst{S_1}}(X)\,,
\ee
On the other hand we know from the main theorem of renormalization \ref{MTR} that there exists an element $Z\in\Rcal$ such that the left hand side of the above formula can be written as:
\begin{multline*}
\int\! dx\, (e_{\sst{\TTR'}}^{i S_1/\hbar}\cdot_{\TTR'}X(x))\star\frac{\delta S_0}{\ph(x)}=\int\! dx\, e_{\sst{\TTR}}^{iZ(S_1)/\hbar}\TR \langle Z^{(1)}(S_1),X(x)\rangle\star\frac{\delta S_0}{\ph(x)}=\\
=e_{\sst{\TTR}}^{iZ(S_1)/\hbar}\cdot_{\TTR}(\hat{s}_{\sst{Z(S_1)}}\langle Z^{(1)}(S_1),X\rangle)\,,
\end{multline*}
Similarly we can rewrite the right hand side of (\ref{s:hat:strich}) as
\[
e_{\sst{\TTR'}}^{iS_1/\hbar}\cdot_{\TTR'}\hat{s}'_{\sst{S_1}}(X)=e_{\sst{\TTR}}^{iZ(S_1)/\hbar}\cdot_{\TTR}\langle Z^{(1)}(S_1),\hat{s}'_{\sst{S_1}}(X)\rangle\,.
\]
By comparing above formulas we obtain:
\[
\hat{s}_{\sst{Z(S_1)}}\langle Z^{(1)}(S_1),X\rangle=\langle Z^{(1)}(S_1),\hat{s}'_{\sst{S_1}}(X)\rangle\,.
\]
Since it holds for arbitrary $X$, we can write the above relation as:
\be
\hat{s}_{\sst{Z(S_1)}}\circ Z^{(1)}(S_1)=Z^{(1)}(S_1)\circ\hat{s}'_{\sst{S_1}}\,.
\ee
This means that also the quantum BV operator transforms under the renormalization group in the natural way.

To end this section we want to discuss the problem of finding a solution to the {\qme}.
We start with a classical action $\TTR^{\minus}(S_0+S_1)$, which satisfies the {\cme}, i.e. $\{S_0+S_1,S_0+S_1\}_{\TTR}=0$. For our renormalized time-ordering operator $\TTR$ we calculate the corresponding anomaly term $\Lap_{S_1}(S_1)$. In general it doesn't vanish, so the {\qme} will not be fulfilled. There are basically two possibilites to proceed. We can either redefine $\TTR$ using the renormalization freedom, or try to absorb $\Lap_{S_1}(S_1)$ into the action, by adding higher loop order terms. 
 The second way is more in the spirit of the original formulation of the BV formalism \cite{Batalin:1981jr,Batalin:1983wj,Batalin:1983jr}, so we follow this path first. The cohomological problem can be formulated in the following way: we look for natural transformations $W_n$ such that $W=\sum_n \hbar^n W_n$, $W_0=S_1$ and it holds: 
\be\label{W}
\{W+S_0,W+S_0\}_{\TTR}-\Lap_{W}(W)\sim0
 \ee
 Let us expand $\Lap_{W}(W)$ as a power series in $\hbar$. Since $\Lap_X$ is linear in $X$, it follows that the lowest order term is just $\Lap_{S_1}(S_1)$. Therefore in the first order $\hbar$ this gives a condition:
 \be\label{firstorder}
 2\{W_1,S_0+S_1\}_{\TTR}-\frac{1}{\hbar}\Lap^1_{S_1}(S_1)\sim 0\,.
 \ee
From the consistency condition (\ref{constcond2}) in the first order in $\hbar$, we know that 
\[
\{\Lap^1_{S_1}(S_1),S_0+S_1\}_{\TTR}\sim0\,.
\]
Therefore the solution $W_1$ to (\ref{firstorder}) is given by the cohomology of $s$ on the space of actions. To understand better this cohomological problem, recall that action is an equivalence class of Lagrangians and these are in turn characterized by maps from $\D(M)$ to $\fA_\loc(M)$. Therefore calculating the cohomology of $s$ on the space of actions effectively amounts to calculate the cohomology of $s$ modulo $d$ on the space of local forms (polynomials of fields and their derivatives). Results in this direction were obtained in \cite{BBH,BarHenn}. If the cohomology of $s$ turns out to be trivial, the existence of $W_1$ is guaranteed and we can insert it back to the equation (\ref{W}) and calculate the higher order terms. This way we can reduce construction of $W$ to a strictly cohomological problem. Finding a solution $W$ of the {\qme} provides us with a map $S_1\mapsto W$. Note that this map satisfies the axioms \ref{Z0}-\ref{Zindep} (for the additivity we need to use the additivity of $\Lap_{S_1}(S_1)$) and therefore it is an element of the renormalization group. We can write the {\qme} in the form:
\[
e_{\sst{\TTR}}^{-iZ(S_1)/\hbar}\TR\left(\{e_{\sst{\TTR}}^{i Z(S_1)/\hbar},S_0\}_{\star}\right)\sim 0\,,
\]
where $Z(S_1)=W$. From the main theorem of renormalization theory and proposition \ref{renorm:qme} it follows that there exists a time ordering operator ${\TTR}'$ such that:
\[
e_{\sst{{\TTR}'}}^{-iS_1/\hbar}\cdot_{\sst{{\TTR}'}}\left(\{e_{\sst{{\TTR}'}}^{i S_1/\hbar},S_0\}_{\star}\right)\sim 0\,,
\]
This way we showed that the violation of the {\qme} can be also absorbed into the redefinition of the time-ordered product. This approach agrees with the one taken in \cite{H}.
%%==============================================================
\subsection{Relation to the regularized {\qme}} 
%%==============================================================
The construction of the renormalized quantum BV operator and the {\qme} we propose is completely independent of any regularization scheme, but it is still interesting to see how our approach relates to ones involving an explicit regularization. I particular  in this section we want to make contact with the works of K. Costello \cite{Costello,CostBV}.  Following \cite{BDF} we define the regularized time-ordered product corresponding to the scale $\Lambda$ as $\Tcal_\Lambda\doteq \exp(i\hbar \HL)$, where
\[
\HL=\frac{1}{2}\int dxdy(h_\Lambda-H)(x,y)\frac{\delta^2}{\delta\ph(x)\delta\ph(y)}\,,
\]
 and $h_\Lambda\xrightarrow{\Lambda\rightarrow\infty}H_F$ in the sense of H\"ormander. It is evident that $h_\Lambda-H\xrightarrow{\Lambda\rightarrow\infty}i\Delta_D$ and it provides a regularization of the Dirac propagator. The regularized S-matrix is now defined as $\Scal_\Lambda\doteq \exp_{\Tcal_\Lambda}$ and the regularized time-ordered Koszul operator is given by $\deL\doteq\TTL\circ\delta_{{\TTL}^{\minus}S}\circ{\TTL}^{\minus}$.  The regularized quantum BV operator\index{BV!operator!quantum regularized} is defined by replacing $\TT$ with a regularized time-ordered product in (\ref{QBV0}), i.e
\begin{multline}\label{RegBV0}
\hat{s}_\La X=e_{\sst{\TTL}}^{-i F/\hbar}\TL\left(\left(\frac{\delta}{\delta\ph^\ddagger(x)}( e_{\sst{\TTL}}^{i F/\hbar}\TL X)\right)\star\frac{\delta S}{\delta\ph(x)}\right)=\\
=e_{\sst{\TTL}}^{-i F/\hbar}\TL\left(\{e_{\sst{\TTL}}^{i F/\hbar}\TL X,S\}_{\star}\right)\,.
\end{multline}
The regularized quantum master equation\index{master equation!quantum regularized} can be understood as the condition that the regularized S-matrix is invariant under the quantum Koszul operator, i.e.:
\[
\{e_{\sst{\TTL}}^{iF/\hbar},S\}_{\star}=0
\]
 Let $F=F_0+\int dz F_1(z)\frac{\delta}{\delta \ph(z)}$, where $F_0$ doesn't depend on antifields. We can write the regularized {\qme} explicitly using the fact that:
 \begin{eqnarray*}
 \deL(\Scal_\Lambda(F))&=&\TTL\left(\delta_{{\TTL}^{\minus}S}\left(e^{i {\TTL}^{\minus}F/\hbar}\right)\right)=m\circ e^{i\hbar \HL'}\left(\int dx\frac{\delta S}{\delta\ph(x)}\otimes\left(F_1(x)\TL e^{i F/\hbar}_{\TL}\right)\right)=\\
 &=&\int dx\frac{\delta S}{\delta\ph(x)}\left({\TTL}F_1(x)\TL e^{F}_{\TL}\right)+\TTL\left(i\hbar\Lap_\Lambda F-\frac{1}{2}\{F,F\}_\Lambda \right)\TL e^{F}_{\TL}\,,
 \end{eqnarray*}
 where by $\Lap_\Lambda$ we denoted the differential operator:
 \[
\Lap_\Lambda \doteq\int \!\!dxdydz \frac{\delta^2 S}{\delta\ph(z)\ph(x)}(h_\Lambda-H)(x,y)\frac{\delta^2}{\delta\ph^\ddagger(z)\delta \ph(y)}\,,
\]
and $\{.,.\}_\Lambda$ is the scale $\Lambda$ antibracket defined as:
\[
\{A,B\}_\Lambda\doteq \Lap_\Lambda(AB)-\Lap_\Lambda(A)B-(-1)^{|A|}A\Lap_\Lambda(B)\,.
\]
We can conclude that the scale $\Lambda$ QME is the condition that:
\[
\deL F+\frac{1}{2}\{F,F\}_\Lambda-i\hbar\Lap_\Lambda F=0
\]
This is exactly the form of the regularized {\qme} provided in \cite{Costello}. 
%
%@@@ write about the regularized theory
%
%Moreover it is evident that if $F$ is a solution of the scale $\Lambda_1$ {\qme}, then the solution of the scale $\Lambda_2$ {\qme} is given by $\Tcal_{\!\sst{\Lambda_2}}\!\circ{\Tcal_{\!\sst{\Lambda_1}}}^{\!\!\!\!\!\minus}(F)$. In particular, if a local finctional $S_1$ is a solution of the renormalized QME, then $S_1^{\sst{\Lambda}}=\Tcal_{\!\sst{\Lambda}}\!\circ{\TTR}^{\!\!\minus}(S_1)=\Tcal_{\!\sst{\Lambda}}(S_1)$ provides a solution for the scale $\Lambda$. It is clear that $S_1^{\sst{\Lambda}}$ differs from $S_1$ only by local counterterms so there exists an element $Z_\Lambda$ of the renormalization group such that $S_1=\lim\limits_{\Lambda\rightarrow\infty}S_\Lambda\circ Z_\Lambda$.
%==============================================================
\subsection{Yang-Mills theory}\label{YMquant}
%==============================================================
To end this chapter we take a look at an explicit example, where the BV construction is nontrivial. The most straightforward one is provided by Yang-Mills theories with the generalized Lagrangian given by (\ref{LagrYM}). To construct the Hadamard solution we need again a hyperbolic system of equations, so we have to fix the gauge. To this end we can use the classical structure of the BV complex $\BV_\mc(M)$ described in section \ref{KTcom}, extended by the non-minimal sector as stated in \ref{gaugefixing}. We perform a suitable automorphism $\alpha_\Psi$ of the algebra $\BV_\mc(M)$ with the gauge-fixing fermion (\ref{Lorenz}) and we arrive finally at the extended Lagrangian of the form:
\begin{multline}\label{lagrfixed}
L^{\ex}_M(f)=-\frac{1}{2}\int_M f\tr(F\wedge *F)+\gamma^g\Psi_M(f)+\int_M f\tr\Big(*DC\wedge\frac{\delta}{\delta A}\Big)+\\
+\frac{1}{2}\int_M\dvol\, f [C,C]\frac{\delta}{\delta C}-i\int_M\dvol\, f B\frac{\delta}{\delta \bar{C}}\,,
\end{multline}
where $\gamma^g$ is the BRST operator. The full BV operator is the sum $s=\delta^g+\gamma^g$. Using formula (\ref{gfixh}) we see that the gauge invariant observables are encoded in the cohomology of $s$. 
If we denote by $\theta^g$ the natural transformation that implements $\gamma^g$ locally, we can rewrite formula (\ref{lagrfixed}) on the level of natural transformations as:
\be\label{gfixed}
S^{\ex}=S+\{\theta^g,\Psi\}+\theta^g\,.
\ee
This is the starting point for the quantization. Note that the gauge is not yet fixed, since we don't put antifields equal to 0. It is usefull to expand equation (\ref{gfixed}) in the free and interacting part:
\be\label{gfix2}
S^{\ex}=S_0+S_1+\theta_0+\theta_1\,,
\ee
where $S_0$ and $S_1$ have the total antifield number equal to 0 and $\theta_0$ $\theta_1$ are of $\#\ta=1$. As suggested by the notation $S_0$ is the free action quadratic in fields and $\theta_0$ implements the free BRST differential $\gamma_0$, which is linear in fields and antifields. Also the Koszul differential can be expanded into the free and interacting part $\delta^g=\delta_0+\delta_1$.
 With this structure we can now apply the results presented in previous sections. From the construction we know that $S_0$ provides a hyperbolic system of equations and we can construct the corresponding Hadamard solution $H$ \cite{H}. The $\star$-product and time ordered product $\T$ are constructed basing on the free action $S_0$. Note that algebra $\BV(M)$ is graded, so the axioms for the renormalized time-ordered products have to be replaced by their graded counterparts. Details of the construction can be found in \cite{H}, including the definitions of Wick powers and time-ordered products of ghosts, antighosts and Nakanishi-Lautrup fields. 
We understand the time ordering of antifields, corresponding to all these variables, similarly to the scalar case, i.e. by means of (\ref{rprodv}). The time-ordered free Koszul operator is defined again as $\delta^{\TTH}_{S_0}$. The time-ordered free BRST operator is given by:
\[
\gamma_0^{\sst{\TTH}}F\doteq \{F,{\theta_0}_M(f)\}_{\TTH}\,.
\]
where $F\in\TTH(\BV(M))$ with $\#\ta(F)=0$  $f$ is a test function equal to 1 on the support of $F$. Again we drop at this point the subscript $H$ for the clearer notation. It is also convenient to go one level of abstraction higher and work with the natural transformations, along the lines of the section \ref{adiablim}. It was proved in \cite{H} that the {\mwi} holds in the form:
\begin{multline}\label{YMQBV}
e_{\sst{\TTR}}^{-i (S_1+\theta_1)_M(f)/\hbar}\TR\left(\{e_{\sst{\TTR}}^{i (S_1+\theta_1)_M(f)/\hbar}\TR  X,(S_0+\theta_0)_M(f)\}_{\star}\right)=\\
=\{X,S^{\ex}_M(f)\}_{\TTR}-\Lap_{(S_1+\theta_1)_M(f)}(X)\,,
\end{multline}
where $f\equiv1$ on the support of $X$. In \cite{H} it is also argued that using consistency conditions on the anomaly term one can use the renormalization freedom to redefine time-ordered products and obtain $\Lap_{S_1+\theta_1}(S_1+\theta_1)=0$. With this time-ordering prescription from equation (\ref{YMQBV}) follows that:
\[
\hat{s}\left(e_{\sst{\TTR}}^{i (S_1+\theta_1)/\hbar}\right)=e_{\sst{\TTR}}^{i (S_1+\theta_1)/\hbar}\TR\{S^{\ex},S^{\ex}\}_{\TTR}\,,
\]
in the sense of natural transformations, where $\hat{s}\doteq\{.,S_0+\theta_0\}_{\star}$. Since the {\cme} holds for the extended action, i.e. $\{S^{\ex},S^{\ex}\}_{\TTR}\sim 0$, we obtain:
\[
\hat{s}\left(e_{\sst{\TTR}}^{i(S_1+\theta_1)/\hbar}\right)\sim 0
\]
To see that this is the on-shell gauge invariance of the S-matrix in the adiabatic limit, note that the BRST operator in the quantized algebra is given by $\{.,\theta_0\}_{\star}$ and we can rewrite the above equation as:
\[
\left\{e_{\sst{\TTR}}^{i(S_1+\theta_1)/\hbar},\theta_0\right\}_\star\sim 0\quad\textrm{on-shell}\,.
\]
\chapter*{Conclusions}
\addcontentsline{toc}{chapter}{\protect\numberline{}Conclusions}
\vspace{-5ex}
\begin{flushright}
 \begin{minipage}{10cm}
\selectlanguage{russian}(...) всегда останется нечто, что ни за что не захочет выйти из-под вашего черепа (...) с тем вы и умрете, не передав никому, может быть, самого-то главного из вашей идеи.
\begin{flushright}
Ф. М. Достоевский, «Идиот»
\end{flushright}
\selectlanguage{english}
\vspace{1ex}
\textit{(...) There is always a something, a remnant, which will never come out from your brain (...) and you will die, perhaps, without having imparted what may be the very essence of your idea to a single living soul.}
\begin{flushright}
F. M. Dostoyevsky, \textit{The Idiot}
\end{flushright}
 \end{minipage}
\end{flushright}
\vspace{5ex}
\noindent\rule[2pt]{\textwidth}{1pt}
%\minitoc
%\clearpage
\vspace{1ex}\\
It is always challenging to transform ideas into words and words into formulas. The difficulty in mathematical physics lies in finding the right concepts to describe the physical phenomena and it is often hard to judge which formulation would be more convenient in practice. We are of the opinion that introducing new mathematical structures is always worth the effort, if it allows for the conceptual understanding of physical theories. Therefore we advocated in this thesis the application of infinite dimensional calculus to classical and quantum field theory. We showed that by using these concepts many structures become more natural. For example the consequent treatment of antifields as infinite dimensional vector fields sheds light on the functional analytic aspects of the classical BV complex, but also allows for a systematic treatment of the BV quantization. The main result we obtained is the precise formulation of the renormalized quantum master equation and definition of the quantum BV operator, that doesn't involve an explicit regularization scheme. As an intermediate step, we also proved that the renormalized time-ordered product can be indeed defined as an algebraic product on a certain subspace of the quantum algebra. We showed that the BV quantization can be applied successfully without reference to the path integral approach and can be incorporated as an important tool in algebraic quantum field theory. All the constructions we presented were done in a completely covariant way and could be therefore applied to quantize theories on general globally hyperbolic spacetimes. This is important in the locally covariant approach to classical and quantum field theory. 

The principle of local covariance was the guiding principle throughout the whole thesis and it justifies many of the constructions we proposed. Already at the classical level the formulation in terms of category theory motivates the choice of the subspace of the space of infinite dimensional vector fields, that is identified with the antifields' space. The same is true also for the definition of the (modified) Chevalley-Eilenberg complex \ref{ChEil}. A very clear advantage of the locally covariant framework is an interpretation of the classical master equation as a condition formulated on the level of natural transformation. Another important consequence of   incorporating the local covariance principle is the possibility to treat theories with diffeomorphism invariance, including general relativity. In section \ref{grav} we showed that the natural notion of physical quantities in gravity is provided by the concept of locally covariant fields. This involves again a use of category theory and moves the discussion to a little bit more abstract level.  There is however a major gain coming from this construction. Treating physical fields as natural transformations allows us to recover structures known from classical GR and provides us also with a generalized notion of a symmetry transformation. On this more abstract level also the structure of the BV complex has to be modified, but this modification follows straightforwardly from the framework itself. We proved that the modified BV operator has a nontrivial cohomology which contains all the quantities, that one would intuitively consider as physical, for example scalars constructed covariantly from the metric. This way we can describe the space of physical observables of general relativity as a cohomology of the BV complex extended to the level of natural transformation. 

The motivation for this is of course the perspective for quantum gravity. The BV resolution is an algebraic structure, that can be quantized using the well known methods of deformation quantization. We developed a framework needed for this task in chapter \ref{pAQFT} and showed that it works for the better understood example of the Yang-Mills theory. We expect that the same can be done also for gravity and the resulting theory can be treated as an effective theory  in physical  situations where the quantum gravity effects are small. This thesis is the first step to fulfill this program and we hope that the conceptual understanding of the classical structure, that we gained, will allow us to successfully proceed in our quest to understand better the phenomena happening at the intersection of the quantum world and general relativity.

Although the main motivation for the present work was the possibility to apply it in quantum gravity, the results that we obtained are of interest on their own. The framework we present unifies in a consistent mathematical language three important physical principles: \textit{locality}, \textit{covariance} and \textit{gauge invariance}. We argue, that it provides means to investigate the structure of quantum field theories in the presence of symmetries, using modern mathematical tools. This is important not only from a purely conceptual point of view, since often a fresh look at the underlying structure allows one to simplify also some practical calculations. 
Besides, from the mathematical side, it is interesting, that the infinite dimensional differential geometry, a relatively new research field of mathematics, has also a nice application in physics. The constant exchange of ideas and concepts was always leading to progress both in mathematics and in physics. We hope, that our work will eventually become at least a small stone in this phantastic construction of human knowledge and creativity. With this thought we want to end this thesis, since no book, no paper and no lecture can give justice to the infinite complexity of nature. Even in a very long work, after a finite number of words, there always has to came the moment to put the final dot.
\chapter*{Acknowledgements}
\addcontentsline{toc}{chapter}{\protect\numberline{}Acknowledgements}
First of all I would like to thank my supervisor Professor Klaus Fredenhagen. I am very grateful for all the time and attention he gave to this project. His enlightening comments and suggestions were always helping me out when I had problems and deeply inspiring discussions with him allowed me to learn a lot from his knowledge and experience. Moreover I would like to thank him for his support and optimism, which kept me going even when I experienced difficulties in my research.

I would also like to thank Professor J\"org Teschner for agreeing to be the second referee of my thesis despite of his limited time.

I am very grateful to Pedro Lauridsen Ribeiro for valuable comments and enlightening discussions we had. Especially I would like to thank him for sharing with me and the other members of the group his knowledge on the Nash-Moser-H\"ormander theorem.

My special thanks go to Christoph Wockel for introducing me into the subtleties of infinite dimensional differential geometry. Without his help I wouldn't be able to get to grips with this formalism. At this point I would also like to thank Professor Helge Gl\"ockner for valuable comments on the infinite dimensional calculus and for his hospitality in Paderborn.

I would also like to thank all my other collaborators and friends. I am especially  grateful to Claudio Dappiaggi for his support and interesting scientific discussions. Moreover I would like to thank Wojciech Dybalski, Thomas Hack, Kai Keller, Falk Lindner, Nicola Pinamonti and Jochen Zahn for many valuable comments. In every scientific achievement there is always an enormous contribution from other peoples' ideas, remarks, suggestions or simply encouragement. Therefore I address my deepest thanks to all the members and friends of the Hamburg group with whom I shared my time during this project, especially to:  Dorothea Bahns, Romeo Brunetti, Andreas Degner, Elisabeth Duarte-Monteiro, Paniz Imani, Benjamin Lang, Mathias Makedonski, Christian Pfeifer,  Jan Schlemmer,  Sebastian Schubert, Daniel Siemssen, Kolja Them, Ole Vollertsen, Jan Weise and Mattias Wohlfarth. 

Last but not least I would like to thank my friends from Hamburg, in particular: Marco Argento, Shima Bayesteh, Deanne Litman, Roxana Tarkeshian and Alexander Zeplien.  I want to mention also my fellow students from Cracow, who, like me, decided to do a Ph.D. abroad: Iwona Mochol, Tomek Ku{\l}akowski and Tomek Rembiasz. Thanks for a great time and encouragement!

\selectlanguage{polish} Podziękowania należą się również mojej rodzinie i przyjaciołom z Polski, którzy udzielali mi przez cały czas wsparcia.  W szczególności chciałabym podziękować moim rodzicom, cioci Beacie i wspaniałym przyjaciółkom Kasi i Ejce. Bez Was nie dałabym rady!
\selectlanguage{english}
\cleardoublepage
\pagestyle{empty}
$\qquad$\\
\vspace{2cm}
$\qquad$\\
\begin{flushright}
\begin{minipage}{10cm}
{\large
\textit{What we call the beginning is often the end\\
And to make an end is to make a beginning.\\
The end is where we start from.\\}
$\ $\\
T.S. Eliot, \textit{Four Quartets}}
 \end{minipage}
 \end{flushright}
%\printglossaries
\printindex

\end{document}